\newcommand{\supp}{\text{supp}\hspace{.06cm}}
\newcommand{\divergence}{\text{div}\hspace{.06cm}}
\newcommand{\grad}{\text{grad}\hspace{.06cm}}
\newcommand{\vol}{\text{vol}}
\theoremstyle{definition}
\theoremstyle{remark}
\newtheorem*{rem}{Remark}
\theoremstyle{plain}
\newtheorem{thm}{Theorem}[section]
\newtheorem*{thm*}{Theorem}
\newtheorem{lem}[thm]{Lemma}
\newtheorem{prop}[thm]{Proposition}
\newtheorem{cor}[thm]{Corollary}
\newtheorem{defn}[thm]{Definition}
\numberwithin{equation}{section}
\begin{document}

\title{\textbf{Klein-Gordon Solutions on Non-Globally Hyperbolic Standard Static Spacetimes}}

\author{David Bullock\footnote{Department of Mathematics, University of York, Heslington, York, YO10 5DD, UK. E-mail: \texttt{dmab500@york.ac.uk}}}

\date{\today}
\maketitle
\begin{abstract}
We construct a class of solutions to the Cauchy problem of the Klein-Gordon equation on any standard static spacetime. Specifically, we have constructed solutions to the Cauchy problem based on any self-adjoint extension (satisfying a technical condition: ``acceptability") of (some variant of) the Laplace-Beltrami operator defined on test functions in an $L^2$-space of the static hypersurface. The proof of the existence of this construction completes and extends work originally done by Wald. Further results include the uniqueness of these solutions, their support properties, the construction of the space of solutions and the energy and symplectic form on this space, an analysis of certain symmetries on the space of solutions and of various examples of this method, including the construction of a non-bounded below acceptable self-adjoint extension generating the dynamics. 
\end{abstract}

%\keywords{Klein-Gordon, standard static, non-globally hyperbolic.}
%
%\ccode{Mathematics Subject Classification 2000: 22E46, 53C35, 57S20}

\section{Introduction}  
The first purpose of this paper is to construct a class of solutions to the Cauchy problem of the Klein-Gordon equation on any (not necessarily globally hyperbolic) standard static spacetime $(M,g)=(\mathbb{R}\times\Sigma, V^2dt^2-h)$, where $(\Sigma,h)$ is a Riemannian manifold and $V$ is a smooth positive function on $\Sigma$ (Sanchez \cite{a}). A class of solutions was originally constructed by Wald~\cite{b}. His solutions were given in terms of some fixed positive self-adjoint extension (s.a.e.) of a particular symmetric linear operator on $L^2(\Sigma,V^{-1}d\vol_h)$. Our treatment of the existence of solutions differs from that of Wald in the following aspects:
\begin{enumerate}
\item Wald considered only positive s.a.e.s and so the linear operators $C(t,A_E)$ and $S(t,A_E)$ (defined in Section~\ref{candidatesolutions}) used to construct solutions were bounded. In this paper however we also consider ``acceptable" s.a.e.s (Definition~\ref{acceptablesae}). Incidentally, all bounded below s.a.e.s are acceptable. Under these conditions $C(t,A_E)$ and $S(t,A_E)$ may be unbounded linear operators so care is required with the domains. 

\item We point out that a more recent result on the extendibility of subsets of the spacetime to smooth spacelike Cauchy surfaces in globally hyperbolic spacetimes by Bernal and Sanchez~\cite{h} is needed to complete the proof on the existence of Wald solutions. 
\end{enumerate}

The second purpose of this paper is to prove various properties satisfied by the solutions and to analyse examples. Since many already known results are quoted in this work for completeness, we shall for clarity list the other main results of this paper:

\begin{enumerate}
\item \label{number1} We show in detail the properties satisfied by the solutions only implicit in the paper by Wald and Ishibashi \cite{c}. In that paper, they assumed certain conditions on the dynamics (e.g.\ constraints on the support of solutions, how they are transformed under time translation and reflection in time and the existence of an energy norm) and then proved that it must be generated by a particular s.a.e.\ $A_E$. In this paper we answer the natural question: ``to what extent are these conditions on the dynamics necessary?", that is, does the dynamics generated by a particular choice of acceptable s.a.e.\ $A_E$ satisfy these conditions? We shall answer mostly to the affirmative. However we note that Assumption~1 in Wald and Ishibashi \cite{c} (the support of solutions/``the causality assumption": see below) is not always true of dynamics generated by an arbitrary acceptable s.a.e.. 
\vspace{0em}

To amplify this point, Assumption 1 on the dynamics in Wald and Ishibashi \cite{c} states that the support of the solution to the Klein-Gordon equation corresponding to Cauchy data always lies within the union of the causal future and past of the support of that data. In Section~\ref{counterexample} we give a simple example of a standard static spacetime and a choice of s.a.e.\ $A_E$ such that the dynamics generated satisfies: $\supp \phi\nsubseteq J(K)$ for some initial data $(\phi_0,\dot{\phi}_0)$, where $K=\supp\phi_0\cup\supp\dot{\phi}_0$. We show in Section~\ref{sec:supportofwaldsolutions} however that, in general, $\supp \phi$ is contained in $J(K)$ up until the time at which the data can ``hit'' any edge in the spacetime. We prove that this weaker form of Assumption 1 is true of all dynamics constructed in this paper, using the previous results on the uniqueness of solutions in Section~\ref{sec:uniquenessofwaldsolutions} and results on the causal structure of the spacetime Section~\ref{sec:causalstructure (ii)}.

\item An important property satisfied by the ``Wald solutions'' is that the value of the standard symplectic form evaluated at any pair of solutions is independent of the static hypersurface on which it is calculated, so the space of solutions has a natural symplectic space structure. Since it is this structure which allows the quantisation of the theory, by the construction of the Weyl algebra (B\"{a}r et al.\ \cite{e}), it is an important result in Section~\ref{sec:sympform} that even after extending Wald's method to the case of only acceptable s.a.e.s, we retain the conservation of the symplectic form even in the cases where the positive definiteness of the energy form (Section~\ref{sec:bilinearform}) is lost.

\item In Section~\ref{symmetries} we prove how the solutions are transformed under time translation and reflection. We show that the previously constructed energy form is invariant under both time translation and reflection of its arguments whereas the symplectic form is time-translation invariant but acquires a minus sign under reflection of its arguments in time. (These properties correspond to assumptions 2(i), 2(ii), 3(i) and 3(ii) of Wald and Ishibashi \cite{c}.)

\item In Sections~\ref{circle}-\ref{interval}, as examples we consider three simple one-dimensional Riemannian manifolds ($S^1$, $(0,\infty)$ and $(0,a)$ with their usual differential structures and Riemannian metrics), each of which will then generate a standard static spacetime with $V=1$. In order to classify the dynamics generated on the latter spacetimes by the construction of this paper, we give the s.a.e.s of minus the Laplacian on $S^1$, $(0,\infty)$ and $(0,a)$ and determine their spectra and resolvents. The proofs of these statements are to be found in Appendices~F,G and H of Bullock~\cite{me}. 

\item In Section~\ref{acceptablenonbounded} we construct an acceptable non-bounded below s.a.e.\ $A_E$ of minus the Laplacian on $\Sigma=\mathbb{Z}\times(0,\infty)$. This example then shows that the extension of theory of Wald~\cite{b} from bounded-below s.a.e.s to acceptable s.a.e.s carried out in this paper is non-trivial (Wald's paper only deals  with positive s.a.e.s).
\end{enumerate}

\section{The Cauchy problem of the Klein-Gordon Equation on Standard Static Spacetimes: The construction of candidate solutions as vector-valued functions}\label{candidatesolutions}

We start by defining the class of spacetimes, with which this paper is concerned (see e.g. Sachs \& Wu \cite{l}, Sanchez \cite{a} and O'Neill \cite{g}).

\begin{defn}
(Sanchez \cite{a})\label{standardstatic}  A \textbf{standard static spacetime} is defined by: $$(M,g)=(\mathbb{R}\times \Sigma,V^2 dt^2-h),$$ where $(\Sigma,h)$ is a smooth Riemannian manifold; $M$ is given the usual product topology and differential structure; $dt^2$ is the Euclidean metric on $\mathbb{R}$; $V\in C^{\infty}(\Sigma)$ with $V>0$. The time-orientation is that given by the timelike vector field $X=\frac{\partial}{\partial t}$.
\end{defn}
\begin{rem}
In the above expression for the metric $g=V^2 dt^2-h$, we are using a slightly sloppy notation for conciseness. Denoting by $\pi_1: \mathbb{R}\times \Sigma\rightarrow \mathbb{R}$ and $\pi_1: \mathbb{R}\times \Sigma\rightarrow\Sigma$ the two projection (bundle) maps, then more precisely: $g=\pi_2^{*}(V^2)\pi_1^*( dt^2)-\pi_2^{*}(h)$, where $\pi_i^*$ is the pull-back applied here to metrics and functions and $dt^2:=dt\otimes dt$ is the standard Riemannian metric on $\mathbb{R}$. 
\end{rem}

Let $(M,g)=(\mathbb{R}\times \Sigma,V^2 dt^2-h)$ be a standard static spacetime. As $\Sigma$ is a smooth manifold then for each $t\in\mathbb{R}$ the map $\pi_t:\Sigma\rightarrow M$ given by $x\rightarrow (t,x)$ is a smooth embedding from $\Sigma$ to $M$ and for each embedded submanifold $\Sigma_t:=\{t\}\times\Sigma=\pi_t(\Sigma)\subset M$ there exists a unique unit future-pointing smooth timelike vector field $n_t=V^{-1}\frac{\partial}{\partial t}$ normal to each tangent space of $\Sigma_t$. Note that we have not assumed that the manifold $\Sigma$ is orientable.

We wish to solve the Klein-Gordon equation on an arbitrary standard static spacetime. For an arbitrary spacetime and mass  $m\geq 0$ the Klein-Gordon equation reads:
\begin{equation}\label{E:5}
(\largesquare_g +m^2) \phi = 0,
\end{equation}
where $\largesquare_g=\divergence_g \circ \grad_g$ is the Laplace-Beltrami operator (see Appendix D of Bullock~\cite{me}), sometimes locally given by: $\nabla^{\mu}\nabla_{\mu}$ where $\nabla_{\mu}$ is the covariant derivative defined by the metric. Alternatively, the Klein-Gordon equation can be expressed in local coordinates: $$\largesquare_g=\frac{1}{\sqrt{|g|}}\partial_{\mu}g^{\mu\nu}\sqrt{|g|}\;\partial_{\nu}$$ 
where $g:=\det(g_{\mu\nu})$.
\begin{rem} Note also that we shall demand that $\phi\in C^\infty(M)$, where $C^\infty(M)$ is defined as the space of all smooth $\mathbb{K}$-valued functions on $M$, where $\mathbb{K}=\mathbb{R}$ or $\mathbb{C}$. We are removing the dependence of the field of scalars from our notation for $C^\infty(M)$ purely for brevity. We shall find that the results of this paper apply equally well to solving the Klein-Gordon equation for real-valued functions as for complex-valued functions. In the sequel we shall take all function spaces, Hilbert spaces etc.\ to be either over $\mathbb{R}$ or $\mathbb{C}$ as required. We shall on occasion in this paper mention where we may have to treat the two cases separately. For instance Sections~\ref{circle}-\ref{interval} only apply to the complex case as will be discussed there.   
\end{rem}

Our spacetime of interest is: $M=\mathbb{R}\times\Sigma$ with $g=V^2dt^2-h$, where $\Sigma$ is a smooth manifold with smooth Riemannian metric $h$ and $V\in C^\infty(\Sigma)$, $V>0$. For this spacetime, we define a solution to the Cauchy problem for the Klein Gordon equation to be a linear map: \begin{align*}
\Psi: C_0^\infty(\Sigma_0)\times C_0^\infty(\Sigma_0) &\rightarrow C^\infty (M)\\
(\phi_0,\dot{\phi}_0)&\mapsto \phi,
\end{align*}
such that, for all $\phi_0,\dot{\phi}_0\in C_0^\infty(\Sigma_0)$, if $\Psi(\phi_0,\dot{\phi}_0)=\phi$ then:
\begin{enumerate} 
\item$(\largesquare_g +m^2)\phi=0$
\item $\phi|_{\Sigma_0}=\phi_0$
\item $\partial_t\phi|_{\Sigma_0}=\dot{\phi}_0$.
\end{enumerate}
In this paper, we shall construct solutions to the Cauchy problem. (We shall in fact find a solution to an extension of this problem, that is, extend the space of test functions $C_0^\infty(\Sigma_0)$ to a certain subspace $\chi_E$ of $C^\infty(\Sigma_0)$.) We start by expressing the Klein-Gordon equation in a simpler form. 
Given an atlas $(U_\alpha,\phi_\alpha)$ for $\Sigma$ we have the following atlas for $M$: $(\mathbb{R}\times U_\alpha,t\times \phi_\alpha)$.
In these local coordinates the Laplace-Beltrami operator reads: 
$$\largesquare_g=V^{-2}\partial^2_t-V^{-1}D^iVD_i,$$
where $D_i$ is the covariant derivative on $\Sigma$ induced by $h$. Thus equation \eqref{E:5} reads:
$(V^{-2}\partial^2_t-V^{-1}D^iVD_i+m^2)\phi=0$ iff $(\partial^2_t-VD^iVD_i+m^2V^2)\phi=0$ iff:
\begin{equation}\label{kleingordonsimp}
\partial^2_t\phi=-A\phi,
\end{equation}
where $A=-VD^iVD_i+m^2V^2$. Note that in coordinate free notation: $A=-V\divergence_h V\grad_h +m^2V^2$. See Appendix~D in Bullock~\cite{me} for more details.  

We solve this form of the Klein-Gordon equation with the methods of functional analysis on the (real or complex) Hilbert space $L^2(\Sigma,V^{-1}d\vol_h)$. (This space is defined in e.g.\ Section~D.3 of Bullock~\cite{me}. 
On the (real or complex) Hilbert space $L^2 (\Sigma,V^{-1}d\vol_h)$, we have the following linear operator $A$:  \begin{align}\label{operatorA}
D(A)&=[C_0 ^{\infty}(\Sigma)]\\
A[\phi]&=[(-VD^iVD_i+m^2V^2)\phi],
\end{align}
where $D_i$ is the covariant derivative on $(\Sigma,h)$ and $\phi\in C_0^\infty(\Sigma)$. That $A$ is symmetric and  positive is proven in Proposition~D.10 in the Bullock~\cite{me}. Note that the adjoint is a well-defined linear operator since $A$ is densely defined. Its adjoint $A^*$ is given by \label{adjointofA}: 
\begin{align*}
D(A^*)&=\{\phi\in L^2(\Sigma, V^{-1}d\vol_h)\text{ s.t. } A\phi\in L^2(\Sigma, V^{-1}d\vol_h)\}\\
A^*\phi&=A\phi,
\end{align*}
where in both lines $A\phi$ is meant distributionally and a priori $\phi,A\phi\in D'(\Sigma)$. Here, functions are interpreted as distributions by use of the smooth measure $V^{-1}d\vol_h$ on $\Sigma$. 

\begin{rem}
Note that since $[C_0^\infty(\Sigma)]\subseteq D(A^*)$ then $D(A^*)$ is densely defined and so $A$ is closable. Also, be aware that the reason for the appearance of the partial differential operator $A$ instead of its formal adjoint $A^*$ in the above definition of the linear operator $A^*$ is that $A$ is formally self-adjoint with respect to the smooth measure $V^{-1}d\vol_h$. See Section~D in Bullock~\cite{me} for definitions of these terms. 
\end{rem}

The domain of the closure $\overline{A}$ of $A$ is given by the closure of $[C_0^\infty(\Sigma)]$ in the Hilbert space $D(A^*)$ with the inner product $\langle\cdot,\cdot\rangle_{A^*}$:
$$\langle \phi,\theta\rangle_{A^*}= \langle \phi,\theta\rangle_{L^2(\Sigma, V^{-1}d\vol_h)}+\langle A^*\phi,A^*\theta\rangle_{L^2(\Sigma, V^{-1}d\vol_h)},$$

It is important to note that $A$ is not necessarily essentially self-adjoint (e.s.a.). The following well-known theorem gives a case where $A$ is e.s.a.. 

\begin{thm}[Essential Self-Adjointness of minus the Laplacian on Complete Riemannian Manifolds]\label{esa}
Let $(\Sigma,h)$ be a complete Riemannian manifold. Then letting $V=1$ and $m=0$, we have $A=-\divergence_h \grad_h=-\Delta_h$, minus the Laplacian corresponding to the metric $h$. Then if $D(A)=[C^\infty_0(\Sigma)]$ in the Hilbert space $H=L^2(\Sigma, d\vol_h)$, then $A$ is essentially self-adjoint.
\end{thm}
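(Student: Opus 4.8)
The plan is to use the standard deficiency-space criterion for essential self-adjointness of a semibounded operator. Since $A=-\Delta_h$ is symmetric and nonnegative (the special case $V=1$, $m=0$ of Proposition~D.10 in Bullock~\cite{me}), $A$ is essentially self-adjoint if and only if the deficiency space at the point $-1$, which lies below the spectrum, is trivial; that is, if and only if every $u\in D(A^*)\subseteq L^2(\Sigma,d\vol_h)$ satisfying $A^*u=-u$ must vanish. Unwinding the definition of $A^*$ given above, this means the only $L^2$ solution of $\Delta_h u=u$ in the distributional sense is $u=0$. First I would invoke elliptic regularity for the elliptic operator $\Delta_h-1$ (which has smooth coefficients) to upgrade any such $u$ to a smooth function, so that all subsequent integrations by parts are classically justified.

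The heart of the argument is an energy estimate obtained by testing the equation against $\chi_n^2\bar u$, where $\{\chi_n\}\subset C_0^\infty(\Sigma)$ (or Lipschitz functions with compact support) is a sequence of cutoffs to be constructed. Pairing $\Delta_h u=u$ with $\chi_n^2\bar u$ and integrating by parts gives
\begin{equation*}
\int_\Sigma \chi_n^2\,|u|^2\,d\vol_h + \int_\Sigma \chi_n^2\,|\grad_h u|^2\,d\vol_h = -2\int_\Sigma \chi_n\,\bar u\,\langle \grad_h\chi_n,\grad_h u\rangle_h\,d\vol_h.
\end{equation*}
Estimating the right-hand side by Cauchy--Schwarz and then Young's inequality lets me absorb a copy of $\int_\Sigma\chi_n^2|\grad_h u|^2\,d\vol_h$ into the left-hand side, leaving
\begin{equation*}
\int_\Sigma \chi_n^2\,|u|^2\,d\vol_h + \tfrac12\int_\Sigma \chi_n^2\,|\grad_h u|^2\,d\vol_h \le 2\int_\Sigma |u|^2\,|\grad_h\chi_n|^2\,d\vol_h.
\end{equation*}
Everything then reduces to making the right-hand side tend to $0$ while $\chi_n\to 1$ pointwise.

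The main obstacle, and the only place where completeness of $(\Sigma,h)$ is genuinely used, is the construction of cutoffs with uniformly small gradient. Here I would appeal to the Hopf--Rinow theorem: completeness implies that the distance function $r(x)=d_h(x,x_0)$ from a fixed basepoint is proper, so closed metric balls are compact. Fixing $\varphi\in C^\infty(\mathbb{R})$ with $\varphi\equiv 1$ on $(-\infty,1]$, $\varphi\equiv 0$ on $[2,\infty)$, and $0\le\varphi\le1$, I set $\chi_n(x)=\varphi(r(x)/n)$; properness guarantees $\chi_n$ has compact support, while $|\grad_h r|=1$ almost everywhere gives $|\grad_h\chi_n|\le \|\varphi'\|_\infty/n$. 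Since $r$ is merely Lipschitz one either mollifies $\chi_n$ or runs the integration by parts for Lipschitz test functions. With this choice $\int_\Sigma |u|^2\,|\grad_h\chi_n|^2\,d\vol_h \le (\|\varphi'\|_\infty^2/n^2)\,\|u\|_{L^2}^2\to 0$, and monotone convergence applied to the left-hand side forces $\int_\Sigma|u|^2\,d\vol_h=0$, hence $u=0$. I expect the delicate points to be the regularity and approximation needed to integrate by parts legitimately against the non-smooth cutoff $\chi_n$, and the bookkeeping of the Young's inequality constants; the conceptual content lies entirely in the completeness-driven cutoff construction.
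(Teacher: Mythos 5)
Your proof is correct, and it is a genuine proof where the paper offers none: the paper's ``proof'' of Theorem~\ref{esa} is a one-line citation to Taylor \cite{taylor2}, Proposition 8.2.4. What you have written out is essentially the classical Gaffney-type argument that underlies that reference: reduce essential self-adjointness of the nonnegative symmetric operator $A=-\Delta_h$ to the triviality of $\ker(A^*+1)$, upgrade any $L^2$ distributional solution of $\Delta_h u = u$ to a smooth one by elliptic regularity, and kill it with an energy estimate against cutoffs $\chi_n=\varphi(r/n)$ whose existence (compact support, gradient $O(1/n)$) is exactly where completeness enters via Hopf--Rinow. The computation is right: pairing with $\chi_n^2\bar u$, integrating by parts, and absorbing via Cauchy--Schwarz and Young gives your second display, whose right-hand side is $O(n^{-2})\|u\|_{L^2}^2$, and monotone convergence then forces $u=0$. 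Compared with the paper, your route buys a self-contained argument that makes visible precisely how completeness is used; the paper's citation buys brevity (and Taylor's treatment handles the same reduction, working with deficiency spaces rather than the point $-1$).

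Three small points to tighten. First, the equivalence ``essentially self-adjoint $\iff$ one deficiency space at a point below the bound is trivial'' is valid only because $A\geq 0$ (for a general symmetric operator you need both deficiency spaces); you say this, but it deserves an explicit citation, e.g.\ the semibounded criterion in Reed and Simon \cite{f2}. Second, deficiency-space language presupposes complex scalars; since the paper allows $\mathbb{K}=\mathbb{R}$, either phrase the criterion as density of $\mathrm{Ran}(A+1)$ (equivalently $\ker(A^*+1)=\{0\}$, which makes sense over $\mathbb{R}$) or complexify first, as in the paper's remark citing Seggev \cite{o}. Third, for the cutoffs you only need $|\grad_h r|\leq 1$ a.e.\ (which is immediate from $r$ being $1$-Lipschitz), not $|\grad_h r|=1$ a.e.; and the integration by parts against the Lipschitz $\chi_n$ should be stated as the divergence theorem for the compactly supported Lipschitz vector field $\chi_n^2\bar u\,\grad_h u$, justified by mollification, exactly as you indicate. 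None of these affects the validity of the argument.
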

\begin{proof}
See e.g.\ Taylor \cite{taylor2}, Proposition 8.2.4.
\end{proof}

As pointed out by Wald \cite{b}, since $A$ is a symmetric positive linear operator then at least one positive self-adjoint extension exists. We do not restrict ourselves however to using a single extension, but we are forced to only consider a certain class of s.a.e.s of $A$, which we define shortly. We wish to first make a remark concerning the choice of the field of scalars. 
\begin{rem}
Note that if we define $H_\mathbb{K}=L^2(\Sigma,\mathbb{K},V^{-1}d\vol_h)$ as the space of equivalence classes of $\mathbb{K}$-valued square-integrable Borel-measurable functions, where $\mathbb{K}= \mathbb{R}$ or $\mathbb{C}$ and $f\sim g$ iff $f=g$ a.e., then we can view $A$ as a symmetric linear operator on either the real Hilbert space $H_\mathbb{R}$ or the complex Hilbert space $H_\mathbb{C}$. The set of self-adjoint extensions of these operators are related. To see how, take the general situation of a real Hilbert space $H$ and its complexification $H_\mathbb{C}$. Now, on $H_\mathbb{C}$ can be defined a natural complex conjugation operator $C$. It is shown in Section 2 of Seggev \cite{o} that the self-adjoint extensions of a symmetric linear operator $A$ on $H$ are are in bijection with the self-adjoint extensions of the symmetric linear operator $A_\mathbb{C}$ on $H_\mathbb{C}$, which commute with $C$, where $A_\mathbb{C}$ is the complexification of $A$.
\end{rem} 
We now introduce our new notion of an acceptable s.a.e.: 
\begin{defn}\label{acceptablesae}
A s.a.e.\ $A_E$ of $A$ is called \textbf{acceptable} if it satisfies:
\begin{equation}
[C_0^\infty(\Sigma)]\subseteq \bigcap_{t>0}D(\exp(A_E^-)^{1/2}t),
\end{equation}
where $A_E^-:=x^-(A_E)$ is the positive self-adjoint operator defined via continuous functional calculus using the function $x^-:\mathbb{R}\rightarrow [0,\infty)$ defined by:
$$x^-(y):=\left\{\begin{array}{ll}-y,&y\leq 0\\
0,&\text{otherwise.}\end{array}\right.$$ The operator $A_E^-$ is called the negative part of the operator $A_E$ and it's bounded iff $A_E$ is bounded-below (i.e. there exists $M\in\mathbb{R}$ such that $\langle Ax|x\rangle\geq -M||x||^2$ \text{ for all } $x\in D(A)$ iff $\sigma(A)\subseteq [-M,\infty)$.
\end{defn}

\begin{rem}
In the paper \cite{b} by Wald, he considered only positive s.a.e.s of $A$. Clearly, a positive linear operator is bounded-below. If $A_E$ is a bounded-below s.a.e.\ then $A_E^-$ is a bounded linear operator, as is $(A_E^-)^{1/2}$. Then $\exp(A_E^-)^{1/2}t$ is also a bounded linear operator for all $t$ and so: $$[C_0^\infty(\Sigma)]\subseteq L^2(\Sigma,V^{-1}d\vol_h)=\bigcap_{t>0}D(\exp(A_E^-)^{1/2}t).$$ Thus every bounded-below s.a.e $A_E$ is also acceptable. Thus we are extending the method of Wald to more s.a.e.s of $A$. 
\end{rem}

The approach (taken from Wald \cite{b}) is to find a map $\mathbb{R}\rightarrow D(A_E)\subseteq H$, where $H=L^2(\Sigma,V^{-1}d\vol_h)$. $t\rightarrow \phi_t$, for each pair of data $\phi_0,\dot{\phi}_0\in C^\infty_0(\Sigma)$. We demand that the map $t\rightarrow\phi(t)$ is twice differentiable as a vector-valued function with double-derivative
\begin{equation}\label{stuff}
\frac{d^2 \phi_t}{dt^2}=-A_E \phi_t.
\end{equation}

Our intended solution to this problem is given in terms of any acceptable s.a.e.\ $A_E$ of $A$:
\begin{equation}\label{phit}
[\phi_t]=\cos(A_E ^{1/2}t)[\phi_0] +A_E ^{-1/2}\sin(A_E ^{1/2}t)[\dot{\phi}_0]
\end{equation}

Our immediate problem is to show that this expression makes sense. If $A_E$ was positive self-adjoint then, following Wald \cite{b}, we can take the square root to form a positive self-adjoint unbounded linear operator $A_E^{1/2}$ and then construct the two bounded linear operators $\cos(A_E ^{1/2}t)$ and $A_E ^{-1/2}\sin(A_E ^{1/2}t)$ by applying the multiplication operator form of the Spectral Theorem, as in Reed and Simon \cite{f}. If $A_E$ was not positive but merely bounded-below, then we shall show that this method still works and $\cos(A_E ^{1/2}t)$ and $A_E ^{-1/2}\sin(A_E ^{1/2}t)$ are still well-defined bounded linear operators despite the non-existence of the square root. If, however $A_E$ is not bounded-below then these linear operators will be unbounded and we must concern ourselves with their (dense) domains. We shall show that even when $A_E$ is not bounded-below, but is acceptable (Definition~\ref{acceptablesae}), then we can solve the Cauchy problem with respect to smooth initial data of compact support.

In this paper, in order to avoid expressions involving square roots of non-positive self-adjoint linear operators we introduce an alternative representation of equation \eqref{phit}. Define the functions $C,S:\mathbb{R}^2\rightarrow \mathbb{R}$:
 \begin{align*}
 C(t,x)& =\cos(x^{1/2}t)=\left\{\begin{array}{ll}
\cos(x^{1/2}t) &\text{ for }x\geq 0\\
\cosh((-x)^{1/2}t)&\text{ for }x<0\end{array}\right. \\
 S(t,x)& =t\frac{\sin(x^{1/2}t)}{x^{1/2}t}=\left\{\begin{array}{ll}
x^{-1/2}\sin(x^{1/2}t)&\text{ for }x\geq 0\\
(-x)^{-1/2}\sinh((-x)^{1/2}t)&\text{ for }x<0\end{array}\right.,
 \end{align*}
 where: $$\frac{\sin z}{z}:=\sum_{n=0}^\infty\frac{(-1)^n z^{2n}}{(2n+1)!}\text{ and }\frac{\sinh z}{z}:=\sum_{n=0}^\infty\frac{z^{2n}}{(2n+1)!}$$ are analytic functions on $\mathbb{C}$, both being invariant under $z\rightarrow -z$ (the same is true of course of the functions $\cos z$ and $\cosh z$). This makes the definitions of $C(t,x)$ and $S(t,x)$ independent of the choice of square root. 
%The functions $C$ and $S$ are well-defined as if a different root of $x$ is taken then the value of $C(t,x)$ is unchanged as $\cos$ is an even function. Similarly for $S(t,x)$ as long as the same root of $x$ is used for the numerator as for the denominator. 
Since $C(t,\cdot)$ and $S(t,\cdot)$ are (unbounded) real-valued measurable functions for each fixed $t$, then by functional calculus we can construct the (possibly unbounded) self-adjoint linear operators $C(t,A_E)$ and $S(t,A_E)$, for any s.a.e.\ $A_E$ of $A$. It is shown in the remarks following Propositions~A.7 and A.8 of Bullock~\cite{me} that for $t\geq 0$: $$D(\exp(A_E^-)^{1/2}t)=D(C(t,A_E))\subseteq D(S(t,A_E)).$$
Thus:
$$[C_0^\infty(\Sigma)]\subseteq \bigcap_{t>0}D(\exp(A_E^-)^{1/2}t)=\bigcap_{t>0}D(C(t,A_E))$$
and the condition on $A_E$ that it is acceptable is precisely what is required for $C(t,A_E)$ (and so $S(t,A_E)$) to be defined on equivalence classes of test functions.

Given an acceptable s.a.e.\ $A_E$ of $A$, let our proposed solution to Equation~$(\ref{stuff})$ for arbitrary $\phi_0,\dot{\phi}_0\in C^\infty_0(\Sigma)$ define:
\begin{equation}\label{stuff2}
[\phi_t]=C(t, A_E)[\phi_0]+S(t,A_E)[\dot{\phi}_0].
\end{equation}
If $A_E$ is bounded-below, then $C(t,A_E)$ and $S(t,A_E)$ are bounded linear operators for all $t$ (proven in Appendix~A of Bullock~\cite{me}) and $[\phi_t]$ is a well-defined element of $L^2(\Sigma, V^{-1}d\vol_h)$. If not, then the condition on $A_E$ in Definition~\ref{acceptablesae} is precisely what is required for the RHS to make sense. We wish to show that in fact the map $t\rightarrow [\phi_t]$ is infinitely differentiable and that $[\phi_t]\in [C^\infty(\Sigma)]\cap L^2(\Sigma, V^{-1}d\vol_h)$ for all $t\in \mathbb{R}$.

The following proposition is vital for this paper. It is an application of Sobolev theory. It is taken from Wald~\cite{b} and reproduced here for completeness. (For the definitions of $L^p$ spaces, distributions and Sobolev spaces $W^{k,p}(M,\mu)$ on a Riemannian manifold $M$ with smooth measure $\mu$, see Appendix~D.3 of Bullock~\cite{me}.)

\begin{thm}\label{sobolev}
Any s.a.e.\ $A_E$ of $A$ satisfies: $D(A_E^\infty)\subseteq [C^\infty(\Sigma)]$.
\end{thm}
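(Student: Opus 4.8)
My plan is to recognise $D(A_E^\infty)=\bigcap_{n\ge 1}D(A_E^n)$ as a space of $L^2$-functions that are ``smooth'' relative to the elliptic operator $A$ in a purely distributional sense, and then to upgrade this to genuine smoothness by interior elliptic regularity followed by Sobolev embedding. First I would fix a representative $\phi$ of a class $[\phi]\in D(A_E^\infty)$. Because every self-adjoint extension satisfies $A_E\subseteq A^*$, and because $A^*$ acts as the distributional differential operator $A$ (this is exactly the description of $A^*$ given just after \eqref{operatorA}), each application of $A_E$ to a vector in its domain coincides with the distributional action of $A$. Iterating, for a class lying in every $D(A_E^n)$ one obtains $A^n\phi=A_E^n\phi\in L^2(\Sigma,V^{-1}d\vol_h)$ for all $n$, where $A^n\phi$ is understood distributionally. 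Thus the statement reduces to the local, measure-independent assertion that an $L^2_{\mathrm{loc}}$ function all of whose iterated distributional images under $A$ lie in $L^2_{\mathrm{loc}}$ must be smooth.

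The second step is to exploit ellipticity. In any chart the principal symbol of $A=-VD^iVD_i+m^2V^2$ is $V^2 h^{ij}\xi_i\xi_j$, which is positive-definite since $V>0$ and $h$ is Riemannian; hence $A$ is a second-order elliptic operator with smooth coefficients. Working in local charts with a partition of unity---and observing that replacing the density $d\vol_h$ by the smooth positive density $V^{-1}d\vol_h$ alters neither the local $L^2$ class nor the local Sobolev classes---I would invoke the interior elliptic estimate in its order-two form: if $u\in L^2_{\mathrm{loc}}$ and $Au\in W^{s,2}_{\mathrm{loc}}$ then $u\in W^{s+2,2}_{\mathrm{loc}}$. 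The regularity is then bootstrapped: for each fixed $N$, a downward induction on the iterates $A^{N-k}\phi$ (at each step applying the order-two estimate to $u=A^{N-k-1}\phi$, whose image $Au=A^{N-k}\phi$ already lies in $W^{2k,2}_{\mathrm{loc}}$) gains two derivatives per step and delivers $\phi\in W^{2N,2}_{\mathrm{loc}}(\Sigma)$ for every $N$.

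Finally, since $\phi$ lies in $W^{2N,2}_{\mathrm{loc}}(\Sigma)$ for all $N$, the Sobolev embedding theorem on the $d$-dimensional manifold $\Sigma$ (namely $W^{k,2}_{\mathrm{loc}}\hookrightarrow C^{k-\lfloor d/2\rfloor-1}$) shows that $\phi$ agrees almost everywhere with a function in $C^m(\Sigma)$ for every $m$, hence with a function in $C^\infty(\Sigma)$. Therefore $[\phi]\in[C^\infty(\Sigma)]$, as required.

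I expect the main obstacle to be organising the bootstrapping on a manifold: one must pass carefully between the intrinsically defined $A$ (together with its density) and the Euclidean elliptic theory valid in each chart, and arrange the induction so that a full two derivatives are recovered at each stage rather than stalling at a fixed regularity. A secondary point, easy to overlook, is the distributional identification: one must verify that ``$A^n\phi\in L^2$'' in the operator-theoretic sense of membership in $D(A_E^n)$ really coincides with the PDE statement that the distributional $A^n\phi$ is represented by an $L^2$ function---which is precisely what the containment $A_E\subseteq A^*$ and the distributional description of $A^*$ supply.
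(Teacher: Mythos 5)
Your proposal is correct and follows essentially the same route as the paper's proof: identify the operator-theoretic iterates $A_E^n\phi$ with the distributional iterates $A^n\phi$ (via $A_E\subseteq A^*$ and the distributional description of $A^*$), then use interior elliptic regularity in charts to get $\phi\in W^{2N,2}_{loc}$ for all $N$, and finish with Sobolev embedding. The only cosmetic difference is that the paper applies elliptic regularity once to the order-$2n$ elliptic operator $A^n$, whereas you bootstrap the second-order estimate $N$ times; both are standard and give the same conclusion.
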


\begin{proof}({Wald \cite{b}})
We know that $[C_0^\infty(\Sigma)]=D(A)\subseteq D(A_E^\infty)$. 
Take $\phi\in D(A_E^\infty)$. Since $L^2(\Sigma, V^{-1}d\vol_h)\subseteq L^1_{loc}(\Sigma, V^{-1}d\vol_h)\subseteq D'(\Sigma)$, the space of distributions on the manifold $\Sigma$, then for all $f\in C^\infty_0(\Sigma)$:
$$\phi(A^n f)=\langle \phi, [A^n f]\rangle=\langle \phi, A^n [f]\rangle=\langle \phi, A_E^n [f]\rangle=\langle A_E^n\phi, [f]\rangle=(A_E^n\phi)(f) $$
Thus $A^n\phi\in D(A_E^\infty)\subseteq L^2(\Sigma, V^{-1}d\vol_h)$, where $A^n\phi$ is interpreted in the sense of distributions (since $A$ is a formally self-adjoint partial differential operator of second order w.r.t.\ $V^{-1}d\vol_h$, see Appendix~D.2 of Bullock~\cite{me}).

Take an open set $\Omega\subseteq \Sigma$, which is precompact in the domain of a chart on $\Sigma$. Letting $N:=\dim\Sigma$, then denote the resulting chart map $\Psi:\Omega\rightarrow \mathbb{R}^N$. Restricting $\phi$ to $\Omega$, we have $$A^n\phi\in L^2(\Omega, V^{-1}d\vol_h)=W^{0,2}(\Omega,V^{-1}d\vol_h).$$ As $V^{-1}$ and $|\det(h_{ij})|$ are bounded by below on $\Omega$, then $A^n\phi\in W^{0,2}(\Psi(\Omega))\subseteq W^{0,2}_{loc}(\Psi(\Omega))$, where we are now viewing $\phi$ as a function and $A$ as a p.d.o.\ on $\Psi(\Omega)\subseteq\mathbb{R}^N$. As $A^n$ is an elliptic p.d.o.\ of order $2n$, then, by an elliptic regularity theorem (Theorem~D.12 of Bullock~\cite{me}), $\phi\in W^{2n,2}_{loc}(\Psi(\Omega))$ for all $n$. And by Sobolev's lemma (Theorem~D.13 of \cite{me}), we have (after possibly changing $\phi$ on a null set) that $\phi\in C^l(\Psi(\Omega))$ for any non-negative integer $l<2n-\frac{N}{2}$. Since $n$ and $\Omega$ are arbitrary, then $\phi\in C^\infty(\Sigma)$.
\end{proof}

Using Theorem~\ref{sobolev}, we define a space of smooth functions $\chi_E$ which contains all compactly supported smooth functions (as $A_E$ is acceptable). We shall show in later sections that we can solve the Klein-Gordon equation with respect to data in the space $\chi_E$.
\begin{prop}\label{strongderivs}
Given an acceptable s.a.e.\ $A_E$ of $A$, define:
$$\chi_E:=\{f\in C^\infty(\Sigma) \text{ s.t. }[f]\in D(A_E^\infty)\cap\bigcap_{t>0}D(\exp((A_E^-)^{1/2}t))\}$$
Then the linear operators $C(t,A_E)$ and $S(t,A_E)$ satisfy 
$$C(t,A_E) \text{, }S(t,A_E):[\chi_E]\rightarrow [\chi_E].$$
Also, the maps $t\rightarrow C(t,A_E)$ and $t\rightarrow S(t,A_E)$ are infinitely often strongly differentiable on $[\chi_E]$,
where for $n\in \mathbb{N}\cup \{\infty\}$: $$D(A_E^n)=\{x \in D(A_E):\;A_E^m x\in D(A_E) \text{ for all } m=1,...,n-1\}$$
For $n\in \mathbb{N}$ the following strong derivatives hold on the dense subspace $[\chi_E]$ of $L^2(\Sigma, V^{-1}d\vol_h)$:
%$\bigcap_{t>0, n\geq 0}D(A_E^n\exp((A_E^-)^{1/2}t))=\bigcap_{t>0}D(A_E^nC(t,A_E)) $: 
$$\begin{array}{ll}
\frac{d^{2n}}{dt^{2n}}C(t,A_E)=(-1)^n A_E^n C(t,A_E),
&\frac{d^{2n-1}}{dt^{2n-1}}C(t,A_E)=(-1)^n A_E^n S(t,A_E)\\
\frac{d^{2n}}{dt^{2n}}S(t,A_E)=(-1)^n A_E^n S(t,A_E),
&\frac{d^{2n+1}}{dt^{2n+1}}S(t,A_E)=(-1)^n A_E^n C(t,A_E).
\end{array}$$
\end{prop}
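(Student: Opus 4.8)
The plan is to build everything out of the spectral calculus for $A_E$ and reduce all claims to statements about the scalar functions $C(t,\cdot)$ and $S(t,\cdot)$ and their $t$-derivatives on the spectrum of $A_E$. By the multiplication-operator form of the Spectral Theorem, $A_E$ is unitarily equivalent to multiplication by a real measurable function $a$ on some $L^2(X,d\mu)$, and $C(t,A_E)$, $S(t,A_E)$, $A_E^n$ become multiplication by $C(t,a(\cdot))$, $S(t,a(\cdot))$, $a(\cdot)^n$ respectively. First I would record the two facts I will use repeatedly: the pointwise identities $\partial_t C(t,x)=-x\,S(t,x)$ and $\partial_t S(t,x)=C(t,x)$ (valid for both $x\ge 0$ and $x<0$, directly from the series/closed forms given), and the domain inclusion $D(C(t,A_E))=D(\exp((A_E^-)^{1/2}t))\subseteq D(S(t,A_E))$ quoted before the proposition.

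Next I would prove the invariance $C(t,A_E),S(t,A_E):[\chi_E]\to[\chi_E]$. Fix $[f]\in[\chi_E]$. Because $C(t,x)$ and $S(t,x)$ grow at most like $\exp((x^-)^{1/2}|t|)$ while $x\mapsto x^n C(t,x)$ and $x\mapsto x^n S(t,x)$ have the same exponential type, the acceptability condition $[f]\in\bigcap_{t>0}D(\exp((A_E^-)^{1/2}t))$ forces $C(t,A_E)[f],S(t,A_E)[f]\in D(A_E^\infty)$; one checks this at the level of the multiplier $a$ by verifying $\int |a|^{2n}|C(t,a)|^2\,|\hat f|^2\,d\mu<\infty$ for every $n$ and every $t$, using that $\exp((a^-)^{1/2}|t|)$ dominates each $|a|^n|C(t,a)|$ up to a constant. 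The same integral estimate, now with an extra factor $\exp((a^-)^{1/2}s)$, shows the image again lies in $\bigcap_{s>0}D(\exp((A_E^-)^{1/2}s))$, so $C(t,A_E)[f],S(t,A_E)[f]\in D(A_E^\infty)\cap\bigcap_{s>0}D(\exp((A_E^-)^{1/2}s))$. Then Theorem~\ref{sobolev} gives $D(A_E^\infty)\subseteq[C^\infty(\Sigma)]$, so the images are (classes of) smooth functions, hence lie in $[\chi_E]$.

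For strong differentiability I would prove the base case $\tfrac{d}{dt}C(t,A_E)[f]=-A_E S(t,A_E)[f]$ and $\tfrac{d}{dt}S(t,A_E)[f]=C(t,A_E)[f]$ in the $L^2$-norm, then iterate. The base case is a difference-quotient estimate: $\big\|\tfrac{1}{\epsilon}\big(C(t+\epsilon,A_E)-C(t,A_E)\big)[f]+A_E S(t,A_E)[f]\big\|^2$ equals $\int \big|\tfrac{1}{\epsilon}(C(t+\epsilon,a)-C(t,a))+a\,S(t,a)\big|^2|\hat f|^2\,d\mu$; the integrand tends to $0$ pointwise by the scalar derivative identity, and I would produce a $t$-independent (on a compact $t$-interval) dominating function of the form $\mathrm{const}\cdot\exp((a^-)^{1/2}(|t|+1))\,|\hat f|^2$, which is integrable by acceptability, so dominated convergence applies. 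Given the base case, the higher derivatives follow by induction: differentiating $C$ once produces $-A_E S$, differentiating again produces $-A_E C$, and since $A_E$ is closed and commutes with $C(t,A_E),S(t,A_E)$ on $D(A_E^\infty)$, each further derivative just applies one more factor $-A_E$ and swaps $C\leftrightarrow S$, yielding exactly the four displayed formulas with the correct powers $(-1)^n A_E^n$; the invariance result from the previous paragraph guarantees every intermediate expression stays in $[\chi_E]$ so the induction never leaves the domain where the derivative identities hold.

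The main obstacle is the uniform domination needed for dominated convergence in the unbounded case: the multipliers $C(t,a)$, $a\,S(t,a)$ and their difference quotients all blow up like $\exp((a^-)^{1/2}|t|)$ as $a\to-\infty$, so I must secure a single integrable majorant valid uniformly for $\epsilon$ small and $t$ in a compact interval. This is precisely where acceptability is indispensable — it is the hypothesis $[f]\in\bigcap_{t>0}D(\exp((A_E^-)^{1/2}t))$ that makes $\exp((a^-)^{1/2}(|t|+1))|\hat f|$ square-integrable and thus furnishes the majorant; away from a compact spectral region $a$ is bounded and everything is routine, so the whole difficulty is confined to the $a\to-\infty$ tail controlled by the acceptability condition.
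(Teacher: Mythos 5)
Your overall strategy --- pass to the multiplication-operator form of the spectral theorem, verify the scalar identities $\partial_t C(t,x)=-x\,S(t,x)$ and $\partial_t S(t,x)=C(t,x)$, and establish the operator statements by dominated convergence on difference quotients --- is sound, and it is essentially the functional-calculus route the paper itself relies on (the paper gives no argument in the text but defers to Appendix~A of Bullock's thesis, the same appendix that supplies the domain identity $D(C(t,A_E))=D(\exp((A_E^-)^{1/2}t))\subseteq D(S(t,A_E))$ that you quote). However, two of your explicit domination claims are false as written, and both occur at load-bearing points. First, on the negative spectral axis it is \emph{not} true that $\exp((a^-)^{1/2}|t|)$ dominates $|a|^n|C(t,a)|$ up to a constant: as $a\to-\infty$ one has $|a|^n|C(t,a)|\sim\tfrac{1}{2}|a|^n e^{(-a)^{1/2}|t|}$, so the ratio against $e^{(-a)^{1/2}|t|}$ blows up. The repair is the widening trick you use later anyway: for every $\epsilon>0$ there is $C_{n,\epsilon}$ with $|a|^n|C(t,a)|\leq C_{n,\epsilon}\exp\bigl((a^-)^{1/2}(|t|+\epsilon)\bigr)$ on $a\leq 0$, and this suffices precisely because acceptability puts $[f]$ in $D(\exp((A_E^-)^{1/2}s))$ for \emph{every} $s>0$, not just $s=|t|$.

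Second, and more seriously, your closing assertion that ``away from a compact spectral region $a$ is bounded and everything is routine, so the whole difficulty is confined to the $a\to-\infty$ tail'' is wrong: $\sigma(A_E)$ is in general unbounded \emph{above} (it is so in every example in the paper), and on the positive axis the multipliers $a^nC(t,a)$, $aS(t,a)$ and the difference quotients of $C(t,\cdot)$ grow polynomially in $a$; for instance $|aS(t,a)|=a^{1/2}|\sin(a^{1/2}t)|$ is of size $a^{1/2}$. Your proposed majorant $\mathrm{const}\cdot\exp((a^-)^{1/2}(|t|+1))\,|\hat f|^2$ equals $\mathrm{const}\cdot|\hat f|^2$ for $a\geq 0$ and therefore does not dominate the integrands there, so dominated convergence cannot be invoked with it. This is exactly the point where the hypothesis $[f]\in D(A_E^\infty)$ in the definition of $\chi_E$ --- and not acceptability --- must do the work: a correct majorant is of the glued form $C(1+a^{2})^{N}|\hat f|^2$ on $a\geq 0$ (integrable because $[f]\in D(A_E^\infty)$) and $C\exp\bigl(2(a^-)^{1/2}(|t|+2)\bigr)|\hat f|^2$ on $a<0$ (integrable by acceptability). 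The same two-sided estimate also repairs your invariance paragraph, and it additionally shows that $[\chi_E]$ is invariant under $A_E$ itself --- a fact your induction quietly uses, since the intermediate expressions $A_E^nC(t,A_E)[f]$ and $A_E^nS(t,A_E)[f]$ must remain in the domain where the first-order derivative identities hold. With these corrections your argument goes through and yields the four displayed formulas.
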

\begin{proof}
See Appendix~A of Bullock~\cite{me}.
\end{proof}
\begin{lem}
\label{stuff3}
Thus just as in Equation~\eqref{stuff2}, given initial data $\phi_0,\dot{\phi}_0\in \chi_E$ and letting $[\phi_t]=C(t,A_E)[\phi_0] +S(t,A_E)[\dot{\phi}_0]$, then $[\phi_t]$ is differentiable as a vector valued function to arbitrary order and to even order:
$$\frac{d^{2n}}{dt^{2n}}[\phi_t]=(-1)^n A_E^n [\phi_t]$$
Thus in particular for $n=1$ we have reproduced equation \eqref{stuff} and:
\begin{align*}
[\phi_0]&=[\phi_t]|_{t=0}\\
[\dot{\phi}_0]&=\left.\frac{d}{dt}[\phi_t]\right|_{t=0}
\end{align*}
\end{lem}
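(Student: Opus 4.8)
The plan is to derive Lemma~\ref{stuff3} as an essentially immediate corollary of Proposition~\ref{strongderivs}, with the main work being the careful bookkeeping of strong derivatives of a sum of operator-valued functions applied to fixed vectors. First I would observe that since $\phi_0,\dot{\phi}_0\in\chi_E$, we have $[\phi_0],[\dot{\phi}_0]\in[\chi_E]$, so by Proposition~\ref{strongderivs} both $C(t,A_E)[\phi_0]$ and $S(t,A_E)[\dot{\phi}_0]$ are well-defined elements of $[\chi_E]$ for every $t$, and both maps $t\mapsto C(t,A_E)[\phi_0]$ and $t\mapsto S(t,A_E)[\dot{\phi}_0]$ are infinitely often strongly differentiable on $[\chi_E]$. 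The definition $[\phi_t]=C(t,A_E)[\phi_0]+S(t,A_E)[\dot{\phi}_0]$ then exhibits $t\mapsto[\phi_t]$ as a sum of two infinitely strongly differentiable $[\chi_E]$-valued maps, hence is itself infinitely strongly differentiable with $[\phi_t]\in[\chi_E]$ for all $t$ (using linearity of the strong derivative termwise).

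Next I would compute the even-order derivative. Applying the first and third displayed formulas in Proposition~\ref{strongderivs} to the two summands separately gives, for each $n\in\mathbb{N}$,
\begin{align*}
\frac{d^{2n}}{dt^{2n}}[\phi_t] &= \frac{d^{2n}}{dt^{2n}}C(t,A_E)[\phi_0]+\frac{d^{2n}}{dt^{2n}}S(t,A_E)[\dot{\phi}_0]\\
&= (-1)^n A_E^n C(t,A_E)[\phi_0]+(-1)^n A_E^n S(t,A_E)[\dot{\phi}_0]\\
&= (-1)^n A_E^n\bigl(C(t,A_E)[\phi_0]+S(t,A_E)[\dot{\phi}_0]\bigr)=(-1)^n A_E^n[\phi_t],
\end{align*}
where pulling $A_E^n$ out of the sum is justified because $A_E^n$ is a linear operator and both $C(t,A_E)[\phi_0]$ and $S(t,A_E)[\dot{\phi}_0]$ lie in $[\chi_E]\subseteq D(A_E^\infty)\subseteq D(A_E^n)$, so each summand is individually in the domain. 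Specialising to $n=1$ recovers $\frac{d^2}{dt^2}[\phi_t]=-A_E[\phi_t]$, which is precisely equation \eqref{stuff}.

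Finally I would verify the two initial-data identities by evaluating at $t=0$. From the definitions of the functions $C(t,x)$ and $S(t,x)$ we read off $C(0,x)=\cos(x^{1/2}\cdot 0)=1$ and $S(0,x)=0$ (the factor $t$ in $S(t,x)$ vanishes, and $\frac{\sin z}{z},\frac{\sinh z}{z}\to 1$ as $z\to 0$ so the prefactor stays bounded), whence by functional calculus $C(0,A_E)=\mathrm{Id}$ and $S(0,A_E)=0$; this gives $[\phi_t]|_{t=0}=[\phi_0]$. For the first strong derivative, the second and fourth formulas in Proposition~\ref{strongderivs} (case $n=1$ for the $S$ term and the reading $\frac{d}{dt}C(t,A_E)=-A_E S(t,A_E)$, $\frac{d}{dt}S(t,A_E)=C(t,A_E)$) yield $\frac{d}{dt}[\phi_t]=-A_E S(t,A_E)[\phi_0]+C(t,A_E)[\dot{\phi}_0]$, and evaluating at $t=0$ using $S(0,A_E)=0$ and $C(0,A_E)=\mathrm{Id}$ gives $\frac{d}{dt}[\phi_t]|_{t=0}=[\dot{\phi}_0]$. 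The only point requiring genuine care—and the step I would flag as the main (minor) obstacle—is confirming that the derivative formulas of Proposition~\ref{strongderivs} may be applied termwise and that $A_E^n$ commutes out of the finite sum; this is the reason the hypothesis $[f]\in D(A_E^\infty)$ is built into the definition of $\chi_E$, since it guarantees every intermediate expression stays in the relevant domain and the manipulations are legitimate rather than merely formal.
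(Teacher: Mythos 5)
Your proposal is correct and follows exactly the route the paper intends: the paper states Lemma~\ref{stuff3} without a separate proof, treating it as an immediate consequence of Proposition~\ref{strongderivs}, and your argument is precisely that derivation (termwise strong differentiation of $C(t,A_E)[\phi_0]+S(t,A_E)[\dot{\phi}_0]$, pulling $A_E^n$ out by linearity since both summands lie in $D(A_E^\infty)$, and evaluating $C(0,A_E)=\mathrm{Id}$, $S(0,A_E)=0$ via functional calculus). The only nitpick is an index slip in your parenthetical remark: the reading $\frac{d}{dt}C(t,A_E)=-A_E S(t,A_E)$ comes from the second formula with $n=1$, while $\frac{d}{dt}S(t,A_E)=C(t,A_E)$ is the fourth formula with $n=0$ (with the convention $A_E^0=\mathrm{Id}$), not $n=1$ for the $S$ term.
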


Since $[\chi_E$] is an invariant subspace of $L^2(\Sigma,V^{-1}d\vol_h)$ w.r.t.\ the linear operators $C(t,A_E)$ and $S(t,A_E)$, so, for all initial data $\phi_0,\dot{\phi}_0\in \chi_E$, the solution given in Proposition $\ref{stuff3}$ satisfies: $$[\phi_t]\in D(A_E^\infty)\subseteq [C^{\infty}(\Sigma_t)]\;\;\forall t\in \mathbb{R}.$$

Thus we have solved the Hilbert space version of the Klein-Gordon Equation (\eqref{stuff}). We shall use this in Section~\ref{existence1} to construct solutions of the Klein-Gordon equation itself (Equation~\eqref{kleingordonsimp}).

\section{Causal Structure of Standard Static Spacetimes (i)}\label{sec:causalstructure (i)}
Before we construct solutions to the Klein-Gordon equation in Section~\ref{existence1}, we shall find it useful to introduce some concepts from geometry, namely we shall define the causality relations and define the causal future and causal past of a set. After some preliminaries concerning Riemannian manifolds we shall then analyse the causal structure of an arbitrary standard static spacetime. Later in the section, since we shall need to quote results concerning the well-posedness of the Klein-Gordon equation on globally-hyperbolic spacetimes when we construct our solutions in Section~\ref{existence1}, so we define the terms globally hyperbolic, Cauchy surfaces and Cauchy developments. Subsequently, returning to standard static spacetimes, we shall then in Proposition~\ref{DSigmaexplicit} re-express the Cauchy development $D(\Sigma_0)$ of the hypersurface $\Sigma_0$. Lastly, in Theorem~\ref{globhypsolution1} we shall quote the well-known result concerning the well-posedness of the Klein-Gordon equation on globally-hyperbolic spacetimes.

We shall shortly begin to analyse the causal structure of a standard static spacetime.. However, we shall find it useful to first discuss metrics on Riemannian manifolds (here we use the term ``metric'' as in ``metric space'' rather than as in ``metric tensor''!). It is well known that a Riemannian manifold $(\Sigma, h)$ is naturally metrisable. A metric $d\colon \Sigma\times\Sigma\rightarrow [0,\infty)$ is given by: 
$$d(p,q)=\inf \left\{\begin{array}{lr}
\int_a^b|\dot{\sigma}(t)|dt\text{ s.t.}&\sigma\colon[a,b]\rightarrow \Sigma \text{ is a piecewise smooth}\\  &\text{curve in }\Sigma \text{ with } \sigma(a)=p,\;\sigma(b)=q.
\end{array}\right\},$$
where $|\dot{\sigma}(t)|:=[h_{\sigma(t)}(\dot{\sigma}(t),\dot{\sigma}(t))]^{1/2}$.

\begin{thm}\label{Riemmetric}
Given a Riemannian manifold $(\Sigma, h)$, then the metric $d$ given above induces the topology on $\Sigma$.
\end{thm}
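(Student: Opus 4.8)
The plan is to prove the statement locally: around each point $p\in\Sigma$ I would compare the Riemannian distance $d$ with the Euclidean distance read off in a coordinate chart, and show that the two are bi-Lipschitz equivalent on a small neighbourhood of $p$. Since a chart is a homeomorphism onto an open subset of $\mathbb{R}^N$ (with $N=\dim\Sigma$), the Euclidean coordinate balls form a neighbourhood basis at $p$ for the manifold topology; a two-sided comparison of $d$ with the coordinate distance then shows that the $d$-balls form the same neighbourhood basis, so the two topologies have identical open sets. The same comparison simultaneously yields positive-definiteness of $d$, which is the only metric axiom that is not immediate (symmetry follows by reversing curves, the triangle inequality by concatenating them, and $d(p,p)=0$ by the constant curve).

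First I would fix a chart $(U,\varphi)$ with $\varphi(p)=0$ and choose $R>0$ so that the closed Euclidean ball $\overline{B}:=\overline{B_R(0)}$ lies in $\varphi(U)$. On the compact set $\overline{B}$ the matrix field $(h_{ij})$ is continuous and pointwise positive-definite, so its smallest and largest eigenvalues are continuous positive functions; by compactness there exist constants $0<\lambda\le\Lambda$ with $\lambda|v|^2\le h_{ij}(x)v^iv^j\le\Lambda|v|^2$ for every $x\in\overline{B}$ and every $v\in\mathbb{R}^N$, where $|\cdot|$ denotes the Euclidean norm. Integrating along any piecewise smooth curve whose coordinate image stays in $\overline{B}$ gives the length comparison $\sqrt{\lambda}\,L_e(\sigma)\le L_h(\sigma)\le\sqrt{\Lambda}\,L_e(\sigma)$, where $L_h$ and $L_e$ denote Riemannian and Euclidean length respectively.

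The upper bound $d(p,q)\le\sqrt{\Lambda}\,|\varphi(q)|$ for $q\in\varphi^{-1}(B_R(0))$ follows at once by feeding the coordinate straight segment from $0$ to $\varphi(q)$ into the definition of $d$. The lower bound is where the real work lies, and I expect it to be the main obstacle: the infimum defining $d$ ranges over \emph{all} curves in $\Sigma$, not merely those confined to the chart, so I must control curves that wander far from $p$. The key is an ``escape'' estimate: if a curve from $p$ to a point $q$ with $|\varphi(q)|=\rho<R$ ever leaves $\varphi^{-1}(\overline{B})$, then the portion before its first exit has coordinate image running from $0$ out to the sphere of radius $R$, so that portion alone has $L_e\ge R$ and hence $L_h\ge\sqrt{\lambda}\,R>\sqrt{\lambda}\,\rho$; whereas if the curve stays in $\overline{B}$ then $L_h\ge\sqrt{\lambda}\,L_e\ge\sqrt{\lambda}\,\rho$. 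Taking the infimum gives $d(p,q)\ge\sqrt{\lambda}\,|\varphi(q)|$ for all such $q$, and the same estimate shows $d(p,q)\ge\sqrt{\lambda}\,R>0$ whenever $q\notin\varphi^{-1}(B_R(0))$.

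Combining the bounds yields $\sqrt{\lambda}\,|\varphi(q)|\le d(p,q)\le\sqrt{\Lambda}\,|\varphi(q)|$ on $\varphi^{-1}(B_R(0))$, which forces $d(p,q)=0\Rightarrow\varphi(q)=0\Rightarrow q=p$ there, while the escape estimate handles distant $q$; hence $d$ is positive-definite and a genuine metric. The escape estimate also guarantees that for $r<R$ the ball $B_d(p,\sqrt{\lambda}\,r)$ stays inside the chart region, so the comparison applies to give $B_d(p,\sqrt{\lambda}\,r)\subseteq\varphi^{-1}(B_r(0))\subseteq B_d(p,\sqrt{\Lambda}\,r)$. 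Thus the $d$-balls and the coordinate balls are mutually cofinal at $p$, and since $p$ was arbitrary the metric topology and the manifold topology coincide.
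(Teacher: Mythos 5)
Your proposal is correct, and it is essentially the argument behind the paper's proof, which consists only of a citation to Lee \cite{p} (Lemma 6.2): the standard bi-Lipschitz comparison of $d$ with the Euclidean coordinate distance on a compact coordinate ball, with uniform eigenvalue bounds $0<\lambda\le\Lambda$ on $h_{ij}$, the straight-segment upper bound, and the first-exit (``escape'') lower bound controlling curves that leave the chart. In effect you have supplied in full the proof that the paper delegates to its reference, including the positive-definiteness of $d$ and the mutual cofinality of $d$-balls and coordinate balls, so there is nothing missing.
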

\begin{proof}
See for example Lee \cite{p}, Lemma 6.2.
\end{proof}

For a choice of standard static spacetime $(M,g)=(\mathbb{R}\times \Sigma, V^2dt^2-h)$, we shall always choose the metric on $\Sigma$ induced by the Riemannian metric $V^{-2}h$ on $\Sigma$. The importance of choosing a metric on $\Sigma$ dependent on $V$ shall be seen in Proposition~\ref{J(K)compact}.

\begin{prop}\label{causalfuture} Given the standard static spacetime $(M,g)=(\mathbb{R}\times \Sigma, dt^2-h)$ and $K\subseteq \Sigma_0$, then $(t,y)\in J^+(K)$ iff there exists a smooth curve $\sigma\colon[0,t]\rightarrow \Sigma$ s.t.:
$\sigma(0)=x\in \pi(K),\;\sigma(t)=y\text{ and }|\dot{\sigma}(s)|\leq 1 \;\;\forall s\in [0,t]$.
\end{prop}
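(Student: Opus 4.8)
The plan is to exploit the product structure of $M=\mathbb{R}\times\Sigma$ together with the special form $g=dt^2-h$. Under this metric a tangent vector $(\dot\tau,\dot\rho)$ at a point $(\tau,x)$ has $g$-norm $\dot\tau^2-|\dot\rho|^2$ and satisfies $g\big((\dot\tau,\dot\rho),\partial_t\big)=\dot\tau$, so it is causal precisely when $|\dot\rho|\le|\dot\tau|$ and future-pointing (with respect to the time-orientation $X=\partial/\partial t$) precisely when $\dot\tau>0$. Hence a curve in $M$ is future-directed causal exactly when its time-component is strictly increasing and its $h$-speed never exceeds its temporal speed. The proposition is then essentially a dictionary between such curves in $M$ and curves in $\Sigma$ of $h$-speed at most $1$.

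For the ($\Leftarrow$) direction I would argue directly. Given a smooth $\sigma\colon[0,t]\to\Sigma$ with $\sigma(0)=x\in\pi(K)$, $\sigma(t)=y$ and $|\dot\sigma(s)|\le 1$, lift it to $\gamma\colon[0,t]\to M$, $\gamma(s)=(s,\sigma(s))$. Its tangent is $(1,\dot\sigma(s))$, so $g(\dot\gamma,\dot\gamma)=1-|\dot\sigma(s)|^2\ge 0$ and $g(\dot\gamma,\partial_t)=1>0$; thus $\gamma$ is a smooth future-directed causal curve from $(0,x)\in K$ to $(t,y)$, giving $(t,y)\in J^+(K)$.

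For ($\Rightarrow$) I would start from a future-directed causal curve $\gamma=(\tau,\rho)\colon[a,b]\to M$ with $\gamma(a)=(0,x)\in K$ (so $\tau(a)=0$, $x\in\pi(K)$) and $\gamma(b)=(t,y)$. On each smooth piece the causal condition yields $|\dot\rho|\le\dot\tau$ with $\dot\tau>0$, so $\tau$ is increasing with $\int_a^b\dot\tau=t$, and the projected spatial curve obeys $\mathrm{length}_h(\rho)=\int_a^b|\dot\rho|\le\int_a^b\dot\tau=t$; in particular $d_h(x,y)\le t$, where $d_h$ is the Riemannian distance of Theorem~\ref{Riemmetric} (recall $V=1$ here). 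It remains to manufacture an honestly smooth curve $\sigma\colon[0,t]\to\Sigma$ with $|\dot\sigma|\le 1$ and the same endpoints out of this length bound. This is the crux and the main obstacle: causal curves are only piecewise smooth, so $\rho$ (and any reparametrization of it by its time-component) inherits corners, and one must remove them while preserving the speed bound.

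I would carry out the smoothing by splitting on whether the bound is saturated. If $\mathrm{length}_h(\rho)=t=d_h(x,y)$, then $\rho$ realizes the distance, so its arc-length reparametrization is a minimizing geodesic, hence smooth and of unit speed; since its length equals $t$ this parametrization already lives on $[0,t]$ and furnishes $\sigma$. Otherwise $d_h(x,y)<t$, and by definition of $d_h$ there is a piecewise-smooth curve from $x$ to $y$ of $h$-length strictly below $t$; smoothing its finitely many corners inside coordinate charts, with the modification taken small enough to keep the length below $t$, produces a smooth curve $c$ with $\mathrm{length}_h(c)<t$, which I then reparametrize proportionally to arc length over $[0,t]$ so that $|\dot\sigma|\equiv\mathrm{length}_h(c)/t<1$. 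In either case $\sigma$ is smooth, has the prescribed endpoints, and satisfies $|\dot\sigma|\le 1$, completing the equivalence. The only point demanding genuine care is this regularity bookkeeping at the corners of causal curves; the remaining content is the direct metric computation recorded above.
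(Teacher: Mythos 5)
Your ($\Leftarrow$) direction coincides with the paper's. For ($\Rightarrow$), however, you take a genuinely different and much heavier route. The paper works with the convention (used consistently, e.g.\ in its definition of Cauchy developments and in this very proof) that $J^+$ is defined via \emph{smooth} future-pointing causal curves, so there are no corners to worry about: writing $\gamma=(\gamma_1,\gamma_2)$, future-pointedness gives $\dot\gamma_1>0$ everywhere, hence $\Phi(s)=\int_a^s\dot\gamma_1(u)\,du$ is a smooth diffeomorphism of $[a,b]$ onto $[0,t]$ by the Inverse Function Theorem, and $\sigma=\gamma_2\circ\Phi^{-1}$ is \emph{already} the required smooth curve with $|\dot\sigma|\le 1$. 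No distance estimate, case split, geodesic theory or corner-smoothing appears. Your argument instead compresses the causal curve into the single inequality $d_h(x,y)\le t$ and then rebuilds $\sigma$ from scratch. What this buys is robustness: it works verbatim if one defines causality via piecewise-smooth (or Lipschitz) curves, and it shows as a by-product that the smooth and piecewise-smooth notions of $J^+$ agree on these spacetimes. What it costs is reliance on nontrivial Riemannian facts: that a curve whose length equals the distance between its endpoints is, after arc-length reparametrization, a smooth minimizing geodesic (your case $d_h(x,y)=t$ --- where, to your credit, you avoid any appeal to completeness/Hopf--Rinow by using the projected curve $\rho$ itself as the minimizer, which matters since $\Sigma$ may be incomplete), plus a corner-rounding lemma.

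One loose end in your case $d_h(x,y)<t$: ``reparametrize proportionally to arc length'' is only legitimate for a \emph{regular} curve (nowhere-vanishing velocity); a smooth curve produced by naive smoothing of corners may have zero velocity at isolated points, and then its constant-speed reparametrization need not be smooth. This is patchable by standard means --- for instance, replace the near-minimizing piecewise-smooth curve by a broken geodesic of no greater length (cover it by finitely many totally normal neighbourhoods and replace each arc by the minimizing geodesic segment), which is piecewise regular, and then round the finitely many corners preserving regularity and the strict length bound --- but as written this step is incomplete, and it is precisely the kind of bookkeeping the paper's reparametrization trick avoids entirely.
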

\begin{proof}
Given such a curve, define the curve $\gamma\colon[0,t]\rightarrow M$, by: $\gamma(s)=(s,\sigma(s))$.
Clearly $\gamma$ is smooth. It is also causal since $g_{\gamma(s)}(\dot{\gamma}(s),\dot{\gamma}(s))=1-h_{\sigma(s)}(\dot{\sigma}(s),\dot{\sigma}(s))\geq 0$. From\linebreak $g_{\gamma(s)}\left(\dot{\gamma}(s),\left.\frac{\partial}{\partial t}\right |_{\gamma(s)}\right)=1>0$ it follows that $\gamma$ is future-pointing. Then $\gamma$ is a smooth future-pointing causal curve from $(0,x)$ to $(t,y)$ and thus $(t,y)\in J^+((0,x))$.

Conversely, if $(t,y)\in J^+(0,x)$ then there exists $\gamma\colon[a,b]\rightarrow \mathbb{R}\times \Sigma$ which is smooth future-pointing and causal s.t. $\gamma(a)=(0,x)$ and $\gamma(b)=(t,y)$. Let $\gamma(s)=(\gamma_1(s),\gamma_2(s))$, where $\gamma_1\colon[a,b]\rightarrow \mathbb{R}$ and $\gamma_2\colon[a,b]\rightarrow \Sigma$ are both smooth curves defined using the smooth projection maps. So, $g_{\gamma(s)}(\dot{\gamma}(s),\dot{\gamma}(s))=|\dot{\gamma}_1(s)|^2-h_{\gamma_2(s)}(\dot{\gamma}_2(s), \dot{\gamma}_2(s))\geq 0$. The condition of $\gamma$ being future-pointing gives us: $g_{\gamma(s)}(\dot{\gamma}(s), \left.\frac{\partial}{\partial t}\right |_{\gamma(s)})=\dot{\gamma}_1(s)>0$. We wish to reparametrise this curve and show that it is of the form of the previous proposition. For this purpose, let $\Phi\colon[a,b]\rightarrow \mathbb{R}$ be given by: $\Phi(s)=\int^s_a \dot{\gamma}_1(u)du$. Since $\dot{\Phi}(s)=\dot{\gamma}_1(s)>0$, then, by the Inverse Function Theorem, there exists a smooth inverse $\Phi^{-1}\colon[0,c]\rightarrow [a,b]$, where $\Phi(a)=0$, $\Phi(b)=c$. Now, define the reparametrisation: $\gamma'(s)=\gamma(\Phi^{-1}(s))$. $\gamma'\colon[0,c]\rightarrow \mathbb{R}\times \Sigma$ is then a smooth curve, satisfying:
$$\dot{\gamma}'_1(s)= \dot{\Phi}^{-1}(s)\dot{\gamma}_1(\Phi^{-1}(s))=\frac{\dot{\gamma}_1(\Phi^{-1}(s))}{\dot{\gamma}_1(\Phi^{-1}(s))}=1.$$
Thus $\gamma'_1(s)=s$ $\forall s\in [0,c]$ and let $\sigma=\gamma'_2$ so that $\gamma(s)=(s,\sigma(s))$, $(0,\sigma(0))=\gamma'(0)=\gamma(a)=(0,x)$ and $(c,\sigma(c))=\gamma'(c)=\gamma(b)=(t,y)$. Thus, $c=t$, $\sigma(0)=x$, $\sigma(t)=y$ and as $\gamma'$ is still causal then $|\dot{\sigma}(s)|\leq 1$ for every $s\in [0,t]$.
\end{proof}
We recall a property of Riemannian manifolds which will be very useful to us. Clearly it is false for Lorentzian manifolds. 

\begin{prop}[Mean Value Theorem]\label{meanvaluetheorem}
Let $(\Sigma,h)$ be a Riemannian manifold. Then, for any piecewise smooth curve $\sigma\colon[a,b]\rightarrow \Sigma$, (where $a,b\in \mathbb{R}$, $a<b$): $$d(\sigma(a),\sigma(b))\leq L(\sigma)=\int_a^b|\dot{\sigma}(s)|ds\leq (b-a)\sup_{s\in[a,b]}\{|\dot{\sigma}(s)|\}$$
\end{prop}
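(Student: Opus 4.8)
The plan is to treat the two inequalities entirely separately, since each rests on a different elementary principle: the left-hand bound on the definition of $d$ as an infimum, and the right-hand bound on monotonicity of the integral. Neither direction requires any genuine Riemannian geometry beyond the definition of the length functional $L(\sigma)=\int_a^b|\dot\sigma(s)|\,ds$ recorded just above.

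For the left-hand inequality $d(\sigma(a),\sigma(b))\leq L(\sigma)$, I would simply observe that $\sigma\colon[a,b]\rightarrow\Sigma$ is itself a piecewise smooth curve from $\sigma(a)$ to $\sigma(b)$ (here we use $a<b$, so that $\sigma$ is an admissible competitor with nondegenerate domain). Hence its length $L(\sigma)$ is one of the numbers appearing in the set over which the infimum defining $d(\sigma(a),\sigma(b))$ is taken, and by the definition of infimum $d(\sigma(a),\sigma(b))\leq L(\sigma)$.

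For the right-hand inequality, set $M:=\sup_{s\in[a,b]}\{|\dot\sigma(s)|\}$. Since $\sigma$ is piecewise smooth on the compact interval $[a,b]$, the function $s\mapsto|\dot\sigma(s)|=[h_{\sigma(s)}(\dot\sigma(s),\dot\sigma(s))]^{1/2}$ is continuous on each of the finitely many subintervals of smoothness, hence bounded there, so that $M<\infty$ and $|\dot\sigma(s)|\leq M$ for all $s$ at which $\dot\sigma$ is defined. Monotonicity of the integral then yields $\int_a^b|\dot\sigma(s)|\,ds\leq\int_a^b M\,ds=(b-a)M$, which is exactly the desired bound. The only point calling for a word of care is the piecewise smoothness: $|\dot\sigma|$ need not be defined at the finitely many junction points, but this affects neither the value of $L(\sigma)$ (the integral of a bounded, piecewise continuous integrand over a null-measure set of exceptional points) nor the pointwise bound by $M$, so it may be disregarded. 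I therefore anticipate no real obstacle; the statement is simply the Riemannian incarnation of the standard mean value inequality for integrals.
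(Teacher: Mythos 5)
Your proof is correct. The paper itself offers no proof of this proposition (it is stated as a standard fact, with the text immediately moving on to its consequences), and your two-step argument --- the left inequality from the fact that $\sigma$ is itself an admissible competitor in the infimum defining $d$, the right inequality from monotonicity of the integral with the bound $|\dot{\sigma}(s)|\leq M$ --- is exactly the routine justification the paper implicitly relies on. Your remark that the finitely many junction points of piecewise smoothness affect neither $L(\sigma)$ nor the pointwise bound is a sensible point of care and closes the only loose end.
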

Note that this implies that, for any such curve, $d(\sigma(t),\sigma(t'))\leq |t'-t|\sup_{s\in[a,b]}\{|\dot{\sigma}(s)|\}$ for any $t,t'\in[a,b]$. Also, as the speed of $\sigma$ is bounded over $[a,b]$ (a compact set), then $\sigma$ is uniformly continuous on $[a,b]$. Similarly, if $\sigma\colon(a,b)\rightarrow \Sigma$ is a smooth curve such that $|\dot{\sigma}|$ is bounded on $(a,b)$, then $\sigma$ is uniformly continuous on $(a,b)$. 

In the following we shall be making use of the concepts of continuously extendible curves and extendible geodesics. These terms are defined in O'Neill~\cite{g}. We also refer the reader to Definition~3.6 and the discussion following in Bullock~\cite{me}. 

We shall shortly need the following theorem from O'Neill~\cite{g} (Lemma 5.8) on the extendibility of geodesics. We quote it here for the reader's convenience:
\begin{thm}\label{extendibilityofgeodesics}
Given $b<\infty$ then a geodesic $\gamma\colon[a,b)\rightarrow M$ in a Lorentzian or Riemannian manifold $M$ is geodesically extendible iff it is (continuously) extendible.
\end{thm}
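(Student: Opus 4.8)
The plan is to prove the two implications separately, with only the ``continuously extendible $\Rightarrow$ geodesically extendible'' direction requiring real work. If $\gamma$ is geodesically extendible then by definition there is a geodesic $\tilde\gamma$ defined on some $[a,c)$ with $c>b$ (in particular on $[a,b]$) and restricting to $\gamma$ on $[a,b)$; being smooth, $\tilde\gamma$ is continuous, so $\lim_{t\to b^-}\gamma(t)=\tilde\gamma(b)$ exists and $\gamma$ is continuously extendible. This disposes of one direction immediately.

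For the converse, suppose $\gamma(t)\to p$ as $t\to b^-$ for some $p\in M$. First I would invoke the standard existence of a convex (``totally normal'') neighborhood $C$ of $p$: an open set on which any two points are joined by a unique geodesic lying in $C$, with the connecting geodesic --- equivalently its initial velocity $\exp_q^{-1}(q')$ --- depending smoothly on the endpoints $(q,q')$. Since $\gamma(t)\to p$, I choose $t_1\in(a,b)$ with $\gamma([t_1,b))\subset C$ and set $q:=\gamma(t_1)$.

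The key step is to pin down $\dot\gamma(t_1)$ intrinsically in terms of $p$. For each $t\in(t_1,b)$ the segment $\gamma|_{[t_1,t]}$ is a geodesic lying in $C$ joining $q$ to $\gamma(t)$, so by uniqueness it is the connecting geodesic; reparametrising it to $[0,1]$ shows $\exp_q^{-1}(\gamma(t))=(t-t_1)\dot\gamma(t_1)$. Letting $t\to b^-$ and using continuity of $q'\mapsto\exp_q^{-1}(q')$ together with $\gamma(t)\to p$ gives $(b-t_1)\dot\gamma(t_1)=\exp_q^{-1}(p)=:v_b$, a vector in the (open, star-shaped) domain of $\exp_q$ with $\exp_q(v_b)=p$. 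Since that domain is open, the ray $r\mapsto r\,\dot\gamma(t_1)$ stays in it for $r$ slightly beyond $b-t_1$, so $\tilde\gamma(t):=\exp_q((t-t_1)\dot\gamma(t_1))$ is a geodesic defined on $[t_1,b+\varepsilon)$ for some $\varepsilon>0$; it agrees with $\gamma$ on $[t_1,b)$ by uniqueness of geodesics, and $\tilde\gamma(b)=p$. Patching $\tilde\gamma$ onto $\gamma|_{[a,t_1]}$ yields the desired geodesic extension.

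I expect the main obstacle to be the structural input behind the second and third paragraphs rather than any computation: the argument rests entirely on the existence of a convex neighborhood of $p$ and on the smooth (hence continuous) dependence of the connecting geodesic, equivalently of $\exp_q^{-1}$, on its endpoints, which must hold uniformly near $p$ and which are valid verbatim in both the Riemannian and Lorentzian settings. Once these are available the extension is forced. (In the Riemannian case one could instead shortcut the velocity analysis: the speed $|\dot\gamma|$ is constant, so the lift $t\mapsto(\gamma(t),\dot\gamma(t))$ stays in a compact subset of $TM$ and the escape lemma for the geodesic spray gives extendibility at once; it is precisely because the Lorentzian signature provides no such speed bound that the convex-neighborhood argument is the right tool for the general case.)
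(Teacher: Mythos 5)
Your proof is correct and complete. Note that the paper itself offers no proof of this statement---it is quoted from O'Neill (\cite{g}, Lemma 5.8) purely for the reader's convenience---and your argument is essentially the one given in that source: the forward direction is immediate, and the converse rests on a convex normal neighbourhood of the limit point $p$, the identity $\exp_q^{-1}(\gamma(t))=(t-t_1)\dot\gamma(t_1)$ forced by uniqueness of connecting geodesics inside such a neighbourhood, and openness of the domain of $\exp_q$ to push the radial geodesic past $b$. Your closing remark is also on point: the constant-speed compactness shortcut is unavailable in Lorentzian signature, which is exactly why the convex-neighbourhood argument is the right uniform tool for the statement as the paper needs it (Riemannian and Lorentzian alike).
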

\begin{lem}\label{extendiblegeodesic}
Let $\gamma\colon[a,b)\rightarrow M$ be a geodesic in a Riemannian manifold, then it is geodesically extendible iff there exists a compact set $C\subseteq M$ s.t. $[\gamma]\colon=\gamma([a,b))\subseteq C$.
\end{lem}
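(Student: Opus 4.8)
The plan is to prove both implications and, in each, to use Theorem~\ref{extendibilityofgeodesics} to pass between geodesic extendibility and continuous extendibility; throughout I take $b<\infty$, as in that theorem. The forward direction is quick: if $\gamma$ is geodesically extendible, then by Theorem~\ref{extendibilityofgeodesics} it is continuously extendible, so there is a continuous $\tilde{\gamma}\colon[a,b]\rightarrow M$ restricting to $\gamma$ on $[a,b)$. Since $[a,b]$ is compact and $\tilde{\gamma}$ continuous, $C:=\tilde{\gamma}([a,b])$ is compact and contains $\gamma([a,b))$, which is the required set.

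The reverse direction is the substantive one. Suppose $\gamma([a,b))\subseteq C$ for some compact $C\subseteq M$. The aim is to show that $\lim_{t\to b^-}\gamma(t)$ exists, for then defining $\gamma(b)$ to be this limit yields a continuous extension to $[a,b]$, and Theorem~\ref{extendibilityofgeodesics} again converts this into geodesic extendibility. First I would recall the standard fact that a geodesic has constant speed, so $|\dot{\gamma}|\equiv v$ for some $v\geq 0$; in particular $|\dot{\gamma}|$ is bounded on $[a,b)$. Feeding this into the note following Proposition~\ref{meanvaluetheorem} gives the Lipschitz estimate $d(\gamma(t),\gamma(t'))\leq v|t-t'|$ for all $t,t'\in[a,b)$, where $d$ is the distance induced by the Riemannian metric (Theorem~\ref{Riemmetric}); in particular $\gamma$ is uniformly continuous on $[a,b)$.

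Now take any sequence $t_n\uparrow b$. The estimate $d(\gamma(t_n),\gamma(t_m))\leq v|t_n-t_m|$ shows that $(\gamma(t_n))_n$ is Cauchy in $(M,d)$, and since it lies in the compact, hence complete, metric space $(C,d)$, it converges to some $p\in C$. For a second sequence $t_n'\uparrow b$ with limit $p'$, the estimate $d(\gamma(t_n),\gamma(t_n'))\leq v|t_n-t_n'|\to 0$ forces $p=p'$, so the limit is independent of the chosen sequence. Hence $\lim_{t\to b^-}\gamma(t)=p$ exists, the continuous extension is in place, and the proof is complete.

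The main obstacle is precisely this reverse direction: upgrading ``image contained in a compact set'' to ``the curve converges at the endpoint''. A generic relatively compact curve can oscillate and fail to have an endpoint limit, so the argument genuinely exploits the constant-speed property of geodesics, via Proposition~\ref{meanvaluetheorem}, to obtain uniform continuity; completeness of the compact set $C$ then supplies the limit. The only points to state carefully are the constant-speed claim and the fact that a compact metric space is complete.
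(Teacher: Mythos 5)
Your proof is correct and follows essentially the same route as the paper's: the forward direction takes the image of the continuous extension, and the reverse direction uses the constant speed of geodesics to get uniform continuity (via Proposition~\ref{meanvaluetheorem}), completeness of the compact set $C$ to extract the endpoint limit, and Theorem~\ref{extendibilityofgeodesics} to convert continuous extendibility into geodesic extendibility. You merely spell out the Cauchy-sequence details that the paper compresses into ``by basic functional analysis''.
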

\begin{proof}
If $\gamma$ is geodesically extendible then its extension $\gamma'\colon[a,b]\rightarrow M$ is continuous so $C=\gamma'([a,b])$ is compact. Conversely, if $\gamma([a,b))\subseteq C$ then, since $\gamma$ is a geodesic, it has constant speed and so is uniformly continuous. As $C$ is compact it is complete as a metric space. Thus we can extend $\gamma$ continuously to $[a,b]$ by basic functional analysis. So $\gamma$ is continuously extendible and also geodesically extendible by Theorem~\ref{extendibilityofgeodesics}.
\end{proof}

Note that this lemma is also true in the case of Lorentzian manifolds. The proof can be reached by applying Lemma 1.56, Proposition 3.38 and Lemma 5.8 of O'Neill \cite{g}. In this paper we only need the result in its current form.

\begin{prop}\label{J(K)compact}
Consider the spacetime $(M,g)=(\mathbb{R}\times \Sigma, dt^2-h)$. Let $t\geq 0$ and $K\subseteq\Sigma_0$ be compact. Then the following statements are equivalent:
\begin{enumerate}
\item $C(K,t)$ is compact,
\item $\overline{B_t(0)}\subseteq \epsilon_p\text{ for all } p\in K$,
\item $J(K )\cap\Sigma_t$ is compact.
\end{enumerate}
If Statements~1-3 are true, then $C(K,t)=\bigcup_{p\in K}\exp_p[\overline{B_t(0)}]=J(K)\cap\Sigma_t$.
\end{prop}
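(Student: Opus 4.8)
The plan is to route the entire argument through the closed disc bundle $D:=\{(p,v)\in T\Sigma : p\in K,\ |v|_h\le t\}$ of radius $t$ over $K$. This set is compact, being a closed disc bundle over the compact base $K$, and the crucial observation is that Statement~2 is nothing but the inclusion $D\subseteq\mathcal E$, where $\mathcal E\subseteq T\Sigma$ is the open domain of the (smooth) total exponential map $\exp\colon\mathcal E\to\Sigma$ and $\exp(D)=\bigcup_{p\in K}\exp_p[\overline{B_t(0)}]$. I would prove the two implications $(1)\Rightarrow(2)$ and $(3)\Rightarrow(2)$ by contraposition using Lemma~\ref{extendiblegeodesic}, and then establish $(2)\Rightarrow(1)$, $(2)\Rightarrow(3)$ and the displayed formula simultaneously by a cycle of inclusions. (The case $t=0$ is trivial, every set reducing to $K$, so I take $t>0$ below.)

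For the contrapositive directions, suppose Statement~2 fails, so that for some $p\in K$ there is a unit vector $u\in T_p\Sigma$ for which the radial geodesic $\gamma(s)=\exp_p(su)$ has maximal domain $[0,r)$ with $0<r\le t$ and is not geodesically extendible (here I use that $\epsilon_p$ is open and star-shaped). For each $s<r$, the segment $\gamma|_{[0,s]}$ has length $s\le t$; reparametrising it to $[0,t]$ gives a curve of speed $s/t\le1$ from $p$ to $\gamma(s)$, so Proposition~\ref{causalfuture} yields $\gamma(s)\in J^+(K)\cap\Sigma_t=J(K)\cap\Sigma_t$, while Proposition~\ref{meanvaluetheorem} gives $d(K,\gamma(s))\le s\le t$, i.e.\ $\gamma(s)\in C(K,t)$. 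Thus $\gamma([0,r))$ lies inside both $J(K)\cap\Sigma_t$ and $C(K,t)$; but since $\gamma$ is not geodesically extendible, Lemma~\ref{extendiblegeodesic} forbids $\gamma([0,r))$ from lying in any compact set, so neither $C(K,t)$ nor $J(K)\cap\Sigma_t$ can be compact. This proves $(1)\Rightarrow(2)$ and $(3)\Rightarrow(2)$.

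For the main direction, assume Statement~2, i.e.\ $D\subseteq\mathcal E$. Then $\exp(D)$ is the continuous image of the compact set $D$ and hence compact, so it suffices to prove the cyclic inclusions $\exp(D)\subseteq J(K)\cap\Sigma_t\subseteq C(K,t)\subseteq\exp(D)$, which force all three sets to coincide and to be compact. The first inclusion is the reparametrisation argument of the previous paragraph run forwards: a point $\exp_p(v)$ with $|v|_h\le t$ is joined to $p\in K$ by a geodesic of length $\le t$, hence lies in $J^+(K)\cap\Sigma_t$ by Proposition~\ref{causalfuture}. The second is immediate from Proposition~\ref{causalfuture} together with the Mean Value Theorem: any point of $J(K)\cap\Sigma_t$ is the endpoint of a curve from $K$ of speed $\le1$, hence of length $\le t$, and so lies within distance $t$ of $K$.

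The inclusion $C(K,t)\subseteq\exp(D)$ is the crux and the step I expect to be the main obstacle. Given $q$ with $d(K,q)\le t$, I would first use compactness of $K$ to pick $p\in K$ attaining $d(p,q)=d(K,q)\le t$, and then invoke a \emph{local} Hopf--Rinow argument: minimise the continuous function $x\mapsto d(x,q)$ over a small geodesic sphere about $p$ to select an initial direction $u$, and show by the standard connectedness argument that $d(\exp_p(su),q)=d(p,q)-s$ on $[0,d(p,q)]$, whence $q=\exp_p(d(p,q)\,u)\in\exp_p[\overline{B_t(0)}]\subseteq\exp(D)$. The point is that this produces a \emph{minimising} geodesic of length exactly $d(p,q)\le t$ from the mere hypothesis that $\exp_p$ is defined on all of $\overline{B_t(0)}$, rather than from full geodesic completeness; the only delicate part is verifying that the selected geodesic never leaves the domain of $\exp_p$ during the continuity argument, which is guaranteed precisely by $\overline{B_t(0)}\subseteq\epsilon_p$. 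Once $C(K,t)=\exp(D)$ is known to be compact, Statements~1 and~3 and the displayed identity $C(K,t)=\bigcup_{p\in K}\exp_p[\overline{B_t(0)}]=J(K)\cap\Sigma_t$ all follow at once.
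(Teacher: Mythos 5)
Your proposal is correct, and it rests on the same three ingredients as the paper's proof: Proposition~\ref{causalfuture} to pass between causal curves in $M$ and curves in $\Sigma$ of speed at most $1$, Lemma~\ref{extendiblegeodesic} to rule out an inextendible radial geodesic being trapped in a compact set (your contrapositive proofs of (1)$\Rightarrow$(2) and (3)$\Rightarrow$(2) are the paper's contradiction arguments in different clothing), and the local Hopf--Rinow statement that $\overline{B_t(0)}\subseteq\epsilon_p$ forces every point of $C(p,t)$ to be joined to $p$ by a minimising geodesic, which the paper simply cites as Corollary 5.6.4 of Petersen while you sketch its standard proof. Where you genuinely differ is in how the positive direction is assembled. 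The paper works pointwise: it obtains $C(p,t)=\exp_p[\overline{B_t(0)}]$ compact for each $p\in K$, invokes a separate auxiliary result (Proposition~C.7 of Bullock's thesis) to conclude that the union $C(K,t)=\bigcup_{p\in K}C(p,t)$ over the compact set $K$ is compact, and then proves (1)$\Rightarrow$(3) by a further sequential-compactness extraction. You instead observe that Statement~2 is exactly the inclusion of the compact closed disc bundle $D$ over $K$ into the open domain of the total exponential map, so $\exp(D)$ is compact outright, and the cycle $\exp(D)\subseteq J(K)\cap\Sigma_t\subseteq C(K,t)\subseteq\exp(D)$ delivers (2)$\Rightarrow$(1), (2)$\Rightarrow$(3) and the displayed identity in one stroke. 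Your packaging buys economy --- no union-compactness lemma, no sequential argument, and one fewer implication to track --- at the modest cost of verifying compactness of the disc bundle; the paper's pointwise route keeps all the Riemannian input confined to quotable references. Either way the argument is complete: the delicate step you flag (that the connectedness argument never leaves $\epsilon_p$) is indeed covered by $d(p,q)\le t$ together with $\overline{B_t(0)}\subseteq\epsilon_p$, and the minimiser $p$ of $d(\cdot,q)$ over $K$ exists by compactness of $K$, so there is no gap.
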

\begin{rem}
In order to make sense of this proposition, take note of the following definitions: Given a metric space $(X,d)$ and $K\subseteq X$, then for $t\geq 0$ define $C(K,t):=\{p\in X\text{ such that } d(p,K)\leq t\},$ where $d(p,K):=\inf_{q\in K}\{d(p,q)\}$ (see e.g. Appendix~C of Bullock~\cite{me}). We are implicitly using the metric $d$ on $\Sigma_0$ induced by the Riemannian metric $h$ via Theorem~\ref{Riemmetric}. We define $\epsilon_p\subseteq T_p M$ to be the domain of the exponential map $\exp_p$ at $p$, induced by the Riemannian metric $h$. The set $B_t(0)\subseteq T_p M$ is the open ball of radius $t$ centered on $0\in T_p M$ with respect to the norm induced by $h_p$. Note that, since $C(K,t)$ is given in terms of the metric $d$ induced by the Riemannian metric $h$, then all three expressions $C(K,t), \overline{B_t(0)}$ and $J(K)\cap\Sigma_t$ depend on $h$. Indeed, if a different equivalent metric $d'$ was chosen, then this proposition would be, in general, false. 
\end{rem}
\begin{proof}[Proposition~\ref{J(K)compact}]
(\textit{1}$\Rightarrow$\textit{2}) $C(K,t)$ is compact $\Rightarrow$ $C(p,t)$ is compact for all $p\in K$.
Take $p\in K$ and let $X_p\in\overline{B_t(0)}$, so $|X_p|\leq t$. Let $\sigma$ be the maximal geodesic in $\Sigma$ through $p$ s.t.:
$\dot{\sigma}(0)=X_p$, $\sigma\colon[0,b)\rightarrow \Sigma$ and so $|\dot{\sigma}(s)|=|X_p|\leq t \;\forall s\in[0,b)$. If $b\leq 1$, then $L(\sigma|_0^{t'})=\int_0^{t'}|\dot{\sigma}(s)|ds\leq t't\leq bt\leq t$ and so $d(p,\sigma(s))\leq L(\sigma|_0^{t'})\leq t\;\;\forall s\in[0,b)$. And so $\sigma(s)\in C(p,t)\;\forall s\in[0,b)$ but by Theorem~\ref{extendiblegeodesic}, then $\sigma$ can be extended to a geodesic defined on $[0,b+\epsilon)$, contradiction. Thus $b> 1$ and $X_p\in \epsilon_p\;\forall p\in K$.

(\textit{3}$\Rightarrow$\textit{2}) As in the proof of (\textit{1}$\Rightarrow$\textit{2}), choose $p\in K$, $X_p\in\overline{B_t(0)}$ and $\sigma\colon[0,b)\rightarrow \Sigma$ be the maximal geodesic through $p$ with speed $X_p$ at $p$. If $b\leq 1$, for $c\in[0,b)$, let $\sigma'\colon[0,t]\rightarrow \Sigma$, $\sigma'(s)=\sigma(s\frac{c}{t})$ and $|\dot{\sigma}'(s)|=\frac{c}{t}|\dot{\sigma}(s\frac{c}{t})|\leq c<b\leq 1$.
Thus $(t,\sigma(s))\subseteq J(p)\cap \Sigma_t\subseteq J(K)\cap \Sigma_t \text{ (compact by assumption)}\;\;\forall s\in[0,b)$. Again by Theorem~\ref{extendiblegeodesic}, then $\sigma$ can be extended to a geodesic defined on $[0,b+\epsilon)$, contradiction. Thus $b>1$ and $X_p\in \epsilon_p\text{ for all } p\in K$ such that $|X_p|\leq t$. 

(\textit{2}$\Rightarrow$\textit{1}) $\overline{B_t(0)}\subseteq \epsilon_p\;\;\forall p\in K$ implies $C(p,t)=\exp_p[\overline{B_t(0)}]$, which is Corollary 5.6.4 in Petersen~\cite{k}. Note that since the exponential map is certainly continuous then it follows that $C(p,t)$ is compact $\forall p\in K$. It follows from this that $C(K,t)=\bigcup_{p\in K}C(p,t)$ is compact, as is shown in Proposition~C.7 of Bullock~\cite{me}.

(\textit{1}$\Rightarrow$\textit{3}) Given $q_n\in J(K)\cap\Sigma_t\subseteq C(K,t)$, then by compactness there exists a subsequence $q_{n_k}\rightarrow q\in C(K,t)=\bigcup_{p\in K}C(p,t)$. Thus $q\in C(p,t)=\exp_p[\overline{B_t(0)}]$ for some $p\in K$. So there exists a geodesic $\sigma\colon[0,1]\rightarrow \Sigma$ s.t. $\sigma(0)=p, \sigma(1)=q, |\dot{\sigma}(s)|=|\dot{\sigma}(0)|\leq t$. So define $\sigma'\colon[0,t]\rightarrow \Sigma$, $\sigma'(s)=\sigma(s/t)$, $|\dot{\sigma}'(s)|=\frac{1}{t}|\dot{\sigma}(s/t)|\leq 1$, $\sigma'(0)=p$, $\sigma'(1)=q$ and $(t,q)\in J(K)\cap\Sigma_t$ by Proposition~\ref{causalfuture}, so $J(K)\cap\Sigma_t$ is compact. 

(The final statement) When Statements~\textit{1}-\textit{3} are true then, fixing $p\in K$, we have $C(p,t)=\exp_p[\overline{B_t(0)}]$ from Corollary 5.6.4 in Petersen~\cite{k}. But we have $J(p)\cap\Sigma_t\subseteq C(p,t)= \exp_p[\overline{B_t(0)}]$ for all $t\in \mathbb{R}$. Furthermore, the argument in the proof of (1$\Rightarrow$ 3) shows: $\exp_p[\overline{B_t(0)}]\subseteq J(p)\cap\Sigma_t$. So $C(p,t)=\exp_p[\overline{B_t(0)}]=J(p)\cap\Sigma_t\;\forall p\in K$. Thus $C(K,t)=\bigcup_{p\in K}C(p,t)= \bigcup_{p\in K}\exp_p[\overline{B_t(0)}]= \bigcup_{p\in K} J(p)\cap\Sigma_t=J(K)\cap\Sigma_t$.
\end{proof}

In particular, the content of this proposition is true when $K=\{p\}$, where $p$ is any point in $\Sigma_0$. The usefulness of this proposition arises from the fact that $C(K,t)$ is easier to visualise than $J(K)\cap \Sigma_t$ as Section~\ref{sec:causalstructure (ii)} utilises. 

We recall the notion of a globally hyperbolic spacetime: 
\begin{defn} A spacetime $(M,g)$ is \textbf{globally hyperbolic} if:
\begin{enumerate}
\item It obeys the causality condition: there exist no closed causal curves.			   
\item $J^+ (p)\bigcap J^- (q)$ is compact $\forall p,q \in M$.
\end{enumerate}
\end{defn}
Note, it is shown in Bernal and Sanchez \cite{i} that condition \textit{1} may be equivalently replaced by the ``strong causality condition'', which is the more common definition.

\begin{defn} Given a spacetime $(M,g)$, a \textbf{Cauchy surface} (of $(M,g)$) is a subset $S$ of $M$ that is met exactly once by every inextendible smooth timelike curve in $M$.
\end{defn}
 
To explain the name, we note that every such set is an achronal closed topological embedded hypersurface in $M$ (see Lemma 14.29 in O'Neill \cite{g}). It is an important fact that global hyperbolicity is equivalent to the existence of a Cauchy surface in the spacetime as the following theorem states, which we include for completeness and future reference. Before we state it, we first define the concept of an acausal set in a spacetime, since this notion shall be used in the following theorem.

\begin{defn}[Acausal Set]
A subset $S$ of a spacetime $M$ is called \textbf{acausal} if it is met at most once by any causal curve in $M$. 
\end{defn}
An \textbf{achronal} set is defined similarly with ``any causal curve'' replaced by ``any timelike curve''.

\begin{thm}\label{globhypcauchy}
A spacetime $(M,g)$ is globally hyperbolic iff it possesses a Cauchy surface. If so, then it also possesses a smooth spacelike Cauchy surface. Additionally, if $H$ is a smooth spacelike acausal compact $m$-dimensional embedded submanifold with boundary in $M$, then there exists a smooth spacelike Cauchy surface $S$ in $M$ that contains $H$.
\end{thm}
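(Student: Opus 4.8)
The plan is to treat the statement as three nested claims of increasing strength and assemble them from the classical and modern structure theory of globally hyperbolic spacetimes. First, the equivalence ``globally hyperbolic $\iff$ possesses a Cauchy surface'' is Geroch's theorem. The reverse implication is the more elementary one: assuming a Cauchy surface $S$ exists, I would verify the causality condition (a closed causal curve would either miss $S$ or meet it more than once, contradicting that every inextendible causal curve meets $S$ exactly once) and then show $J^+(p)\cap J^-(q)$ is compact by a standard limit-curve argument (as in O'Neill~\cite{g}, Chapter~14), projecting the diamond onto $S$ along a timelike flow. For the forward implication I would reproduce Geroch's construction of a continuous time function $t\colon M\to\mathbb{R}$ built from the ratio of the volumes of the causal past and future of a point with respect to a finite measure on $M$, and show that each level set $t^{-1}(c)$ is a topological Cauchy surface.

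Second, to upgrade a topological Cauchy surface to a \emph{smooth spacelike} one, I would invoke the smoothing machinery of Bernal and Sanchez: the central device is a Cauchy temporal function $\tau\colon M\to\mathbb{R}$, i.e.\ a smooth function with everywhere timelike gradient whose level sets are smooth spacelike Cauchy surfaces, yielding an orthogonal splitting $M\cong\mathbb{R}\times S$ with $g=\beta\,d\tau^2-g_\tau$. Any single level set then serves as the required smooth spacelike Cauchy surface.

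Third, for the extension of a given smooth spacelike acausal compact submanifold-with-boundary $H$ to a smooth spacelike Cauchy surface, I would again aim to produce a Cauchy temporal function, now \emph{adapted to} $H$, with $\tau\equiv 0$ on $H$ and $\tau^{-1}(0)\supseteq H$ spacelike and Cauchy. The strategy is: (a) use the acausality and compactness of $H$ to build a continuous time function that is constant on $H$; (b) smooth it away from a neighbourhood of $H$ by convolution and partition-of-unity arguments that leave $H$ pointwise fixed; (c) correct the result near $H$ so that the gradient stays timelike there, exploiting that $H$ is already smooth and spacelike and that compactness furnishes uniform transversality estimates. This is precisely the content of Bernal and Sanchez~\cite{h}, which I would cite for the delicate part.

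The main obstacle is step (iii): reconciling a purely \emph{local} prescription (the level set must coincide with $H$ near $H$ and be spacelike there) with the two \emph{global} requirements that the level set remain achronal and be met exactly once by every inextendible timelike curve. Acausality of $H$ is what makes a globally defined, $H$-constant time function possible in the first place, and compactness is what allows the local spacelike smoothing to be performed uniformly without destroying the global Cauchy property elsewhere; controlling both simultaneously is exactly the nontrivial input supplied by the Bernal--Sanchez extension theorem, which is why the author flags it in the introduction as the ingredient needed to complete Wald's existence argument.
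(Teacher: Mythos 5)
Your proposal is correct and takes essentially the same route as the paper, whose proof is purely citation-based: it invokes Corollary 14.39 of O'Neill \cite{g} for ``Cauchy surface $\Rightarrow$ globally hyperbolic'', Theorem 1 of Bernal and Sanchez \cite{q} for the existence of a smooth spacelike Cauchy surface, and Theorem 1.1 of Bernal and Sanchez \cite{h} for the extension of the compact acausal submanifold $H$ to a smooth spacelike Cauchy surface. The extra detail you sketch (Geroch's volume-ratio time function, Cauchy temporal functions, the $H$-adapted smoothing) is precisely the internal content of those cited theorems, so your decomposition and key inputs coincide with the paper's.
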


\begin{proof}
If $S\subseteq M$ is Cauchy surface then $M$ is globally hyperbolic by Corollary 14.39 in O'Neill \cite{g}.
That $M$ is globally hyperbolic implies that it possesses a smooth spacelike Cauchy surface is proved in Theorem 1 in Bernal and Sanchez \cite{q}. For the last statement see Theorem 1.1 of Bernal and Sanchez \cite{h}.
\end{proof}

Note that if $M$ is $n$-dimensional, then in the above theorem: $m\in\{0,...,n-1\}$. In order to aid the understanding of this theorem, we shall shortly give an example to illustrate why we cannot remove the condition of compactness. In order to state this example, we first introduce the concept of Cauchy development of a set, a notion that will be frequently used later.  

\begin{defn}[The Past and Future Cauchy Developments]
Given a subset $S$ of a spacetime $M$, then the \textbf{future Cauchy development} $D^+(S)\subseteq M$ is defined as: 
$$D^+(S)=\left\{p\in M\colon \begin{array}{l}
\text{Every past-inextendible future-pointing smooth} \\
\text{causal curve through $p$ intersects $S$.}\end{array}\right\}$$

The \textbf{past Cauchy development} $D^-(S)$ of $S$ is defined similarly with ``past-inextendible'' replaced by ``future-inextendible''. The \textbf{Cauchy development} $D(S)$ of $S$ is then defined: $D(S)=D^+(S)\cup D^-(S)$.
\end{defn}

Using this definition, let $M=D(\{0\}\times (0,1))$ be an open subset of 2-dimensional Minkowski space and fix $0<|t|<1/2$. Then $H=\{t\}\times (0,1)\cap M$ is a 1-dimensional smooth spacelike, acausal embedded submanifold and so also a submanifold with boundary (just with empty boundary!). However, it is non-compact in $M$ and is contained in no Cauchy surface.
 
The following proposition gives a necessary and sufficient condition for a standard static spacetime to be globally hyperbolic. The content of this proposition is already known. However, we give here an alternative proof. See Lemma A.5.14 in B\"{a}r, Ginoux and Pf\"{a}ffle \cite{e} or Theorem 3.67 in Beem, Ehrlich and Easley \cite{r} for other proofs. 

\begin{prop}
Given the Riemannian manifold $(\Sigma,h)$, then the standard static spacetime $(M,g)=(\mathbb{R}\times\Sigma, dt^2-h)$ is globally hyperbolic iff $(\Sigma,h)$ is a complete Riemannian manifold.
\end{prop}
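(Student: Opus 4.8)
The plan is to prove both implications. For global hyperbolicity I would use the characterisation of Theorem~\ref{globhypcauchy} (existence of a Cauchy surface), and throughout I would invoke the Hopf--Rinow theorem to pass freely between metric completeness, geodesic completeness, and the statement that $\epsilon_p=T_p\Sigma$ for every $p$ (equivalently, that closed bounded subsets of $\Sigma$ are compact). The causal-structure inputs are Proposition~\ref{causalfuture} (the description of $J^+$ by curves of speed $\le 1$), the Mean Value Theorem Proposition~\ref{meanvaluetheorem}, and Lemma~\ref{extendiblegeodesic} together with Theorem~\ref{extendibilityofgeodesics} on extendibility of geodesics.

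For the direction $(\Sigma,h)$ complete $\Rightarrow$ $(M,g)$ globally hyperbolic, I would show that $\Sigma_0$ is a Cauchy surface and then quote the first statement of Theorem~\ref{globhypcauchy}. Let $\gamma$ be an inextendible smooth timelike curve; since $X=\partial_t$ time-orients $M$, $\gamma$ is (without loss of generality) future-pointing, and exactly as in the proof of Proposition~\ref{causalfuture} the time coordinate $\gamma_1$ satisfies $\dot\gamma_1>0$, so $\gamma$ meets $\Sigma_0$ \emph{at most} once. Reparametrising by $t$ gives a smooth curve $\sigma$ on an open interval $(\alpha,\beta)$ with $|\dot\sigma|<1$ and $\gamma(t)=(t,\sigma(t))$. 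If $\beta<\infty$, then for any $t_n\to\beta$ the Mean Value Theorem (Proposition~\ref{meanvaluetheorem}) gives $d(\sigma(t_n),\sigma(t_m))\le|t_n-t_m|$, so $(\sigma(t_n))$ is Cauchy and, by completeness, converges; uniform continuity then yields a genuine limit $\sigma(t)\to x_\beta$ and hence a future endpoint $(\beta,x_\beta)\in M$ for $\gamma$, contradicting inextendibility. Thus $\beta=+\infty$ and likewise $\alpha=-\infty$, so $\gamma_1$ takes the value $0$ and $\gamma$ meets $\Sigma_0$ exactly once.

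For the converse I would argue by contraposition. If $(\Sigma,h)$ is incomplete then by Hopf--Rinow it is geodesically incomplete, so there is a unit-speed geodesic $\sigma\colon[0,b)\to\Sigma$ with $b<\infty$ that is inextendible. The null curve $\gamma(s)=(s,\sigma(s))$ is future-pointing causal, and by Lemma~\ref{extendiblegeodesic} (via Theorem~\ref{extendibilityofgeodesics}) it is future-inextendible and its image lies in no compact subset of $M$. Setting $p=\gamma(0)$ and $q=(2b,\sigma(0))$, the reversed geodesic segment $s'\mapsto\sigma(s-s')$ has length $s\le 2b-s$, so traversing it at constant speed $\le 1$ over the time interval of length $2b-s$ and invoking Proposition~\ref{causalfuture} (translated in time, using that $g=dt^2-h$ is $t$-independent) shows $q\in J^+(\gamma(s))$, whence $\gamma([0,b))\subseteq J^+(p)\cap J^-(q)$. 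Since any subset of a compact set in a Hausdorff space has compact closure, Lemma~\ref{extendiblegeodesic} forbids $J^+(p)\cap J^-(q)$ from being compact (otherwise $\sigma$ would be extendible), so condition (2) of global hyperbolicity fails.

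The routine parts are the monotonicity and reparametrisation bookkeeping in the first direction. The main obstacle is the converse, where the two points to get right are (i) that the light ray $\gamma$ running along the inextendible geodesic is genuinely future-inextendible in $M$ and escapes every compact set --- both handled by Lemma~\ref{extendiblegeodesic} --- and (ii) the explicit choice of $q$ together with the verification, through Proposition~\ref{causalfuture} and the reversed geodesic segment, that the entire half-open ray lies in the single causal diamond $J^+(p)\cap J^-(q)$.
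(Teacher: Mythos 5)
Your proposal is correct and follows essentially the same route as the paper: for one direction you show $\Sigma_0$ is a Cauchy surface using completeness plus the uniform continuity supplied by Proposition~\ref{meanvaluetheorem} (exactly the paper's first argument), and for the converse you trap an inextendible unit-speed geodesic $\sigma\colon[0,b)\to\Sigma$ inside the causal diamond $J^+((0,\sigma(0)))\cap J^-((2b,\sigma(0)))$ and contradict Lemma~\ref{extendiblegeodesic}, just as the paper does. The only cosmetic difference is that you place the null ray $(s,\sigma(s))$ inside the diamond whereas the paper places the constant-time points $(b,\sigma(s))$; both rest on the same constant-speed reparametrisation argument via Proposition~\ref{causalfuture}.
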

\begin{proof}
Let $(\Sigma,h)$ be complete. We wish to give two proofs that $(M,g)$ is globally hyperbolic, namely that it possesses a smooth spacelike Cauchy surface and that it satisfies the definition of global hyperbolicity (equivalent by Theorem~\ref{globhypcauchy}).

We wish to show that $\Sigma_0=\{0\}\times\Sigma$ is a smooth spacelike Cauchy surface. Let $\gamma\colon I\rightarrow M$ be a smooth inextendible causal curve w.l.o.g.\ given by $\gamma(t)=(t,\sigma(t))$, where $\sigma\colon I\rightarrow \Sigma$ is a smooth inextendible curve in $\Sigma$ with speed bounded by 1. Then if $I\neq \mathbb{R}$ then as $\sigma$ is uniformly continuous and $\Sigma$ is complete then $\sigma$ can be continuously extended to the closure $\overline{I}$ of $I$ in $\mathbb{R}$, where $I\neq I'$, contradicting the inextendibility of $\sigma$. Thus $I=\mathbb{R}$, $\gamma(0)\in \Sigma_0$, and so any inextendible smooth causal curve passes $\Sigma_0$. Note also the parametrisation: $\gamma(t)=(t,\sigma(t))$ also shows that it must pass $\Sigma_0$ once and only once.

We now show that $M$ satisfies the definition of a globally hyperbolic spacetime. That it is causal follows from the previous argument. We must now show that $J^+(x)\cap J^-(y)$ is compact for all $x,y\in M$. Let $x=(t,p)$ and $y=(t',q)$.
By Proposition~\ref{J(K)compact}, since $(\Sigma,h)$ is complete we know that $J^+(x)\cap\Sigma_s=\{s\}\times C(p,s-t)$ for all $s\geq t$. Thus:
\begin{align*}
J^+(x)\cap J^-(y)&=\left[ \bigcup_{s\geq t}\{s\}\times C(p,s-t)\right]\cap\left[\bigcup_{s'\leq t'}\{s'\}\times C(q,t'-s')\right]\\
&=\bigcup_{s\geq t, s'\leq t'}\{s\}\times C(p,s-t)\cap \{s'\}\times C(q,t'-s')\\
&= \bigcup_{t\leq s\leq t'}\{s\}\times [C(p,s-t)\cap C(q,t'-s)]
\end{align*}
Note that $C(p,s-t)\cap C(q,t'-s)$ is compact since complete Riemannian manifolds obey the Heine-Borel property (Theorem 16 of Petersen \cite{k}). Let $z_n=(s_n,r_n)\in J^+(x)\cap J^-(y)$, so $s_n\in [t,t']$ and $r_n\in [C(p,s_n-t)\cap C(q,t'-s_n)]$. By taking successive subsequences we have that $s_{n_k}\rightarrow s\in [t,t']$ and $r_{n_k}\rightarrow r\in C(p,s-t)\cap C(q,t'-s)$. So $z_{n_k}\rightarrow z=(s,r)\in J^+(x)\cap J^-(y)$ and the latter is compact.

Now for the converse:
If $(\Sigma,h)$ is not complete, then $\epsilon_p\neq T_p\Sigma$ for some $p\in \Sigma$. So $\exists X_p\in T_p\Sigma$ s.t. $|X_p|=R$, $B(0,R)\subseteq \epsilon_p$ and $X_p\nin \epsilon_p$. Consequently, there exists a geodesic $\sigma\colon[0,t)\rightarrow \Sigma$ that is (continuously) inextendible by Theorem~\ref{extendibilityofgeodesics} and has unit speed. Let $x=(0,p)$, $y=(2t,p)$ so $(t,\sigma(s))\in J^+(x)\cap J^-(y)$ $\forall s\in [0,t)$.
If $(M,g)$ is globally hyperbolic, then $J^+(x)\cap J^-(y)$ is compact and $\text{ for all } s\in [0,t)$: $\sigma(s)\in \pi_t^{-1}[\Sigma_t\cap J^+(x)\cap J^-(y)]$, where the RHS is compact in $\Sigma$.  So, $\sigma$ is extendible by Lemma~\ref{extendiblegeodesic}, which is a contradiction. 
\end{proof}
Note that since global hyperbolicity is preserved under conformal transformations then we also have the following result:
 
\begin{lem}\label{staticandglobhyp}
Given the Riemannian manifold $(\Sigma,h)$ and the smooth function $V\in C^\infty(\Sigma)$, $V>0$ then the standard static spacetime $(M,g)=(\mathbb{R}\times\Sigma, V^2 dt^2-h)$ is globally hyperbolic iff $(\Sigma,V^{-2}h)$ is a complete Riemannian manifold.
\end{lem}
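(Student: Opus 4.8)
The plan is to reduce the statement to the preceding proposition (the case $V=1$) by exhibiting $(M,g)$ as a conformal rescaling of a standard static spacetime whose spatial metric is exactly $V^{-2}h$. First I would observe that, writing $\tilde g := V^{-2}g$ (with $V$ regarded, via pullback to $M$, as a smooth positive function on all of $M$), a direct computation gives
$$\tilde g = V^{-2}\bigl(V^2\,dt^2 - h\bigr) = dt^2 - V^{-2}h.$$
Thus $(\mathbb{R}\times\Sigma,\tilde g)$ is itself a standard static spacetime, now with unit lapse and Riemannian spatial metric $h' := V^{-2}h$. Applying the preceding proposition to $\tilde g$ immediately yields that $(\mathbb{R}\times\Sigma,\tilde g)$ is globally hyperbolic if and only if $(\Sigma,V^{-2}h)$ is a complete Riemannian manifold.

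Next I would establish the only genuinely substantive point, namely that $(M,g)$ and $(M,\tilde g)$ are globally hyperbolic simultaneously, i.e.\ that global hyperbolicity depends only on the conformal class of the metric. Since $\Omega := V^{-2}$ is a strictly positive smooth function on $M$, for any tangent vector $v$ we have $\tilde g(v,v) = \Omega\, g(v,v)$, so the sign of $\tilde g(v,v)$ agrees with that of $g(v,v)$; hence $g$ and $\tilde g$ assign the same causal character (timelike, null, or spacelike) to every tangent vector. Likewise $\tilde g(X,v) = \Omega\, g(X,v)$ for the time-orienting field $X=\partial_t$, so the two metrics induce the same time orientation. Consequently $g$ and $\tilde g$ have exactly the same future- and past-pointing causal (and timelike) curves, and therefore the same relations $J^{\pm}$ and the same set $J^+(p)\cap J^-(q)$ for every $p,q\in M$.

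Finally I would read off global hyperbolicity from these coinciding causal structures. The causality condition (no closed causal curves) refers only to the shared class of causal curves, so it holds for $g$ if and only if it holds for $\tilde g$; and since the two metrics share identical sets $J^+(p)\cap J^-(q)$ while inducing the same manifold topology on $M$, compactness of these sets is equivalent for the two metrics. Hence $(M,g)$ is globally hyperbolic if and only if $(M,\tilde g)$ is, and combining this with the first paragraph completes the argument: $(M,g)$ is globally hyperbolic precisely when $(\Sigma,V^{-2}h)$ is complete.

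The one step demanding care, rather than real difficulty, is the conformal-invariance argument of the second paragraph. I must check that the conformal factor is a legitimate smooth positive function on the whole spacetime (which it is, being the pullback of $V^{-2}>0$), and I must preserve not merely the causal character of vectors but also the time orientation, so that \emph{future}-pointing causal curves are genuinely matched and the asymmetric data entering $J^+$ as opposed to $J^-$ is respected. Given these checks, no Riemannian-geometry or functional-analytic input beyond the preceding proposition is needed.
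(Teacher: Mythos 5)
Your proposal is correct and follows essentially the same route as the paper: the paper derives this lemma from the preceding proposition ($V=1$ case) by noting that global hyperbolicity is preserved under conformal transformations, exactly as you do via $\tilde g = V^{-2}g = dt^2 - V^{-2}h$. The only difference is that the paper asserts conformal invariance in one line, whereas you spell out the (correct) verification that causal character, time orientation, causal curves, and hence the sets $J^+(p)\cap J^-(q)$ coincide for the two metrics.
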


We shall now analyse the Cauchy development $D(\Sigma_0)$ of the set $\Sigma_0$ in a standard static spacetime. Note that the Cauchy development of a set may be open (in the case of $D(\{0\}\times\mathbb{R})\subseteq \mathbb{M}^{1+1}$) or closed (as in the case of $D(\{0\}\times[a,b])\subseteq \mathbb{M}^{1+1}$). 
The following proposition states that in a standard static spacetime the Cauchy development of $\Sigma_0$ is open. Thus it is a smooth embedded submanifold and the metric $g$ gives it the structure of a spacetime. Note that this spacetime is also static but not in general standard static (e.g.\ let $M=\mathbb{R}\times (0,1)$ be a strip in Minkowski space with the induced Lorentzian metric. Then $D(\Sigma_0)$ is an open diamond.) In fact, this spacetime $D(\Sigma_0)$ is also globally hyperbolic. 

\begin{prop}
Let $(M,g)=(\mathbb{R}\times\Sigma, V^2 dt^2-h)$ be the standard static spacetime in Definition~\ref{standardstatic} then $\Sigma_0$ is an acausal smooth embedded spacelike hypersurface in $M$. Also, $M$ satisfies the causality condition.
\end{prop}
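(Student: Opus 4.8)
The plan is to treat the three assertions about $\Sigma_0$—that it is (i) a smooth embedded hypersurface, (ii) spacelike, and (iii) acausal—together with (iv) the causality condition on $M$, by exploiting the product structure of the metric, which decouples the $t$-direction from the $\Sigma$-direction. For (i) I would simply invoke the fact recorded at the start of Section~\ref{candidatesolutions}: the map $\pi_0\colon\Sigma\to M$, $x\mapsto(0,x)$, is a smooth embedding, so $\Sigma_0=\pi_0(\Sigma)$ is a smooth embedded submanifold of codimension one. For (ii), a tangent vector to $\Sigma_0$ at $(0,x)$ has the form $(0,w)$ with $w\in T_x\Sigma$, and since $g((0,w),(0,w))=-h_x(w,w)<0$ for $w\neq0$, every nonzero tangent vector is spacelike (equivalently, the unit normal $n_0=V^{-1}\partial_t$ satisfies $g(n_0,n_0)=1>0$ and is timelike). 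Thus $\Sigma_0$ is a spacelike hypersurface.

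The crux of the proposition is a single observation that yields both (iii) and (iv). Writing an arbitrary causal curve as $\gamma(s)=(\gamma_1(s),\gamma_2(s))$ with $\gamma_1\colon I\to\mathbb{R}$ and $\gamma_2\colon I\to\Sigma$, the causal condition reads $g(\dot\gamma,\dot\gamma)=V^2\dot\gamma_1^2-h(\dot\gamma_2,\dot\gamma_2)\geq0$. I claim $\dot\gamma_1$ never vanishes: if $\dot\gamma_1(s_0)=0$ then this inequality forces $h(\dot\gamma_2(s_0),\dot\gamma_2(s_0))\leq0$, and since $h$ is Riemannian this gives $\dot\gamma_2(s_0)=0$, whence $\dot\gamma(s_0)=0$, contradicting that a causal curve has nowhere-vanishing tangent. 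Moreover, for a future-directed causal curve the time-orientation condition $g(\dot\gamma,\partial_t)=V^2\dot\gamma_1>0$ (cf.\ the computation in Proposition~\ref{causalfuture}) fixes the sign, so $\dot\gamma_1>0$ and $\gamma_1$ is strictly increasing. Consequently $\gamma_1$ is strictly monotonic along any causal curve.

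From strict monotonicity both remaining claims follow at once. For acausality: a causal curve meets $\Sigma_0$ exactly where $\gamma_1(s)=0$, and a strictly monotonic function attains the value $0$ at most once, so $\Sigma_0$ is met at most once. For the causality condition: a closed causal curve would satisfy $\gamma(a)=\gamma(b)$ and hence $\gamma_1(a)=\gamma_1(b)$, impossible for a strictly monotonic $\gamma_1$ on $[a,b]$; thus $M$ admits no closed causal curves. The main obstacle is not computational but one of convention and regularity: I must handle piecewise-smooth causal curves, applying the tangent-nonvanishing argument on each smooth segment and noting that consistent future-orientation keeps the sign of $\dot\gamma_1$ constant across the breaks, and I should make explicit that the standard definition of a causal curve builds in nowhere-vanishing of the tangent, which is precisely what drives the whole argument.
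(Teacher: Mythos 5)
Your proposal is correct and follows essentially the same route as the paper: the paper's proof simply invokes Proposition~\ref{causalfuture} to reparametrise any causal curve meeting $\Sigma_0$ as $\gamma(t)=(t,\sigma(t))$, which is precisely the strict monotonicity of the time coordinate that you derive directly from the causal and time-orientation conditions, and both arguments then conclude acausality and the absence of closed causal curves at once. Your explicit verification of the embedded/spacelike claims via $\pi_0$ and the sign of $g$ on tangent vectors matches what the paper leaves implicit from Section~\ref{candidatesolutions}.
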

\begin{proof}
We have shown in Proposition~\ref{causalfuture} that if $\gamma\colon I\rightarrow M$ is a causal curve meeting $\Sigma_0$ (where $I$ is an open interval of $\mathbb{R}$) then, after taking a reparametrisation, we may let $\gamma(t)=(t,\sigma(t))$, where $\sigma\colon I\rightarrow \Sigma$ is a smooth curve with speed bounded by 1 and $0\in I$ and clearly $\gamma$ only passes $\Sigma_0$ once. A similar argument also works for the second statement.
\end{proof}

\begin{prop}\label{DSigmaisglobhyp}
Given any acausal topological hypersurface $S$ in a spacetime $(M,g)$, then $D(S)$ is open in $M$ and $(D(S),g)$ is a globally hyperbolic spacetime. In fact $S$ is a Cauchy surface for $(D(S),g)$.
\end{prop}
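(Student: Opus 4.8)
The plan is to prove the three assertions in the order (i) $D(S)$ is open, (ii) $S$ is a Cauchy surface for the spacetime $(D(S),g)$, and (iii) global hyperbolicity, which by Theorem~\ref{globhypcauchy} is immediate from (ii) once $(D(S),g)$ is recognised as a spacetime. Two preliminaries are worth recording. From O'Neill~\cite{g} (Ch.~14) I would take the structural facts about $\mathrm{edge}(S)$ needed below, the intuition being that $S$ locally separates a convex normal neighbourhood into a future part and a past part and that acausality forbids a short timelike curve from joining the two parts while dodging $S$. Secondly, any causal curve through a point of $S$ meets $S$ there, so $S\subseteq D^+(S)\cap D^-(S)$; in particular $S\subseteq D(S)$, which is needed even to speak of $S$ as a Cauchy surface of $D(S)$.

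The main obstacle is (i): the openness of $D(S)$. I would obtain this from the standard theory of Cauchy developments of achronal sets (O'Neill~\cite{g}, Ch.~14), whose engine is a limit-curve argument controlling $D(S)$ near its boundary: were some $p\in D^+(S)$ not interior, one could find points arbitrarily near $p$ carrying past-inextendible causal curves missing $S$, and a limiting causal curve through $p$ would then miss $S$ as well, contradicting $p\in D^+(S)$. Granting openness, $D(S)$ inherits from $(M,g)$ a metric, an orientation, and the time-orientation of $X=\partial/\partial t$, so $(D(S),g)$ is itself a spacetime.

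For (ii), acausality makes $S$ met at most once by any inextendible timelike (indeed causal) curve. For ``at least once'', suppose a timelike curve $\gamma$, inextendible in the spacetime $D(S)$, avoided $S$; fix $p=\gamma(s_0)\in D(S)=D^+(S)\cup D^-(S)$, say $p\in D^+(S)$ (the case $p\in D^-(S)$ is symmetric under time reversal), and examine the past half $\gamma|_{(a,s_0]}$. If that half is past-inextendible in $M$, then as a past-inextendible causal curve through $p\in D^+(S)$ it meets $S$, contradiction. Otherwise it has a past endpoint $q\in M$ with $q\notin D(S)$ (else $\gamma$ would extend in $D(S)$). Prolonging $\gamma|_{(a,s_0]}$ beyond $q$ by an arbitrary past-inextendible causal curve $\mu$ from $q$ produces a past-inextendible causal curve through $p$, which meets $S$; since $\gamma|_{(a,s_0]}$ avoids $S$, either $q\in S$, or, as $\mu$ ranges over all such prolongations, every past-inextendible causal curve from $q$ meets $S$, i.e.\ $q\in D^+(S)$. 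Either way $q\in D(S)$, contradicting $q\notin D(S)$; hence $\gamma$ meets $S$.

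Since $S\subseteq D(S)$ is then a Cauchy surface of the spacetime $(D(S),g)$, Theorem~\ref{globhypcauchy} yields that $(D(S),g)$ is globally hyperbolic, completing the proof. One technicality remains to be discharged: the prolongations and concatenations used above are only piecewise smooth, whereas the Cauchy development here is defined through smooth causal curves. A standard corner-smoothing replaces each such curve by a smooth causal curve with the same endpoints and causal type, and the same device underlies the limit-curve step in (i).
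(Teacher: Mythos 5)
Your proof takes a genuinely different route from the paper's, which is a pure citation (Propositions 14.38 and 14.43 of O'Neill \cite{g}): you invoke O'Neill only for the openness of $D(S)$, prove the Cauchy-surface property of $S$ in $(D(S),g)$ by hand, and then get global hyperbolicity from Theorem~\ref{globhypcauchy}. That architecture is sound, and the core of your argument is the right idea: reduce to the past endpoint $q$ of the past half of $\gamma$, note $q\notin D(S)$ by inextendibility in $D(S)$, and aim for the dichotomy ``$q\in S$ or $q\in D^+(S)$'', both of which contradict $q\notin D(S)$.

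There is, however, a genuine gap, and it sits exactly in the technicality you defer to ``standard corner-smoothing''. Your dichotomy is valid for the piecewise-smooth concatenation of $\gamma$ with $\mu$, but the paper defines $D^+(S)$ and Cauchy surfaces via \emph{smooth} causal curves, so the corner at $q$ must be rounded; the rounded curve no longer passes through $q$ nor coincides with $\mu$ immediately below $q$, and smoothing ``with the same endpoints and causal type'' gives no control over \emph{where} the new curve meets $S$. If for every smoothing radius the intersection with $S$ falls inside the smoothing region, all you can conclude is $q\in\overline{S}$ --- which does not give $q\in S$, because an acausal topological hypersurface need not be closed. This case is not vacuous: the paper applies this very proposition to $S=\Sigma_0\setminus K$ in Section~\ref{sec:supportofwaldsolutions}, and the edge points $\overline{S}\setminus S$ are precisely where inextendible curves in $D(S)$ can terminate. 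Closing the gap needs a separate argument, for instance: if $q\in\overline{S}\setminus S$, acausality plus openness of the chronology relation force $S\cap I^{+}(q)=S\cap I^{-}(q)=\emptyset$ (a point of $S$ chronologically related to $q$ would be chronologically related to some point of $S$ near $q$); now take an inextendible timelike geodesic $\sigma$ through $q$, pick $y$ on $\sigma$ slightly above $q$ with $y\in I^{-}(p)$ (possible since $p\in I^{+}(q)$, as $\gamma$ is timelike with past endpoint $q$, so $I^{-}(p)$ is an open neighbourhood of $q$), and join $y$ to $p$ by a smooth timelike curve; by transitivity of $\ll$ every point of that curve lies in $I^{+}(q)$, hence it misses $S$, and its single corner at $y$ can be rounded inside the $S$-free open set $I^{+}(q)$. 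The result is a smooth past-inextendible causal curve with future endpoint $p$ missing $S$, contradicting $p\in D^{+}(S)$. (When $q\notin\overline{S}$ your argument works verbatim, provided the smoothing ball is chosen disjoint from $S$.) Without something of this kind, the step ``either $q\in S$ or $q\in D^{+}(S)$'' does not follow from what you have written.
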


\begin{proof}
See Propositions 14.38 and 14.43 of O'Neill \cite{g}.  
\end{proof}

Thus a consequence of the previous two propositions is that given a standard static spacetime $(M,g)=(\mathbb{R}\times\Sigma,V^2dt^2-h)$ then $(D(\Sigma_0),g)$ is a globally hyperbolic spacetime. We now give an explicit expression for $D(\Sigma_0)$ and an alternative proof that it is an open set in $M$.
\begin{prop}\label{DSigmaexplicit} Given a standard static spacetime $(M,g)=(\mathbb{R}\times\Sigma, V^2dt^2-h)$ then the following statements are true:
\begin{enumerate}
\item $D^+(\Sigma_0)=\{(t,p)\in M\colon \;C(p,t) \text{ is compact in } \Sigma, t\geq 0\}$.
\item $D^-(\Sigma_0)=\{(-t,p)\in M\colon\;C(p,t) \text{ is compact in } \Sigma, t\geq 0\}=TD^+(\Sigma_0)$.
\item $D(\Sigma_0)=D^+(\Sigma_0)\cup D^-(\Sigma_0)= \{(t,p)\in M\colon\;C(p,|t|) \text{ is compact in } \Sigma \}$.
\item $D(\Sigma_0)$ is open in $M$,
\end{enumerate}
\end{prop}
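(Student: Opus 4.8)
The plan is to establish the explicit formula of Part~1 first, since Parts~2--4 then follow by soft arguments. Before anything else I would reduce to the case $V=1$. The metric factorises as $g=V^2\big(dt^2-V^{-2}h\big)$, so $g$ is conformal to $\tilde g=dt^2-\tilde h$ with $\tilde h:=V^{-2}h$; since causal curves, the relations $J^\pm$ and hence the Cauchy developments $D^\pm(\Sigma_0)$ depend only on the conformal class, the problem for $(M,g)$ is identical to that for the unit-lapse spacetime $(M,\tilde g)$. Crucially, the metric $d$ appearing in $C(p,t)$ is by convention the one induced by $V^{-2}h=\tilde h$, so Propositions~\ref{causalfuture} and~\ref{J(K)compact}, Theorem~\ref{extendibilityofgeodesics} and Lemma~\ref{extendiblegeodesic} all apply verbatim to $(M,\tilde g)$ with $\tilde h$ in place of $h$. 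From here on I work in $(M,\tilde g)$.

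For the inclusion ``$C(p,t)$ compact $\Rightarrow (t,p)\in D^+(\Sigma_0)$'' (with $t\ge 0$), take any past-inextendible future-pointing causal curve $\gamma$ through $(t,p)$ and suppose it misses $\Sigma_0$. By the reparametrisation argument of Proposition~\ref{causalfuture} its portion below $(t,p)$ may be written $s\mapsto(s,\sigma(s))$ with $|\dot\sigma|\le 1$ and $\sigma(t)=p$, the time coordinate $s$ decreasing into the past and staying $>0$ (else $\gamma$ already meets $\Sigma_0$). The Mean Value Theorem (Proposition~\ref{meanvaluetheorem}) gives $d(\sigma(s),p)\le t-s\le t$, so $\sigma$ is confined to the compact set $C(p,t)$; being of bounded speed it is uniformly continuous and hence extends continuously to its past endpoint at some time $s_0\ge 0$. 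If $s_0>0$ this endpoint is an interior spacetime point, past which any causal curve can be locally continued, contradicting past-inextendibility; if $s_0=0$ the endpoint lies on $\Sigma_0$, so $\gamma$ meets $\Sigma_0$, against our assumption. Either way we reach a contradiction, so every such $\gamma$ meets $\Sigma_0$ and $(t,p)\in D^+(\Sigma_0)$.

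For the converse I argue by contraposition. If $t\ge 0$ and $C(p,t)$ is \emph{not} compact then, by Proposition~\ref{J(K)compact} applied to $K=\{p\}$, $\overline{B_t(0)}\not\subseteq\epsilon_p$; choosing $X\in\overline{B_t(0)}\setminus\epsilon_p$ and taking the maximal unit-speed geodesic $\sigma\colon[0,\beta)\to\Sigma$ in its direction yields $\beta\le t$, with $\sigma$ inextendible and so (by Theorem~\ref{extendibilityofgeodesics} and Lemma~\ref{extendiblegeodesic}) leaving every compact set as $s\to\beta^-$. Then $s\mapsto(t-s,\sigma(s))$, $s\in[0,\beta)$, is a past-inextendible future-pointing causal curve through $(t,p)$ whose time coordinate $t-s$ stays strictly positive and which has no limit point in $M$, so it avoids $\Sigma_0$; hence $(t,p)\notin D^+(\Sigma_0)$. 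The remaining case $t<0$ is handled by the vertical timelike line $s\mapsto(s,p)$, $s\le t$, which is past-inextendible, future-pointing and stays at negative time, so again $(t,p)\notin D^+(\Sigma_0)$. This completes Part~1.

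Parts~2--4 are then quick. The time reflection $T(t,p)=(-t,p)$ is an isometry of $(M,g)$ fixing $\Sigma_0$ and reversing the time orientation (it sends $\partial_t\mapsto-\partial_t$), so it interchanges future- and past-inextendibility of causal curves and therefore $D^-(\Sigma_0)=T\big(D^+(\Sigma_0)\big)$; applying $T$ to the Part~1 formula gives Part~2, and Part~3 is the union of the two half-characterisations, which is exactly $\{(t,p):C(p,|t|)\text{ compact}\}$ since $|t|=t$ for $t\ge 0$ and $|t|=-t$ for $t\le 0$ (the two descriptions agreeing at $t=0$, where $C(p,0)=\{p\}$). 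For Part~4 I use the formula of Part~3: if $C(p,|t|)$ is compact then $\overline{B_{|t|}(0)}\subseteq\epsilon_p$ by Proposition~\ref{J(K)compact}, and since $\epsilon_p$ is open and star-shaped about $0$ a positive-distance argument gives $\delta>0$ with $\overline{B_{|t|+\delta}(0)}\subseteq\epsilon_p$, i.e.\ $C(p,|t|+\delta)$ compact; for $(t',p')$ close to $(t,p)$ the triangle inequality yields $C(p',|t'|)\subseteq C(p,|t|+\delta)$, a closed subset of a compact set, hence compact, so $(t',p')\in D(\Sigma_0)$ and $D(\Sigma_0)$ is open. (This reproves, for this special surface, the general openness already available from Proposition~\ref{DSigmaisglobhyp}.) The main obstacle is Part~1, and within it the careful treatment of inextendible causal curves: both the confinement-plus-extension contradiction in the forward direction and the verification that the constructed geodesic curve is genuinely past-inextendible while avoiding $\Sigma_0$ as $s\to\beta^-$ require the precise interplay of the Mean Value Theorem, uniform continuity, and the geodesic-extendibility criterion of Lemma~\ref{extendiblegeodesic}.
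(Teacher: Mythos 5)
Your proposal is correct and follows essentially the same route as the paper: reduce to $V=1$ by conformal invariance; prove the forward inclusion of Part~1 by confining the curve's spatial part to $C(p,t)$ via the Mean Value Theorem and contradicting past-inextendibility through uniform continuity and continuous extension (the paper's preparatory proposition and Corollary~\ref{moreDSigma}); prove the converse using the equivalence in Proposition~\ref{J(K)compact} together with the rescaling lemma and the geodesic-extendibility criterion; and obtain Parts~2--4 from the time reflection $T$ and the triangle-inequality neighbourhood argument. The only cosmetic differences are that you handle the converse by contraposition with an explicit past-inextendible witness curve $s\mapsto(t-s,\sigma(s))$ avoiding $\Sigma_0$, where the paper instead uses the time symmetry of the spacetime to test with future-inextendible geodesics that must reach $\Sigma_{2t}$, and that in Part~4 you rederive the enlargement step ($C(p,|t|+\delta)$ compact for some $\delta>0$) from openness of $\epsilon_p$ rather than citing Corollary~C.6 of Bullock's thesis.
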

where $C(p,t)$ is the closed ball centered on $p$ of radius $t$ in the metric on $\Sigma$ induced by the Riemannian metric $V^{-2}h$ (see Theorem~\ref{Riemmetric}), $T:M\rightarrow M$ is the smooth map: $T(t,p)=(-t,p)$ and $\Sigma_0=\{0\}\times \Sigma$.

Before we prove Proposition~\ref{DSigmaexplicit}, we prove the following very useful result:
\begin{prop}
With the definitions of the previous proposition, let $K\subseteq \Sigma$ and $C(K,t)$ be compact in $\Sigma$, where $t\geq 0$, then $\{t\}\times K\subseteq D^+(\Sigma_0)$.
\end{prop}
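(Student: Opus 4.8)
The plan is to fix an arbitrary point $(t,p)$ with $p\in K$ and an arbitrary past-inextendible future-pointing smooth causal curve $\gamma$ through it, and to show directly from the definition of the future Cauchy development that $\gamma$ must meet $\Sigma_0=\{0\}\times\Sigma$. The case $t=0$ is immediate, since $\{0\}\times K\subseteq\Sigma_0\subseteq D^+(\Sigma_0)$, so I would assume throughout that $t>0$.

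First I would reduce to the case $V=1$. The causal character of tangent vectors, and hence the family of future-pointing causal curves, is conformally invariant, and $V^{-2}g=dt^2-V^{-2}h$ (the same rescaling underlies Lemma~\ref{staticandglobhyp}); moreover the metric defining $C(K,t)$ is precisely the one induced by $V^{-2}h$. Consequently the reparametrisation carried out in the proof of Proposition~\ref{causalfuture} applies verbatim: after reparametrising by the $t$-coordinate, $\gamma(s)=(s,\sigma(s))$ on some interval $I\ni t$, where $\sigma$ is a smooth curve in $(\Sigma,V^{-2}h)$ with speed $|\dot\sigma|\le1$ and $\sigma(t)=p$. Since reparametrisation preserves the image, the endpoints and past-inextendibility, it suffices to show that $0\in I$, for then $\gamma(0)=(0,\sigma(0))\in\Sigma_0$.

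Set $a=\inf I$. If $a<0$, then since $I$ is an interval containing $t>0$ we have $0\in I$ and we are done. The core of the argument is to rule out $a\ge0$. Assuming $a\ge0$, every $s$ in the past portion $\{s\in I:\ s\le t\}$ satisfies $s\ge a\ge0$, so by the Mean Value Theorem (Proposition~\ref{meanvaluetheorem}) and the bound $|\dot\sigma|\le1$,
$$d(\sigma(s),K)\le d(\sigma(s),\sigma(t))\le (t-s)\le t,$$
whence $\sigma(s)\in C(K,t)$. Thus the past portion of $\sigma$ remains inside the hypothesised compact, and therefore complete, set $C(K,t)$. Since $\sigma$ has bounded speed it is uniformly continuous on this interval (the remark following Proposition~\ref{meanvaluetheorem}), so $\sigma(s)$ is Cauchy as $s\to a^+$ and converges to some $\ell\in C(K,t)$. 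Then $\gamma(s)\to(a,\ell)\in M$ as $s\to a^+$, i.e. $\gamma$ acquires a past endpoint in $M$, contradicting its past-inextendibility. Hence $a<0$, completing the proof.

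I expect the last step to be the main obstacle: converting the statement ``the curve stays in the compact set $C(K,t)$'' into a genuine contradiction with past-inextendibility. This is exactly where the compactness of $C(K,t)$ is used in an essential way (to supply completeness and hence a limit point), and it requires invoking the precise definition that a past-inextendible curve has no past endpoint in $M$. A secondary point requiring care is verifying that the conformal rescaling legitimately transports the reparametrisation of Proposition~\ref{causalfuture}, stated only for $V=1$, to the general metric $V^2dt^2-h$.
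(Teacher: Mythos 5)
Your proof is correct and follows essentially the same route as the paper's: reduce to $V=1$, reparametrise via Proposition~\ref{causalfuture}, assume the past endpoint of the parameter interval satisfies $a\geq 0$, trap the past portion of $\sigma$ in the compact (hence complete) set $C(K,t)$ using the Mean Value Theorem, and derive a contradiction with past-inextendibility from uniform continuity. The only differences are presentational: you spell out the conformal reduction and the trivial $t=0$ case, which the paper leaves implicit.
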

\begin{proof}
As usual we let w.l.o.g.\ $V=1$ for simplicity. Let $p\in K$ and $\gamma\colon I\rightarrow \mathbb{R}\times\Sigma$ be an inextendible future-pointing smooth causal curve through $(t,p)$, where $I$ is an open interval of $\mathbb{R}$. By Proposition~\ref{causalfuture} w.l.o.g.\ we can set $\gamma(s)=(s,\sigma(s)) \;\forall s \in I$, where $\sigma\colon I\rightarrow \Sigma $ is a smooth curve with $t\in I$, $\sigma(t)=p$ and $|\dot{\sigma}(s)|\leq 1\;\forall s\in I$. Let $I=(a,b)$ where $b\in\mathbb{R}\cup\{\infty\}$. If $a\geq 0$ then for $s\in (a,t)$: 
$$d(p,\sigma(s))\leq L(\sigma|_s^t)=\int _s^t|\dot{\sigma}(s')|ds'\leq t-s\leq t$$
and so $\sigma(s)\in C(K,t)\;\forall s\in (a,t)$. But since $\sigma$ is a smooth curve in $\Sigma$ with speed bounded by $1$ it is uniformly continuous by the Mean Value Theorem (Theorem~\ref{meanvaluetheorem}) and since it is contained in the compact (and thus complete) set $C(p,t)$, then it can be continuously extended, contradiction. Thus $a<0$ and $\gamma$ passes $\Sigma_0$.   
\end{proof}

\begin{cor}\label{moreDSigma}
Again, with the definitions of the previous propositions, if $C(K,t)$ is compact in $\Sigma$ then $\{s\}\times C(K,t-s)\subseteq D^+(\Sigma_0)$ for all $s\in [0,t]$.
\end{cor}
\begin{proof}
So (by Proposition~C.4 of Bullock~\cite{me}), $C(K,t)=C(C(K,t-s),s)$ is compact. By the previous proposition then $\{s\}\times C(K,t-s)\subseteq D^+(\Sigma_0)$ for all $s\in [0,t]$. 
\end{proof}

\begin{proof}[Proposition~\ref{DSigmaexplicit}]
Again, for simplicity and w.l.o.g.\, assume $V=1$. We start by proving Statement \textit{1}: $$D^+(\Sigma_0)=\{(t,p)\in M\colon\;C(p,t) \text{ is compact in } \Sigma, t\geq 0\}$$
That the RHS is contained in the LHS follows from Corollary~\ref{moreDSigma} with $K=\{p\}$ and $s=t$.
For the converse, let $(t,p)\in D^+(\Sigma_0)$. So $t\geq 0$ and any past-inextendible future-pointing inextendible smooth causal curve through $(t,p)$ passes $\Sigma_0$. Thus, using the symmetry of the spacetime, any future-pointing future-inextendible smooth causal curve through $(t,p)$ passes $\Sigma_{2t}$. Take for instance the curve $\gamma\colon I\rightarrow M$, where $0\in I$,  $\gamma(s)=(t+s,\sigma(s))$ and $\sigma$ is an inextendible geodesic with $\sigma(0)=p$ and $\dot{\sigma}(0)=X_p$, with $|X_p|\leq 1$. Let $I=(a,b)$. The curve $\gamma$ is thus causal and inextendible and so passes $\Sigma_{2t}$ and so $b>t$. Alternatively if $|X_p|\leq t$ then (by Lemma 5.8 (Rescaling Lemma) in Lee \cite{p}) $b>1$ and so $\overline{B(0,t)}\subseteq \epsilon_p$ or, by Proposition~\ref{J(K)compact}, $C(p,t)$ is compact in $\Sigma$. 

Statements \textit{2} and \textit{3} follow. Now for the proof of Statement~\textit{4} that $D(\Sigma_0) $ is open. Let $(t,p)\in D(\Sigma_0)$ w.l.o.g.\ $t\geq 0$. So $C(p,t)$ is compact in $\Sigma$ and from Corollary~C.6 of Bullock~\cite{me}, there exists $\epsilon>0$ s.t. $C(p,t+\epsilon)$ is also compact. We propose that:
$$\left(-\left(t+\frac{\epsilon}{2}\right),t+\frac{\epsilon}{2}\right)\times B\left(p,\frac{\epsilon}{2}\right)\subseteq D(\Sigma_0).$$
This follows by showing that if $(s,q)\in (-(t+\frac{\epsilon}{2}),t+\frac{\epsilon}{2})\times B(p,\frac{\epsilon}{2})$ then $C(q,s)$ is compact (the result then follows from the description of $D(\Sigma_0)$ just proven). Firstly, we can set w.l.o.g.\ $s\in[0,t+\frac{\epsilon}{2})$, $d(p,q)<\frac{\epsilon}{2}$. But $r\in C(q,s)\Rightarrow d(r,q)\leq s$ and so:
$$d(r,p)\leq d(r,q)+d(q,p)<s+\frac{\epsilon}{2}<t+\frac{\epsilon}{2}+\frac{\epsilon}{2}=t+\epsilon.$$
So $C(q,s)\subseteq C(p,t+\epsilon)$ and as the RHS is compact then so is $C(q,s).$
\end{proof}

It is well known that given a globally hyperbolic spacetime and smooth initial data of compact support defined on a smooth spacelike Cauchy surface then the Klein-Gordon equation can be solved uniquely with respect to this data:
\begin{thm}[Existence and Uniqueness of Classical Solutions on Globally Hyperbolic Spacetimes with respect to compactly supported initial data]\label{globhypsolution1}(B\"{a}r et al.\ \cite{e} Theorem 3.2.11) Let $(M,g)$ be a globally hyperbolic spacetime with smooth, spacelike Cauchy surface $S$. Then the Klein-Gordon equation has a well-posed initial value formulation, that is, given data
$\phi_0, \dot{\phi}_0 \in C_0^{\infty}(S)$ then there exists a unique solution $\psi\in C^\infty(M)$ to:
\begin{align*}
 &(\largesquare _g + m^2) \psi = 0\\
 &\psi|_S=\phi_0 \\
 &\nabla _{n} \psi|_S = \dot{\phi}_0, 
\end{align*}
where $n$ is the unique unit smooth future-pointing timelike vector field along $S$ normal to $S$. Moreover: $$\supp\psi\subseteq J(K)$$
where $K=\supp\phi_0\bigcup \supp\dot{\phi}_0$.
\end{thm}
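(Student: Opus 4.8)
The plan is to regard $P:=\largesquare_g+m^2$ as a normally hyperbolic operator (its principal symbol is the metric) and to assemble the result from three ingredients: a local solvability statement, an energy estimate that simultaneously yields uniqueness and finite propagation speed, and a globalisation argument that uses the Cauchy surface $S$ to patch the local data into a single smooth solution on all of $M$. I would prove the uniqueness clause and the support clause $\supp\psi\subseteq J(K)$ together, since both fall out of the same estimate.

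For uniqueness and the support property I would run the standard energy method. To a real solution $\psi$ of $P\psi=0$ one associates the stress-energy tensor of the field, which is divergence-free on solutions; contracting it with a future-pointing timelike vector field produces a current whose divergence is controlled by $\psi$ and its first derivatives. Integrating the resulting identity (via Stokes' theorem) over a ``lens'' region bounded below by a portion of $S$ and above by a nearby spacelike slice gives a Gr\"{o}nwall-type inequality bounding the energy on the upper slice by the energy of the Cauchy data on $S$; compactness of $J^+(p)\cap J^-(q)$ is what makes these lens regions relatively compact so the boundary integrals converge. This estimate encodes the domain-of-dependence property: $\psi(p)$ is determined by the data on $J(p)\cap S$. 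The support clause is then a purely causal deduction: if $p\notin J(K)$ and $q\in J(p)\cap S\cap K$ then $p\in J(q)\subseteq J(K)$, a contradiction, so $J(p)\cap S$ misses $K$, the data vanish there, and hence $\psi(p)=0$. Taking the data to be zero (so $K=\emptyset$) gives uniqueness.

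For existence I would first solve locally, then globalise. Every point of a globally hyperbolic spacetime has a causally convex globally hyperbolic neighbourhood on which $P$ admits advanced and retarded fundamental solutions, built from Riesz distributions and the Hadamard parametrix, giving Green's operators $G_\pm$ with $PG_\pm f=f$ and $\supp(G_\pm f)\subseteq J^\pm(\supp f)$. The Cauchy problem reduces to the inhomogeneous one: extend $\phi_0,\dot\phi_0$ to a smooth $u$ near $S$, form the smooth source $f:=Pu$ (supported off $S$ on each side), and subtract $G_\pm f$ to annihilate the source while leaving the Cauchy data intact; the support properties of $G_\pm$ again deliver $\supp\psi\subseteq J(K)$. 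The genuinely hard step is the passage from local to global. Local solutions live only on small causally convex pieces and must be glued consistently over all of $M$: the overlaps agree by the energy/uniqueness estimate, but making the patched object globally well-defined and smooth is exactly where full global hyperbolicity is indispensable. Here I would use the smooth splitting $M\cong\mathbb{R}\times S$ afforded by the Cauchy surface of Theorem~\ref{globhypcauchy}, solve on a slab about each leaf, and invoke the defining property that every inextendible causal curve meets $S$ exactly once (so that $D(S)=M$) to conclude that the local solutions cover $M$ and are forced to coincide. Once this is in place the remaining $C^\infty$ regularity follows routinely from the smoothness of $G_\pm$ and interior regularity for $P$.
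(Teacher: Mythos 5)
The first thing to note is that the paper does not prove this statement at all: Theorem~\ref{globhypsolution1} is imported verbatim from B\"ar, Ginoux and Pf\"affle \cite{e} (Theorem 3.2.11), and the surrounding text only records the attribution. So there is no in-paper proof to compare yours against; the relevant comparison is with the cited monograph. Measured against that, your outline is essentially a correct reconstruction of the standard argument and of the structure of the proof in \cite{e}: local solvability via Riesz distributions and the Hadamard parametrix on causally convex neighbourhoods, globalisation using the splitting $M\cong\mathbb{R}\times S$ provided by the Cauchy surface, and a domain-of-dependence statement giving both uniqueness and $\supp\psi\subseteq J(K)$. Two remarks. First, where you use the divergence-free stress-energy tensor and a Gr\"onwall/lens-region estimate, \cite{e} establishes uniqueness by a duality argument (solving the adjoint equation locally, backwards in time, and pairing against test functions); both routes are standard, and the energy method has the advantage you exploit, namely that finite propagation speed and the support clause come out of the same inequality. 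Second, your reduction of the Cauchy problem to the inhomogeneous one is stated too casually: for a generic smooth extension $u$ of the data, $f=Pu$ is \emph{not} ``supported off $S$ on each side'', and the naive splitting $f=f_++f_-$ by restriction to $J^{+}(S)$ and $J^{-}(S)$ is not smooth. One must first choose $u$ by Borel's lemma so that its normal derivatives along $S$ are exactly the ones dictated recursively by the equation (possible since the spacelike $S$ is non-characteristic), which makes $Pu$ vanish to infinite order on $S$; only then is the splitting smooth and your subtraction of $G_{\pm}f_{\pm}$ legitimate. This is a repairable gloss rather than a wrong step, and since you in any case derive $\supp\psi\subseteq J(K)$ independently from the energy estimate, the overall outline stands.
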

Note that there exists along any smooth spacelike surface $S$ in a spacetime $M$ such a smooth vector field $n$ along $S$ normal to $S$ (the smooth vector field $n$ is not to be confused with the dimension of the spacetime). For completeness, this is proven in Proposition~E.1 of Bullock~\cite{me}. Note that the orientability of $M$ or $S$ is not assumed. 
 
We shall use a modification of this theorem in the next section, that is, we can drop the condition on the data of being of compact support. This is shown e.g.\ in Theorem~B.1 of Bullock~\cite{me}.

\section{The Existence of Wald solutions}\label{existence1}
In this section we show how to construct our solution to the Klein-Gordon equation from the vector-valued function $t\rightarrow [\phi_t]$. This section is strongly based on the paper by Wald~\cite{b}, but is extended in the following aspects. The more recent result by Bernal and Sanchez~\cite{h} on the extendibility of subsets of the spacetime to smooth spacelike Cauchy surfaces in globally hyperbolic spacetimes (the second half of Theorem~\ref{globhypcauchy}) is needed to complete the proof on the existence of Wald solutions. We also extend Wald's proof to the case of acceptable s.a.e.s. The reference for the results on globally hyperbolic spacetimes is, as usual, B\"{a}r et al.~\cite{e}. This section is of great importance to us as it proves that the construction of Section~\ref{candidatesolutions} defines a smooth solution to the Klein-Gordon equation. We answer in the next section the question of its uniqueness.

We start with a theorem concerning the agreement between our solution (Equation~\eqref{stuff2}, p.\pageref{stuff2}) to the Hilbert space version of the Klein-Gordon equation (Equation~\ref{stuff}, p.\pageref{stuff}) and that arising from an application of Theorem~\ref{globhypsolution1} (or rather the generalisation mentioned thereafter):
\begin{thm} \label{waldglobsolutionagreement} Given initial data $\phi_0,\dot{\phi}_0\in \chi_E$, 
where $A_E$ is an acceptable s.a.e.\ of $A$, choose $\phi_t\in \chi_E$ s.t. $[\phi_t]=C(t,A_E)[\phi_0]+S(t,A_E)[\dot{\phi}_0]$. If we define the function $\phi$ on $M$ by: $\phi(t,x)=\phi_t(x)$, 
and let $\psi$ be the unique smooth solution in $D(\Sigma_0)$ satisfying this smooth Cauchy data according to Theorem~B.1 of Bullock~\cite{me} then $\phi=\psi$ in $D(\Sigma_0)$ and, in particular, $\phi|_{D(\Sigma_0)}$ is smooth and solves the Klein-Gordon equation there.
\end{thm}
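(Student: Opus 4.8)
The plan is to identify the two solutions pointwise throughout $D(\Sigma_0)$ by a domain-of-dependence argument that, near any chosen point, reduces the abstract functional-calculus evolution to the essentially self-adjoint (complete) situation, where the Hilbert-space solution is manifestly the classical one. As usual I would first reduce to $V=1$ (the causal structure and the metric on $\Sigma$ being conformally determined), and recall from Proposition~\ref{DSigmaisglobhyp} that $(D(\Sigma_0),g)$ is globally hyperbolic with Cauchy surface $\Sigma_0$, so Theorem~\ref{globhypsolution1} (in the non-compactly-supported form of Theorem~B.1 of Bullock~\cite{me}) produces the unique smooth $\psi$ on $D(\Sigma_0)$ with finite propagation. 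On the Hilbert-space side, Proposition~\ref{strongderivs} and Lemma~\ref{stuff3} give that $t\mapsto[\phi_t]$ is a smooth $L^2(\Sigma,V^{-1}d\vol_h)$-valued curve, valued in $D(A_E^\infty)$, solving $\frac{d^2}{dt^2}[\phi_t]=-A_E[\phi_t]$ with the correct Cauchy data, while each slice $\phi_t$ lies in $\chi_E\subseteq C^\infty(\Sigma)$. I would also isolate a purely abstract uniqueness statement: any $C^2$ curve $u$ into $L^2$ taking values in $D(A_E)$ with $\ddot u=-A_E u$ and $u(0)=\dot u(0)=0$ vanishes identically. This follows by applying the bounded projections $\mathbf{1}_{[-n,n]}(A_E)$, reducing to a bounded generator where the second-order ODE has a unique solution, and letting $n\to\infty$; crucially it uses only self-adjointness, not the positivity assumed by Wald, which is exactly what is needed once $A_E$ is merely acceptable in the sense of Definition~\ref{acceptablesae}.

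The main construction would then be local. Fix $p=(t_0,x_0)\in D^+(\Sigma_0)$ (the case $t_0\le0$ being symmetric under the map $T$ of Proposition~\ref{DSigmaexplicit}). Since $p\in D^+(\Sigma_0)$, Propositions~\ref{J(K)compact} and~\ref{DSigmaexplicit} give that the backward influence set $C(x_0,t_0)$, and indeed $C(x_0,t_0+\delta)$ for some small $\delta>0$, is compact. Using the Bernal-Sanchez extension in Theorem~\ref{globhypcauchy}, I would realise a compact spacelike piece of $\Sigma_0$ containing $C(x_0,t_0+\delta)$ inside a Cauchy surface, and thereby build a \emph{complete} Riemannian manifold $(\Sigma',h')$ carrying an isometry of an open neighbourhood $U\supseteq C(x_0,t_0+\delta)$ onto an open subset of $\Sigma'$. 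On $\mathbb{R}\times\Sigma'$ the spatial operator is essentially self-adjoint (Theorem~\ref{esa}), its unique self-adjoint extension generates through the very same functions $C(t,\cdot),S(t,\cdot)$ a solution $\phi'$, and by the globally hyperbolic theory $\phi'$ coincides with the classical Klein-Gordon solution and so propagates at speed one.

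Matching then proceeds in two moves. On the one hand $\phi'$ and $\psi$ agree at $p$ by the domain-of-dependence property on both globally hyperbolic models, since their Cauchy data agree on $C(x_0,t_0)$ and the metrics are isometric there. On the other hand, as long as the finite-propagation support of $\phi'$ remains inside $U$ for all intermediate times $0\le s\le t_0$, its pullback to $\Sigma$ is a compactly supported smooth curve in $[C_0^\infty(\Sigma)]=D(A)$ that solves the \emph{same} abstract equation $\ddot u=-A_E u$ on $\Sigma$ (because $A_E$ extends $A$ and $A$ acts through the metric on $U$, which matches that of $\Sigma'$). By the uniqueness step above it therefore equals the $A_E$-functional-calculus solution of the same localised data, and evaluating at $p$ gives $\phi(p)=\phi'(p')=\psi(p)$, where $p'$ is the image of $p$ under the isometry.

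I expect the genuine obstacle to be precisely the global, non-local nature of $C(t,A_E)=\cos(A_E^{1/2}t)$: unlike a differential operator it sees all of $\Sigma$ at once, and, as the example of Section~\ref{counterexample} shows, the evolution really can fail to propagate causally once it reaches an edge of $\Sigma$. The heart of the proof is therefore to show that \emph{strictly inside} $D(\Sigma_0)$, i.e.\ before any signal can reach an edge, the evolution is nevertheless local and independent of the choice of acceptable extension. Making this rigorous requires (i) arranging the Bernal-Sanchez Cauchy surface and tracking the sets $C(\,\cdot\,,s)$ so that the transported solution stays within the isometric region $U$ for every $s\in[0,t_0]$, and (ii) controlling the contribution of the data lying outside the compact influence set $C(x_0,t_0)$, which for general $\chi_E$ data is non-compact; the latter I would dispatch by a classical energy estimate over the backward solid cone of $p$, showing that only the data in $C(x_0,t_0)$ can affect $\phi(p)$. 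Once $\phi=\psi$ on $D(\Sigma_0)$ is established pointwise, the smoothness of $\phi|_{D(\Sigma_0)}$ and the fact that it solves the Klein-Gordon equation there are inherited immediately from $\psi$.
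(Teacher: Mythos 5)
Your proposal has a genuine gap, and it sits exactly where you locate "the heart of the proof": step (ii), controlling the data outside $C(x_0,t_0)$. What you need there is a causality statement for the abstract, non-local evolution $C(t,A_E)$, $S(t,A_E)$ applied to data vanishing near the backward cone, and you propose to obtain it from a ``classical energy estimate over the backward solid cone of $p$''. But at this stage of the argument the curve $t\mapsto\phi_t$ is \emph{not} known to be a classical solution: all that Proposition~\ref{strongderivs} and Lemma~\ref{stuff3} provide is a smooth $L^2$-valued curve each of whose slices happens to lie in $C^\infty(\Sigma)$; joint smoothness in $(t,x)$, and hence the very applicability of a classical energy identity, is part of what the theorem is supposed to establish. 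One could try to manufacture the missing joint regularity first (via interior elliptic estimates showing $t\mapsto\phi_t$ is smooth into $W^{2k,2}_{loc}$ for all $k$, then Sobolev embedding), but that is a substantial piece of work you do not supply, and once it is done it renders the entire detour through the auxiliary complete manifold $\Sigma'$ and essential self-adjointness redundant. Note also that you cannot instead borrow the causal-support result of Section~\ref{sec:supportofwaldsolutions}: in the paper's logical order that result is proved \emph{after}, and by means of, the present theorem. Your spectral-projection uniqueness lemma is correct and clean, but it is intrinsically global on $\Sigma$ and cannot by itself localise anything to a cone.

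There are two further breakdowns in the construction itself. First, a Cauchy surface of $D(\Sigma_0)$ furnished by Theorem~\ref{globhypcauchy} is in general \emph{incomplete} in its induced metric (consider the diamond $D(\{0\}\times(0,1))$ in two-dimensional Minkowski space), so Bernal--Sanchez does not give you the complete $(\Sigma',h')$; that requires a separate construction, e.g.\ a conformal modification of $h$ outside a compact neighbourhood. Second, even granted $(\Sigma',h')$, the localised classical solution $\phi'$ has support spreading at unit speed, so by time $t_0$ it fills roughly $C(x_0,2t_0+\delta)$, which need not be compact nor contained in $U$ even though $C(x_0,t_0+\delta)$ is (take $\Sigma=(0,1)$, $x_0=\tfrac12$, $t_0=\tfrac25$); thus the pullback to $\Sigma$ can cease to exist before time $t_0$, and repairing this forces a time-stepping induction of the kind the paper only develops later. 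The paper's own proof sidesteps every one of these issues with a duality argument: assuming $\phi\neq\psi$ on a positive-measure subset of some slice $\Sigma_{t_1}\cap D(\Sigma_0)$, it uses Bernal--Sanchez to place a compact piece of that slice inside a Cauchy surface of $D(\Sigma_0)$, propagates a compactly supported bump \emph{backwards} as a classical solution $F$, and shows that the pairing
\begin{equation*}
c(t)=\int_{\Sigma_t}V^{-1}\left[F\left(\frac{\partial\psi}{\partial t}-\frac{d\phi_t}{dt}\right)-\frac{\partial F}{\partial t}\,(\psi-\phi_t)\right]d\vol_h
\end{equation*}
is conserved; this conservation needs only the symmetry of $A_E$ tested against the compactly supported slices of $F$ --- no positivity and no propagation estimate for the abstract dynamics --- and $c(0)=0\neq c(t_1)$ yields the contradiction. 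That is the key idea your approach is missing.
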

%and Theorem $\ref{stuff4}$

Note that if $A_E$ is bounded-below then $A_E^-$ is bounded and $\chi_E=\{f\in C^\infty(\Sigma)\text{ s.t. } [f]\in D(A_E^\infty)\}$.

This is proven by contradiction. The proof is due to Wald~\cite{b} but completed (by reference to a more recent result of Bernal and Sanchez~\cite{h} on the existence of smooth spacelike Cauchy surfaces) and extended to the case of acceptable s.a.e.s dealt with in this paper. The proof is included for completeness.

\begin{prop}
If there exists $t_1$ such that $\phi\neq\psi$ everywhere in a non-null set in $\Sigma_{t_1}\cap D(\Sigma_0)$, then there exists a compact set $H$ in $\Sigma_{t_1}\cap D(\Sigma_0)$ and a smooth spacelike Cauchy surface $S$ for $D(\Sigma_0)$ s.t. $H\subseteq S$ and $\vol_h\{(t_1,x)\in H\colon \;\psi(t_1,x)\neq\phi(t_1,x)\}>0$.
\end{prop}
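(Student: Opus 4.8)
The plan is to upgrade the merely non-null disagreement set to a full-dimensional compact coordinate ball on which $\psi\neq\phi$ holds at every point, and then to thread a Cauchy surface through that ball using the extendibility half of Theorem~\ref{globhypcauchy}.

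First I would restrict attention to the single slice $\Sigma_{t_1}$. By construction $\phi|_{\Sigma_{t_1}}=\phi_{t_1}$, and since $\phi_{t_1}\in\chi_E\subseteq C^\infty(\Sigma)$ (Lemma~\ref{stuff3}) it is smooth as a function on $\Sigma_{t_1}\cong\Sigma$; likewise $\psi$ is smooth on the open set $D(\Sigma_0)$ (open by Proposition~\ref{DSigmaexplicit}) via Theorem~B.1 of Bullock~\cite{me}. Hence $\psi-\phi$ is continuous on the open submanifold $\Sigma_{t_1}\cap D(\Sigma_0)$ of $\Sigma_{t_1}$, so the disagreement set $U:=\{(t_1,x)\in\Sigma_{t_1}\cap D(\Sigma_0):\psi(t_1,x)\neq\phi(t_1,x)\}$ is \emph{open}. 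By hypothesis $U$ contains a non-null set, so $\vol_h(U)>0$ and in particular $U\neq\emptyset$.

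Next I would extract the compact piece. Choosing any $p_0\in U$ and a chart $\Psi$ about $p_0$, I take a closed coordinate ball $H\subseteq U$ (the $\Psi$-preimage of a small closed Euclidean ball centred at $\Psi(p_0)$, of radius small enough that its preimage lies in $U$). Then $H$ is a compact smooth embedded $N$-dimensional submanifold-with-boundary of $\Sigma_{t_1}$, where $N=\dim\Sigma$, and $\vol_h(H)>0$ since $H$ has nonempty interior and $\vol_h$ has positive smooth density. Because $H\subseteq U$, every point of $H$ is a point of disagreement, so $\vol_h\{(t_1,x)\in H:\psi(t_1,x)\neq\phi(t_1,x)\}=\vol_h(H)>0$, as required. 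Geometrically, $\Sigma_{t_1}$ is spacelike (its induced metric is $-h$) and acausal in $M$: by Proposition~\ref{causalfuture} every causal curve is strictly monotone in the coordinate $t$ and so meets the level set $\{t=t_1\}$ at most once, and acausality in $M$ descends to acausality in the sub-spacetime $D(\Sigma_0)$. Thus $H$ is a smooth spacelike acausal compact $N$-dimensional embedded submanifold-with-boundary of the globally hyperbolic spacetime $(D(\Sigma_0),g)$ (global hyperbolicity by Proposition~\ref{DSigmaisglobhyp}), with $N=\dim D(\Sigma_0)-1$. The last statement of Theorem~\ref{globhypcauchy} then furnishes a smooth spacelike Cauchy surface $S$ for $D(\Sigma_0)$ with $H\subseteq S$, completing the construction.

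The step I expect to be the crux is the observation that the disagreement set is open. This is precisely what permits the passage from an abstract non-null set—which need not contain any well-behaved compact submanifold of positive measure—to a genuine closed ball on which the disagreement is everywhere strict; it rests squarely on the pointwise smoothness of both $\phi_{t_1}$ and $\psi$ along the slice $\Sigma_{t_1}$, the former supplied by $\chi_E\subseteq C^\infty(\Sigma)$ and Lemma~\ref{stuff3}. Once openness is in hand, the remainder is the routine construction of a coordinate ball together with a direct appeal to the Bernal--Sanchez extendibility result.
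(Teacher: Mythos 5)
Your proof is correct, but it takes a genuinely different route from the paper's at the key step. The paper never uses (or needs) the fact that the disagreement set $U$ is open: it runs a purely measure-theoretic pigeonhole argument, first over a countable atlas of $\Sigma_{t_1}\cap D(\Sigma_0)$ to find a single chart $(V,\phi)$ with $\vol_h(U\cap V)>0$, then over a countable cover of $\phi(V)$ by open balls, and finally over an exhaustion of such a ball by closed balls $C_n$, concluding that some compact coordinate ball $H=\phi^{-1}(C_n)$ satisfies $\vol_h(U\cap H)>0$; note that this $H$ need not lie inside $U$. You instead observe that $\psi-\phi$ is continuous on the slice (legitimate here, since $\phi(t_1,\cdot)=\phi_{t_1}\in\chi_E\subseteq C^\infty(\Sigma)$ by construction and $\psi$ is smooth on $D(\Sigma_0)$), so $U$ is open and nonempty, and any closed coordinate ball $H\subseteq U$ works. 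Both arguments then make the identical appeal to the Bernal--Sanchez extendibility statement (the last part of Theorem~\ref{globhypcauchy}) applied to the globally hyperbolic spacetime $(D(\Sigma_0),g)$. Your route is shorter and yields a slightly stronger conclusion ($\psi\neq\phi$ \emph{everywhere} on $H$, not merely on a positive-measure subset of it); the paper's route buys robustness, since it only requires $U$ to be measurable and would survive verbatim in a setting where one knows $\phi_{t_1}$ and $\psi$ only as $L^2$ classes rather than as continuous representatives on the slice.
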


\begin{SCfigure} \centering
\begin{tikzpicture}[scale=0.75] 
\draw[dash pattern=on 8pt off 8pt] (0,0) -- (0,5) node[label=right:$\mathbb{R}\times\Sigma$]{};
\draw[dash pattern=on 8pt off 8pt] (5,0) -- (5,5);
\draw(5,0) node[label=right:$\Sigma_0$]{};
\draw(5,2) node[label=right:$\Sigma_{t_1}$]{};
\draw(0,2)--(5,2);
\draw(0,0)--(5,0);
\draw(2.5,1) node[label=below:$D(\Sigma_0)$]{};
\draw[dash pattern=on 8pt off 8pt](0,0) -- (2.5,4.5) -- (5,0);
\draw (2.45,1.9)--(2.5,1.9)--(2.5,2.1)--(2.45,2.1);
\draw (2.05,1.9)--(2,1.9)--(2,2.1)--(2.05,2.1);
\draw (0,0) .. controls (0.5,0.9) and (1,2)..(2,2);
\draw (2.5,2) .. controls (4,2) and (4.5,0.9)..(5,0);
\draw(2.25,2) node[label=above:H]{};
\draw(4,1.5) node[label=left:S]{};
\end{tikzpicture}
\caption{$\phi=\psi$ in $D(\Sigma_0)$, where $(M,g)=(\mathbb{R}\times (0,1),dt^2-dx^2)$\vspace{6em}}
\end{SCfigure}
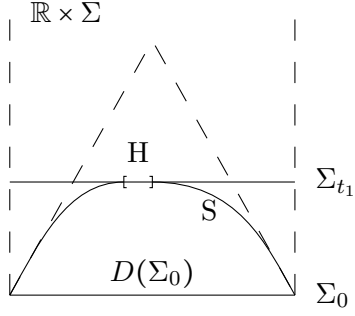

\begin{proof} So, by assumption there exists $t_1\in\mathbb{R}$ such that 
\begin{equation}\label{eq}
\vol_h\{(t_1,x)\in \Sigma_{t_1}\cap D(\Sigma_0)\colon \psi(t_1,x)\neq\phi(t_1,x)\}>0. 
\end{equation}
Now we construct a smooth compact embedded submanifold with boundary $H$ of $\Sigma_{t_1}\cap D(\Sigma_0)$ s.t.  $$\vol_h\{(t_1,x)\in H\colon \;\psi(t_1,x)\neq \phi(t_1,x)\}>0.$$
Firstly, let $U=\{(t_1,x)\in \Sigma_{t_1}\cap D(\Sigma_0)\colon \;\psi(t_1,x)\neq\phi(t_1,x)\}$, so $\vol_h(U)>0$.   
Since any manifold has a countable atlas (see e.g.\ Warner~\cite{n}, Lemma 1.9), then there exists such an atlas $(V_n,\phi_n)_{n\geq 0}$ of $\Sigma_{t_1}\cap D(\Sigma_0)$ with $U=\bigcup_{n\geq 0}U\cap V_n$ and $\vol_h(U)\leq \sum_{n\geq 0}\vol_h(U\cap V_n)$ and so there must be one chart $(V_n,\phi_n)$ s.t. $\vol_h(U\cap V_n)>0$. Let $(V,\phi)=(V_n,\phi_n)$.

Secondly, by a similar argument, as $\phi(V)$ is a open subset of $\mathbb{R}^N$ and any open subset of $\mathbb{R}^N$ can be covered by a countable number of open balls then there exists an open ball $B=\{x\in \mathbb{R}^N \text{ s.t. } ||x||<r\}$ (w.l.o.g.\ centered at 0) s.t. $B\subseteq \phi(V)$ and $\vol_h(U\cap \phi^{-1}(B))>0$. 

Lastly, since $B=\{x\in \mathbb{R}^N \text{ s.t. } ||x||<r\}$ is covered by the countable collection of closed balls $C_n=\{x\in \mathbb{R}^N \text{ s.t. } ||x||\leq r_n\}$ where $(r_n)_{n\geq 1}$ is any sequence of positive reals s.t. $r_n\nearrow r$ and as before there must exist $n\geq 1$ s.t. $\vol_h(U\cap \phi^{-1}(C_n))>0$. Let $H=\phi^{-1}(C_n)$ be the desired smooth compact submanifold with boundary of $\Sigma_{t_1}\cap D(\Sigma_0)$. Since $H\subseteq\Sigma_{t_1}\cap D(\Sigma_0)$ and the latter is a smooth spacelike acausal embedded submanifold then $H$ is a smooth compact acausal spacelike embedded submanifold with boundary of the spacetime $(D(\Sigma_0),g)$. The reason for this construction is that it allows us to apply Theorem~\ref{globhypcauchy}. Thus there exists a smooth spacelike Cauchy surface $S$ of $(D(\Sigma_0),g)$ which contains $H$.
\end{proof}

Now let $\dot{f}_{t_1}$ be a smooth compactly supported function on $S$ with support in $S\cap \Sigma_{t_1}$ such that $\dot{f}_{t_1}\geq 0$ and $\dot{f}_{t_1}=1$ on $H$. Thus:
$$\int_{S\cap\Sigma_{t_1}} \dot{f}_{t_1}(\psi-\phi)V^{-1}d\vol_h\neq 0$$ 
and define $f$ to be the unique smooth solution to the Klein-Gordon equation on $D(\Sigma_0)$ with Cauchy data $(0,\dot{f}_1)$ on $S$, according to Theorem~\ref{globhypsolution1}. 

We define $F\colon[0,t_1]\times \Sigma\rightarrow \mathbb{R}$ as:
$$F(p)=\left\{\begin{array}{ll}
f(p), &p\in [0,t_1]\times\Sigma\cap D(\Sigma_0).\\
0, &\text{otherwise.}\end{array}\right.$$
\begin{prop} $F$ satisfies the following:
\begin{enumerate}
\item $\supp F$ is compact in $[0,t_1]\times\Sigma\cap D(\Sigma_0)$.
\item It is compactly supported on each $\Sigma_t\cap D(\Sigma_0)$ for $0\leq t\leq t_1$.
\item $F \in C^\infty([0,t_1]\times \Sigma)$ (as a smooth manifold with boundary).
\item $(\largesquare_g +m^2)F=0$ (as an element of $C^\infty([0,t_1]\times \Sigma)$). 
\item $\supp (\partial_t F)\cap\Sigma_{t_1}= \supp \dot{f}_{t_1}\subseteq S\cap \Sigma_{t_1}$ and $\partial_t F(p)=\dot{f}_{t_1}(p) \text{ for } p\in S\cap \Sigma_{t_1}$. 
\item $F|_{\Sigma_{t_1}}=0$.
\end{enumerate}
\end{prop}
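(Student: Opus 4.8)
The plan is to derive all six items from a single input, the sharp support bound in Theorem~\ref{globhypsolution1}: since $(D(\Sigma_0),g)$ is globally hyperbolic with smooth spacelike Cauchy surface $S$, the solution $f$ satisfies $\supp f\subseteq J(K')$ computed in $D(\Sigma_0)$, where $K':=\supp\dot f_{t_1}\subseteq S\cap\Sigma_{t_1}$ is compact and we may take $t_1>0$ (the case $t_1<0$ reduces to this by the time-reflection symmetry $T$). Everything else is causal bookkeeping fed through the propositions of the preceding section.

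First I would pin down the causal shadow of $K'$. Applying Proposition~\ref{J(K)compact} with radius $0$ (equivalently, noting that the future-pointing causal curves of Proposition~\ref{causalfuture} strictly increase in $t$, so no nontrivial causal curve joins two points of one slice) gives $J(K')\cap\Sigma_{t_1}=K'$. Restricting to the slab, $J(K')\cap([0,t_1]\times\Sigma)=J^-(K')\cap([0,t_1]\times\Sigma)$, and the time-translated form of Proposition~\ref{J(K)compact} identifies its $s$-slice with $\{s\}\times C(K',t_1-s)$. As $K'\subseteq\Sigma_{t_1}\cap D(\Sigma_0)$ with $t_1>0$, Proposition~\ref{DSigmaexplicit} gives that $C(p,t_1)$ is compact for each $(t_1,p)\in K'$, hence $C(K',t_1)$ is compact; then each $C(K',t_1-s)\subseteq C(K',t_1)$ is compact, the whole shadow lies in the compact box $[0,t_1]\times C(K',t_1)$, and Corollary~\ref{moreDSigma} puts it inside $D(\Sigma_0)$. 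Since $\{F\neq0\}\subseteq J^-(K')\cap([0,t_1]\times\Sigma)$, its closure $\supp F$ is a compact subset of the open set $D(\Sigma_0)$, giving Property~1, and intersecting with $\Sigma_t$ gives Property~2.

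With $\supp F$ compactly contained in $D(\Sigma_0)$, Properties~3 and~4 become local: on a neighbourhood of $\supp F$ we have $F=f$, which is smooth and satisfies $(\largesquare_g+m^2)f=0$ throughout $D(\Sigma_0)$, while off $\supp F$ the function $F$ vanishes identically; smoothness up to the faces $t=0,t_1$ is automatic because $f$ is smooth on the full open neighbourhood $D(\Sigma_0)$. For Property~6 I reuse $J(K')\cap\Sigma_{t_1}=K'$: it forces $\supp f\cap\Sigma_{t_1}\subseteq K'\subseteq S$, and since the Cauchy datum $f|_S=0$, we get $f\equiv0$ on $\Sigma_{t_1}\cap D(\Sigma_0)$, i.e.\ $F|_{\Sigma_{t_1}}=0$.

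Property~5 needs one extra observation. Near $\Sigma_{t_1}\cap D(\Sigma_0)$ we have $F=f$, so $\partial_t F|_{\Sigma_{t_1}}=\partial_t f|_{\Sigma_{t_1}}$, whose support sits in $\supp f\cap\Sigma_{t_1}\subseteq K'$ by the Property~6 inclusion. For the value at $p\in S\cap\Sigma_{t_1}$ I would exploit that $f$ vanishes on both $S$ (its data) and $\Sigma_{t_1}$ (Property~6). If $S$ is tangent to $\Sigma_{t_1}$ at $p$ the future unit normals agree (with the running convention $V=1$, so $n=\partial_t$), whence $\partial_t f(p)=\nabla_n f(p)=\dot f_{t_1}(p)$; if instead they are transverse, then $T_pS+T_p\Sigma_{t_1}=T_pM$ forces $df_p=0$, so $\partial_t f(p)$ and $\dot f_{t_1}(p)=\nabla_n f(p)$ both vanish and the identity persists. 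The value identity on $S\cap\Sigma_{t_1}$ together with the support inclusion yields $\supp(\partial_t F)\cap\Sigma_{t_1}=\supp\dot f_{t_1}$. I expect the main obstacle to be the causal-shadow step of the second paragraph—assembling Propositions~\ref{J(K)compact} and~\ref{DSigmaexplicit} with Corollary~\ref{moreDSigma} to see that the backward shadow of $K'$ in the slab is simultaneously compact and contained in $D(\Sigma_0)$—with the tangent/transverse dichotomy for Property~5 the only other non-routine point.
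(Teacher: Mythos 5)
Your proof is correct, and for the one item that carries real content in the paper's own proof (Property~1) you take a genuinely different route. The paper, like you, starts from the support bound $\supp f\subseteq J(K')$ of Theorem~\ref{globhypsolution1}, but then settles compactness in one stroke by citing an external fact about globally hyperbolic spacetimes (Corollary A.5.4 of B\"ar et al.\ \cite{e}): inside $(D(\Sigma_0),g)$, the causal past of the compact set $\supp\dot f_{t_1}$ intersected with the causal future of the Cauchy surface $\Sigma_0$ is compact, whence $\supp F\subseteq\supp f\cap[0,t_1]\times\Sigma$ is compact. You instead stay in the ambient standard static spacetime and rebuild the compactness from the paper's internal metric-ball machinery (Propositions~\ref{causalfuture}, \ref{J(K)compact}, \ref{DSigmaexplicit} and Corollary~\ref{moreDSigma}), trapping the backward shadow of $K'$ in the compact box $[0,t_1]\times C(K',t_1)$ and inside $D(\Sigma_0)$. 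Your route is more self-contained and makes the inclusion $\supp F\subseteq D(\Sigma_0)$ fully explicit; the paper's is shorter and avoids the slice-by-slice bookkeeping. Properties~3 and~4 are handled identically in both (locally $F=f$ or $F\equiv 0$). For Properties~5 and~6 the paper says only that they ``result straight from the definitions of $F$ and $f$,'' whereas you supply the missing content: $J(K')\cap\Sigma_{t_1}=K'$ forcing $f\equiv 0$ on $\Sigma_{t_1}\cap D(\Sigma_0)$, and the tangent/transverse dichotomy for the normal derivative. That is a genuine improvement in rigour over the paper's wording.

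Two points to tidy. First, to conclude that $\supp F$ (a closure) lies in your compact box and in $D(\Sigma_0)$, you need the shadow to be closed; it is, since it equals the preimage of $(-\infty,0]$ under the continuous map $(s,x)\mapsto d(x,\pi(K'))-(t_1-s)$ on $[0,t_1]\times\Sigma$, but this should be said, as otherwise passing from $\{F\neq 0\}$ to its closure is unjustified. Second, your appeal to ``the running convention $V=1$'' in Property~5 is not actually available in Section~\ref{existence1}: unlike the causal-structure sections, the existence argument cannot normalise $V$ away, since the Klein--Gordon operator and the measure $V^{-1}d\vol_h$ depend on $V$. In the tangent case one gets $\partial_t f(p)=V(p)\nabla_n f(p)=V(p)\dot f_{t_1}(p)$, so the literal identity in Property~5 carries a factor of $V(p)$. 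This wrinkle sits in the paper's own statement rather than in your argument, and it is harmless for Theorem~\ref{waldglobhypagreement} (only nonvanishing of $c(t_1)$ is used, and your support equality holds for every $V$), but your write-up should either record the factor or restrict the exact value identity to the case $V=1$.
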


\begin{proof} 
By construction $\supp\dot{f}_{t_1}$ compact in $S$ and contained in $D(\Sigma_0)\cap\Sigma_{t_1}$. As all hypersurfaces concerned are embedded, then all have their topologies induced from that of $M$ and thus $\supp\dot{f}_{t_1}$ compact in $D(\Sigma_0)$. 

But, the causal past of a compact set intersected with the causal future of a Cauchy surface $S$ (in a globally hyperbolic spacetime) is always compact (see Corollary A.5.4 of B\"{a}r et al.\ \cite{e}). Thus: $$J^-_{D(\Sigma_0)}(\supp \dot{f}_1)\cap J^+_{D(\Sigma_0)}(\Sigma_0)\text{ is compact in }D(\Sigma_0).$$
So, $J^-_{D(\Sigma_0)}(\supp \dot{f}_1)\cap[0,t_1]\times \Sigma \text{ is compact in }D(\Sigma_0)$ and so also in $[0,t_1]\times\Sigma\cap D(\Sigma_0)$. Thus, $\supp F\subseteq \supp f \cap[0,t_1]\times \Sigma$ is compact in $[0,t_1]\times\Sigma\cap D(\Sigma_0)$ and Statement \textit{1} is proved. Statements \textit{2} and \textit{3} follow directly from \textit{1}. Now, since, by definition, $F$ is locally equal to either $f$ or $0$, where both are smooth solutions to the Klein-Gordon equation, then Statement~\textit{4} follows. Statements~\textit{5} and \textit{6} result straight from the definitions of $F$ and $f$.
\end{proof}

\begin{thm}\label{waldglobhypagreement}
The functions $\phi$ and $\psi$ are equal on $D(\Sigma_0)$.
\end{thm}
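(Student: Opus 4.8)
The plan is to derive a contradiction from the standing assumption that $\phi\neq\psi$ on a set of positive $\vol_h$-measure in some $\Sigma_{t_1}\cap D(\Sigma_0)$ (take $t_1\geq0$; the case $t_1\leq0$ follows by time reflection). The mechanism is to pair the probe $F$ against $\phi$ and against $\psi$ in two separate bilinear ``Wronskian'' forms, show each is constant in $t$ on $[0,t_1]$, and compare their values at the ends of the slab. Since $\phi$ is so far only an $L^2$-valued curve and has \emph{not} been shown to solve the equation pointwise, I will keep $\phi$ inside the Hilbert-space formalism and use only the abstract identity $\tfrac{d^2}{dt^2}[\phi_t]=-A_E[\phi_t]$ from Lemma~\ref{stuff3}, while $\psi$ and $F$ are treated classically. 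Concretely, for $t\in[0,t_1]$ I would set
\[
G(t)=\langle[\partial_tF_t],[\phi_t]\rangle-\langle[F_t],[\partial_t\phi_t]\rangle,\qquad
\tilde G(t)=\int_{\Sigma_t}\bigl(\partial_tF\,\overline{\psi}-F\,\partial_t\overline{\psi}\bigr)V^{-1}\,d\vol_h,
\]
where $F_t:=F(t,\cdot)$ and $\langle\cdot,\cdot\rangle$ is the inner product of $L^2(\Sigma,V^{-1}d\vol_h)$. Both are well-defined because $F$ is smooth and, by Statement~2 of the proposition on $F$, compactly supported on each slice, so $[F_t],[\partial_tF_t]\in[C_0^\infty(\Sigma)]=D(A)\subseteq D(A_E)$, while $[\phi_t],[\partial_t\phi_t]\in[\chi_E]$ (in the real case the conjugations are vacuous; in the complex case $\overline\psi$ again solves the Klein--Gordon equation).

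Next I would show $G$ and $\tilde G$ are constant. Differentiating $G$ by the product rule for strong $L^2$-derivatives, legitimate since $t\mapsto[F_t],[\partial_tF_t]$ are strongly differentiable by compact support and $t\mapsto[\phi_t],[\partial_t\phi_t]$ by Proposition~\ref{strongderivs}, the cross terms cancel and
\[
\frac{dG}{dt}=\langle[\partial_t^2F_t],[\phi_t]\rangle-\langle[F_t],[\partial_t^2\phi_t]\rangle.
\]
Since $F$ solves the Klein--Gordon equation (Statement~4), in the form \eqref{kleingordonsimp} we have $\partial_t^2F_t=-AF_t$ pointwise, whence $[\partial_t^2F_t]=-A[F_t]=-A_E[F_t]$; and $[\partial_t^2\phi_t]=-A_E[\phi_t]$ by Lemma~\ref{stuff3}. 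Self-adjointness of $A_E$ then gives $\tfrac{dG}{dt}=-\langle A_E[F_t],[\phi_t]\rangle+\langle[F_t],A_E[\phi_t]\rangle=0$. For $\tilde G$ the same cancellation leaves $\int_{\Sigma_t}(\partial_t^2F\,\overline\psi-F\,\partial_t^2\overline\psi)V^{-1}d\vol_h$; substituting $\partial_t^2F=-AF$ and $\partial_t^2\overline\psi=-A\overline\psi$ and integrating by parts with no boundary terms (as $F$ is compactly supported within $D(\Sigma_0)$, where $\psi$ is smooth), the formal self-adjointness of $A$ with respect to $V^{-1}d\vol_h$ makes this vanish as well. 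Hence $G\equiv G(0)$ and $\tilde G\equiv\tilde G(0)$ on $[0,t_1]$.

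Finally I would match the endpoints. At $t=0$, Lemma~\ref{stuff3} gives $[\phi_0]$ and $[\partial_t\phi_0]=[\dot\phi_0]$, while $\psi|_{\Sigma_0}=\phi_0$ and $\partial_t\psi|_{\Sigma_0}=\dot\phi_0$ by construction; the two expressions for $G(0)$ and $\tilde G(0)$ then have identical integrands over $\Sigma_0$, so $G(0)=\tilde G(0)$. At $t=t_1$, Statements~5 and~6 give $F_{t_1}=0$ and $\partial_tF_{t_1}=\dot f_{t_1}$, so $G(t_1)=\langle[\dot f_{t_1}],[\phi_{t_1}]\rangle$ and $\tilde G(t_1)=\int_{\Sigma_{t_1}}\dot f_{t_1}\,\overline\psi\,V^{-1}d\vol_h$. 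Combining,
\[
\int_{S\cap\Sigma_{t_1}}\dot f_{t_1}\,\overline{(\phi-\psi)}\,V^{-1}d\vol_h=G(t_1)-\tilde G(t_1)=G(0)-\tilde G(0)=0,
\]
contradicting the defining property $\int_{S\cap\Sigma_{t_1}}\dot f_{t_1}(\psi-\phi)V^{-1}d\vol_h\neq0$ of $\dot f_{t_1}$. This contradiction shows no such $t_1$ exists, i.e.\ $\phi=\psi$ a.e.\ on each slice and hence, both being continuous on $D(\Sigma_0)$, everywhere there.

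The main obstacle, and the reason for carrying two separate pairings rather than pairing $F$ against $\psi-\phi$ directly, is exactly that $\phi$ cannot yet be treated as a classical solution: the identity $\partial_t^2\phi=-A\phi$ is available only in the strong $L^2$ sense, not pointwise, so $\psi-\phi$ is not a priori a smooth solution to which the classical Green's identity would apply. Keeping $\phi$ in the operator-theoretic picture and transferring the spatial derivatives onto the smooth, compactly supported probe $F$ via self-adjointness of $A_E$ is what makes the argument work; the remaining care lies in justifying the interchange of $\tfrac{d}{dt}$ with the $L^2$ pairing (via Proposition~\ref{strongderivs}) and in reconciling the conjugation conventions between $G$ and $\tilde G$ in the complex case.
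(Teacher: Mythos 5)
Your proposal is correct and is essentially the paper's own argument: the paper forms the single quantity $c(t)=\int_{\Sigma_t}V^{-1}\left[F\left(\partial_t\psi-\tfrac{d\phi_t}{dt}\right)-\partial_t F\,(\psi-\phi_t)\right]d\vol_h$, which (since $F$ is real-valued) coincides up to sign and conjugation with your $G(t)-\tilde G(t)$, proves $\tfrac{dc}{dt}=0$ by exactly your two mechanisms --- classical Green's identity for the $\psi$-part and self-adjointness of $A_E$ for the Hilbert-space $\phi_t$-part --- and derives the same endpoint contradiction from $c(0)=0$, $c(t_1)\neq 0$. Keeping the two pairings as separately conserved quantities rather than one combined integral is a purely organizational (arguably slightly cleaner) difference, and the conjugation bookkeeping you flag resolves itself once one notes $F$ is real and either fixes the inner-product convention or works with the bilinear pairing throughout, just as the paper implicitly does.
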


\begin{proof}
If there exists $t_1$ such that $\phi\neq\psi$ everywhere in a non-null set in $\Sigma_{t_1}\cap D(\Sigma_0)$, construct $H, S$ and $F$ as above. Now define: 
\begin{equation}
c(t)=\int_{\Sigma_t} V^{-1}\left[ F\left(\frac{\partial \psi}{\partial t}-\frac{d\phi_t}{dt}\right)-\frac{\partial F}{\partial t}(\psi-\phi_t)\right]d\vol_h
\end{equation}

Clearly since $\phi_t$ and $\frac{d\phi_t}{dt}$ are only defined a.e.\ in $\Sigma$ we should point out that any other choices in the same respective equivalence classes would yield an identical value of $c$. As both functions are in $\mathcal{L}^2(\Sigma)$, then multiplying by the smooth functions of compact support, $F$ and $\frac{\partial F}{\partial t}$, we obtain an element of $\mathcal{L}^1(\Sigma)$.

The smooth function $\psi$ is only defined in $D(\Sigma_0)$ and so on each hypersurface $\Sigma_t\cap D(\Sigma_0)$, $\psi$ and $\frac{\partial \psi}{\partial t}$ are smooth functions but as $F$ and $\frac{\partial F}{\partial t}$ are compactly supported smooth functions on each $\Sigma_t\cap D(\Sigma_0)$, then $f\frac{\partial \psi}{\partial t}$ and $\frac{\partial F}{\partial t}\psi$ are easily definable and smooth on each $\Sigma_t$, $t\in[0,t_1]$. Indeed they are of compact support also so they are integrable on $\Sigma_t$ (since we are dealing with a Radon measure). Thus: 

%\footnotesize
\begin{align*}
\frac{dc}{dt}&= \int_{\Sigma_t} V^{-1}\left[ \frac{\partial F}{\partial t}\left(\frac{\partial \psi}{\partial t}- \frac{d\phi_t}{dt}\right)+F\left(\frac{\partial^2 \psi}{\partial t^2}- \frac{d^2\phi_t}{dt^2}\right) -\frac{\partial^2 F}{\partial t^2}(\psi-\phi_t)- \frac{\partial F}{\partial t}\left(\frac{\partial \psi}{\partial t}-\frac{d\phi_t}{dt}\right)\right] d\vol_h\\
&=\int_{\Sigma_t} V^{-1}\left[ F\frac{\partial^2 \psi}{\partial t^2}-\frac{\partial^2 F}{\partial t^2}\psi\right] d\vol_h  -\int_{\Sigma_t} V^{-1}\left[ F\frac{d^2\phi_t}{dt^2} - \frac{\partial^2 F}{\partial t^2}\phi_t\right] d\vol_h  \\
&=\int_{\Sigma_t} V^{-1}\left[ FVD^i(VD_i\psi)-VD^i(VD_i F)\psi\right] d\vol_h  -\int_{\Sigma_t} V^{-1}\left[ F\frac{d^2\phi_t}{dt^2} - \frac{\partial^2 F}{\partial t^2}\phi_t\right] d\vol_h \\ 
&=\int_{\Sigma_t} \left[ FD^i(VD_i\psi)-D^i(VD_i f)\psi\right] d\vol_h  -\int_{\Sigma_t} V^{-1}\left[ F\frac{d^2\phi_t}{dt^2} - \frac{\partial^2 F}{\partial t^2}\phi_t\right] d\vol_h \\ 
&=\int_{\Sigma_t} \left[ -(D^iF)(VD_i\psi)+(VD_i F)(D^i\psi)\right] d\vol_h  -\int_{\Sigma_t} V^{-1}\left[ F\frac{d^2\phi_t}{dt^2} - \frac{\partial^2 F}{\partial t^2}\phi_t\right] d\vol_h \\ 
&=\int_{\Sigma_t} V^{-1}\left[ -F\frac{d^2\phi_t}{dt^2} + \frac{\partial^2 F}{\partial t^2}\phi_t\right] d\vol_h \\ 
&=\langle F,A_E \phi_t \rangle -\langle A_E F,\phi_t\rangle\\
&=0.
\end{align*}
\normalsize
But, $\psi|_{\Sigma_0}=\phi_0$ and $\frac{\partial \psi}{\partial t}|_{\Sigma_0}=\dot{\phi}_0$, so $c(0)=0$\\
and since $F|_{t_1}=0$ by definition, we have:
\begin{align*}
c(t_1)&= -\int_{\Sigma_t} V^{-1}\dot{F}_{t_1}(\psi-\phi_t)d\vol_h\\
&\neq 0.
\end{align*}
However $c\in C^1[0,t_1]$ and so this last statement contradicts the Intermediate Value Theorem, yielding that $\phi=\psi$ a.e. in $D(\Sigma_0)\cap\Sigma_t$ for all $t$. Since $\phi$ and $\psi$ are continuous, then $\phi=\psi$ in $D(\Sigma_0)\cap\Sigma_t$ for all $t$ and so $\phi=\psi$ in $D(\Sigma_0)$.
\end{proof}

Thus we have proven Theorem~\ref{waldglobsolutionagreement}. We shall now show that $\phi$ solves the Klein-Gordon equation everywhere in $M$.
\begin{thm}[Existence of Wald Solutions]\label{waldsolutions}
Let $A_E$ be an acceptable s.a.e.\ of $A$. Given any pair of functions $\phi_0,\dot{\phi}_0\in \chi_E$, for each $t\in\mathbb{R}$ define $\phi_t\in \chi_E$ uniquely by: $[\phi_t]=C(t,A_E)[\phi_0]+S(t,A_E)[\dot{\phi}_0]$ and define the function $\phi$ on $M$ as $\phi(t,x)=\phi_t(x)$, where $\phi_t\in C^\infty(\Sigma)$. This function is smooth, solves the Klein-Gordon equation and satisfies the Cauchy data $(\phi_0,\dot{\phi}_0)$, that is $\phi|_{\Sigma_0}=\phi_0$, $\partial_t \phi|_{\Sigma_0}= \dot{\phi}_0$. 
\end{thm}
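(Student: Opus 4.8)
The plan is to leverage the agreement on the Cauchy development already established in Theorem~\ref{waldglobhypagreement}, together with the time-translation invariance of the static metric, in order to bootstrap smoothness and the Klein-Gordon equation from $D(\Sigma_0)$ to all of $M$. First I would dispose of the Cauchy data and of the behaviour on $D(\Sigma_0)$ at one stroke: by Theorem~\ref{waldglobsolutionagreement}/Theorem~\ref{waldglobhypagreement} we have $\phi=\psi$ on $D(\Sigma_0)$, where $\psi$ is the smooth globally hyperbolic solution realising the data $(\phi_0,\dot\phi_0)$. Since $\Sigma_0\subseteq D(\Sigma_0)$, this immediately gives $\phi|_{\Sigma_0}=\phi_0$ and $\partial_t\phi|_{\Sigma_0}=\dot\phi_0$, and shows that $\phi$ is smooth and solves the Klein-Gordon equation on the open set $D(\Sigma_0)$. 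It remains only to extend smoothness and the equation to the rest of $M$, which need not be globally hyperbolic.

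The geometric ingredient is that the open sets $D(\Sigma_{t_0})$, $t_0\in\mathbb{R}$, cover $M$, where $\Sigma_{t_0}=\{t_0\}\times\Sigma$ and $D(\Sigma_{t_0})$ is the time-translate of $D(\Sigma_0)$. Indeed each point $(t,p)$ lies in $\Sigma_t\subseteq D(\Sigma_t)$ (a hypersurface is contained in its own Cauchy development; concretely $C(p,0)=\{p\}$ is compact in the description of Proposition~\ref{DSigmaexplicit}), and each $D(\Sigma_{t_0})$ is open by Proposition~\ref{DSigmaexplicit}. Since smoothness and the validity of a second-order PDE are local properties, it therefore suffices to prove that $\phi$ is smooth and solves the Klein-Gordon equation on each $D(\Sigma_{t_0})$ separately.

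To handle a fixed $t_0$ I would re-centre the construction there. By Proposition~\ref{strongderivs} and Lemma~\ref{stuff3} the functions $\phi_{t_0}$ and $\dot\phi_{t_0}:=\frac{d}{dt}\phi_t|_{t=t_0}=-A_ES(t_0,A_E)\phi_0+C(t_0,A_E)\dot\phi_0$ again lie in $\chi_E$, the domain $[\chi_E]$ being invariant under $C(t,A_E)$, $S(t,A_E)$ and $A_E$; they thus constitute admissible Cauchy data on $\Sigma_{t_0}$. Because $g=V^2dt^2-h$ is invariant under $(t,x)\mapsto(t+t_0,x)$, which carries $\Sigma_0$ to $\Sigma_{t_0}$ and $D(\Sigma_0)$ to $D(\Sigma_{t_0})$, Theorem~\ref{waldglobhypagreement} applies verbatim to the translated problem and produces the smooth globally hyperbolic solution $\psi^{(t_0)}$ on $D(\Sigma_{t_0})$ whose restriction equals the Hilbert-space construction
$$[\phi^{(t_0)}_s]=C(s,A_E)[\phi_{t_0}]+S(s,A_E)[\dot\phi_{t_0}].$$

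The decisive step, which I expect to be the main obstacle, is the \emph{time-translation covariance} $[\phi_{t_0+s}]=[\phi^{(t_0)}_s]$ for all $s\in\mathbb{R}$. I would prove it from the operator addition formulas
$$C(t_0+s,A_E)=C(t_0,A_E)C(s,A_E)-A_ES(t_0,A_E)S(s,A_E),\qquad S(t_0+s,A_E)=S(t_0,A_E)C(s,A_E)+C(t_0,A_E)S(s,A_E),$$
which are the images under the functional calculus of the elementary scalar identities satisfied by $C(\cdot,\lambda)$ and $S(\cdot,\lambda)$ for every $\lambda\in\mathbb{R}$; substituting the expressions for $[\phi_{t_0}]$ and $[\dot\phi_{t_0}]$ into $[\phi^{(t_0)}_s]$ and collecting the coefficients of $[\phi_0]$ and $[\dot\phi_0]$ then reproduces $[\phi_{t_0+s}]$ exactly. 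The genuine difficulty lies entirely in the domain bookkeeping for the unbounded operators $A_E$, $C(s,A_E)$, $S(s,A_E)$ when $A_E$ is not bounded below: one must verify that every intermediate vector remains in the common invariant domain $[\chi_E]$ so that the compositions of functional-calculus operators correspond to products of the symbols, and this is precisely what the acceptability hypothesis secures through Proposition~\ref{strongderivs}. (An energy argument would not suffice here, since the conserved quantity $\|\tfrac{d}{ds}w\|^2+\langle A_Ew,w\rangle$ need not be positive for non-bounded-below $A_E$, which is exactly the point at which the extension beyond Wald's positive case is delicate.) Granting the covariance, $\phi(t_0+s,x)=\phi^{(t_0)}_s(x)=\psi^{(t_0)}(t_0+s,x)$ throughout $D(\Sigma_{t_0})$ (equality a.e.\ and hence everywhere, both sides being continuous), so $\phi$ inherits smoothness and the Klein-Gordon equation there; ranging over $t_0$ and invoking the covering then completes the proof.
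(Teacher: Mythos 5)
Your proposal is correct and is essentially the paper's own argument: the paper likewise fixes $t_1$, uses the addition formulas for $C(\cdot,A_E)$ and $S(\cdot,A_E)$ on the invariant domain $[\chi_E]$ to rewrite $[\phi_t]=C(t-t_1,A_E)[\phi_{t_1}]+S(t-t_1,A_E)[\dot{\phi}_{t_1}]$ with $\phi_{t_1},\dot{\phi}_{t_1}\in\chi_E$, and then applies Theorem~\ref{waldglobsolutionagreement} to this re-centred data to get smoothness and the Klein-Gordon equation on the open neighbourhood $D(\Sigma_{t_1})$, letting $t_1$ range over $\mathbb{R}$ to cover $M$. Your additional remarks on the covering by the $D(\Sigma_{t_0})$ and on the domain bookkeeping only make explicit what the paper leaves implicit.
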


\begin{proof}
Given $p=(t_1,x)\in M$, we wish to find an open neighbourhood of $p$ in $M$ in which $\phi$ is smooth and satisfies the Klein-Gordon equation. We begin by reformulating our vector-valued solution.  We propose that:
\begin{align*}
[\phi_t]&= C(t,A_E)[\phi_0]+S(t,A_E)[\dot{\phi}_0]\\
&=C(t-t_1+t_1,A_E)[\phi_0]+S(t-t_1+t_1,A_E)[\dot{\phi}_0]\\
&=[C(t-t_1,A_E)C(t_1,A_E)-A_E S(t-t_1,A_E)S(t_1,A_E)][\phi_0]\\
&\;\;\;\;\;\;+[S(t-t_1,A_E)C(t_1,A_E)+C(t-t_1,A_E)S(t_1,A_E)][\dot{\phi}_0]\\
&=C(t-t_1)[C(t_1,A_E)[\phi_1]+S(t_1,A_E)[\dot{\phi}_0]]\\
&\;\;\;\;\;\;+S(t-t_1,A_E)[-A_E S(t_1,A_E)[\phi_0]+C(t_1,A_E)[\dot{\phi}_0]]\\
&= C(t-t_1,A_E)[\phi_{t_1}]+S(t-t_1,A_E)[\dot{\phi}_{t_1}].
\end{align*}
Here, we have used the identities: 
\begin{align*}
C(t_1+t_2,A_E)&=C(t_1,A_E)C(t_2,A_E)-A_E S(t_1,A_E)S(t_2,A_E) \\
S(t_1+t_2,A_E)&=S(t_1,A_E)C(t_2,A_E)+ C(t_1,A_E)S(t_2,A_E) 
\end{align*}
on $D(A_E)$. But $\phi_{t_1},\dot{\phi}_{t_1}\in \chi_E$ and Theorem~\ref{waldglobsolutionagreement} can be applied to this data to show that $\phi$ is smooth in the open neighbourhood $D(\Sigma_{t_1})$ of $p$ and satisfies the Klein-Gordon equation there.  
\end{proof}

\section{Uniqueness of Wald Solutions}\label{sec:uniquenessofwaldsolutions}

We so far have concerned ourselves with constructing a class of solutions to the Klein-Gordon equation on standard static spacetimes. Our set of prescriptions is parametrised by acceptable s.a.e.s $A_E$ of the linear operator $A$ on the (real or complex) Hilbert space $L^2(\Sigma,V^{-1}d\vol_h)$. For each such linear operator $A_E$ we show that the solution to the Klein-Gordon equation w.r.t.\ chosen Cauchy data it generates is unique up to some conditions yet to be stated. We will use this result to define a vector space of solutions, corresponding to each acceptable s.a.e.\ $A_E$.

\begin{thm}[Uniqueness of Solutions (i)]\label{uniqueness1}
Let $A$ be the symmetric linear operator on the (real or complex) Hilbert space $L^2(\Sigma, V^{-1}d\vol_h)$, defined by: $D(A)=[C_0^\infty(\Sigma)]$, $A([\phi])=[(-VD^iVD_i+m^2V^2)\phi]$ for $\phi\in C_0^\infty(\Sigma)$. 
Let $A_E$ be an acceptable s.a.e.\ of $A$ and if $\Psi\in C^2(M)$ satisfies $(\largesquare_g+m^2) \Psi=0$, $\Psi|_{\Sigma_0}=\partial_t\Psi|_{\Sigma_0}=0$, $[\pi_t^*(\Psi|_t)]\in D(A_E)$ and $[\pi_t^*(\partial_t\Psi|_t)]\in L^2(\Sigma, V^{-1}d\vol_h)$ for all $t$ 
%\begin{align*}
%&(\largesquare_g+m^2) \Psi=0\\
%&\Psi|_{\Sigma_0}=\partial_t\Psi|_{\Sigma_0}=0\\
%&[\pi_t^*(\Psi|_t)]\in D(A_E)\\
%&[\pi_t^*(\partial_t\Psi|_t)]\in L^2(\Sigma, V^{-1}d\vol_h)
%\end{align*}
(where $\pi_t^*$ is the pull-back of the map $\pi_t\colon\Sigma\rightarrow \Sigma_t$), then $\Psi=0.$
\end{thm}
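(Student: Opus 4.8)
The plan is to prove uniqueness by a reciprocity (conserved-Wronskian) argument: I pair the unknown field $\Psi$ against a genuine reference Wald solution and show the pairing is conserved in $t$ and vanishes, forcing $\Psi$ to vanish slice by slice. Throughout write $u(t)=[\pi_t^*(\Psi|_t)]$ and $v(t)=[\pi_t^*(\partial_t\Psi|_t)]$, which by hypothesis lie in $D(A_E)$ and in $L^2(\Sigma,V^{-1}d\vol_h)$ respectively. The first observation is that, since $A_E\subseteq A^*$ and $\Psi$ solves the Klein-Gordon equation in the form $\partial_t^2\Psi=-A\Psi$ pointwise (Equation~\eqref{kleingordonsimp}), we have $A_E u(t)=A^*u(t)=[A\phi_t]=-[\partial_t^2\Psi|_t]$; in particular $[\partial_t^2\Psi|_t]\in L^2$ and $u$ formally obeys the abstract wave equation $u''=-A_E u$ with $u(0)=v(0)=0$ (from $\Psi|_{\Sigma_0}=\partial_t\Psi|_{\Sigma_0}=0$).

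For the reference, fix $T\in\mathbb{R}$ and $\eta\in[C_0^\infty(\Sigma)]$, and use Theorem~\ref{waldsolutions} to produce a smooth Wald solution $\Theta$ on $M$ with prescribed data at $\Sigma_T$, namely $[\Theta|_T]=0$ and $[\partial_t\Theta|_T]=\eta$ (set $[\Theta|_t]=C(t-T,A_E)[\Theta|_T]+S(t-T,A_E)[\partial_t\Theta|_T]$). By Proposition~\ref{strongderivs} the curves $t\mapsto[\Theta|_t]$ and $t\mapsto[\partial_t\Theta|_t]$ are strongly differentiable and take values in $[\chi_E]\subseteq D(A_E^\infty)$. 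Define
\begin{equation*}
c(t)=\langle v(t),[\Theta|_t]\rangle-\langle u(t),[\partial_t\Theta|_t]\rangle,
\end{equation*}
which is finite as an inner product of two $L^2$-functions. Differentiating and substituting $u''=-A_E u$ and $[\partial_t^2\Theta|_t]=-A_E[\Theta|_t]$, the first-order terms cancel and what survives is $\langle A_E u(t),[\Theta|_t]\rangle-\langle u(t),A_E[\Theta|_t]\rangle$, which vanishes by self-adjointness of $A_E$; this is exactly where the domain hypotheses $u(t)\in D(A_E)$ and $[\Theta|_t]\in D(A_E)$ are essential. Hence $c$ is constant, and since $c(0)=0$ we get $c(T)=-\langle u(T),\eta\rangle=0$ for every $\eta\in[C_0^\infty(\Sigma)]$. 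As $[C_0^\infty(\Sigma)]$ is dense, $u(T)=[\Psi|_T]=0$; $T$ being arbitrary and $\Psi$ continuous, $\Psi\equiv 0$.

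The main obstacle is justifying the differentiation of $c$, i.e.\ interchanging $\tfrac{d}{dt}$ with the spatial integral, equivalently establishing that $t\mapsto u(t)$ and $t\mapsto v(t)$ are weakly differentiable against the (generally non-compactly-supported) vectors $[\Theta|_t]$, with $u'=v$ and $v'=-A_E u$. Against a \emph{compactly supported} test function $\chi$ this is immediate, because integration then occurs over the fixed compact $\supp\chi$ on which $\Psi,\partial_t\Psi,\partial_t^2\Psi$ are continuous, yielding $\tfrac{d^2}{dt^2}\langle u(t),[\chi]\rangle=-\langle u(t),A_E[\chi]\rangle$. The difficulty is upgrading this to arbitrary $L^2$-test vectors, since $\Psi$ is not assumed compactly supported and the hypotheses give only pointwise-in-$t$ membership, with no a priori local-uniform bound on $\|u(t)\|$, $\|v(t)\|$, $\|A_E u(t)\|$ (and such bounds do not follow from weak continuity on the dense subspace $[C_0^\infty(\Sigma)]$ alone). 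I expect to close this by first proving norm-continuity, hence local boundedness on compact $t$-intervals, of these three curves from the pointwise equation together with $u(t)\in D(A_E)$, and then extending the weak identities from $[C_0^\infty(\Sigma)]$ to all of $L^2$ by a uniform approximation argument. An alternative route isolating the same obstacle is to diagonalise $A_E$ via the spectral theorem and reduce $u''=-A_E u$, $u(0)=v(0)=0$ to the scalar problems $\ddot y=-\lambda y$, $y(0)=\dot y(0)=0$, each having only the trivial solution.
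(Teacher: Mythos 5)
Your proposal is correct and is essentially the paper's own proof: the paper likewise pairs $\Psi$ against a reference Wald solution launched from the slice $\Sigma_{t_1}$ with data $(0,\dot f_{t_1})$, $\dot f_{t_1}\in C_0^\infty(\Sigma_{t_1})$, invokes the same identification $A_E[\Psi|_t]=[A\Psi|_t]$ via $A_E\subseteq A^*$, and concludes from constancy of the same symplectic Wronskian $c(t)$ (by contradiction, choosing one test function with nonzero pairing, where you instead run over all $\eta$ and use density). The differentiation-under-the-integral obstacle you flag is genuine, but the paper does not address it either --- it simply differentiates $c(t)$ term by term --- so your attempt is, if anything, more careful than the published argument on exactly that point.
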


%Note that in this theorem we have taken $A_E$ to be any s.a.e. of $A$. However we have only proven existence of Klein-Gordon solutions with respect to acceptable s.a.e.s $A_E$ of $A$.\bigskip

We start with a proposition, which has its roots in distribution theory on arbitrary Riemannian manifolds.

\begin{prop} Take $A$ and $A_E$ as above. If $\phi\in C^2(\Sigma)$ such that $[\phi]\in D(A_E)$, then $A_E [\phi]=[(-VD^iVD_i+m^2V^2)\phi]$.
\end{prop}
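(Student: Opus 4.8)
The plan is to sandwich $A_E$ between $A$ and its adjoint and then to verify that, on a $C^2$ function, the distributional action of $A^*$ reduces to the classical differential operator. First I would invoke the standard fact from the theory of symmetric operators: since $A_E$ is a self-adjoint extension of $A$ we have $A\subseteq A_E$, and taking adjoints reverses inclusions, so $A_E=A_E^*\subseteq A^*$. In particular $D(A_E)\subseteq D(A^*)$ and $A_E$ agrees with $A^*$ on $D(A_E)$. Applied to the given class $[\phi]\in D(A_E)$ this yields $A_E[\phi]=A^*[\phi]$, and by the explicit description of $A^*$ recorded earlier (namely $A^*[\phi]=[A\phi]$ with $A\phi$ understood distributionally, functions being identified with distributions via the measure $V^{-1}d\vol_h$) we get $A_E[\phi]=[A\phi]$ in the distributional sense.

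It then remains to show that for $\phi\in C^2(\Sigma)$ the distribution $A\phi$ is represented by the classical continuous function $(-VD^iVD_i+m^2V^2)\phi$. Here I would use that $A$ is formally self-adjoint with respect to $V^{-1}d\vol_h$, so that for every test function $f\in C_0^\infty(\Sigma)$,
\[
(A\phi)(f)=\int_\Sigma \phi\,(Af)\,V^{-1}d\vol_h=\int_\Sigma \bigl((-VD^iVD_i+m^2V^2)\phi\bigr)\,f\,V^{-1}d\vol_h,
\]
the first equality being the definition of the distributional action together with formal self-adjointness, and the second being Green's identity (integrating by parts twice) on $(\Sigma,h)$. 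Since these two distributions agree against every $f\in C_0^\infty(\Sigma)$, they coincide, whence $A^*[\phi]=[(-VD^iVD_i+m^2V^2)\phi]$ as classes in $L^2(\Sigma,V^{-1}d\vol_h)$; combined with the first step this is exactly the asserted identity.

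The main obstacle I anticipate is the justification of the second equality in the display: that moving both derivatives from $\phi$ onto $f$ produces no boundary contribution and is legitimate for merely $C^2$ rather than smooth $\phi$. On $\mathbb{R}^N$ this is the familiar integration-by-parts computation, but on a general (possibly incomplete, non-orientable) Riemannian manifold it must be carried out by localising through charts and a partition of unity subordinate to $\supp f$, the compact support of $f$ being precisely what annihilates the boundary terms. This is the ``distribution theory on arbitrary Riemannian manifolds'' flagged before the statement, and I would either perform this localisation explicitly for the smooth-coefficient operator $VD^iVD_i$ with lower-order term $m^2V^2$, or cite the corresponding appendix of Bullock~\cite{me}.
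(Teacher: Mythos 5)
Your proposal is correct and follows essentially the same route as the paper's proof: both sandwich $A_E$ between $A$ and $A^*$ via $A\subseteq A_E \Rightarrow A_E=A_E^*\subseteq A^*$, and both identify the distributional action on a $C^2$ function with the classical operator by strengthening the formal self-adjointness (Green's) identity to $\phi\in C^2(\Sigma)$ paired against test functions $f\in C_0^\infty(\Sigma)$. The differences are purely presentational (the paper proves $A^*[\phi]=[A\phi]$ first and invokes the sandwich last), and the localisation issue you flag is exactly what the paper disposes of by remarking that the proof of Proposition D.10 of Bullock carries over to $C^2$ functions since $A$ is second order.
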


\begin{proof}
We know (already stated on p.\pageref{adjointofA}), that the adjoint $A^*$ of the linear operator $A$ is given by:
$D(A^*)=\{\phi\in L^2(\Sigma, V^{-1}d\vol_h)\text{ s.t. } A\phi\in L^2(\Sigma, V^{-1}d\vol_h)\},$ since $A$ is formally self-adjoint with respect to the smooth measure $V^{-1}d\vol_h$, which is proven in Proposition~D.10 of Bullock~\cite{me}. We can strengthen that proposition to the following case:
$$\int_\Sigma (A\phi)\theta V^{-1} d\vol_h=\int_\Sigma \phi(A\theta) V^{-1} d\vol_h,$$for all $\phi\in C^2(\Sigma)$ and $\theta\in C_0^\infty(\Sigma)$, since $A$ is of second order and commutes with complex conjugation. The proof is similar. Then, if $\phi\in C^2(\Sigma)$ and $[\phi]\in D(A^*)$, we have: $$A^*[\phi](\theta)=\int_\Sigma\phi (A\theta) V^{-1}d\vol_h = \int_\Sigma (A\phi) \theta V^{-1}d\vol_h=[A\phi](\theta),$$ where $A^*[\phi]$ is meant distributionally. Therefore $A^*[\phi]=[A\phi]$. Lastly, since $A_E$ is a s.a.e. of $A$, then $A\leq A_E$ and we have: $A_E\leq A^*$. So, $A_E$ is the restriction of $A^*$ to space $D(A_E)$. Therefore, if $\phi\in C^2(\Sigma)$ and $[\phi]\in D(A_E)$, then $A_E[\phi]=A^*[\phi]=[A\phi]$.
\end{proof}

\begin{proof}[Theorem~\ref{uniqueness1}] We use a proof by contradiction.
Firstly, we point out that if $(\largesquare_g+m^2)\Psi=0$, then $\partial^2_t\Psi=-A\Psi$. But as $\pi_t^*(\Psi|_t)\in D(A_E)$, by the previous proposition: $A(\pi_t^*(\Psi|_t))=A_E (\pi_t^*(\Psi|_t))\in L^2(\Sigma, V^{-1}d\vol_h)$
and thus $\pi_t^*(\partial^2_t\Psi|_t)\in L^2(\Sigma,V^{-1}d\vol_h)$ also. If $\Psi\neq 0$, then there exists $t_1\in\mathbb{R}$ such that $\Psi|_{\Sigma_{t_1}}\neq 0$. Let $\dot{f}_{t_1}\in C_0^\infty(\Sigma_{t_1})$ such that $\int_{\Sigma_{t_1}}\dot{f}_{t_1}\Psi V^{-1}d\vol_h\neq 0$ and let $f_t=S(t-t_1,A_E)(\pi^*_{t_1}\dot{f}_{t_1})$ be the vector-valued function. According to Theorem~$\ref{waldsolutions}$ on the existence of smooth Wald solutions, this function can be represented by the smooth solution $f\in C^\infty(M)$ to the Cauchy problem with smooth initial data $(0,\dot{f}_{t_1})$, of compact support on $\Sigma_{t_1}$. We now evaluate the symplectic form at our two solutions $\Psi$ and $f$:
\begin{align*}
c(t)&=\int_{\Sigma_t} [\partial_t f \Psi-f \partial_t\Psi ]V^{-1}d\vol_h\\
&=\int_\Sigma\left[\pi^*_t(\partial_t f|_t)\pi^*_t(\Psi_t)-\pi^*_t(f|_t)\pi^*_t(\partial_t\Psi_t)\right]V^{-1}d\vol_h.
\end{align*}
%Then, the following are true: 
%\begin{enumerate}
%\item $\pi^*_t(f|_t)\in L^2(\Sigma,V^{-1}d\vol_h)\cap [C^\infty(\Sigma)]$.
%\item $\pi^*_t(\partial_t f|_t)\in L^2(\Sigma,V^{-1}d\vol_h)\cap [C^\infty(\Sigma)]$.
%\item $\pi^*_t(\partial^2_t f|_t)\in L^2(\Sigma,V^{-1}d\vol_h)\cap [C^\infty(\Sigma)]$.
%\item $\pi^*_t(\Psi|_t)\in D(A_E)\cap C^2(\Sigma)\subseteq L^2(\Sigma,V^{-1}d\vol_h)\cap [C^2(\Sigma)]$.
%\item $\pi^*_t(\partial_t\Psi|_t)\in L^2(\Sigma,V^{-1}d\vol_h)\cap [C^1(\Sigma)]$.
%\item $\pi^*_t(\partial^2_t\Psi|_t)\in L^2(\Sigma,V^{-1}d\vol_h)\cap [C(\Sigma)]$.
%\end{enumerate}
Then clearly $c(t_1)\neq 0$ and $c(0)=0$ but:
%\begin{align*}
%\frac{dc(t)}{dt}&=\int_{\Sigma_t}[\partial^2_t f \Psi-f\partial^2_t \Psi]\\
%&=-\langle A f, \Psi\rangle+\langle f,A\Psi\rangle\\
%%&=-\langle A_E f,\Psi\rangle-\int f V\nabla_iV\nabla^i\Psi]d\vol_h\\
%&=-\langle A_E f,\Psi\rangle+\langle f,A_E \Psi\rangle\\
%&=0,
%\end{align*}
$$\frac{dc(t)}{dt}=\int_{\Sigma_t}[\partial^2_t f \Psi-f\partial^2_t \Psi]=-\langle A f, \Psi\rangle+\langle f,A\Psi\rangle=-\langle A_E f,\Psi\rangle+\langle f,A_E \Psi\rangle=0,$$
which is a contradiction.   
\end{proof}

\begin{lem}[Uniqueness of Solutions (ii)]
Let $A_E$ be an acceptable s.a.e.\ of A. Given two solutions $\Psi_1,\Psi_2\in C^2(M)$ of the Klein-Gordon equation $(\largesquare_g +m^2)\Psi_i=0$, 
%, corresponding to Cauchy data $\phi_0\in C^2(\Sigma_0)$ s.t. $[\phi_0]\in D(A_E)$ and $\dot{\phi}_0\in C^1(\Sigma)\cap \mathcal{L}^2(\Sigma,V^{-1}d\vol_h)$ 
such that $\Psi_1|_{\Sigma_0}=\Psi_2|_{\Sigma_0}$, $\partial_t\Psi_1|_{\Sigma_0}=\partial_t\Psi_2|_{\Sigma_0}$ and for all $t\in\mathbb{R}$ and $i=1,2$: $[\pi_t^*(\Psi_i|_t)]\in D(A_E) \text{ and }[\pi_t^*(\partial_t\Psi_i|_t)]\in L^2(\Sigma,V^{-1}d\vol_h),$
%\begin{align*}
%&\Psi_i|_{\Sigma_0}=\phi_0\\
%&\partial_t\Psi_i|_{\Sigma_0}=\dot{\phi}_0\\
%&[\pi_t^*(\Psi_i|_t)]\in D(A_E)\\
%&[\pi_t^*(\partial_t\Psi_i|_t)]\in L^2(\Sigma,V^{-1}d\vol_h)
%\end{align*}
then $\Psi_1=\Psi_2$.
\end{lem}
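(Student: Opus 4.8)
The plan is to reduce this lemma directly to Theorem~\ref{uniqueness1} by exploiting the linearity of every object involved. First I would set $\Psi := \Psi_1 - \Psi_2$ and verify that $\Psi$ satisfies each hypothesis of Theorem~\ref{uniqueness1}. Since $\Psi_1,\Psi_2\in C^2(M)$, their difference lies in $C^2(M)$, and because the Klein-Gordon operator $\largesquare_g+m^2$ is linear, $(\largesquare_g+m^2)\Psi=(\largesquare_g+m^2)\Psi_1-(\largesquare_g+m^2)\Psi_2=0$. The initial conditions are immediate: by hypothesis $\Psi_1|_{\Sigma_0}=\Psi_2|_{\Sigma_0}$ and $\partial_t\Psi_1|_{\Sigma_0}=\partial_t\Psi_2|_{\Sigma_0}$, so $\Psi|_{\Sigma_0}=0$ and $\partial_t\Psi|_{\Sigma_0}=0$.

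The only point needing a word of justification is the slicewise membership conditions. Since $D(A_E)$ is the domain of a linear operator, it is a linear subspace of $L^2(\Sigma,V^{-1}d\vol_h)$; pull-back by $\pi_t$ and restriction to $\Sigma_t$ are both linear, so for each $t$ we have $[\pi_t^*(\Psi|_t)]=[\pi_t^*(\Psi_1|_t)]-[\pi_t^*(\Psi_2|_t)]$, a difference of two elements of the subspace $D(A_E)$, hence itself in $D(A_E)$. In the same way $[\pi_t^*(\partial_t\Psi|_t)]\in L^2(\Sigma,V^{-1}d\vol_h)$, as $L^2$ is a vector space. Thus $\Psi$ meets all the hypotheses of Theorem~\ref{uniqueness1}, which forces $\Psi=0$, i.e.\ $\Psi_1=\Psi_2$.

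There is essentially no genuine obstacle here beyond this bookkeeping: the entire content is that the class of ``admissible'' solutions (those whose slices lie in $D(A_E)$ and whose time-derivative slices lie in $L^2$) is closed under subtraction, which is immediate from the linearity of $D(A_E)$ and of $L^2$. All the analytic work was already carried out in the proof of Theorem~\ref{uniqueness1}; this lemma merely records the standard passage from ``the only admissible solution with vanishing Cauchy data is zero'' to ``admissible solutions are determined by their Cauchy data''.
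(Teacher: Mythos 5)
Your proposal is correct and follows essentially the same route as the paper's own proof: both reduce the statement to Theorem~\ref{uniqueness1} by setting $\Psi=\Psi_1-\Psi_2$ and observing that all the hypotheses are preserved under subtraction because $D(A_E)$ and $L^2(\Sigma,V^{-1}d\vol_h)$ are vector spaces and all operations involved are linear. The paper's proof is just a terser statement of exactly this argument, so there is nothing to add or correct.
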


\begin{proof}
Let $\Psi=\Psi_1-\Psi_2$, then $\Psi\in C^2(M)$ and satisfies the conditions of the previous proposition since all operations concerned are linear and $D(A_E)$ and $L^2(\Sigma, V^{-1}d\vol_h)$ are vector spaces. Thus $\Psi=0$.
\end{proof}

We note here the following trivial generalisation, the proof of which is similar to those previous. It will be this result that will be of use in Section~\ref{sec:supportofwaldsolutions} in describing the support of the Wald solution $\phi$.

\begin{lem}[Uniqueness of Solutions (iii)]\label{uniquenessiii}
Let $A_E$ be an acceptable s.a.e.\ of A. Given two solutions $\Psi_1,\Psi_2\in C^2([t_1,t_2)\times\Sigma)$ of the Klein-Gordon equation $(\largesquare_g +m^2)\Psi_i=0$ 
such that $\Psi_1|_{\Sigma_{t_1}}=\Psi_2|_{\Sigma_{t_1}}$, $\partial_t\Psi_1|_{\Sigma_{t_1}}=\partial_t\Psi_2|_{\Sigma_{t_1}}$ and for all $t\in[t_1,t_2)$ and $i=1,2$ we have: $[\pi_t^*(\Psi_i|_t)]\in D(A_E) \text{ and }[\pi_t^*(\partial_t\Psi_i|_t)]\in L^2(\Sigma,V^{-1}d\vol_h),$
then $\Psi_1=\Psi_2$.

\end{lem}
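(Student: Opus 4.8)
The plan is to reduce to the zero-data case and then run the conserved symplectic-form argument of Theorem~\ref{uniqueness1} essentially verbatim, with the reference slice $\Sigma_0$ replaced by $\Sigma_{t_1}$. First I would set $\Psi := \Psi_1 - \Psi_2$. By linearity, $\Psi \in C^2([t_1,t_2)\times\Sigma)$ solves $(\largesquare_g + m^2)\Psi = 0$ with $\Psi|_{\Sigma_{t_1}} = 0$ and $\partial_t\Psi|_{\Sigma_{t_1}} = 0$, and since $D(A_E)$ and $L^2(\Sigma,V^{-1}d\vol_h)$ are vector spaces the domain hypotheses $[\pi_t^*(\Psi|_t)]\in D(A_E)$ and $[\pi_t^*(\partial_t\Psi|_t)]\in L^2(\Sigma,V^{-1}d\vol_h)$ persist for all $t\in[t_1,t_2)$. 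It then suffices to show $\Psi = 0$. Suppose not; then $\Psi$ is non-null on some slice, and because $\Psi|_{\Sigma_{t_1}} = 0$ this slice occurs at an interior time $\tau\in(t_1,t_2)$, i.e. $\vol_h\{x : \Psi(\tau,x)\neq 0\} > 0$. Crucially, the entire argument will only reference the compact subinterval $[t_1,\tau]\subseteq[t_1,t_2)$, so the half-open nature of the time interval causes no difficulty.

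Next I would manufacture a test solution anchored at $\tau$. Choose $\dot f_\tau \in C_0^\infty(\Sigma)$ with $\int_\Sigma \dot f_\tau\,\pi_\tau^*(\Psi|_\tau)\,V^{-1}d\vol_h \neq 0$ (possible since $\pi_\tau^*(\Psi|_\tau)$ is a non-null $L^2$ function). Since $A_E$ is acceptable, $\dot f_\tau\in\chi_E$, so Theorem~\ref{waldsolutions} produces a smooth Wald solution $f$ on all of $M$ with Cauchy data $(0,\dot f_\tau)$ on $\Sigma_\tau$, namely $[f_t] = S(t-\tau,A_E)[\dot f_\tau]$, which I restrict to the strip. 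Every slice satisfies $[f_t]\in D(A_E^\infty)\subseteq D(A_E)$ and $f_t\in\chi_E\subseteq C^\infty(\Sigma)$, with both $[f_t]$ and $[\partial_t f_t]$ in $L^2$, so the symplectic pairing $c(t) = \int_{\Sigma_t}\left[\partial_t f\,\Psi - f\,\partial_t\Psi\right]V^{-1}d\vol_h$ is a well-defined $L^1$ integral (each integrand being a product of two $L^2$ functions) for $t\in[t_1,\tau]$, and is $C^1$ in $t$.

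The heart of the proof is the conservation $dc/dt\equiv 0$. Differentiating under the integral, the cross terms $\partial_t f\,\partial_t\Psi$ cancel, leaving $dc/dt = \int_{\Sigma_t}[\partial_t^2 f\,\Psi - f\,\partial_t^2\Psi]V^{-1}d\vol_h$. Substituting $\partial_t^2 f_t = -A_E f_t$ and $\partial_t^2\Psi|_t = -A\Psi|_t = -A_E\,\pi_t^*(\Psi|_t)$ — the latter via the proposition identifying $A_E[\phi]$ with $[(-VD^iVD_i+m^2V^2)\phi]$ for $C^2$ functions lying in $D(A_E)$ — yields $dc/dt = -\langle A_E f_t,\Psi_t\rangle + \langle f_t, A_E\Psi_t\rangle$, which vanishes because $f_t,\Psi_t\in D(A_E)$ and $A_E$ is self-adjoint. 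Hence $c$ is constant on $[t_1,\tau]$. But the vanishing data on $\Sigma_{t_1}$ gives $c(t_1) = 0$, while $f|_\tau = 0$ and $\partial_t f|_\tau = \dot f_\tau$ give $c(\tau) = \int_\Sigma \dot f_\tau\,\pi_\tau^*(\Psi|_\tau)\,V^{-1}d\vol_h \neq 0$, contradicting constancy. Therefore $\Psi = 0$ and $\Psi_1 = \Psi_2$.

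I expect the only genuine point requiring attention — rather than a true obstacle — to be the bookkeeping that all the functional-analytic steps (differentiation under the integral, the integration by parts producing the self-adjoint cancellation, and the membership of each slice in $D(A_E)$) remain legitimate on the half-open strip. These are precisely the verifications already executed for Theorem~\ref{uniqueness1}, and they carry over unchanged here because the contradiction is localised to the compact subinterval $[t_1,\tau]$, on which $\Psi$, $f$, and their derivatives are controlled exactly as before.
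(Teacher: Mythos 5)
Your proposal is correct and is essentially the paper's own argument: the paper states Lemma~\ref{uniquenessiii} as a ``trivial generalisation'' whose proof is ``similar to those previous,'' meaning exactly the reduction $\Psi=\Psi_1-\Psi_2$ of Uniqueness~(ii) followed by the symplectic-form contradiction of Theorem~\ref{uniqueness1} run on the compact subinterval $[t_1,\tau]$ with the test Wald solution anchored at $\Sigma_\tau$. You have simply written out the details the paper leaves implicit, at the same level of rigour.
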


Using Theorems~\ref{waldsolutions} and~\ref{uniqueness1} on the existence and uniqueness of solutions to the Klein-Gordon equation, we will find it useful to define a vector space of solutions, for each acceptable s.a.e.\ $A_E$ of $A$. We show that it can be given a natural symplectic structure in Section~\ref{sec:sympform}. It's this structure that is required for the construction of the Weyl-algebra, however we will not be concerned with quantisation in this paper. 

\begin{defn}[Space of Solutions]\label{spaceofsolutions}
Given an acceptable s.a.e.\ $A_E$ of $A$, define the \textbf{space of solutions}, $S_E$ to be:
$$S_E=\{\phi\in C^\infty(M)\colon \; (\largesquare_g +m^2)\phi =0, \pi_t^{-1}(\phi_t),\pi_t^{-1}(\dot{\phi}_t)\in \chi_E \text{ for all } t\}$$
\end{defn}
\begin{prop}\label{spaceofsolutionsiso}
We have the linear isomorphism: $\Psi\colon \chi_E\times \chi_E \rightarrow S_E$, defined by $\Psi(\phi_0,\dot{\phi}_0)=\phi$, where $\phi$ is constructed using Theorem~\ref{waldsolutions} on the existence of Wald solutions. 
\end{prop}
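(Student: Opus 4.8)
The plan is to check in turn that $\Psi$ is well-defined (lands in $S_E$), linear, injective, and surjective; linearity and injectivity are immediate, while the genuine work sits in well-definedness and in surjectivity, the latter resting on the uniqueness theorem of this section. First I would fix $(\phi_0,\dot{\phi}_0)\in\chi_E\times\chi_E$ and let $\phi=\Psi(\phi_0,\dot{\phi}_0)$ be the function produced by Theorem~\ref{waldsolutions}, so that $\phi\in C^\infty(M)$ and $(\largesquare_g+m^2)\phi=0$ already hold. It then remains only to verify the slice conditions $\pi_t^{-1}(\phi_t),\pi_t^{-1}(\dot{\phi}_t)\in\chi_E$ for every $t$. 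For the first, I would invoke Proposition~\ref{strongderivs} and the remark following Lemma~\ref{stuff3}, by which $[\chi_E]$ is invariant under $C(t,A_E)$ and $S(t,A_E)$; since $[\phi_t]=C(t,A_E)[\phi_0]+S(t,A_E)[\dot{\phi}_0]$, this gives $\phi_t\in\chi_E$. For the velocity slice I would differentiate once using the derivative formulae of Proposition~\ref{strongderivs}, obtaining $[\dot{\phi}_t]=-A_E S(t,A_E)[\phi_0]+C(t,A_E)[\dot{\phi}_0]$; the second summand lies in $[\chi_E]$ as before, so the matter reduces to establishing $A_E[\chi_E]\subseteq[\chi_E]$.

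This $A_E$-invariance of $\chi_E$ is the step I expect to be the main obstacle. That $A_E$ preserves $D(A_E^\infty)$ is trivial, so the content is that it preserves $\bigcap_{t>0}D(\exp((A_E^-)^{1/2}t))$. I would argue this through the multiplication-operator form of the spectral theorem: realising $A_E$ as multiplication by $\lambda$ on some $L^2(\mu)$, membership in the intersection is the requirement $\int e^{2t\sqrt{\lambda^-}}|g|^2\,d\mu<\infty$ for all $t>0$, and one verifies the same for $\lambda g$ by splitting the integral. On $\{\lambda\ge 0\}$ one has $\lambda^-=0$, so the exponential is $1$ and finiteness of $\int_{\lambda\ge 0}\lambda^2|g|^2\,d\mu$ follows from $g\in D(A_E)$. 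On $\{\lambda<0\}$ one uses the pointwise bound $\lambda^2 e^{2t\sqrt{-\lambda}}\le C_t\, e^{2(2t)\sqrt{-\lambda}}$ (since $\lambda^2 e^{-2t\sqrt{-\lambda}}$ is bounded on $\lambda<0$), which reduces the estimate to membership at the parameter $2t$. Hence $A_E[\chi_E]\subseteq[\chi_E]$, so $\dot{\phi}_t\in\chi_E$ and $\phi\in S_E$; this is exactly the invariance needed, and it is plausibly already recorded in Appendix~A of Bullock~\cite{me}.

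With well-definedness in hand, linearity is immediate because $C(t,A_E)$ and $S(t,A_E)$ are linear and $\Psi$ is assembled from them, while injectivity follows from the Cauchy-data identities of Theorem~\ref{waldsolutions}: if $\Psi(\phi_0,\dot{\phi}_0)=0$ then $\phi_0=\phi|_{\Sigma_0}=0$ and $\dot{\phi}_0=\partial_t\phi|_{\Sigma_0}=0$. For surjectivity I would take any $\phi\in S_E$ and set $\phi_0=\pi_0^{-1}(\phi|_{\Sigma_0})$ and $\dot{\phi}_0=\pi_0^{-1}(\partial_t\phi|_{\Sigma_0})$, which lie in $\chi_E$ by the defining condition of $S_E$ at $t=0$. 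Writing $\tilde{\phi}=\Psi(\phi_0,\dot{\phi}_0)\in S_E$, the difference $W=\phi-\tilde{\phi}$ is a $C^2$ (indeed $C^\infty$) solution of the Klein--Gordon equation with vanishing Cauchy data on $\Sigma_0$, and its slices inherit $[\pi_t^*(W|_t)]\in D(A_E)$ and $[\pi_t^*(\partial_t W|_t)]\in L^2(\Sigma,V^{-1}d\vol_h)$ from $\phi,\tilde{\phi}\in S_E$ together with linearity. Applying Theorem~\ref{uniqueness1} to $W$ forces $W=0$, hence $\phi=\Psi(\phi_0,\dot{\phi}_0)$, so $\Psi$ is onto. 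Combining the four steps yields the claimed linear isomorphism $\chi_E\times\chi_E\cong S_E$.
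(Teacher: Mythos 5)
Your proof is correct and its decisive step --- surjectivity via the uniqueness theorem (Theorem~\ref{uniqueness1}) applied to the difference between a given element of $S_E$ and the Wald solution built from its own Cauchy data --- is exactly the paper's argument. The paper's proof is otherwise terser (it records linearity as clear and leaves well-definedness and injectivity implicit), so your spectral-theorem verification that $A_E[\chi_E]\subseteq[\chi_E]$, and hence that $\dot{\phi}_t\in\chi_E$ so that $\Psi$ really lands in $S_E$, is a genuine supplement to rather than a departure from the paper's route; indeed the paper tacitly relies on this invariance (e.g.\ when writing $\Psi(\dot{\phi}_0,-A_E\phi_0)$ in Section~\ref{symmetries}) without proving it in the text.
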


\begin{proof}
Clearly $\Psi$ is linear. Surjectivity follows since, if $\psi\in S_E$, then $\psi_0,\dot{\psi}_0\in \chi_E$. Let $\phi$ be the Wald solution, satisfying the Cauchy data $(\psi_0,\dot{\psi}_0)$. Then $\psi$ and $\phi$ satisfy all the conditions of Theorem~\ref{uniqueness1} on uniqueness and so $\psi=\phi$.   
\end{proof}

\section{Causal Structure of Standard Static Spacetimes (ii)}\label{sec:causalstructure (ii)}

We shall in Section~\ref{sec:supportofwaldsolutions} further analyse some of the properties of our constructed solutions to the Klein-Gordon equation. However, we must first prove some basic properties of the causal structure of standard static spacetimes. One apparently simple result of this section is that if $K$ is a compact subset of $\Sigma_0$ then for all sufficiently small $t$, $J^+(K)\cap \Sigma_t$ is compact in $\Sigma_t$. It will be this result and the adapted uniqueness result of Lemma~\ref{uniquenessiii} which will prove useful in the next section. We shall also need to prove more properties of $J^+(K)$ to be used in Section~\ref{sec:supportofwaldsolutions}.

For all the results of this section, let $(M,g)=(\mathbb{R}\times\Sigma, V^2 dt^2-h)$ be a standard static spacetime as in Definition~\ref{standardstatic}. However, in all the statements we can set w.l.o.g. $V=1$, since both the Cauchy development and causal future of a set in a spacetime are identical for conformally related metrics.

\begin{prop}\label{tinfinity}
Let $K\subseteq \Sigma_0$ be a compact set. If $J(K)\cap \Sigma_t$ is compact then $J(K)\cap \Sigma_{t'}$ is compact for all $|t'|\leq|t|$. Define: 
$$t^\infty(K)\colon= \sup\{t\geq 0\colon J^+(K)\cap \Sigma_t \text{ is compact in }\Sigma_t\}.$$ 
Then $t^\infty(K)\in(0,\infty]$. Furthermore, the following are true:
\begin{enumerate}
\item $J(K)\cap \Sigma_t$ is compact for all $|t|<t^{\infty}(K)$. 
\item If $t^\infty(K)<\infty$ then  $J(K)\cap \Sigma_t$ is not compact for all $|t|\geq t^{\infty}(K)$.
\item If $\Sigma$ is complete, then $C(K,t)$ is compact for all $t$ and $t^\infty(K)=\infty$. 
\item If $t^\infty(K)=\infty$ for any non-empty compact set $K$, then $\Sigma$ is complete.
\end{enumerate}
\end{prop}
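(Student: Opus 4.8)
The plan is to translate every compactness statement about $J(K)\cap\Sigma_t$ into one about the metric balls $C(K,t)$ via Proposition~\ref{J(K)compact}, and then argue entirely inside the Riemannian manifold $(\Sigma,h)$. First I would record the time-reflection symmetry $T(t,p)=(-t,p)$, under which $J^-(K)\cap\Sigma_{-t}=T(J^+(K)\cap\Sigma_t)$, together with the fact that $J^-(K)\cap\Sigma_s=\emptyset$ for $s>0$ (a future-directed causal curve cannot reach $K\subseteq\Sigma_0$ from a later time). Hence for $s>0$ one has $J(K)\cap\Sigma_s=J^+(K)\cap\Sigma_s$, and by Proposition~\ref{J(K)compact} this is compact iff $C(K,s)$ is compact. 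This reduces all four statements, and the preliminary monotonicity claim, to assertions about the family $\{C(K,t)\}_{t\ge0}$. The monotonicity claim is then immediate: since $q\mapsto d(q,K)$ is continuous, $C(K,t')=\{q:d(q,K)\le t'\}$ is a closed subset of $C(K,t)$ for $0\le t'\le t$, and a closed subset of a compact set is compact. Consequently the set $P:=\{t\ge0:C(K,t)\text{ compact}\}$ is an interval of the form $[0,t^\infty(K))$ or $[0,t^\infty(K)]$, which yields Statement~1 at once.

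The heart of the argument, and the step I expect to be the main obstacle, is showing $t^\infty(K)>0$. The key geometric fact is that the total exponential domain $\mathcal{E}:=\{(p,v)\in T\Sigma:\exp_p v\text{ is defined}\}$ is open in $T\Sigma$ and contains the zero section. Since $K$ is compact, the zero section over $K$ is a compact subset of $\mathcal{E}$, so I would extract a uniform radius $r>0$ with $\overline{B_r(0)}\subseteq\epsilon_p$ for every $p\in K$. This is precisely condition~2 of Proposition~\ref{J(K)compact} for the set $K$, whence $C(K,r)$ is compact and $t^\infty(K)\ge r>0$; thus $t^\infty(K)\in(0,\infty]$. (Alternatively, one can feed the openness of $D(\Sigma_0)$ from Proposition~\ref{DSigmaexplicit} into a finite-subcover argument over $K$ to produce the same uniform $r$.)

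For Statement~2 I would show that the supremum is never attained when finite, i.e.\ $P=[0,t^\infty(K))$. Suppose, for contradiction, that $C(K,t)$ is compact for some $t\ge t^\infty(K)$; by monotonicity we may take $t=t^\infty(K)$. Applying the mechanism of the previous paragraph to the compact set $K':=C(K,t)$ gives $\delta>0$ with $C(K',\delta)$ compact, and the nesting identity $C(K',\delta)=C(C(K,t),\delta)=C(K,t+\delta)$ (Proposition~C.4 of Bullock~\cite{me}, as used in Corollary~\ref{moreDSigma}) then shows $C(K,t+\delta)$ is compact. This forces $t^\infty(K)\ge t+\delta>t^\infty(K)$, a contradiction; hence $C(K,t)$ is non-compact for all $t\ge t^\infty(K)$, and by the symmetry of the first paragraph $J(K)\cap\Sigma_t$ is non-compact whenever $|t|\ge t^\infty(K)$.

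Finally, Statements~3 and~4 are short consequences of Hopf--Rinow. If $(\Sigma,h)$ is complete it enjoys the Heine--Borel property (Theorem~16 of Petersen~\cite{k}); since $K$ is bounded, the closed set $C(K,t)$ is bounded, hence compact, for every $t\ge0$, so $P=[0,\infty)$ and $t^\infty(K)=\infty$, giving Statement~3. Conversely, if $t^\infty(K)=\infty$ for some nonempty compact $K$, fix $p_0\in K$; from $d(q,K)\le d(q,p_0)$ we get $C(p_0,t)\subseteq C(K,t)$, a closed subset of a compact set, so $C(p_0,t)$ is compact for all $t\ge0$. By Proposition~\ref{J(K)compact} for the singleton $\{p_0\}$ this means $\overline{B_t(0)}\subseteq\epsilon_{p_0}$ for all $t$, i.e.\ $\exp_{p_0}$ is defined on all of $T_{p_0}\Sigma$; the Hopf--Rinow theorem then upgrades geodesic completeness at one point to completeness of the component of $p_0$, proving Statement~4. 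I would flag that connectedness of $\Sigma$ is genuinely needed for Statement~4 as phrased, since a compact $K$ can only constrain the components it meets.
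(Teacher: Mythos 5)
Your proof is correct and shares the paper's basic skeleton: everything is funnelled through Proposition~\ref{J(K)compact} into statements about the balls $C(K,t)$, monotonicity comes from ``a closed subset of a compact set is compact'', and Statement~3 is Heine--Borel, exactly as in the paper. Where you diverge is at the two places where the paper leans on external results, and this is where your write-up adds content. For the positivity $t^\infty(K)>0$ and for Statement~2, the paper simply cites Propositions~C.5 and C.6 of Bullock~\cite{me} (compactness of $C(K,r)$ for some $r>0$, and propagation of compactness from $C(K,t)$ to $C(K,t+\epsilon)$); you instead prove both facts from scratch, using the openness of the domain of the exponential map in $T\Sigma$ over the compact zero section, feeding the resulting uniform radius into condition~2 of Proposition~\ref{J(K)compact}, and then iterating via the nesting identity $C(C(K,t),\delta)=C(K,t+\delta)$ (Proposition~C.4 of \cite{me}, which the paper itself invokes in Corollary~\ref{moreDSigma}). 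This makes your argument self-contained where the paper's is not, at the cost of importing the standard ODE fact about the geodesic flow. For Statement~4 the two proofs genuinely differ: the paper argues that a Cauchy sequence is bounded, hence contained in some compact $C(K,t)$, hence convergent; you instead deduce $\epsilon_{p_0}=T_{p_0}\Sigma$ for a single $p_0\in K$ and invoke Hopf--Rinow. Your route is what exposes the one real subtlety, which you correctly flag: under the reading ``$t^\infty(K)=\infty$ for \emph{some} non-empty compact $K$'', Statement~4 needs $\Sigma$ connected. This is not pedantry, since the paper explicitly allows disconnected $\Sigma$ (Section~\ref{acceptablenonbounded} works on $\mathbb{Z}\times(0,\infty)$): for $\Sigma=S^1\sqcup(0,1)$ and $K\subseteq S^1$ one has $t^\infty(K)=\infty$ while $\Sigma$ is incomplete, and the paper's own step ``a Cauchy sequence is bounded and so contained in $C(K,t)$'' silently fails for a Cauchy sequence living in a component disjoint from $K$. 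So either the hypothesis must be read as ``for \emph{every} non-empty compact $K$'' (in which case your argument, applied to $K=\{q\}$ for each $q\in\Sigma$, gives completeness of every component and hence of $\Sigma$) or connectedness must be added; your proposal is the more honest proof on this point.
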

Note that $\Sigma$ is complete as a metric space iff geodesically complete by the Hopf-Rinow Theorem (see e.g.\ Theorem 6.13 Lee \cite{p}). If so, then $\Sigma$ obeys the Heine-Borel property, that is $K\subseteq \Sigma$ is compact iff $K$ is closed and bounded (see e.g.\ Theorem 16 in Petersen \cite{k}).
\begin{proof}
Let $t\geq 0$. If $J(K)\cap \Sigma_t$ is compact, then, by Proposition~\ref{J(K)compact}, $J(K)\cap \Sigma_t=C(K,t)$. But as $C(K,t)$ is compact, it easily follows that $C(K,t')$ is compact for all $|t'|\leq |t|$ and similarly for $J(K)\cap \Sigma_{t'}$. That $t^\infty(K)>0$ is proven as follows. As $K$ is compact, then, by Proposition~C.5 of Bullock~\cite{me}, $C(K,t)$ is compact for some $t>0$ and so $J^+(K)\cap \Sigma_t$ is compact by Proposition~\ref{J(K)compact}. It then follows that $t^\infty(K)>0$ and also that Statement~\textit{1} is true.  
If $t^\infty(K)<\infty$ and $J(K)\cap \Sigma_{t^{\infty}(K)}$ is compact then $C(K,t^\infty(K))$ is compact, as is $C(K,t^\infty(K)+\epsilon)$ for some $\epsilon>0$ (by Proposition~C.6 of Bullock~\cite{me}), and so also $J(K)\cap \Sigma_{t^{\infty}(K)+\epsilon}$ which contradicts the definition of $t^\infty(K)$. This proves Statement~\textit{2}. If $\Sigma$ is complete, then, for all $t$, as $C(K,t)$ is closed and bounded, so it's also compact by the Heine-Borel property. Statement~\textit{3} then follows from Proposition~\ref{J(K)compact}. If $p_n$ is a Cauchy sequence, then it is bounded and so contained in the compact set $C(K,t)$ for some t and so $p_n$ converges, which proves Statement~\textit{4}.
\end{proof}

\begin{prop}\label{anotherone}
Let $C(K,t)$ be compact in $\Sigma$, where $K$ is a compact subset of $\Sigma$ and $t\geq 0$, then $\{\frac{t}{2}\}\times C(K,\frac{t}{2})\subseteq D(\Sigma_0)$.
\end{prop}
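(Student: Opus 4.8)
The plan is to reduce everything to the explicit description of the future Cauchy development established in Statement~\textit{1} of Proposition~\ref{DSigmaexplicit}, combined with the triangle inequality for the metric $d$ on $\Sigma$. As throughout this section I set $V=1$, so that $d$ is the distance function of $(\Sigma,h)$; since both $D(\Sigma_0)$ and the closed balls $C(\cdot,\cdot)$ are invariant under conformal rescaling of the metric, this costs no generality.

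First I would fix an arbitrary point $q\in C(K,t/2)$ and aim to show $(t/2,q)\in D^+(\Sigma_0)\subseteq D(\Sigma_0)$. By Statement~\textit{1} of Proposition~\ref{DSigmaexplicit}, and since $t/2\geq 0$, it suffices to verify that the closed ball $C(q,t/2)$ is compact in $\Sigma$. The key step is the containment $C(q,t/2)\subseteq C(K,t)$: if $r\in C(q,t/2)$ then $d(r,q)\leq t/2$, while $q\in C(K,t/2)$ gives $d(q,K)\leq t/2$, so the triangle inequality yields $d(r,K)\leq d(r,q)+d(q,K)\leq t$, i.e.\ $r\in C(K,t)$. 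Now $C(q,t/2)$ is a closed ball, hence closed in $\Sigma$, and being a closed subset of the compact set $C(K,t)$ it is itself compact. Therefore $(t/2,q)\in D^+(\Sigma_0)$, and as $q$ was arbitrary, $\{t/2\}\times C(K,t/2)\subseteq D^+(\Sigma_0)\subseteq D(\Sigma_0)$.

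I expect no genuine obstacle. The statement is in fact the special case $s=t/2$ of Corollary~\ref{moreDSigma}, whose proof already packages the metric identity $C(K,t)=C(C(K,t/2),t/2)$ underlying the triangle-inequality containment above, so one could simply cite that corollary directly. The only point needing a moment's care is the passage from ``closed ball contained in a compact set'' to ``compact'', which relies on closed subsets of compact sets being compact; everything else is routine bookkeeping with $d$.
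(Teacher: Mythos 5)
Your proof is correct, and your closing remark identifies exactly the paper's own proof: the paper disposes of this proposition in a single line by citing Corollary~\ref{moreDSigma} with $s=\tfrac{t}{2}$. Your primary argument is a genuine (if mild) variant of that route. The paper's chain runs through the set-level identity $C(K,t)=C(C(K,t-s),s)$ and the unnamed proposition preceding Corollary~\ref{moreDSigma}, whose direct causal-curve argument shows that compactness of $C(K',t')$ forces $\{t'\}\times K'\subseteq D^+(\Sigma_0)$. You instead invoke the explicit characterisation $D^+(\Sigma_0)=\{(t,p)\in M:\ C(p,t)\text{ compact},\ t\geq 0\}$ of Proposition~\ref{DSigmaexplicit} --- which is legitimate, since that proposition is fully established in Section~\ref{sec:causalstructure (i)}, well before this point --- and verify its hypothesis pointwise: for each $q\in C(K,t/2)$ the triangle inequality $d(r,K)\leq d(r,q)+d(q,K)$ gives $C(q,t/2)\subseteq C(K,t)$, and a closed subset of a compact set is compact. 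So the two arguments hinge on the same metric fact, packaged at the level of sets in the paper and pointwise in yours. What yours buys is a self-contained verification needing only Proposition~\ref{DSigmaexplicit} and elementary topology; what the paper's buys is brevity, at the price of leaning on the thesis-level identity $C(K,t)=C(C(K,t-s),s)$ and the causal-curve proposition. One cosmetic caveat: the closed balls $C(\cdot,\cdot)$ are not themselves conformally invariant objects; rather, they are defined throughout with respect to $V^{-2}h$, which is precisely the spatial metric of the conformally rescaled spacetime with $V=1$, and it is the conformal invariance of $D(\Sigma_0)$ and $J(K)$ that makes the reduction to $V=1$ harmless --- this is how the paper phrases the reduction at the start of Section~\ref{sec:causalstructure (ii)}.
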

\begin{proof}
This follows from Corollary~\ref{moreDSigma} with $s=\frac{t}{2}$.
\end{proof}

\begin{cor}
If $J(K)\cap \Sigma_t$ is compact, then $J(K)\cap\Sigma_{\frac{t}{2}}\subseteq D(\Sigma_0)$.
\end{cor}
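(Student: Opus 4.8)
The plan is to chain the two results immediately preceding, namely Propositions~\ref{J(K)compact} and~\ref{anotherone}, together with the observation that compactness of $J(K)\cap\Sigma_t$ propagates downward to earlier times. First I would reduce to the case $V=1$ and $t\geq 0$: the case $t<0$ is handled by the time-reflection symmetry $T\colon(t,p)\mapsto(-t,p)$, which preserves both $J(K)$ (as $J(K)=J^+(K)\cup J^-(K)$ is time-symmetric when $K\subseteq\Sigma_0$) and $D(\Sigma_0)$, while the reduction to $V=1$ is permitted because, as noted at the start of this section, the causal future and the Cauchy development of a set are unchanged under the conformal rescaling relating $g$ to $dt^2-h$.

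Next, from the hypothesis that $J(K)\cap\Sigma_t$ is compact, Proposition~\ref{J(K)compact} gives that $C(K,t)$ is compact in $\Sigma$ (identifying $\Sigma_t$ with $\Sigma$ via $\pi_t$). It is then immediate from the definition of the metric balls that $C(K,t')$ is compact for every $0\leq t'\leq t$; this is precisely the downward-propagation step recorded in the first sentence of Proposition~\ref{tinfinity}. In particular $C(K,\tfrac{t}{2})$ is compact, so a second application of Proposition~\ref{J(K)compact}, now at time $\tfrac{t}{2}$, yields the identification $J(K)\cap\Sigma_{\frac{t}{2}}=\{\tfrac{t}{2}\}\times C(K,\tfrac{t}{2})$.

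Finally I would invoke Proposition~\ref{anotherone} directly with this same compact set $C(K,t)$: it asserts exactly that $\{\tfrac{t}{2}\}\times C(K,\tfrac{t}{2})\subseteq D(\Sigma_0)$ (its proof being Corollary~\ref{moreDSigma} with $s=\tfrac{t}{2}$). Combining this with the identification from the previous step gives $J(K)\cap\Sigma_{\frac{t}{2}}=\{\tfrac{t}{2}\}\times C(K,\tfrac{t}{2})\subseteq D(\Sigma_0)$, which is the claim.

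There is essentially no analytic obstacle here, since the corollary is a bookkeeping consequence of the two preceding propositions. The only points requiring care are notational: one must keep track of the implicit identification $\pi_{\frac{t}{2}}\colon\Sigma\to\Sigma_{\frac{t}{2}}$ when translating between the ball $C(K,\tfrac{t}{2})\subseteq\Sigma$ and the hypersurface slice $J(K)\cap\Sigma_{\frac{t}{2}}\subseteq\Sigma_{\frac{t}{2}}$, and one must confirm that $C(K,\tfrac{t}{2})$ (and not merely $C(K,t)$) is compact before applying Proposition~\ref{J(K)compact} at the halved time. Both are straightforward, so I expect the written-out argument to occupy only two or three lines.
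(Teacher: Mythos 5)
Your proposal is correct and is essentially the paper's own proof: the paper disposes of this corollary by stating that it ``follows from Proposition~\ref{anotherone} and repeated use of Proposition~\ref{J(K)compact}'', and your argument spells out exactly those steps --- one application of Proposition~\ref{J(K)compact} at time $t$ to get $C(K,t)$ compact, the downward propagation of compactness to $C(K,\tfrac{t}{2})$, a second application at time $\tfrac{t}{2}$ to identify $J(K)\cap\Sigma_{\frac{t}{2}}$ with $\{\tfrac{t}{2}\}\times C(K,\tfrac{t}{2})$, and then Proposition~\ref{anotherone}. The reductions to $V=1$ and $t\geq 0$ that you flag are consistent with the conventions already fixed at the start of Section~\ref{sec:causalstructure (ii)}.
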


\begin{proof}
This follows from Proposition~\ref{anotherone} and repeated use of Proposition~\ref{J(K)compact}.
\end{proof}
\begin{prop} For all $0\leq t_1\leq t_2$: 
\begin{enumerate}
\item $\pi(J(p)\cap\Sigma_{t_1})\subseteq \pi(J(p)\cap \Sigma_{t_2})$
\item $\pi(D(\Sigma_0)\cap\Sigma_{t_2})\subseteq \pi(D(\Sigma_0)\cap\Sigma_{t_1}),$
\end{enumerate}
where $\pi\colon\mathbb{R}\times \Sigma\rightarrow \Sigma$ is the map: $\pi(t,x)=x$.
\end{prop}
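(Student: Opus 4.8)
The plan is to treat the two statements separately, reducing each to an elementary monotonicity observation. Throughout I take $p\in\Sigma_0$, say $p=(0,x_0)$, and set $V=1$ as permitted at the start of the section. Since $p$ lies on $\Sigma_0$ and the spacetime is causal, for every $t>0$ we have $J(p)\cap\Sigma_t=J^+(p)\cap\Sigma_t$ (the past set $J^-(p)$ meets only the slices $\Sigma_t$ with $t\le 0$), while the case $t=0$ is trivial because $J(p)\cap\Sigma_0=\{p\}$, so that $\pi(J(p)\cap\Sigma_0)=\{x_0\}$.

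For Statement 1, I would fix $y\in\pi(J(p)\cap\Sigma_{t_1})$ with $t_2>0$ and use Proposition~\ref{causalfuture} to produce a smooth curve $\sigma\colon[0,t_1]\to\Sigma$ with $\sigma(0)=x_0$, $\sigma(t_1)=y$ and $|\dot\sigma|\le 1$. The naive idea of extending $\sigma$ by standing still at $y$ fails, since the resulting curve is only piecewise smooth at the join; instead I would reparametrise the whole curve to traverse the same path more slowly, setting $\tilde\sigma(s)=\sigma(st_1/t_2)$ for $s\in[0,t_2]$. This $\tilde\sigma$ is smooth, runs from $x_0$ to $y$, and has speed $|\dot{\tilde\sigma}(s)|=(t_1/t_2)\,|\dot\sigma(st_1/t_2)|\le 1$ because $t_1\le t_2$. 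Applying the converse direction of Proposition~\ref{causalfuture} then yields $(t_2,y)\in J^+(p)$, i.e.\ $y\in\pi(J(p)\cap\Sigma_{t_2})$. The remaining case $t_2=0$ forces $t_1=0$ and is immediate.

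For Statement 2, I would invoke the explicit description of the Cauchy development from Proposition~\ref{DSigmaexplicit}, which gives, for $t\ge 0$, the identity $\pi(D(\Sigma_0)\cap\Sigma_t)=\{x\in\Sigma\colon C(x,t)\text{ is compact}\}$. The claim then reduces to showing that compactness of $C(x,t_2)$ implies compactness of $C(x,t_1)$ whenever $t_1\le t_2$. This is immediate: $C(x,t_1)\subseteq C(x,t_2)$ since a smaller radius gives a smaller closed ball, and $C(x,t_1)$ is closed (being a closed metric ball), so it is a closed subset of a compact set and hence compact.

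The only genuine subtlety lies in Statement 1: because Proposition~\ref{causalfuture} is phrased for smooth curves, one cannot simply concatenate a motion with a rest, and the uniform slow-down reparametrisation is exactly what keeps the argument within the smooth category while preserving the speed bound $|\dot{\tilde\sigma}|\le 1$. Statement 2 carries no real obstacle once the characterisation of $D(\Sigma_0)$ is in hand, being pure point-set topology.
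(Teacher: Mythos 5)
Your proof is correct and takes essentially the same route as the paper: Statement~1 is proved there by exactly the same uniform slow-down reparametrisation $\gamma'(t)=(t,\sigma(t\,t_1/t_2))$ on $[0,t_2]$ applied to the curve furnished by Proposition~\ref{causalfuture}, and Statement~2 is likewise deduced directly from Proposition~\ref{DSigmaexplicit}. The only difference is that you spell out the degenerate cases and the closed-ball-inside-compact-ball argument, which the paper leaves implicit.
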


\begin{proof}
We can set w.l.o.g. $V=1$ since otherwise: 
\begin{align*}
\pi(J(p&)_{V^2dt^2-h}\cap\Sigma_{t_1})\\
&=\pi(J(p)_{dt^2-V^{-2}h}\cap\Sigma_{t_1})
\subseteq \pi(J(p)_{dt^2-V^{-2}h}\cap\Sigma_{t_2})
=\pi(J(p)_{V^2dt^2-h}\cap\Sigma_{t_2}),
\end{align*}
where our subscript notation highlights the dependence of $J(p)$ on the metric.

To prove Statement \textit{1}: If $q\in LHS$, then $\exists \gamma\colon[0,t_1]\rightarrow\mathbb{R}\times \Sigma, \gamma(t)=(t,\sigma(t))$, $|\dot{\sigma}(t)|\leq 1$, $\sigma(0)=p$, $\sigma(t_1)=q$. Let $\gamma'\colon[0,t_2]\rightarrow\mathbb{R}\times \Sigma$, $\gamma'(t)=(t, \sigma(t \frac{t_1}{t_2}))$, $|\dot{\gamma}'|^2=1-(\frac{t_1}{t_2})^2|\dot{\sigma}(t\frac{t_1}{t_2})|^2\leq 0$, $\gamma'(0)=(0,p)$, $\gamma'(t_2)=(t_2,q)$, so that $q\in RHS$. 

Statement \textit{2} follows from Proposition~\ref{DSigmaexplicit}.
\end{proof}
 
\begin{cor}
$J(p)\cap \Sigma_{t_1}\nsubseteq D(\Sigma_0)\Rightarrow J(p)\cap \Sigma_{t_2}\nsubseteq D(\Sigma_0)\;\forall 0\leq t_1\leq t_2$.
\end{cor}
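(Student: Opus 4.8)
The plan is to deduce this immediately from the two projection inclusions of the preceding proposition, working slice by slice. As in the rest of the section I would set $V=1$ without loss of generality. The key elementary observation is that for any subset $X\subseteq M$ and any $t$, the projection $\pi$ restricts to a bijection of $X\cap\Sigma_t$ onto $\pi(X\cap\Sigma_t)\subseteq\Sigma$; consequently $(t,q)\in X$ holds if and only if $q\in\pi(X\cap\Sigma_t)$. This lets me translate the whole statement into a statement about the two subsets $\pi(J(p)\cap\Sigma_t)$ and $\pi(D(\Sigma_0)\cap\Sigma_t)$ of $\Sigma$, at which point the two inclusions of the preceding proposition do all the work.

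The argument proceeds by producing an explicit witness point. Assuming $J(p)\cap\Sigma_{t_1}\nsubseteq D(\Sigma_0)$, I would choose $(t_1,q)\in J(p)\cap\Sigma_{t_1}$ with $(t_1,q)\notin D(\Sigma_0)$; by the observation above this is equivalent to $q\in\pi(J(p)\cap\Sigma_{t_1})$ together with $q\notin\pi(D(\Sigma_0)\cap\Sigma_{t_1})$. The first statement of the preceding proposition, $\pi(J(p)\cap\Sigma_{t_1})\subseteq\pi(J(p)\cap\Sigma_{t_2})$, then gives $q\in\pi(J(p)\cap\Sigma_{t_2})$, i.e.\ $(t_2,q)\in J(p)\cap\Sigma_{t_2}$. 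The second statement, $\pi(D(\Sigma_0)\cap\Sigma_{t_2})\subseteq\pi(D(\Sigma_0)\cap\Sigma_{t_1})$, read contrapositively, turns $q\notin\pi(D(\Sigma_0)\cap\Sigma_{t_1})$ into $q\notin\pi(D(\Sigma_0)\cap\Sigma_{t_2})$, i.e.\ $(t_2,q)\notin D(\Sigma_0)$. Hence $(t_2,q)$ is a point of $J(p)\cap\Sigma_{t_2}$ lying outside $D(\Sigma_0)$, which is exactly the assertion $J(p)\cap\Sigma_{t_2}\nsubseteq D(\Sigma_0)$.

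There is no genuine obstacle here: the entire content is already packaged in the preceding proposition, and the only thing to get right is the bookkeeping of which inclusion applies to which set. The point worth highlighting is that the two inclusions run in \emph{opposite} directions as $t$ increases: under $\pi$ the causal-future slices of $p$ grow with $t$, while the Cauchy-development slices shrink with $t$. It is precisely this opposite monotonicity that lets ``escaping $D(\Sigma_0)$ at time $t_1$'' force ``escaping $D(\Sigma_0)$ at the later time $t_2$''. For completeness one may note that the case $t_1=0$ is vacuous, since $\Sigma_0$ is acausal and contained in its own Cauchy development, so $J(p)\cap\Sigma_0=\{p\}\subseteq\Sigma_0\subseteq D(\Sigma_0)$ and the hypothesis can never hold there.
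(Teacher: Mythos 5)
Your proof is correct and is exactly the argument the paper intends: the corollary is stated without proof precisely because it follows from the two projection inclusions of the preceding proposition, and your slice-by-slice bookkeeping (using that $\pi|_{\Sigma_t}$ is a bijection, so $(t,q)\in X \Leftrightarrow q\in\pi(X\cap\Sigma_t)$) is just the careful spelling-out of that deduction, witness point and all.
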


\begin{prop}
If $t^\infty(K)< \infty$, then $J(K)\cap \Sigma_{t^\infty(K)/2}\nsubseteq D(\Sigma_0)$.
\end{prop}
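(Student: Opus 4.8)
The plan is to argue by contradiction: suppose $t^\infty(K)<\infty$ and yet $J(K)\cap\Sigma_{t^\infty(K)/2}\subseteq D(\Sigma_0)$, and deduce that $J(K)\cap\Sigma_{t^\infty(K)}$ is compact, contradicting Statement~\textit{2} of Proposition~\ref{tinfinity}. Write $\tau=t^\infty(K)$ for brevity and set $V=1$ as permitted. Since $\tau\in(0,\infty)$ we have $0<\tau/2<\tau$, so Statement~\textit{1} of Proposition~\ref{tinfinity} gives that $J(K)\cap\Sigma_{\tau/2}$ is compact, whence by Proposition~\ref{J(K)compact} it equals $\{\tau/2\}\times C(K,\tau/2)$. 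Put $L:=C(K,\tau/2)$, a compact subset of $\Sigma$.

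The crux is to promote the hypothesis into compactness of each small ball around the ``front'' $L$. The assumption $J(K)\cap\Sigma_{\tau/2}\subseteq D(\Sigma_0)$ says $(\tau/2,p)\in D(\Sigma_0)$ for every $p\in L$; since $\tau/2>0$ and $D^-(\Sigma_0)$ contains only points with nonpositive time coordinate, this forces $(\tau/2,p)\in D^+(\Sigma_0)$, whence by Statement~\textit{1} of Proposition~\ref{DSigmaexplicit} each $C(p,\tau/2)$ is compact. Applying the equivalence (\textit{1}$\Leftrightarrow$\textit{2}) of Proposition~\ref{J(K)compact} to the singleton $\{p\}$, this is the same as $\overline{B_{\tau/2}(0)}\subseteq\epsilon_p$ for every $p\in L$. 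Now apply the implication (\textit{2}$\Rightarrow$\textit{1}) of Proposition~\ref{J(K)compact} to the compact set $L$ at radius $\tau/2$: since $\overline{B_{\tau/2}(0)}\subseteq\epsilon_p$ for all $p\in L$, the set $C(L,\tau/2)$ is compact.

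Finally, the ball-composition identity (Proposition~C.4 of Bullock~\cite{me}, already used in Corollary~\ref{moreDSigma}) gives $C(K,\tau)=C(C(K,\tau/2),\tau/2)=C(L,\tau/2)$, which we have just shown to be compact. Then by the implication (\textit{1}$\Rightarrow$\textit{3}) of Proposition~\ref{J(K)compact}, $J(K)\cap\Sigma_{\tau}$ is compact. But $\tau=t^\infty(K)<\infty$, so Statement~\textit{2} of Proposition~\ref{tinfinity} asserts that $J(K)\cap\Sigma_{\tau}$ is \emph{not} compact, a contradiction. Hence no such containment can hold, i.e. $J(K)\cap\Sigma_{t^\infty(K)/2}\nsubseteq D(\Sigma_0)$.

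I expect the main obstacle to be the middle step: reassembling the pointwise compactness of the half-time balls $C(p,\tau/2)$, $p\in L$, into compactness of the union $C(L,\tau/2)$. This is exactly where compactness of the index set $L$ and the passage through the exponential-domain condition $\overline{B_{\tau/2}(0)}\subseteq\epsilon_p$ are essential, and where care is needed that the hypotheses of Proposition~\ref{J(K)compact} genuinely apply; in particular it is why we first invoked Statement~\textit{1} of Proposition~\ref{tinfinity} to know $L$ is compact. The remaining bookkeeping, namely that points of $D(\Sigma_0)$ at positive time lie in $D^+(\Sigma_0)$ together with the ball-composition identity, is routine.
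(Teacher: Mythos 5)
Your proof is correct, and it takes a genuinely different route from the paper's. The paper argues \emph{constructively}: from $t^\infty(K)<\infty$ and Propositions~\ref{tinfinity} and~\ref{J(K)compact} it extracts a point $p\in K$ and a vector $X_p\in T_p\Sigma\setminus\epsilon_p$ with $|X_p|=t^\infty(K)$, builds (via the rescaling lemma and Theorem~\ref{extendibilityofgeodesics}) a unit-speed geodesic $\sigma'$ inextendible to $t^\infty(K)$, and then exhibits an explicit witness $x=\bigl(t^\infty(K)/2,\sigma'(t^\infty(K)/2)\bigr)\in J(K)\cap\Sigma_{t^\infty(K)/2}$ through which the past-inextendible causal curve $\alpha(s)=(s,\sigma'(t^\infty(K)-s))$ never meets $\Sigma_0$, so $x\notin D(\Sigma_0)$. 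You instead argue by contradiction using only compactness bookkeeping: the explicit description of $D^{\pm}(\Sigma_0)$ in Proposition~\ref{DSigmaexplicit} converts the assumed containment into compactness of each ball $C(p,\tau/2)$, $p\in L=C(K,\tau/2)$; the equivalence (\textit{1}$\Leftrightarrow$\textit{2}) of Proposition~\ref{J(K)compact} for singletons, then the implication (\textit{2}$\Rightarrow$\textit{1}) for the compact set $L$, together with the ball-composition identity $C(K,\tau)=C(C(K,\tau/2),\tau/2)$ (Proposition~C.4 of Bullock, which the paper itself invokes in Corollary~\ref{moreDSigma}), yield compactness of $C(K,\tau)$ and hence of $J(K)\cap\Sigma_\tau$, contradicting Statement~\textit{2} of Proposition~\ref{tinfinity}. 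Each step checks out, including the point you flag as delicate (reassembling pointwise compactness over the compact front $L$, which is exactly what the (\textit{2}$\Rightarrow$\textit{1}) implication, via Proposition~C.7 of Bullock, is designed to do), and there is no circularity since none of the cited results depend on this proposition. What the two approaches buy: the paper's proof gives more geometric information --- it locates an explicit failure point lying on an inextendible geodesic emanating from $K$, which matches the ``data hitting an edge'' intuition used later --- whereas your proof avoids any geodesic construction, reuses already-established set-theoretic machinery, and isolates a clean doubling principle: if the causal front at time $\tau/2$ is compact and lies in $D(\Sigma_0)$, then the front at time $\tau$ is compact.
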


\begin{proof}
Again w.l.o.g let $V=1$. We know via Propositions~\ref{tinfinity} and \ref{J(K)compact}, that: $\overline{B(p,t)}\subseteq \epsilon_p$ for all $t<t^\infty(K)$ and $p\in K$; $B(p,t^\infty(K))\subseteq \epsilon_p$ for all $p\in K$, and that there exists $p\in K$ such that $\overline{B(p,t^\infty(K))}\nsubseteq \epsilon_p$. Thus there exists $X_p\in T_p\Sigma\backslash \epsilon_p$ with $|X_p|= t^\infty(K)$. We hold that there must then exist a geodesic $\sigma\colon[0,1)\rightarrow \Sigma$ inextendible to 1 such that $\dot{\sigma}(0)=X_p$.

To show that this is true, let $\sigma:[0,a)\rightarrow \Sigma$ the maximal geodesic, starting at $p$ with $\dot{\sigma}(0)=X_p$. If $a>1$, then $X_p\in\epsilon_p$ by definition, which is however a contradiction. If $a<1$, then $\gamma$ be the geodesic through $p$ with $\dot{\gamma}(0)=aX_p$. By the rescaling Lemma, $\sigma$ being inextendible to $a$ implies that $\gamma$ is inextendible to $1$. So, by definition, $aX_p\nin\epsilon_p$. But $|aX_p|<|X_p|=t^\infty(K)$, which is a contradiction. 

Now that the existence of the geodesic $\sigma$ is proven, define $\sigma'\colon[0,t^\infty(K))\rightarrow \Sigma$ via: $\sigma'(s)=\sigma(\frac{s}{t^\infty(K)})$. It satisfies: $|\dot{\sigma}'(s)|=\frac{1}{t^\infty(K)}|\dot{\sigma}(\frac{s}{t^\infty(K)})|= 1$ and $\sigma'(0)=p$. So, from Proposition~\ref{causalfuture}, $x=(\frac{t^\infty(K)}{2},\sigma'(\frac{t^\infty(K)}{2}))\in J(K)\cap \Sigma_{t^\infty(K)/2}$. Now define $\alpha\colon(0,t^\infty(K)/2]\rightarrow \mathbb{R}\times \Sigma$, $\alpha(s)=(s, \sigma'(t^\infty(K)-s))$. Since $\sigma$ is inextendible to 1 then $\sigma'$ is inextendible to $t^\infty(K)$ and so $\alpha$ is past-inextendible to $0$. Clearly, $\alpha$ does not pass $\Sigma_0$ although $\alpha(t^\infty(K)/2)=(t^\infty(K)/2, \sigma'(t^\infty(K)/2))=x$. Since $\alpha$ is a future-pointing past-inextendible smooth causal curve passing $x$ but not $\Sigma_0$, then $x\nin D(\Sigma_0)$. Thus $x\in J(K)\cap \Sigma_{t^\infty(K)/2}\backslash D(\Sigma_0)$.
\end{proof}

\begin{cor}\label{moretinfinity} The following statements are true:
\begin{enumerate}
\item $J(K)\cap \Sigma_t\subseteq D(\Sigma_0)\;\forall 0\leq t< t^\infty(K)/2$.
\item $t^\infty(K)<\infty\Rightarrow J(K)\cap \Sigma_t\nsubseteq D(\Sigma_0)\;\forall  t\geq t^\infty(K)/2$.
\item $t_1(K)\colon=\sup\{t\colon J^+(K)\cap \Sigma_t\subseteq D(\Sigma_0)\}=t^\infty(K)/2$.
\end{enumerate}
\end{cor}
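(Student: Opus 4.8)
The plan is to derive all three parts of the corollary from the results established immediately above, handling the finite and infinite cases of $t^\infty(K)$ together where possible. The substantive content lives in Statements 1 and 2; Statement 3 is then essentially bookkeeping.

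For Statement 1, I would fix $t$ with $0\leq t< t^\infty(K)/2$, so that $2t<t^\infty(K)$. By Statement 1 of Proposition~\ref{tinfinity}, $J(K)\cap\Sigma_{2t}$ is then compact. Applying the preceding corollary (``if $J(K)\cap\Sigma_s$ is compact then $J(K)\cap\Sigma_{s/2}\subseteq D(\Sigma_0)$'') with $s=2t$ yields $J(K)\cap\Sigma_t\subseteq D(\Sigma_0)$, as required. This argument is uniform in whether $t^\infty(K)$ is finite or infinite.

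For Statement 2, assume $t^\infty(K)<\infty$. I expect this to be the main obstacle: I must promote the single non-containment $J(K)\cap\Sigma_{t^\infty(K)/2}\nsubseteq D(\Sigma_0)$ (the preceding proposition) to all $t\geq t^\infty(K)/2$, and the relevant monotonicity corollary ($J(p)\cap\Sigma_{t_1}\nsubseteq D(\Sigma_0)\Rightarrow J(p)\cap\Sigma_{t_2}\nsubseteq D(\Sigma_0)$ for $t_1\leq t_2$) is stated for a single point $p$, not for the compact set $K$. The key observation is that the proof of that preceding proposition establishes more than its statement: it exhibits a specific $p\in K$ with $\overline{B(p,t^\infty(K))}\nsubseteq\epsilon_p$ and a concrete point $x\in(J(p)\cap\Sigma_{t^\infty(K)/2})\setminus D(\Sigma_0)$. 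Hence $J(p)\cap\Sigma_{t^\infty(K)/2}\nsubseteq D(\Sigma_0)$, and the monotonicity corollary gives $J(p)\cap\Sigma_t\nsubseteq D(\Sigma_0)$ for every $t\geq t^\infty(K)/2$. Since $p\in K$ implies $J(p)\subseteq J(K)$, the same witnessing point shows $J(K)\cap\Sigma_t\nsubseteq D(\Sigma_0)$, proving Statement 2.

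For Statement 3, I would combine the first two, first noting that for $t>0$ one has $J^+(K)\cap\Sigma_t=J(K)\cap\Sigma_t$, since $J^-(K)\subseteq\{t\leq 0\}$ for $K\subseteq\Sigma_0$ (future-pointing causal curves increase $t$, by Proposition~\ref{causalfuture}). If $t^\infty(K)=\infty$, then Statement 1 shows $J^+(K)\cap\Sigma_t\subseteq D(\Sigma_0)$ for every finite $t\geq 0$, so the defining set of $t_1(K)$ is unbounded and $t_1(K)=\infty=t^\infty(K)/2$. If $t^\infty(K)<\infty$, then Statement 1 gives $J^+(K)\cap\Sigma_t\subseteq D(\Sigma_0)$ for all $0\leq t<t^\infty(K)/2$, whence $t_1(K)\geq t^\infty(K)/2$, while Statement 2 gives $J^+(K)\cap\Sigma_t\nsubseteq D(\Sigma_0)$ for all $t\geq t^\infty(K)/2$, whence $t_1(K)\leq t^\infty(K)/2$; equality follows. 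The only slightly delicate point is the behaviour exactly at $t=t^\infty(K)/2$, but this is covered by Statement 2 and simply shows the supremum is not attained.
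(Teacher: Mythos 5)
Your proof is correct and follows essentially the route the paper intends: the corollary is stated there without proof, as an immediate assembly of Proposition~\ref{tinfinity}, the unnumbered corollary after Proposition~\ref{anotherone} (compactness of $J(K)\cap\Sigma_t$ implies $J(K)\cap\Sigma_{t/2}\subseteq D(\Sigma_0)$), the pointwise monotonicity corollary, and the preceding proposition, which is exactly the chain you assemble, including the reduction of Statement~3 to Statements~1 and~2 via $J^+(K)\cap\Sigma_t=J(K)\cap\Sigma_t$ for $t>0$. Your bridge from the pointwise monotonicity statement to the compact set $K$ (extracting the witness $p\in K$ from the proof of the preceding proposition) is sound, though you could avoid re-opening that proof by simply noting $J(K)=\bigcup_{p\in K}J(p)$, so any $x\in J(K)\cap\Sigma_{t^\infty(K)/2}\setminus D(\Sigma_0)$ already witnesses $J(p)\cap\Sigma_{t^\infty(K)/2}\nsubseteq D(\Sigma_0)$ for some $p\in K$.
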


For the purposes of the following section, we continue these arguments to define an increasing sequence:
$$t_{n+1}(K)\colon=\sup\{t\colon J^+(K)\cap \Sigma_t\subseteq D(\Sigma_{t_n(K)})\},$$
where $t_0(K)=0$ and the resulting definition of $t_1(K)$ agrees with that used above. We are led to the following corollary:

\begin{cor} \label{tnsequence} The following statements are true:
\begin{enumerate}
\item $J(K)\cap \Sigma_t\subseteq D(\Sigma_{t_n(K)})\;\forall t_n(K)\leq t< (1-\frac{1}{2^n})t^\infty(K)$.
\item $t^\infty(K)<\infty\Rightarrow J(K)\cap \Sigma_t\nsubseteq D(\Sigma_{t_n(K)})\;\forall  t\geq (1-\frac{1}{2^n})t^\infty(K)$.
\item $t_n(K)=(1-\frac{1}{2^n})t^\infty(K)\nearrow t^\infty(K)$ as $n\rightarrow \infty$.
\end{enumerate}
\end{cor}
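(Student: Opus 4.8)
The plan is to prove the three statements together by induction on $n$, reducing the inductive step to Corollary~\ref{moretinfinity} (the base case) by exploiting the time-translation invariance of the static metric. The real content is the recurrence $t_{n+1}(K)=\tfrac12\big(t^\infty(K)+t_n(K)\big)$, whose solution with $t_0(K)=0$ is Statement~3; Statements~1 and~2 are then the ``shifted'' versions of Statements~1 and~2 of Corollary~\ref{moretinfinity}, with the transition occurring at $t_{n+1}(K)=(1-2^{-(n+1)})t^\infty(K)$. I assume throughout that $t^\infty(K)<\infty$; when $\Sigma$ is complete one has $t^\infty(K)=\infty$ and $D(\Sigma_s)=M$ for every $s$, so all three statements hold degenerately. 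The base case $n=0$ is exactly Corollary~\ref{moretinfinity}, since $t_0(K)=0$ and $\Sigma_{t_0(K)}=\Sigma_0$.

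For the inductive step I would assume $t_n(K)=(1-2^{-n})t^\infty(K)$, so $t_n(K)<t^\infty(K)$ and hence $K_n:=J^+(K)\cap\Sigma_{t_n(K)}$ is compact by Propositions~\ref{J(K)compact} and~\ref{tinfinity}. Two facts must be established. First, transitivity of the causal future across the intermediate slice, $J^+(K)\cap\Sigma_t=J^+(K_n)\cap\Sigma_t$ for every $t\ge t_n(K)$, obtained by splitting a speed-$\le 1$ curve realising a point of $J^+(K)\cap\Sigma_t$ at parameter $t_n(K)$ (Proposition~\ref{causalfuture}) and concatenating in the converse direction. Second, that the edge-time of $K_n$ measured from $\Sigma_{t_n(K)}$ equals $t^\infty(K)-t_n(K)$: here I would use the ball-semigroup identity $C\big(C(K,a),b\big)=C(K,a+b)$ (Proposition~C.4 of Bullock~\cite{me}), which gives $\pi(K_n)=C(\pi(K),t_n(K))$ and hence $C(\pi(K_n),r)=C(\pi(K),t_n(K)+r)$, compact precisely when $t_n(K)+r<t^\infty(K)$.

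With these in hand I would translate the whole configuration down by $t_n(K)$, so that $\Sigma_{t_n(K)}$ becomes $\Sigma_0$ and $K_n$ a compact subset of it; since $T_{s}\colon(t,p)\mapsto(t+s,p)$ is an isometry, it carries $D(\Sigma_{t_n(K)})$, the causal future, and the edge-time to their base-point analogues. Applying Corollary~\ref{moretinfinity} to $K_n$ then yields $\sup\{\tau:J^+(K_n)\cap\Sigma_{t_n(K)+\tau}\subseteq D(\Sigma_{t_n(K)})\}=\tfrac12\big(t^\infty(K)-t_n(K)\big)$, and combining this with the definition of $t_{n+1}(K)$ and the transitivity fact gives $t_{n+1}(K)=t_n(K)+\tfrac12\big(t^\infty(K)-t_n(K)\big)=\tfrac12\big(t^\infty(K)+t_n(K)\big)$. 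Statements~1 and~2 at index $n$ drop out of Statements~1 and~2 of Corollary~\ref{moretinfinity} applied to $K_n$ under the same translation. Finally, setting $e_n:=t^\infty(K)-t_n(K)$, the recurrence reads $e_{n+1}=\tfrac12 e_n$ with $e_0=t^\infty(K)$, so $e_n=2^{-n}t^\infty(K)$ and $t_n(K)=(1-2^{-n})t^\infty(K)\nearrow t^\infty(K)$, which is Statement~3.

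The main obstacle I anticipate is making the reduction rigorous rather than the arithmetic: one must check carefully that $D(\cdot)$, $J^+(\cdot)$, and the quantity $t^\infty$ all transform as claimed under the time-translation isometry, and that the transitivity identity $J^+(K)\cap\Sigma_t=J^+(K_n)\cap\Sigma_t$ holds at the level of sets, including the borderline behaviour at non-compact slices when $t^\infty(K)<\infty$. Once the step is phrased as ``restart Corollary~\ref{moretinfinity} from $\Sigma_{t_n(K)}$ with initial set $K_n$,'' everything else is the solution of a linear recurrence.
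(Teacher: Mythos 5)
Your proof is correct and is essentially the paper's own (implicit) argument: the paper states this corollary with no proof beyond the phrase ``we continue these arguments'', which means exactly your induction --- conjugate by the time-translation isometry $(t,p)\mapsto(t+t_n(K),p)$, apply Corollary~\ref{moretinfinity} to the compact set $K_n=J^+(K)\cap\Sigma_{t_n(K)}$, and use your two auxiliary facts (the transitivity $J^+(K)\cap\Sigma_t=J^+(K_n)\cap\Sigma_t$ for $t\geq t_n(K)$, and $t^\infty(K_n)=t^\infty(K)-t_n(K)$) to obtain the recurrence $t_{n+1}(K)=\tfrac12\bigl(t^\infty(K)+t_n(K)\bigr)$. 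A further point in your favour: you correctly read the thresholds in Statements~1 and~2 as $t_{n+1}(K)=(1-2^{-(n+1)})t^\infty(K)$ rather than the printed $(1-2^{-n})t^\infty(K)$, which by Statement~3 equals $t_n(K)$ itself and would make Statement~1 vacuous and Statement~2 false at $t=t_n(K)$ (since $J(K)\cap\Sigma_{t_n(K)}\subseteq\Sigma_{t_n(K)}\subseteq D(\Sigma_{t_n(K)})$); your reading is the one the paper actually relies on in Section~\ref{sec:supportofwaldsolutions}.
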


\section{Support of Wald Solutions}\label{sec:supportofwaldsolutions}
We now prove a result concerning the support of our ``Wald solutions''. It is in fact not true that given Cauchy data consisting of two test functions $(\phi_0, \dot{\phi}_0)$ then the support of the corresponding solution $\phi$ (w.r.t.\ some acceptable s.a.e.\ $A_E$ of $A$) as constructed in Theorem~\ref{waldsolutions}, is necessarily contained in $J(K)$, where $K=\supp \phi_0\cup \supp\dot{\phi}_0$ and $J(K)$ is as usual the union of the causal future and past of $K$: $J(K)=J^+(K)\cup J^-(K)$. A counterexample is given in Section~\ref{counterexample}.

It would however be natural to guess that up until a time at which data can pass to a possible edge, the support of $\phi$ is contained in $J(K)$. More precisely, if we define: $$t^\infty(K)= \sup\{t\geq 0\colon J^+(K)\cap \Sigma_t \text{ is compact in }\Sigma_t\}\in (0,\infty ],$$then we propose that $\supp\phi\cap [-t^\infty(K),t^\infty(K)]\times\Sigma\subseteq J(K).$ It was proven in Proposition~\ref{tinfinity} that  $t^\infty(K)>0$, so this is a non-trivial statement. At first sight it might appear that this result is trivial. Since $D(\Sigma_0)$ is a globally hyperbolic spacetime we know that $\supp \phi\cap D(\Sigma_0)\subseteq J(K)$ however this does not show that $\phi$ is zero in the shaded triangular region in Figure~\ref{grinch}. Thus this does not even prove that $\phi$ is compactly supported on $\Sigma_t$ for small $t$.

The proof we give shortly uses the uniqueness result of Lemma~\ref{uniquenessiii} and the sequence $t_n(K)$ constructed in the previous section. We shall define $\Psi\colon(-t_1(K),t_1(K))\times\Sigma\rightarrow \mathbb{R}$ to be equal to $\phi$ inside $J(K)$ and zero outside it. We shall show that $\Psi$ so defined is smooth, compactly supported on $\Sigma_t$ for $t\in (-t_1(K),t_1(K))$ and satisfies the Klein-Gordon equation in its domain. Thus  $[\Psi|_{\Sigma_t}]\in D(A)\subseteq D(A_E)$ and so $\Psi=\phi$ in the domain of $\Psi$ by Lemma~\ref{uniquenessiii}. By induction and the fact that $t_n(K)\nearrow t^\infty(K)$ the result then follows.

\begin{SCfigure}[50] 
\begin{tikzpicture}[scale=1] 
\draw[dash pattern=on 8pt off 8pt] (0,0) -- (0,5) node[label=right:$\mathbb{R}\times\Sigma$]{};
\draw[dash pattern=on 8pt off 8pt] (5,0) -- (5,5);
\draw(5,0) node[label=right:$\Sigma_0$]{};
\draw(5,1.8) node[label=right:$t_1(K)$]{};
\draw(5,2.7) node[label=right:$t_2(K)$]{};
\draw(5,3.6) node[label=right:$t^\infty(K)$]{};
\draw(5,3.15) node[label=right:$t_3(K)$]{};
\draw(2.75,0) node[label=below:$K$]{};
\draw(0,0)--(5,0);
\draw(1.25,1) node[label=below:$D^+(N_0)$]{};
\draw(4,1) node[label=below:$D^+(N_0)$]{};
\draw(4.5,1.9) node[label=below:$N_1$]{};
\draw(2.5,4) node[label=below:$J^+(K)$]{};
\draw (2.5,-0.1)--(2.5,0.1);
\draw (3,-0.1)--(3,0.1);
\draw (2.5,0)--(0,4.5);
\draw (0,1.8)--(1.5,1.8);
\draw[dash pattern=on 8pt off 8pt] (0,0)--(1.25,2.25);
\draw (3,0)--(5,3.6);
\draw[dash pattern=on 8pt off 8pt] (5,0)--(4,1.8);
\draw (4,1.8)--(5,1.8);
\draw[dash pattern=on 8pt off 8pt] (5,1.8)--(4.5,2.7);
\draw  (4.5,2.7)--(5,2.7);
\draw[dash pattern=on 8pt off 8pt] (5,2.7)--(4.75,3.15);
\draw (4.75,3.15)--(5,3.15);
\draw[dash pattern=on 8pt off 8pt,fill=gray!50](0,0)--(1,1.8)--(0,1.8)--cycle;
\end{tikzpicture}
\caption{The construction of $t^\infty(K)$ and $t_n(K)$ in for example $(M,g)=(\mathbb{R}\times (0,1), dt^2-dx^2)$.\vspace{6em}}
\label{grinch}
\end{SCfigure}
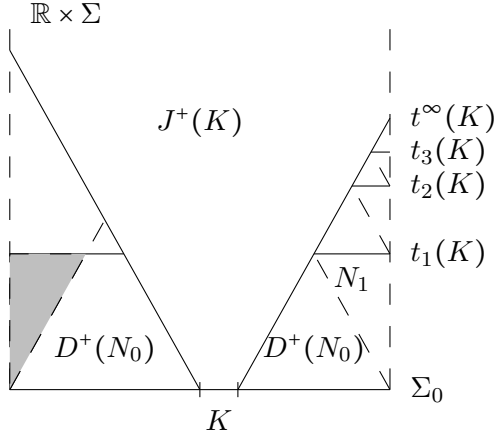
\begin{prop} 
Given $\phi_0, \dot{\phi}_0\in C^\infty_0(\Sigma)$ let $K=\supp \phi_0\cup \supp\dot{\phi}_0$. Define $t^\infty(K)$ as earlier. Let $\phi$ be the solution to the Klein-Gordon equation generated by some acceptable s.a.e.\ $A_E$ of $A$ and data $(\phi_0, \dot{\phi}_0)$ via Theorem~\ref{waldsolutions}. Then: 
\begin{enumerate}
\item $\text{If } t^\infty(K)=\infty\text{ then: }\supp\phi\subseteq J(K)$
\item $\text{If } t^\infty(K)<\infty\text{ then: }\supp\phi\cap [-t^\infty(K),t^\infty(K)]\times\Sigma\subseteq J(K)$
\end{enumerate}
\end{prop}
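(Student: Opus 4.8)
The plan is to carry out the strategy sketched before the statement: truncate the Wald solution $\phi$ to the causal shadow $J(K)$ by hand, verify that the truncation is itself a smooth solution of the Klein-Gordon equation carrying the \emph{same} Cauchy data as $\phi$, and then appeal to the uniqueness result of Lemma~\ref{uniquenessiii} to conclude that $\phi$ must already coincide with the truncation. Running this argument on an increasing family of slabs indexed by the sequence $t_n(K)$ of Corollary~\ref{tnsequence}, and using the time-translation and time-reflection isometries of the static metric $V^2dt^2-h$, upgrades control on a neighbourhood of $\Sigma_0$ to control on all of $(-t^\infty(K),t^\infty(K))\times\Sigma$.

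For the base step, on the slab $N_1=(-t_1(K),t_1(K))\times\Sigma$, where $t_1(K)=t^\infty(K)/2$ by Corollary~\ref{moretinfinity}, I would set $\Psi$ equal to $\phi$ on $J(K)$ and to $0$ off $J(K)$. Corollary~\ref{moretinfinity} guarantees $J(K)\cap N_1\subseteq D(\Sigma_0)$, and on $D(\Sigma_0)$ Theorem~\ref{waldglobhypagreement} identifies $\phi$ with the globally hyperbolic solution $\psi$ of Theorem~\ref{globhypsolution1}, which satisfies $\supp\psi\subseteq J(K)$. Two facts make the truncation harmless. First, by Proposition~\ref{J(K)compact} each slice $J(K)\cap\Sigma_t$ equals the closed metric ball $C(K,|t|)=\{d(\cdot,K)\le|t|\}$ and is compact for $|t|<t^\infty(K)$; continuity of $d(\cdot,K)$ then shows $J(K)\cap N_1$ is closed in $N_1$ and that $\Psi|_{\Sigma_t}$ is compactly supported. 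Secondly, near any point of $J(K)\cap N_1$ we lie inside the open set $D(\Sigma_0)$, where $\Psi=\psi$ (the two agree on $J(K)$ and, since $\psi$ vanishes there, off $J(K)$ as well), while near any point of $N_1\setminus J(K)$ we have $\Psi\equiv 0$. Hence $\Psi$ is smooth, compactly supported on each slice, and locally equal either to $\psi$ or to $0$, so it solves the Klein-Gordon equation on $N_1$. Its data on $\Sigma_0$ coincide with $(\phi_0,\dot\phi_0)$, since inside $J(K)$ we have $\Psi=\phi$ and off $J(K)$ both sides vanish because $\supp\phi_0\cup\supp\dot\phi_0=K\subseteq J(K)$. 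As $\Psi|_{\Sigma_t},\partial_t\Psi|_{\Sigma_t}\in C_0^\infty(\Sigma)=D(A)\subseteq D(A_E)$ while $\phi\in S_E$ has its slice data in $\chi_E\subseteq D(A_E^\infty)\subseteq D(A_E)$, both functions meet the hypotheses of Lemma~\ref{uniquenessiii}; applying it forwards on $[0,t_1(K))$ and, via the reflection $t\mapsto -t$, on the past half gives $\Psi=\phi$, i.e.\ $\supp\phi\cap N_1\subseteq J(K)$.

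For the inductive step $n\to n+1$ I would restrict $\phi$ to $\Sigma_{t_n(K)}$; by the previous stage and the slice description its data are supported in the compact set $J(K)\cap\Sigma_{t_n(K)}=C(K,t_n(K))$, hence compactly supported. The group identities used in the proof of Theorem~\ref{waldsolutions} show that $\phi$ is the Wald solution issuing from this data on $\Sigma_{t_n(K)}$, so by time-translation invariance the base-step argument applies verbatim with $\Sigma_0,D(\Sigma_0)$ replaced by $\Sigma_{t_n(K)},D(\Sigma_{t_n(K)})$; the defining property $t_{n+1}(K)=\sup\{t\colon J^+(K)\cap\Sigma_t\subseteq D(\Sigma_{t_n(K)})\}$ is exactly what keeps the truncated region inside the new globally hyperbolic piece, while the support bound $\supp\subseteq J(C(K,t_n(K)))\subseteq J^+(K)\subseteq J(K)$ keeps us inside $J(K)$. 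This yields $\supp\phi\cap(-t_{n+1}(K),t_{n+1}(K))\times\Sigma\subseteq J(K)$. Since $t_n(K)\nearrow t^\infty(K)$ by Corollary~\ref{tnsequence}, taking the union over $n$ proves the open-slab statement, which is Statement~1 when $t^\infty(K)=\infty$. For Statement~2 the closed endpoints $\Sigma_{\pm t^\infty(K)}$ follow by a limiting argument: a point of $\supp\phi$ there is a limit of points $(t_k,y_k)\in J(K)$ with $t_k\nearrow t^\infty(K)$ and $d(y_k,K)\le t_k$, whence $d(y_*,K)\le t^\infty(K)$ and, via Proposition~\ref{causalfuture}, the limit lies in $J(K)$.

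The hard part will be the smoothness of the truncation $\Psi$ across the ``edge'' of the causal shadow, precisely the shaded triangular region of Figure~\ref{grinch} lying outside $D(\Sigma_0)$. Everything turns on the geometric input of Corollaries~\ref{moretinfinity} and~\ref{tnsequence}, which force the relevant part of $J(K)$ to sit inside the appropriate Cauchy development, so that near every point $\Psi$ genuinely agrees with a bona fide local solution ($\psi$ or $0$) and is not merely a measurable patching; the finite-propagation support property of Theorem~\ref{globhypsolution1} is what supplies that agreement. A secondary point needing care is the closedness of $J(K)$ within each slab, which I would obtain from the identification of its slices with the closed balls $C(K,|t|)$ for $|t|<t^\infty(K)$ rather than from any (here unavailable) global hyperbolicity of $M$.
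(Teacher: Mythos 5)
Your proposal follows essentially the same route as the paper's own proof: truncate $\phi$ to $J(K)$ slab by slab, check that the truncation is smooth by locally identifying it either with the globally hyperbolic solution on the relevant Cauchy development (whose support property from Theorem~\ref{globhypsolution1} makes the patching consistent) or with zero, invoke the uniqueness result of Lemma~\ref{uniquenessiii}, induct along $t_n(K)\nearrow t^\infty(K)$, and finish with a limiting argument at $t=\pm t^\infty(K)$ and time reflection; your extra care about the closedness of $J(K)$ inside the slab (via the slices $C(K,|t|)$) is a point the paper passes over silently, and your handling of Statement~1 as the $t_1(K)=\infty$ case of the slab argument, rather than via completeness and global hyperbolicity of $M$, is an inessential variant.

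One slip should be corrected in the inductive step: the inclusion $J(C(K,t_n(K)))\subseteq J^+(K)\subseteq J(K)$ is false as stated, since the causal \emph{past} of the slice $\{t_n(K)\}\times C(K,t_n(K))$ meets $\Sigma_s$, for $0\le s<t_n(K)$, in the set $C\bigl(K,2t_n(K)-s\bigr)$, which strictly contains $J(K)\cap\Sigma_s=C(K,s)$. What is true, and all that your induction actually uses, is the future half: $J^+\bigl(\{t_n(K)\}\times C(K,t_n(K))\bigr)\cap\Sigma_t=C\bigl(C(K,t_n(K)),t-t_n(K)\bigr)=C(K,t)=J^+(K)\cap\Sigma_t$ for $t_n(K)\le t<t^\infty(K)$, the backward control being supplied by the earlier stages and the reflection symmetry. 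The paper avoids this issue by defining the truncation on the half-open slab $[t_n(K),t_{n+1}(K))\times\Sigma$ directly in terms of $J(K)$ and applying Lemma~\ref{uniquenessiii} forward in time only; restricting your re-centred base-step argument to the future of $\Sigma_{t_n(K)}$ repairs the statement with no other change.
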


\begin{proof}
If $t^\infty(K)=\infty$, then, by Proposition~\ref{tinfinity}, $(\Sigma, V^{-2}h)$ is a complete Riemannian manifold and so $M$ is globally hyperbolic by Lemma~\ref{staticandglobhyp} and $\supp\phi\subseteq J(K)$ follows from Theorem~\ref{globhypsolution1}. If $t^\infty(K)<\infty$, construct a strictly increasing sequence $(t_n(K))_{n\geq 0}$ inductively as follows. Let $t_0(K)=0$ and $t_{n+1}(K)\colon=\sup\{t\colon J^+(K)\cap \Sigma_t\subseteq D(\Sigma_{t_n(K)})\}$. From Corollary~\ref{tnsequence}, we know that $t_n(K)\nearrow t^\infty(K)$. Define $\Psi\colon(-t_1(K),t_1(K))\times\Sigma\rightarrow \mathbb{R}$ as:
$$\Psi(x)=\left\{\begin{array}{ll}\phi(x),  & \text{ for } x\in (-t_1(K),t_1(K))\times\Sigma\cap J(K)\\
0, &\text{ otherwise.}\end{array}\right.$$
The first problem is to show that the function $\Psi$ so defined is smooth. We do this by finding for each $x\in (-t_1(K),t_1(K))\times\Sigma$ an open neighbourhood $U$ s.t. $\Psi$ either equals $\phi$ on $U$ or is zero on $U$. If $x\in (-t_1(K),t_1(K))\times\Sigma\cap D(\Sigma_0)=U$, which is an open neighbourhood (since $D(\Sigma_0)$ is open by Proposition~\ref{DSigmaisglobhyp}), then $\Psi=\phi$ on $U$. This is because if $y\in U$ then either $y\in J(K)$ and so $\Psi(y)=\phi(y)$ by definition, or $y\in D(\Sigma_0)\backslash J(K)=D(\Sigma_0\backslash K)$ and $\Psi(y)=0=\phi(y)$ (by the uniqueness of solutions to the Klein-Gordon equation on the globally hyperbolic spacetime $D(\Sigma_0\backslash K)$ (Theorem~\ref{globhypsolution1}), where $\Sigma_0\backslash K$ is an acausal topological hypersurface and so $D(\Sigma_0\backslash K)$ is an open set in $M$ and a globally hyperbolic spacetime by Proposition~\ref{DSigmaisglobhyp}). If $x\in (-t_1(K),t_1(K))\times\Sigma\backslash D(\Sigma_0)\subseteq (-t_1(K),t_1(K))\times\Sigma\backslash J(K)=\colon U$ (from Corollary~\ref{moretinfinity}, Statement~\textit{1}), then $\Psi=0$ on $U$ by definition.

Thus $\Psi\in C^\infty((-t_1(K),t_1(K))\times\Sigma)$ and $[\Psi_t],[\partial_t\Psi_t]\in [C_0^{\infty}(\Sigma)]=D(A)\subseteq D(A_E)$, for all $t\in[0,t_1(K))$. We also have $\Psi|_{\Sigma_0}=\phi_0$ and $\partial_t \Psi|_{\Sigma_0}=\dot{\phi}_0$. Since $\Psi$ is locally either equal to $\phi$, or zero, both being solutions of the Klein-Gordon equation, then so is $\Psi$, that is $(\largesquare_g+m^2)\Psi=0$ on $(-t_1(K),t_1(K))\times\Sigma$. Moreover, by definition $\Psi=0$ on $[0,t_1(K))\times\Sigma\backslash J(K)$. By uniqueness of the Wald solution (Lemma~\ref{uniquenessiii}), then: $\phi=\Psi\text{ in }[0,t_1(K))\times \Sigma$. Therefore, $\phi=0$ on $[0,t_1(K))\times\Sigma\backslash J(K)$. But since $\phi$ is smooth, then also $\partial_t \phi=0$ on $[0,t_1)\times\Sigma\backslash J(K)$. In particular, $\phi=\partial_t \phi=0$ on $\Sigma_{t_1(K)}\backslash J(K)=N_1$.

Using the constructed sequence $(t_n(K))_{n\geq 0}$, we prove the proposition by induction. Our inductive hypothesis $P(n)$ is as follows: $P(n)\colon\; \supp\phi\cap[0,t_n(K)]\times\Sigma\subseteq J(K)$. We have already proven the statement for $n=1$. If $P(n)$ is true, by smoothness $\phi,\partial_t \phi$ are zero on $N_n=\Sigma_{t_n(K)}\backslash J(K)$. Now, as before, define: $\Psi\colon[t_n(K),t_{n+1}(K))\times\Sigma\rightarrow \mathbb{R}$ as:
$$\Psi(x)=\left\{\begin{array}{ll}\phi(x), &  \text{ for } x\in [t_n(K),t_{n+1}(K))\times\Sigma\cap J(K)\\
0, &\text{ otherwise.}\end{array}\right.$$
Similarly to the previous argument $\Psi\in C^\infty([t_n(K),t_{n+1}(K))\times\Sigma)$ as a manifold with boundary. Also:
$$[\Psi_{\Sigma_t}]\in [C_0^{\infty}(\Sigma)]=D(A)\subseteq D(A_E)\;\;\forall t\in[t_n(K),t_{n+1}(K)),$$
$$\Psi|_{\Sigma_{t_n(K)}}=\phi_{t_n(K)}\text{ and }\partial_t \Psi|_{\Sigma_0}=\dot{\phi}_{t_n(K)}.$$
By the uniqueness theorem (Lemma~\ref{uniquenessiii}), $\phi=\Psi$ in $[t_n(K),t_{n+1}(K))\times \Sigma$. Thus $\phi=0$ on $[t_n(K),t_{n+1}(K))\times\Sigma\backslash J(K)$. But since $\phi$ is smooth, then also $\partial_t \phi=0$ on $[t_n(K),t_{n+1}(K))\times\Sigma\backslash J(K)$. In particular then, $\phi=0$ on $\Sigma_{t_{n+1}(K)}\backslash J(K)=N_1$ and $P(n+1)$ is proven. Hence $\supp\phi\cap[0,t_n(K)]\times\Sigma\subseteq J(K) \text{ for all } n$. But as $t_n(K)\nearrow t^\infty(K)$, then $\supp\phi\cap[0,t^\infty(K))\times\Sigma\subseteq J(K)$, and by continuity: $$\supp\phi\cap[0,t^\infty(K)]\times\Sigma\subseteq J(K).$$
Finally, since the spacetime is symmetric around $\Sigma_0$, we have: $$\supp\phi\cap[-t^\infty(K),t^\infty(K)]\times\Sigma\subseteq J(K).$$ 
\end{proof}

\section{Energy form on the Space of Solutions}\label{sec:bilinearform}

In Sections~\ref{sec:bilinearform} to \ref{symmetries}, we shall prove the existence of certain structures on the space of solutions $S_E$ (Definition~\ref{spaceofsolutions}), corresponding to a particular acceptable s.a.e.\ $A_E$. Specifically, we shall show the existence of an energy form, a symplectic form and certain symmetries: time translation and time-reversal. These were all conditions placed on the dynamics in the paper by Wald and Ishibashi~\cite{c}. It is important for us to show that these conditions are in fact necessary, even in our extended case of dynamics generated by an acceptable s.a.e.\ $A_E$. In this section we show that there is a natural bilinear symmetric form $E$ on our constructed space of solutions $S_E$ to the Klein-Gordon equation. In general, it is not a norm. However, if our choice of acceptable self-adjoint extension $A_E$ is positive and zero is not an eigenvalue, then $E$ is a norm on $S_E$.

Given two pairs of smooth Cauchy data: $(\phi_0,\dot{\phi}_0),(\phi_0',\dot{\phi}_0')\in \chi_E^2\subseteq C^\infty(\Sigma)^2$ then we have by the existence of Wald solutions (Theorem~\ref{waldsolutions}) two corresponding solutions $\phi, \phi'$ to the Klein Gordon equation on our spacetime. For each time $t\in \mathbb{R}$ we define the energy at time $t$ to be:
$$E(\phi,\phi')(t)=\langle \dot{\phi}_t,\dot{\phi}_t'\rangle_{\Sigma_t} +\langle \phi_t,A_E \phi_t'\rangle_{\Sigma_t}$$ 
Our task is to show that $E(\phi,\phi')$ is in fact independent of time. Remember that: 
\begin{align*}
\phi_t&=C(t,A_E)\phi_0 + S(t,A_E)\dot{\phi}_0\\
\dot{\phi}_t&=-A_E S(t,A_E)\phi_0 +C(t,A_E)\dot{\phi}_0
\end{align*}
Thus, for all $t\in \mathbb{R}$:
\begin{align*}
E(\phi,\phi')(t)&=\langle -A_E S(t,A_E)\phi_0 +C(t,A_E)\dot{\phi}_0,-A_E S(t,A_E)\phi_0'+C(t,A_E)\dot{\phi}_0'\rangle \\
&\;\;\;\;\;\;\;+\langle C(t,A_E)\phi_0 + S(t,A_E)\dot{\phi}_0,A_E C(t,A_E)\phi_0' + A_E S(t,A_E)\dot{\phi}_0' \rangle\\
&=\langle A_E S(t,A_E)\phi_0,A_E S(t,A_E)\phi_0'\rangle -\langle A_E S(t,A_E)\phi_0,C(t,A_E)\dot{\phi}_0'\rangle \\
&\;\;\;\;\;\;\;-\langle C(t,A_E)\dot{\phi}_0,A_E S(t,A_E)\phi_0'\rangle +\langle C(t,A_E)\dot{\phi}_0,C(t,A_E)\dot{\phi}_0'\rangle\\
&\;\;\;\;\;\;\;+\langle C(t,A_E)\phi_0 ,A_E C(t,A_E)\phi_0'\rangle + \langle C(t,A_E)\phi_0 ,A_E S(t,A_E)\dot{\phi}_0' \rangle\\
&\;\;\;\;\;\;\;+\langle S(t,A_E)\dot{\phi}_0,A_E C(t,A_E)\phi_0'\rangle+ \langle S(t,A_E)\dot{\phi}_0,A_E S(t,A_E)\dot{\phi}_0' \rangle\\
&=\langle \phi_0,A_E(A_E S(t,A_E)^2+C(t,A_E)^2)\phi_0'\rangle \\
&\;\;\;\;\;\;\;+ \langle \dot{\phi}_0,(A_E S(t,A_E)^2+C(t,A_E)^2)\dot{\phi}_0'\rangle\\
&=\langle \phi_0,A_E\phi_0'\rangle + \langle \dot{\phi}_0,\dot{\phi}_0'\rangle\\
&=E(\phi,\phi')(0),
\end{align*}
where we have used the following identity: $A_E S(t,A_E)^2+C(t,A_E)^2=\mathbb{I}$ on $[\chi_E]$. Hence $E(t)$ has the same value at all times. Using the linear isomorphism $\Psi\colon \chi_E\times \chi_E \rightarrow S_E$ between $\chi_E^2$ and the space of solutions $S_E$ defined in Proposition~\ref{spaceofsolutions} then $E$ defined above is a bilinear symmetric form on $S_E$ (the symmetry of $E$ follows easily since as $A_E$ is self-adjoint it is certainly symmetric) and is called the \textbf{energy form}. 

\section{The Symplectic Form on the Space of Solutions}\label{sec:sympform}

Similarly to the previous section, we show that there exists a natural symplectic form on the real vector space of our space of solutions $S_E$. Given two pairs of smooth Cauchy data $(\phi_0,\dot{\phi}_0),(\phi_0',\dot{\phi}_0')\in \chi_E^2\subseteq C^\infty(\Sigma)^2$, then we have by the existence of Wald solutions (Theorem~\ref{waldsolutions}) two corresponding solutions $\phi, \phi'$ to the Klein Gordon equation on our spacetime. For each time $t\in \mathbb{R}$ we define the \textbf{symplectic form} at time $t$ to be:
$$\sigma_E(\phi,\phi')(t)=\langle \phi_t,\dot{\phi}_t'\rangle -\langle \dot{\phi}_t,\phi_t'\rangle$$
We show again  that this form is independent of time. For all $t \in \mathbb{R}$:
\begin{align*}
\sigma_E(\phi,\phi')(t)&=\langle C(t,A_E)\phi_0 + S(t,A_E)\dot{\phi}_0,-A_E S(t,A_E)\phi_0' +C(t,A_E)\dot{\phi}_0'\rangle\\
&\;\;\;\;\;+\langle A_E S(t,A_E)\phi_0 -C(t,A_E)\dot{\phi}_0,C(t,A_E)\phi_0' + S(t,A_E)\dot{\phi}_0'\rangle\\
&=-\langle C(t,A_E)\phi_0,A_E S(t,A_E)\phi_0'\rangle +\langle C(t,A_E)\phi_0,C(t,A_E)\dot{\phi}_0'\rangle\\
&\;\;\;\;\;-\langle S(t,A_E)\dot{\phi}_0,A_E S(t,A_E)\phi_0'\rangle + \langle S(t,A_E)\dot{\phi}_0,C(t,A_E)\dot{\phi}_0'\rangle\\
&\;\;\;\;\;+\langle A_E S(t,A_E)\phi_0, C(t,A_E)\phi_0'\rangle +\langle A_E S(t,A_E)\phi_0,S(t,A_E)\dot{\phi}_0'\rangle\\
&\;\;\;\;\;-\langle C(t,A_E)\dot{\phi}_0,C(t,A_E)\phi_0'\rangle -\langle C(t,A_E)\dot{\phi}_0,S(t,A_E)\dot{\phi}_0'\rangle\\
&=\langle \phi_0,(A_E S(t,A_E)^2+C(t,A_E)^2)\dot{\phi}_0'\rangle\\
&\;\;\;-\langle \dot{\phi}_0,(A_E S(t,A_E)^2+C(t,A_E)^2)\phi_0'\rangle\\
&=\langle \phi_0,\dot{\phi}_0'\rangle-\langle \dot{\phi}_0,\phi_0'\rangle\\
&=\sigma_E(\phi,\phi')(0)
\end{align*}
Here, we have again made use of the identity $A_E S(t,A_E)^2+C(t,A_E)^2=\mathbb{I}$ on $[\chi_E]$. Thus we have a map $\sigma_E\colon S_E\times S_E\rightarrow \mathbb{R}$, where $S_E$ is the real vector space of solutions. It is clearly bilinear, antisymmetric and also weakly nondegenerate, since if $\phi\in S_E$ is non-zero then (by uniqueness) $(\phi_0,\dot{\phi}_0)\neq (0,0)\in \chi_E\times\chi_E$. Consequently, let $\phi'=\Psi(-\dot{\phi}_0,\phi_0)$. Then, $\sigma_E(\phi,\phi')=||\phi_0||^2+||\dot{\phi}_0||^2>0$ as either $\phi_0$ or $\dot{\phi}_0$ is non-zero and so has non-zero norm (as both are continuous). Thus, $(S_E,\sigma_E)$ is a real symplectic space.

\section{Symmetries}\label{symmetries}

In this section, we derive some symmetries satisfied by the linear isomorphism $\Psi:\chi_E\times\chi_E\rightarrow S_E$ defined in Proposition~\ref{spaceofsolutionsiso}. Consider the maps $T_t,P:C^\infty(M)\rightarrow C^\infty(M)$ given by:
\begin{align*}
(T_tF)(s,x)&=F(s-t,x)\\
(PF)(s,x)&=F(-s,x)
\end{align*}
\begin{prop}
Given a standard static spacetime and the linear operator $A$ defined as usual on the Hilbert space $L^2(\Sigma, d\vol_h)$ then for any acceptable s.a.e.\ $A_E$ of $A$. The maps $T_t$ and $P$ satisfy: $T_t, P\colon\; S_E\rightarrow S_E$.
Then letting $\phi_t=\Psi(\phi_0,\dot{\phi}_0)|_{\Sigma_t}$ and 
$\dot{\phi}_t=\partial_t\Psi(\phi_0,\dot{\phi}_0)|_{\Sigma_t}$ we have:
\begin{align*}
\Psi(\phi_t, \dot{\phi}_t)&=T_{-t}[\Psi(\phi_0,\dot{\phi}_0)]\\
\Psi(\dot{\phi}_0,-A_E\phi_0)&=\frac{\partial}{\partial t}[\Psi(\phi_0, \dot{\phi}_0)]\\
\Psi(\phi_0,-\dot{\phi}_0)&=P[\Psi(\phi_0,\dot{\phi}_0)]
\end{align*}
In particular, this also proves that $\frac{\partial}{\partial t}\colon\;\chi_E\rightarrow\chi_E$. Additionally, for all $\Psi_1,\Psi_2\in S_E$:
\begin{align*}
E(T_t \Psi_1, T_t\Psi_2)&=E(\Psi_1,\Psi_2)\\
E(P \Psi_1, P\Psi_2)&=E(\Psi_1,\Psi_2)\\
\sigma_E(T_t \Psi_1, T_t\Psi_2)&=\sigma_E(\Psi_1,\Psi_2)\\
\sigma_E(P \Psi_1, P\Psi_2)&=-\sigma_E(\Psi_1,\Psi_2)
\end{align*}
\end{prop}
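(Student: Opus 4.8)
The plan is to establish the three transformation identities first, using the uniqueness of Wald solutions, and then to read off the four form identities as immediate consequences of the time-independence of $E$ and $\sigma_E$ proved in Sections~\ref{sec:bilinearform} and~\ref{sec:sympform}. Throughout write $\phi=\Psi(\phi_0,\dot\phi_0)$ with slices $\phi_t=\phi|_{\Sigma_t}$ and $\dot\phi_t=\partial_t\phi|_{\Sigma_t}$. The first task is to check that $T_t$ and $P$ genuinely map $S_E$ into itself. In the coordinate form $\largesquare_g+m^2=V^{-2}\partial_t^2-V^{-1}D^iVD_i+m^2$ the coefficients do not depend on $t$, and $\partial_t^2$ is invariant under both $s\mapsto s-t$ and $s\mapsto -s$; hence the Klein--Gordon operator commutes with $T_t$ and with $P$, so both send solutions to solutions. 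For the domain conditions defining $S_E$, note that the time-slices of $T_t\phi$ at $\Sigma_s$ are $\phi_{s-t}$ and $\dot\phi_{s-t}$, while those of $P\phi$ are $\phi_{-s}$ and $-\dot\phi_{-s}$; since each slice of $\phi$ lies in $\chi_E$ and $\chi_E$ is a vector space, all of these lie in $\chi_E$, so $T_t\phi,P\phi\in S_E$.

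Next I would prove the three identities via the uniqueness part of Proposition~\ref{spaceofsolutionsiso} (equivalently Theorem~\ref{uniqueness1}): two elements of $S_E$ with the same Cauchy data on $\Sigma_0$ coincide. For the first and third, I simply read off data at $s=0$. Since $(T_{-t}\phi)(s,\cdot)=\phi_{s+t}$, the solution $T_{-t}\phi$ has data $(\phi_t,\dot\phi_t)$, matching $\Psi(\phi_t,\dot\phi_t)$; since $(P\phi)(s,\cdot)=\phi_{-s}$ with velocity $-\dot\phi_{-s}$, the solution $P\phi$ has data $(\phi_0,-\dot\phi_0)$, matching $\Psi(\phi_0,-\dot\phi_0)$. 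For the second identity I differentiate $[\phi_t]=C(t,A_E)[\phi_0]+S(t,A_E)[\dot\phi_0]$ strongly using Proposition~\ref{strongderivs} (the cases $n=1$ for $C$ and $n=0$ for $S$), obtaining $\frac{d}{dt}[\phi_t]=-A_E S(t,A_E)[\phi_0]+C(t,A_E)[\dot\phi_0]=[\dot\phi_t]$, so $\partial_t\phi$ has first datum $\dot\phi_0$; its second datum is $\partial_t^2\phi|_{\Sigma_0}=-A_E\phi_0$ by Lemma~\ref{stuff3}. This gives $\partial_t\phi=\Psi(\dot\phi_0,-A_E\phi_0)$ once one confirms $\partial_t\phi\in S_E$, i.e. $-A_E\phi_0\in\chi_E$; this holds because the defining domains $D(A_E^\infty)$ and $\bigcap_{t>0}D(\exp((A_E^-)^{1/2}t))$ are invariant under $A_E$, which is precisely the claim $\frac{\partial}{\partial t}\colon\chi_E\to\chi_E$.

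For the four form identities I substitute the slice descriptions into the definitions and invoke conservation. For $T_t$, $E(T_t\phi,T_t\phi')(s)=\langle \dot\phi_{s-t},\dot\phi'_{s-t}\rangle+\langle \phi_{s-t},A_E\phi'_{s-t}\rangle=E(\phi,\phi')(s-t)$, which equals $E(\phi,\phi')$ by time-independence, and identically for $\sigma_E$. For $P$ the chain rule flips the sign of the velocity slice, and the two sign flips cancel in the energy: $E(P\phi,P\phi')(s)=\langle -\dot\phi_{-s},-\dot\phi'_{-s}\rangle+\langle \phi_{-s},A_E\phi'_{-s}\rangle=E(\phi,\phi')(-s)=E(\phi,\phi')$, whereas in $\sigma_E(P\phi,P\phi')(s)=\langle \phi_{-s},-\dot\phi'_{-s}\rangle-\langle -\dot\phi_{-s},\phi'_{-s}\rangle=-\sigma_E(\phi,\phi')(-s)=-\sigma_E(\phi,\phi')$ exactly one factor is reflected, producing the overall minus sign.

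The only point requiring genuine care, and hence the main obstacle, is the justification underlying the second identity: namely the identification of the pointwise spacetime derivative $\partial_t\phi|_{\Sigma_t}$ with the strong $L^2$-derivative $\frac{d}{dt}[\phi_t]$ furnished by Proposition~\ref{strongderivs}, together with the verification that $A_E(\chi_E)\subseteq\chi_E$ so that $\partial_t\phi\in S_E$. Once these are settled, every remaining step is either a direct appeal to uniqueness or a purely formal manipulation exploiting the already-established conservation of $E$ and $\sigma_E$.
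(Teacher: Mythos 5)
Your proposal is correct, but it reaches the three transformation identities by a genuinely different route from the paper. The paper proves them by direct functional-calculus computation on the slices: it expands $\Psi(\phi_t,\dot{\phi}_t)(s,x)$ and $\Psi(\phi_0,-\dot{\phi}_0)(t,x)$ in terms of $C$ and $S$ and invokes the addition formulas $C(s+t,A_E)=C(s,A_E)C(t,A_E)-A_E S(s,A_E)S(t,A_E)$ and $S(s+t,A_E)=S(s,A_E)C(t,A_E)+C(s,A_E)S(t,A_E)$, together with the evenness of $C(\cdot,A_E)$ and oddness of $S(\cdot,A_E)$ in $t$; the membership $T_t\phi,\,P\phi\in S_E$ then falls out of these explicit formulas as a byproduct rather than being established beforehand. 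You instead check first that $T_t$ and $P$ preserve $S_E$ (using that the Klein--Gordon operator has $t$-independent coefficients and is even in $\partial_t$, and that $\chi_E$ is a vector space), and then conclude by the uniqueness underlying Proposition~\ref{spaceofsolutionsiso}/Theorem~\ref{uniqueness1}: an element of $S_E$ is determined by its Cauchy data on $\Sigma_0$. This is more structural and bypasses the addition formulas entirely, except in the second identity, where you, like the paper, still rely on the strong-derivative formulas of Proposition~\ref{strongderivs} and Lemma~\ref{stuff3} and on the identification of the pointwise with the strong $t$-derivative that is built into the construction of Wald solutions. The treatment of the four form identities is the same in both arguments (substitute the slices and use the conservation results of Sections~\ref{sec:bilinearform} and~\ref{sec:sympform}). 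One point deserves emphasis: both proofs need $A_E\phi_0\in\chi_E$ (equivalently, that $A_E$ maps $D(A_E^\infty)\cap\bigcap_{t>0}D(\exp((A_E^-)^{1/2}t))$ into itself) merely for $\Psi(\dot{\phi}_0,-A_E\phi_0)$ to be well defined; you correctly flag this as the delicate step but state it slightly loosely (the intersection with $D(A_E^\infty)$ is essential, since $A_E$ is not defined on $\bigcap_{t>0}D(\exp((A_E^-)^{1/2}t))$ alone) and do not prove it. It follows from a short spectral estimate, splitting the spectral integral at $\lambda=0$ and using $\lambda^2 e^{2x^-(\lambda)^{1/2}t}\leq \lambda^2 + C_t\,e^{2x^-(\lambda)^{1/2}(t+1)}$; the paper glosses over this point as well, so this is a shared omission rather than a defect of your approach.
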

(Note that the first five properties correspond to Assumptions 2(i), 2(ii), 3(i) and 3(ii) in Wald and Ishibashi \cite{c}.)
\begin{proof}
\begin{align*}
\Psi(\phi_t, \dot{\phi}_t)(s,x)&=\left[\begin{array}{l}C(s,A_E)(C(t,A_E)\phi_0+S(t,A_E)\dot{\phi}_0)\\
+S(s,A_E)(-A_E S(t,A_E)\phi_0+C(t,A_E)\dot{\phi}_0)\end{array}\right](x)\\
&=[C(s+t,A_E)\phi_0+S(s+t,A_E)\dot{\phi}_0](x)\\
&=\Psi(\phi_0, \dot{\phi}_0)(t+s,x)\\
&=T_{-t}(\Psi(\phi_0,\dot{\phi}_0))(s,x)
\end{align*}
$$\Psi(\dot{\phi}_0,-A_E\phi_0)(t,s)=\left[C(t,A_E)\dot{\phi}_0+S(t,A_E)(-\phi_0))\right](x)
=\dot{\phi}_t(x)
=\frac{\partial}{\partial t}[\Psi(\phi_0,\dot{\phi}_0)](t,x)$$
%\begin{align*}
%\Psi(\dot{\phi}_0,-A_E\phi_0)(t,s)&=\left[C(t,A_E)\dot{\phi}_0+S(t,A_E)(-\phi_0))\right](x)\\
%&=\dot{\phi}_t(x)\\
%&=\frac{\partial}{\partial t}[\Psi(\phi_0,\dot{\phi}_0)](t,x)
%\end{align*}
\begin{align*}
\Psi(\phi_0,-\dot{\phi}_0)(t,x)&=[C(t,A_E)\phi_0+S(t,A_E)(-\dot{\phi}_0))](x)\\
&=[C(-t,A_E)\phi_0+S(-t,A_E)\dot{\phi}_0)](x)\\
&=\Psi(\phi_0,\dot{\phi}_0)(-t,x)\\
&=P(\Psi(\phi_0,\dot{\phi}_0))(t,x)
\end{align*}
The remaining properties are easily proven from the time independence of $E(\Psi_1,\Psi_2)(t)$ and $\sigma(\Psi_1,\Psi_2)(t)$ (Sections~\ref{sec:bilinearform} and~\ref{sec:sympform}). 
\end{proof}

\section{Examples}\label{examples}
In this section we shall discuss a few simple examples of standard static spacetimes. In all the examples we examine we shall let $V=1$ for simplicity. Thus the spacetime $(M,g)=(\mathbb{R}\times \Sigma, dt^2-h)$ and the solutions to the Cauchy problem of the Klein-Gordon equation constructed in this paper for each of these spacetimes will be indexed by the acceptable s.a.e.s $A_E$ of the symmetric linear operator $A$ on $L^2(\Sigma, d\vol_h)$ generated by the partial differential operator (also labelled by) $A=-\divergence_h \grad_h$, minus the Laplace-Beltrami operator, with $D(A)=[C_0^\infty(\Sigma)]$.

We shall also only consider the case of the solving the Klein-Gordon case for complex-valued data and so we only consider complex Hilbert spaces. Note that it's only on complex Hilbert spaces that we can define the deficiency spaces $H^\pm$ of a densely defined operator $A$ as $H^\pm:=\ker(A^*\mp i)$. We note the following theorem (see Theorems 83.1 and 85.1 in Akhiezer and Glazman \cite{x}):
\begin{thm}\label{glazman}
Let $A$ be a positive symmetric linear operator with equal and finite deficiency indices, that is, denoting $n^{\pm}\colon=\dim \ker(A^*\mp i)$, we have $n^+=n^-=n < \infty$. Then every s.a.e.\ $A_E$ of $A$ is  bounded-below. Furthermore, every s.a.e.\ $A_E$ has the same continuous spectrum as $A$, each of the s.a.e.s has only a finite number of negative eigenvalues and the sum of the multiplicities of the negative eigenvalues of any particular s.a.e.\ $A_E$ is not greater than $n$.
\end{thm}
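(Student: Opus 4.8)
The plan is to compare an arbitrary self-adjoint extension $A_E$ with the Friedrichs extension $A_F$ of $A$, exploiting that, with finite deficiency indices, the two differ only by a finite-rank modification of the resolvent. First I would construct $A_F$; since the Friedrichs construction preserves the lower bound of a semibounded operator, $A_F\geq 0$ and hence $\sigma(A_F)\subseteq[0,\infty)$, so in particular $A_F$ has no negative spectrum. Note that the positivity of $A$ already forces $n^+=n^-$ (for any $z$ with $\operatorname{Re}z<0$ one has $\|(A-z)u\|\geq -\operatorname{Re}(z)\,\|u\|$, so the whole left half-plane lies in the field of regularity and the defect number is constant and equal to $n$ there); the role of the finiteness hypothesis is precisely to make this common defect space $n$-dimensional. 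In fact $\dim\ker(A^{*}-\lambda)=n$ for every $\lambda<0$, a fact I will use in the counting step.

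Next I would package the extensions by a boundary triple $(\mathbb{C}^{n},\Gamma_0,\Gamma_1)$ for $A^{*}$, chosen so that $A_F=A^{*}\big|_{\ker\Gamma_0}$, and use Krein's resolvent formula: for $z$ in the common resolvent set the difference
\[
(A_E-z)^{-1}-(A_F-z)^{-1}
\]
is a finite-rank operator of rank at most $n$, expressed through the associated $n\times n$ Weyl (Nevanlinna) function $M(z)$. Weyl's theorem on the stability of the essential spectrum under a finite-rank (hence relatively compact) perturbation of the resolvent then gives $\sigma_{\mathrm{ess}}(A_E)=\sigma_{\mathrm{ess}}(A_F)\subseteq[0,\infty)$. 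As this holds for every pair of extensions, the essential spectrum is common to all of them, which yields the continuous-spectrum assertion (in the Akhiezer--Glazman sense the continuous spectrum of $A$ is exactly this common essential spectrum), and it shows that any negative spectrum of $A_E$ is discrete: isolated eigenvalues of finite multiplicity.

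The crux is to upgrade ``discrete negative spectrum'' to ``at most $n$ negative eigenvalues counted with multiplicity'', which simultaneously delivers the multiplicity bound, finiteness, and boundedness below (finitely many isolated negative eigenvalues cannot march off to $-\infty$). Here I would use the Herglotz character of $M$. On the interval $(-\infty,0)\subseteq\rho(A_F)$ the matrix $M(\lambda)$ is Hermitian with $M'(\lambda)=\gamma(\lambda)^{*}\gamma(\lambda)>0$ strictly (with $\gamma$ the $\gamma$-field of the triple), so each of its $n$ eigenvalue branches is a strictly increasing function of $\lambda$. Writing the extension as $A_E=A^{*}\big|_{\ker(\Gamma_1-B\Gamma_0)}$ for a self-adjoint relation $B$, the negative eigenvalues of $A_E$ are exactly the zeros of $\det\bigl(B-M(\lambda)\bigr)$ on $(-\infty,0)$, with multiplicity $\dim\ker\bigl(B-M(\lambda)\bigr)$. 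Since $B-M(\lambda)$ is a Hermitian matrix family that is strictly decreasing in $\lambda$, each of its $n$ eigenvalue branches crosses $0$ at most once; hence the total number of zeros, counted with multiplicity, is at most $n$. This caps the summed multiplicity of the negative eigenvalues by $n$ and, a fortiori, bounds them below.

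Assembling the pieces proves all four assertions: boundedness below and the eigenvalue count come from the monotone-branch argument, while equality of the continuous (essential) spectra is Weyl's theorem applied to the finite-rank (rank $\leq n$) resolvent difference. I expect the third step to be the main obstacle, since one must justify the Krein eigenvalue-matching $\det(B-M(\lambda))=0$ together with its multiplicities and handle the self-adjoint \emph{relation} $B$ (rather than a mere Hermitian matrix) so as to include the reference extension itself; once the finite-rank resolvent formula and the strict monotonicity of $M$ are in place, the remaining perturbation theory is standard.
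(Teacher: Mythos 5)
Your proposal is sound, but there is nothing in the paper to compare it against line by line: the paper does not prove Theorem~\ref{glazman} at all, it quotes it from Akhiezer and Glazman \cite{x} (Theorems 83.1 and 85.1). Measured against the classical argument behind that citation, your route agrees on one half and is much heavier on the other. For the continuous-spectrum statement both arguments are the same: Krein's formula makes the resolvent difference of two s.a.e.s an operator of rank at most $n$, and Weyl's theorem on invariance of the essential spectrum under finite-rank resolvent perturbations, together with the Friedrichs extension $A_F\geq 0$, puts the common essential spectrum inside $[0,\infty)$. For boundedness below and the eigenvalue count, however, the classical proof replaces your entire boundary-triple apparatus (the $\gamma$-field, the strict monotonicity $M'(\lambda)=\gamma(\lambda)^{*}\gamma(\lambda)>0$, the correspondence $\dim\ker(A_E-\lambda)=\dim\ker\bigl(B-M(\lambda)\bigr)$, and the self-adjoint-relation bookkeeping you rightly flag as the main obstacle) with a short variational argument: by von Neumann's formulas $D(\overline{A})$ has codimension exactly $n$ in $D(A_E)$; if the spectrum of $A_E$ in $(-\infty,0)$ were unbounded below, or contained essential spectrum, or had total eigenvalue multiplicity exceeding $n$, then spectral projections onto disjoint negative intervals would yield an $(n+1)$-dimensional subspace $L\subseteq D(A_E)$ with $\langle A_Eu,u\rangle<0$ for all nonzero $u\in L$; since $\dim\bigl(L\cap D(\overline{A})\bigr)\geq\dim L-n=1$, this contradicts $\langle\overline{A}u,u\rangle\geq 0$. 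Thus semiboundedness, finiteness, and the bound $n$ all follow at once, with no Weyl function. Your approach is nevertheless correct --- each ingredient is a standard fact of boundary-triple theory --- and it proves slightly more than asked, locating the negative eigenvalues as the zeros of $\det\bigl(B-M(\lambda)\bigr)$ rather than merely counting them. Two points deserve care if you write it up: the identification of the ``continuous spectrum of $A$'' (a symmetric, non-self-adjoint operator) with the common essential spectrum of its extensions is a definitional convention in the Akhiezer--Glazman sense and must be stated as such; and the monotone-branch count should invoke the min--max ordering of the eigenvalues of the Hermitian family $B-M(\lambda)$, so that strict matrix monotonicity really yields strictly decreasing branches, each vanishing at most once.
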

Thus if $A$ has finite deficiency indices then in particular every s.a.e.\ $A_E$ of $A$ is acceptable. In all the following examples the deficiency indices are finite and are equal to 0, 1 or 2.

\subsection{Self-Adjoint Extensions of minus the Laplacian on $S^1$}\label{circle}
In our first example we let $\Sigma=S^1$. We equip $S^1$ with its (unique) differential structure, the Riemannian metric induced from that on $\mathbb{R}^2$ and the induced smooth measure from this metric. Since $S^1$ is compact in its topology induced from $\mathbb{R}^2$, then it is also compact in its topology induced from the Riemannian metric $h$ (Theorem~\ref{Riemmetric}), so it is also complete in this metric and so complete as a Riemannian manifold by the Hopf-Rinow Theorem (See e.g.\ Theorem 6.13 Lee \cite{p}). Thus the linear operator $A$ given by $D(A)=[C_0^\infty(S^1)]=[C^\infty(S^1)]$, $A([\phi])=-[\phi'']\text{ for } \phi\in C^\infty(S^1)$ is essentially self-adjoint by Theorem~\ref{esa}. Thus $\overline{A}=A^*$ is the unique s.a.e.\ of $A$ and  
$$D(\overline{A})=W^{2,2}(S^1)=\{\phi\in L^2(S^1) \text{ s.t. } \phi',\;\phi''\in L^2(S^1)\}.$$
Note the Sobolev space $W^{2,2}(S^1)$ is defined in Appendix~D.3 of Bullock~\cite{me}, where we are implicitly adopting the standard Riemannian metric on $S^1$ as on all the manifolds in Section~\ref{examples}.

The spectrum of $\overline{A}$ is shown in the appendix to be:
$$\sigma(\overline{A})=\sigma_{disc}(\overline{A})=\{n^2\colon\; n\in \mathbb{N}_0\}.$$
If we identify $S^1\backslash \{1\}$ with $(0,2\pi)$ by the chart: $\phi\colon\; U=S^1\backslash \{1\}\rightarrow (0,2\pi)$, $\phi^{-1}(\theta)=\exp i\theta$, then define the function $g\colon U\times U\times \mathbb{C}\backslash\{n^2\colon\; n\in\mathbb{N}\}\rightarrow \mathbb{C}$ by:
$$g(\theta, \phi; \lambda)=\frac{i}{2\sqrt{\lambda}}\left[ \exp i\sqrt{\lambda}|\theta-\phi|+\frac{2\cos \sqrt{\lambda}(\theta-\phi)}{\exp (-2\pi i\sqrt{\lambda})-1}\right].$$
As $\{1\}\subseteq S^1$ is clearly null, $h$ generates a well-defined integral kernel.

The Green's function for $\lambda\in \rho(\overline{A})=\mathbb{C}\backslash \{n^2\colon \;n\in \mathbb{N}_0\}$ is given by $g(\cdot,\cdot,\lambda)$, which does not depend on the choice of square root of $\lambda$ used to define it.

\subsection{Self-Adjoint Extensions of minus the Laplacian on $(0,\infty)$}\label{halfline}
In the remaining cases the domains $D(A_E)$ of the s.a.e.s of $A$ shall be given by conditions placed on the domain of the adjoint of $A$, that is $D(A^*)$. Since $A\leq A_E\leq A^*$ then all the s.a.e.s of $A$ are restrictions of $A^*$ to their domain $D(A_E)$. The conditions placed on the domain will be in terms of ``trace maps''. The derivation of these maps is to be found in Lions and Magenes \cite{t}. 

\begin{thm}\label{boundaryvalues}
Let $\Omega$ be an open interval of $\mathbb{R}$. Consider the linear maps: 
\begin{align*}
&\rho\colon\;C_0^\infty(\overline{\Omega})\rightarrow \mathbb{C}^{|\partial \Omega|},\;\; \phi\mapsto \phi|_{\partial \Omega}\\
&\tau\colon\;C_0^\infty(\overline{\Omega})\rightarrow \mathbb{C}^{|\partial \Omega|},\;\; \phi\mapsto \phi'|_{\partial \Omega}.
\end{align*}
These maps extend by continuity to a unique continuous maps $$\rho,\tau \colon\;W^{2,2}(\Omega)\rightarrow \mathbb{C}^{|\partial \Omega|}.$$ Letting $\Phi=(\rho,\tau)$, then $\Psi$ is linear and surjective. Additionally: $W_0^{2,2}(\Omega)=ker \Psi=\{\phi\in W^{2,2}(\Omega)\colon\; \rho(\phi)=\tau(\phi)=0\}$.
\end{thm}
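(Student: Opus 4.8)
The plan is to reduce everything to the one-dimensional Sobolev embedding. First I would record that, since $\Omega$ is an interval, $W^{2,2}(\Omega)$ embeds continuously into $C^1(\overline{\Omega})$: for $u\in C_0^\infty(\overline{\Omega})$ and any (finite) boundary point $p\in\partial\Omega$, the fundamental theorem of calculus combined with Cauchy--Schwarz yields a bound $|u(p)|+|u'(p)|\leq C\|u\|_{W^{2,2}(\Omega)}$ with $C$ depending only on $\Omega$. Hence the evaluation functionals $\rho$ and $\tau$ are continuous on $(C_0^\infty(\overline{\Omega}),\|\cdot\|_{W^{2,2}})$. Since $C_0^\infty(\overline{\Omega})$ is dense in $W^{2,2}(\Omega)$ (standard for intervals, which have the segment property), each of $\rho,\tau$ admits a continuous extension to all of $W^{2,2}(\Omega)$, and this extension is unique because a continuous map is determined by its values on a dense subset. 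This settles the first assertion and the well-definedness of $\Phi=(\rho,\tau)\colon W^{2,2}(\Omega)\to\mathbb{C}^{|\partial\Omega|}\times\mathbb{C}^{|\partial\Omega|}$.

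For surjectivity of $\Phi$ I would simply exhibit preimages. When $\partial\Omega=\emptyset$ (i.e.\ $\Omega=\mathbb{R}$) the codomain is trivial. Otherwise, near each finite endpoint $p$ I choose a smooth cutoff supported in a small one-sided neighbourhood of $p$ in $\overline{\Omega}$ and equal to $1$ close to $p$; multiplying such cutoffs by the monomials $(x-p)^0$ and $(x-p)^1$ produces, by Hermite-type interpolation, functions in $C_0^\infty(\overline{\Omega})$ realising any prescribed pair $(u(p),u'(p))$ while vanishing near the other endpoint. As these lie in $C_0^\infty(\overline{\Omega})\subseteq W^{2,2}(\Omega)$, their finitely many linear combinations surject onto $\mathbb{C}^{|\partial\Omega|}\times\mathbb{C}^{|\partial\Omega|}$.

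The identity $W_0^{2,2}(\Omega)=\ker\Phi$ is the substantive part. The inclusion $W_0^{2,2}(\Omega)\subseteq\ker\Phi$ is immediate: every $u\in C_0^\infty(\Omega)$ vanishes together with $u'$ near $\partial\Omega$, so $\rho(u)=\tau(u)=0$; since $\ker\Phi$ is closed (by continuity of $\Phi$) and $W_0^{2,2}(\Omega)$ is by definition the $W^{2,2}$-closure of $C_0^\infty(\Omega)$, the inclusion follows. For the reverse inclusion I take $u\in W^{2,2}(\Omega)$ with $\rho(u)=\tau(u)=0$ and extend it by zero to $\tilde u$ on $\mathbb{R}$. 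Computing distributional derivatives by integration by parts, the jump of $u$ across each endpoint equals $\rho(u)=0$, so $\tilde u'=\widetilde{u'}$ with no boundary delta; likewise the jump of $u'$ equals $\tau(u)=0$, so $\tilde u''=\widetilde{u''}$. Thus $\tilde u\in W^{2,2}(\mathbb{R})$ with $\supp\tilde u\subseteq\overline{\Omega}$. I then push this support strictly into the interior of $\Omega$ (for a bounded interval by a slight dilation towards an interior point, for a half-line by an inward translation after first truncating at large distances) and mollify; translation and dilation are continuous on $W^{2,2}$, so for a sufficiently fine mollification parameter this yields functions in $C_0^\infty(\Omega)$ converging to $u$ in $W^{2,2}(\Omega)$, whence $u\in W_0^{2,2}(\Omega)$.

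I expect the third step to be the main obstacle, specifically the claim that the zero-extension $\tilde u$ lies in $W^{2,2}(\mathbb{R})$: this is exactly where both trace conditions $\rho(u)=0$ and $\tau(u)=0$ are used, and it must be justified by the integration-by-parts computation showing that neither a delta nor a delta-derivative is produced at the endpoints. Once that regularity of $\tilde u$ is in hand, the inward-translation/dilation-then-mollify approximation is routine bookkeeping, with the only extra care being the truncation at infinity in the unbounded cases (which affects nothing essential, as the traces live only at the finite endpoints).
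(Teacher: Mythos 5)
Your proof is correct, but it takes a genuinely different route from the paper: the paper offers no argument at all for this theorem, simply deferring to the general trace theory of Lions and Magenes, whereas you give a self-contained, purely one-dimensional proof. Concretely, you replace the machinery of that reference (traces on $W^{s,p}$ of domains in $\mathbb{R}^n$, interpolation spaces, etc.) with three elementary ingredients: the embedding $W^{2,2}(\Omega)\hookrightarrow C^1(\overline{\Omega})$ via the fundamental theorem of calculus and Cauchy--Schwarz, which gives continuity of $\rho$ and $\tau$ and hence their unique extensions by density of $C_0^\infty(\overline{\Omega})$; explicit cutoff-times-monomial functions $\chi_p(x)\bigl(\alpha+\beta(x-p)\bigr)$ for surjectivity; and, for the substantive identity $W^{2,2}_0(\Omega)=\ker\Phi$, the zero-extension argument in which both trace conditions are exactly what kills the delta and delta-prime terms in the distributional derivatives of $\tilde u$, followed by dilation/translation into the interior (with truncation at infinity in the unbounded case) and mollification. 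Each step you flag as delicate is handled correctly; in particular the integration-by-parts computation showing $\tilde u\in W^{2,2}(\mathbb{R})$ is legitimate because elements of $W^{2,2}(\Omega)$ have $C^1$ representatives on $\overline{\Omega}$, and the extended traces agree with pointwise evaluation of those representatives by density. What your approach buys is transparency and independence from outside references, at the cost of being tied to dimension one; the paper's citation buys generality (the Lions--Magenes theory would also cover higher-dimensional $\Sigma$ with boundary), which is irrelevant here since the theorem as stated concerns only intervals. You also silently repair two slips in the statement itself: the map called $\Psi$ is the $\Phi=(\rho,\tau)$ just defined, and its codomain for the surjectivity claim is $\mathbb{C}^{|\partial\Omega|}\times\mathbb{C}^{|\partial\Omega|}$.
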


Note that $C_0^\infty(\overline{\Omega})$ is defined as the space of smooth functions on the smooth manifold with boundary $\overline{\Omega}$ which are of compact support. If $\overline{\Omega}$ is compact then clearly $C_0^\infty(\overline{\Omega})=C^\infty(\overline{\Omega})$. Note also that we define: $$W_0^{2,2}(\Omega)\colon=\overline{[C_0^\infty(\Omega)]}^{W^{2,2}(\Omega)},$$the closure of $[C_0^\infty(\Omega)]$ in the Sobolev norm on $W^{2,2}(\Omega)$. 

If $\Sigma=(0,\infty)$ then the s.a.e.s of $A$ are indexed by $\alpha\in (-\pi/2,\pi/2]$, denoted $A_\alpha$. Their domains are given by:
$$D(A_\alpha)=\{\phi\in W^{2,2}(0,\infty)\text{ s.t. } \cos \alpha \;\rho(\phi)=\sin \alpha \;\tau(\phi)\}$$
The spectra of these s.a.e.s are given by the following: 
$$\sigma(A_\alpha)=\left\{\begin{array}{ll}[0,\infty)&\text{ for } \alpha\in[0,\pi/2]\\
\left[0,\infty)\cup\{-\cot^2\alpha\}\right.&\text{ for } \alpha\in(-\pi/2,0)\end{array}\right.$$
The pure point spectrum $\sigma_{pp}(A_\alpha)$, continuous spectrum $\sigma_{cont}(A_\alpha)$ and discrete spectrum\linebreak $\sigma_{disc}(A_\alpha)$ of the operator $A_\alpha$ are given by the following statements:
\begin{enumerate}
\item $\sigma_{cont}(A_\alpha)=[0,\infty)$ for all $\alpha$.
\item If $\alpha\in [0,\frac{\pi}{2}]$: $\sigma(A_\alpha)=\sigma_{cont}(A_\alpha)=[0,\infty)$.
\item If $\alpha\in (-\frac{\pi}{2},0)$: $\sigma_{pp}(A_\alpha)=\sigma_{disc}(A_\alpha)=\{-\cot^2\alpha\}$.
\end{enumerate}
Thus for $\alpha\in [0,\frac{\pi}{2}]$, $A_\alpha$ is a positive s.a.e.\ of $A$ and for $\alpha\in (-\frac{\pi}{2},0)$, $A_\alpha$ is not positive but it is bounded-below. Thus all the s.a.e.s of $A$ are acceptable according to Definition~\ref{acceptablesae}.

For completeness we also now give the Green's function for each s.a.e.\ $A_\alpha$, that is $g\colon(0,\infty)\times(0,\infty)\times
\rho(A_\alpha)\rightarrow \mathbb{C}$ given by:
$$g(x,\xi,\lambda)=A[\cos\alpha\sin(\sqrt\lambda
x_<)+\sqrt\lambda\sin\alpha\cos(\sqrt\lambda x_<)]\exp(i\sqrt{\lambda}x_>),$$
where $A=[\sqrt\lambda (\cos\alpha-i\sqrt\lambda\sin\alpha)]^{-1}$,
$x_>=\max\{x,\xi\}$, $x_<=\min\{x,\xi\}$ and $\sqrt\lambda=a+bi, b>0$ is defined as the
unique square root of $\lambda$ in the upper-half plane, possible since $\lambda\nin
[0,\infty)$. These statements are proven in Appendix~G of Bullock~\cite{me}.

\subsection{Self-Adjoint Extensions of minus the Laplacian on $(0,a)$}\label{interval}
Now consider the case where $\Sigma=(0,a)$. Again, the domains of the s.a.e.s of $A$ are given in terms of the elements of $W^{2,2}(0,a)$ satisfying conditions placed on them via the trace map. The set of all s.a.e.s of $A$ are given by the Dirichlet extension and two groups of extensions which we shall describe presently. The first group shall also contain the Neumann extension. 

We shall give a very brief explanation of the origin of this classification of the self-adjoint extensions of the $A$. We refer the reader to Posilicano \cite{v}. The set of s.a.e.s of $A$ is indexed by pairs:
$$\left\{(\Pi,\Theta)\colon \begin{array}{l}\Pi \text{ is an orthogonal projection operator on the Hilbert space $\mathbb{C}^2$}\\ 
\Theta \text{ is a bounded s.a. linear operator on the Hilbert space $Im(\Pi)$} 
\end{array}\right\}.$$  
Given the pair $(\Pi,\Theta)$ the s.a.e.\ $A_{\Pi,\Theta}$ is then given by:
$$D(A_{\Pi,\Theta})=\left\{\phi\in W^{2,2}(0,a)\colon\;\rho\phi\in Im(\Pi),\;\Pi\tau\phi=\Theta\rho\phi\right\},$$
where:
\begin{align*}
&\rho\colon W^{2,2}(0,a)\rightarrow\mathbb{C}^2,\;\; \rho(\phi)=\left(\begin{array}{l}\phi(0)\\ \phi(a)\end{array}\right)\\
&\tau\colon W^{2,2}(0,a)\rightarrow \mathbb{C}^2, \;\;\tau(\phi)=\left(\begin{array}{l}\phi'(0)\\ -\phi'(a)\end{array}\right).
\end{align*}
Here, we are implicitly using Theorem~\ref{boundaryvalues}. Note that $\tau$ evaluates the inward-pointing derivative at the boundary, hence the sign. 

There are three natural collections of s.a.e.s of $A$ according as $rank(\Pi)=0,1\text{ or }2$. Picking $rank(\Pi)=0$, then we have $\Pi=0$ and $\Theta=0$. This is the Dirichlet s.a.e.\ $A_D$ (which we may also call the \textbf{s.a.e.\ of the zeroth kind}), defined by the following domain:  
$$D(A_D)=\{\phi\in W^{2,2}(0,a)\text{ s.t. } \phi(0)=\phi(1)=0\}.$$

Picking $rank(\Pi)=2$, we obtain the next collection of s.a.e.s, which we shall call the \textbf{s.a.e.s of the first kind}, in agreement with the language of Posilicano~\cite{v}. They are obtained by setting $\Pi=\mathbb{I}$. Then  $Im\Pi=\mathbb{C}^2$ and let $\Theta$ be defined by a self-adjoint complex $2\times 2$ matrix $$\theta=\left(\begin{array}{cc}
\theta_{11} & \theta_{12}\\
\overline{\theta}_{12} & \theta_{22}
\end{array}\right),$$where $\theta_{11}, \theta_{22}\in \mathbb{R}, \theta_{12}\in \mathbb{C}$. The domain of the extension $D(A_\theta)$ is defined as those elements $\phi\in W^{2,2}(0,a)$ such that $\Pi\tau\phi=\Theta\rho\phi$, that is:
$$\left(\begin{array}{c}\phi'(0)\\-\phi'(a)\end{array}\right)=\left(\begin{array}{cc}
\theta_{11} & \theta_{12}\\
\overline{\theta}_{12} & \theta_{22}
\end{array}\right)\left(\begin{array}{l}\phi(0)\\\phi(a)\end{array}\right),$$
or:
$$D(A_\theta)=\left\{\phi\in W^{2,2}(0,a)\text{ s.t.: }
\begin{array}{l}
\theta_{11}\phi(0)-\phi'(0)+\theta_{12}\phi(a)=0\\
\overline{\theta_{12}}\phi(0)+\theta_{22}\phi(a)+\phi'(a)=0
\end{array}\right\}.$$
Note that letting $\theta_{11}=\theta_{22}=\theta_{12}=0$ we obtain the Neumann extension $A_N=A_0$:
$$D(A_N)=\{\phi\in W^{2,2}(0,a)\text{ s.t. } \phi'(0)=\phi'(a)=0\}.$$

Picking $rank(\Pi)=1$, we obtain the last collection of s.a.e.s, which we shall call the \textbf{s.a.e.s of the second kind}. They are obtained by setting $$\Pi=w\otimes w=\left(\begin{array}{ll}|w_1|^2 & w_1\overline{w_2}\\
\overline{w_1}w_2&|w_2|^2\end{array}\right),$$
where $w=(w_1,w_2)\in\mathbb{C}^2$ is a unit vector, spanning a one-dimensional subspace in $\mathbb{C}^2$ and defining $\Pi$ by orthogonal projection onto this subspace. We set $\Theta$ to be defined as multiplication by $\theta\in\mathbb{R}$. These s.a.e.s are then indexed by triples: $\{(w_1,w_2,\theta)\colon\;w_1,w_2\in \mathbb{C} \text{ s.t. } |w_1|^2+|w_2|^2=1 \text{ and } \theta\in \mathbb{R}\}$. The domain of the extension $D(A_{w_1w_2\theta})$ is then those elements $\phi\in W^{2,2}(0,a)$ such that:
$$\rho(\phi)=\left(\begin{array}{l}\phi(0)\\\phi(a)\end{array}\right)\in Im\Pi=\left\langle\left(\begin{array}{l}w_1\\w_2\end{array}\right)\right\rangle$$
\begin{align*}
\left(\begin{array}{l}\theta\phi(0)\\\theta\phi(a)\end{array}\right)=\theta\rho\phi&=\Pi\tau\phi\\
&=\Pi\left(\begin{array}{c}\phi'(0)\\-\phi'(a)\end{array}\right)\\
&=\left(\begin{array}{ll}|w_1|^2 & w_1\overline{w_2}\\
\overline{w_1}w_2&|w_2|^2\end{array}\right)\left(\begin{array}{c}\phi'(0)\\-\phi'(a)\end{array}\right)
\end{align*}
Then, from the first condition: $w_2\phi(0)=w_1\phi(a)$. And from the second: 
\begin{align*}
\overline{w_1}\theta\phi(0)+\overline{w_2}\theta\phi(a)&=\overline{w_1}|w_1|^2\phi'(0)-\overline{w_2}|w_1|^2\phi'(a)+\overline{w_1}|w_2|^2\phi'(0)-\overline{w_2}|w_2|^2\phi'(a)\\
&=\overline{w_1}\phi'(0)-\overline{w_2}\phi'(a)
\end{align*}
Thus, $$D(A_{w_1w_2\theta})=\left\{\phi\in W^{2,2}(0,a)\text{ s.t.: }
\begin{array}{l}
w_2\phi(0)-w_1\phi(a)=0\\
\overline{w_1}(\theta\phi(0)-\phi'(0))+\overline{w_2}(\theta\phi(a)+\phi'(a))=0
\end{array}\right\}.$$
\begin{rem}
Note that replacing $w_i$ with $w_i e^{i\phi}$ for $i=1,2$ then we obtain identical boundary conditions and so the same s.a.e.\ of $A$. Clearly this is because both choices yield the same 1-dimensional subspace in $\mathbb{C}^2$ and so the same orthogonal projection operator $\Pi$.  
\end{rem}
We shall now give the spectra of all these self-adjoint extensions and the their corresponding Green's functions. By analysing the spectra or by the finiteness of the deficiency indices and using Theorem~\ref{glazman}, all s.a.e.s are bounded-below. These results can either be reached via the approach of Posilicano~\cite{v} (Example 5.1) or the methods of Stakgold~\cite{y}. The proofs of all but the case of the Dirichlet extension are found in Section~H of Bullock~\cite{me}. The case of the Dirichlet extension itself is simpler, along similar lines and is to be found in Stakgold~\cite{y}. The numbering below corresponds to the three groupings of s.a.e.s previously introduced: s.a.e.s of zeroth, first and second kinds. (We denote $\mathbb{N}_0=\mathbb{N}\cup\{0\}$.)

\begin{enumerate}
\setcounter{enumi}{-1}
\item We have $\sigma(A_D)=\left\{\left(\frac{n\pi}{a}\right)^2,\; n\in \mathbb{N}\right\}.$

\item If $\lambda\neq 0$ then $\lambda\in\sigma(A_\theta)$ iff
\begin{align*}
0=\;&\theta_{11}\sqrt{\lambda}\cos\sqrt{\lambda}a+\theta_{22}\sqrt{\lambda}\cos\sqrt{\lambda}a-\lambda\sin\sqrt{\lambda}a\\ 
&+\theta_{11}\theta_{22}\sin\sqrt{\lambda}a-|\theta_{12}|^2\sin\sqrt{\lambda}a+2\Re(\theta_{12})\sqrt{\lambda}
\end{align*}
(note that the validity of this condition is independent of which square root of $\lambda$ we take) and: $$0\in \sigma(A_\theta)\text{ iff } 
a|\theta_{12}|^2-\theta_{11}-a\theta_{11}\theta_{22}-\theta_{22}-2\Re(\theta_{12})=0.$$
For instance, letting $\theta_{11}=\theta_{22}=\theta_{12}=0$, we have  $\sigma(A_N)=\left\{\left(\frac{n\pi}{a}\right)^2,\; n\in \mathbb{N}_0\right\}.$
\item If $\lambda\neq 0$ then $\lambda\in\sigma(A_{w_1w_2\theta})$ iff $$-\sqrt{\lambda}\cos\sqrt{\lambda}a+2\Re (w_1\overline{w_2})\sqrt{\lambda}-\theta\sin\sqrt{\lambda}a=0$$
and $0\in \sigma(A_{w_1w_2\theta})$ iff $a\theta-2\Re(w_1\overline{w_2})+1=0$.
\end{enumerate}

The Green's functions for each of these cases is treated in the following:
\begin{enumerate}
\setcounter{enumi}{-1}
\item The Green's function for the Dirichlet extension $A_D$ is given in terms of the kernel $g(x,y;\lambda)$. For $\lambda\in \mathbb{C}\backslash \{(\frac{n\pi}{a})^2, n\in \mathbb{N}_0\}$, define:
$$g(x,y;\lambda)=\frac{\sin\sqrt{\lambda}(a-x_>)\sin\sqrt{\lambda}x_<}{\sqrt{\lambda}\sin a\sqrt{\lambda}},$$
where $x_<\colon=\min\{x,y\}$, $x_>\colon=\max\{x,y\}$. And for $\lambda=0$, let:
$$g(x,y;0)=\frac{(a-x_>)x_<}{a}.$$

\item The Green's function for the s.a.e.\ of the first kind is given by the following. For $\lambda\in\rho(A_\theta)\backslash\{0\}$:
$$\begin{array}{l}
g(x,y;\lambda)\\
=A\left[\begin{array}{l}
\lambda \cos\sqrt \lambda(a-x_>)\cos \sqrt\lambda x_<+\theta_{22}\sqrt\lambda\sin \sqrt\lambda(a-x_>)\cos\sqrt\lambda x_<\\
+\theta_{11}\sqrt\lambda \cos \sqrt\lambda(a-x_>)\sin \sqrt\lambda x_< +\theta_{11}\theta_{22}\sin \sqrt\lambda(a-x_>)\sin \sqrt\lambda x_<\\
+|\theta_{12}|^2\sin \sqrt\lambda (x_>-a)\sin \sqrt\lambda x_< + C(x,y)(\theta_{12})\sqrt\lambda \sin\sqrt\lambda(x_<-x_>)\end{array}\right],
\end{array}$$
where
$$A^{-1}=\sqrt{\lambda}\left[\begin{array}{l}
\theta_{11}\sqrt{\lambda}\cos\sqrt{\lambda}a+\theta_{22}\sqrt{\lambda}\cos\sqrt{\lambda}a-\lambda\sin\sqrt{\lambda}a\\ 
+\theta_{11}\theta_{22}\sin\sqrt{\lambda}a-|\theta_{12}|^2\sin\sqrt{\lambda}a+2\Re(\theta_{12})\sqrt{\lambda}\end{array}\right],$$
and for $k\in\mathbb{C}$, 
$$C(x,y)(k)=\left\{\begin{array}{l}
k \text{ if } x<y.\\
\overline{k} \text{ if } x\geq y. \end{array}\right.$$

If $0\in \rho(A_\theta)$, then
$$g(x,y,0)=A\left[\begin{array}{l}
(a-x_>)x_<|\theta_{12}|^2-\theta_{11}x_<+(x_>-a)x_<\theta_{11}\theta_{22}\\
+(x_>-a)\theta_{22}-1+C(x,y)(\theta_{12})(x_>-x_<)\end{array}\right],$$
where $A^{-1}=a|\theta_{12}|^2-\theta_{11}-a\theta_{11}\theta_{22}-\theta_{22}-2\Re(\theta_{12}).$

In particular, the Green's function for the Neumann extension $A_N$ is given as follows:

For $\lambda\in \mathbb{C}\backslash \{(\frac{n\pi}{a})^2, n\in \mathbb{N}_0\}$, define:
$$g(x,y;\lambda)=-\frac{\cos\sqrt{\lambda}(a-x_>)\cos\sqrt{\lambda}x_<}{\sqrt{\lambda}\sin a\sqrt{\lambda}}.$$

\item The Green's function for the s.a.e.\ of the second kind is given as follows. 

For $\lambda\in\rho(A_{w_1w_2\theta})\backslash \{0\}$,
$$\begin{array}{l}
g(x,y;\lambda)\\
=A\left[\begin{array}{l}
|w_1|^2\sqrt{\lambda}\sin\sqrt{\lambda}(x_>-a)\cos\sqrt{\lambda}x_<+\sqrt{\lambda}C(x,y)(w_1\overline{w_2})\sin\sqrt{\lambda}(x_<-x_>)\\
+\theta \sin\sqrt{\lambda}(x_>-a)\sin\sqrt{\lambda}x_<-|w_2|^2\sqrt{\lambda}\cos\sqrt{\lambda}(x_>-a)\sin\sqrt{\lambda}x_<\end{array}\right],
\end{array}$$
where
$A^{-1}=\sqrt{\lambda}\left[-\sqrt{\lambda}\cos\sqrt{\lambda}a+2\Re (w_1\overline{w_2})\sqrt{\lambda}-\theta\sin\sqrt{\lambda}a\right].$

If $0\in \rho(A_{w_1w_2\theta})$, then
$$g(x,y;0)=A[\theta(a-x_>)x_<+C(x,y)(w_1\overline{w_2})(x_>-x_<)+|w_1|^2(a-x_>)+|w_2|^2x_<],$$
where $A^{-1}=a\theta-2\Re(w_1\overline{w_2})+1.$
\end{enumerate}

\subsection{Self-adjoint Extensions of minus the Laplacian plus mass}\label{plusmass}
We shall show here that the s.a.e.s of the operator $A=-\divergence_h\circ\grad_h+m^2$ is easily given in terms of the s.a.e.s of $-\divergence_h\circ\grad_h$. The corresponding Green's functions are then easily constructible. This situation is covered by the following more general problem:
\begin{prop}\label{plusmasses}
Let $A$ be a linear operator on a (real or complex) Hilbert space $H$. For $\mu\in \mathbb{R}$ define the linear operator $A+\mu$ via the domain $D(A+\mu)=D(A)$, $(A+\mu)\phi=A\phi+\mu\phi$. Then the following are true:
\begin{enumerate}
\item $A$ is closable iff $A+\mu$ is closable.
\item If $A$ is closable then: $\overline{A+\mu}=\overline{A}+\mu$.
\item $A$ is self-adjoint iff $A +\mu$ self-adjoint.
\item $A$ is e.s.a. iff $A+\mu$ is e.s.a..
\item If $A$ is a symmetric linear operator and $\{A_\gamma\colon\;\gamma\in \Gamma\}$ are all the s.a.e.s of $A$ ($\Gamma=\phi$ is possible). Then the s.a.e.s of $A+\mu$ are precisely $\{A_\gamma+\mu\colon\;\gamma\in \Gamma\}$. Additionally, $\sigma(A_\gamma+\mu)=\sigma(A_\gamma)+\mu$ and if $G_\lambda$ is the resolvent of $A_\gamma$ at $\lambda\in \rho(A_\gamma)$ then $G_{\lambda+\mu}$ is the resolvent of $A_\gamma +\mu$ for $\lambda+\mu\in\rho(A_\gamma+\mu)$.
\end{enumerate}
\end{prop}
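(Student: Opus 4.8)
The plan is to derive all five parts from the single structural fact that the perturbation is by the operator $\mu I$, which is bounded, everywhere defined on $H$, and self-adjoint (the last precisely because $\mu\in\mathbb{R}$), together with the observation that the entire statement is symmetric under $\mu\mapsto-\mu$, so that every ``iff'' need only be verified in one direction. The cornerstone I would establish first is the adjoint identity
\[
(A+\mu)^* = A^* + \mu, \qquad D\big((A+\mu)^*\big) = D(A^*).
\]
This is a direct domain computation: for $y\in H$ the map $x\mapsto\langle(A+\mu)x,y\rangle=\langle Ax,y\rangle+\mu\langle x,y\rangle$ (using that $\mu$ is real) is continuous on $D(A)$ if and only if $x\mapsto\langle Ax,y\rangle$ is, so $y\in D((A+\mu)^*)$ iff $y\in D(A^*)$, and then $(A+\mu)^*y=A^*y+\mu y$.

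For Parts 1 and 2 I would use the graph picture. The map $\Theta\colon H\times H\to H\times H$, $\Theta(x,y)=(x,y+\mu x)$, is a bounded linear bijection with bounded inverse, and it carries the graph of $A$ onto the graph of $A+\mu$; since a homeomorphism sends closed sets to closed sets and commutes with closure, $A+\mu$ is closable iff $A$ is (Part 1), and $\Theta$ maps $\overline{\mathrm{graph}(A)}=\mathrm{graph}(\overline{A})$ onto $\mathrm{graph}(\overline{A}+\mu)$, forcing $\overline{A+\mu}=\overline{A}+\mu$ (Part 2). The same facts read off elementarily from the observation that a sequence $x_n\to x$ in $D(A)$ has $Ax_n$ convergent iff it has $(A+\mu)x_n$ convergent, the two differing only by $\mu x_n\to\mu x$.

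Parts 3 and 4 are then formal consequences. For Part 3, if $A=A^*$ the adjoint identity gives $(A+\mu)^*=A^*+\mu=A+\mu$; the converse follows by applying this to $A+\mu$ with shift $-\mu$. For Part 4, $A$ is e.s.a.\ iff it is closable with $\overline{A}$ self-adjoint, and closability transfers by Part 1, $\overline{A+\mu}=\overline{A}+\mu$ by Part 2, and self-adjointness of $\overline{A}$ is equivalent to that of $\overline{A}+\mu$ by Part 3 applied to the closed operator $\overline{A}$.

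For Part 5 I would first check that $A_\gamma\mapsto A_\gamma+\mu$ is a bijection between the s.a.e.s of $A$ and those of $A+\mu$: if $A\subseteq A_\gamma$ with $A_\gamma$ self-adjoint then $A+\mu\subseteq A_\gamma+\mu$ and $A_\gamma+\mu$ is self-adjoint by Part 3, while conversely any s.a.e.\ $B$ of $A+\mu$ yields the s.a.e.\ $B-\mu$ of $A$, so the assignment is inverted by $B\mapsto B-\mu$ (here one only needs that adding the everywhere-defined $\mu I$ preserves the containment $A\subseteq A_\gamma$). The spectral assertions then reduce to the trivial operator identity
\[
(A_\gamma+\mu)-(\lambda+\mu)=A_\gamma-\lambda,
\]
which shows $\lambda+\mu\in\rho(A_\gamma+\mu)$ iff $\lambda\in\rho(A_\gamma)$, hence $\sigma(A_\gamma+\mu)=\sigma(A_\gamma)+\mu$, and that the resolvent of $A_\gamma+\mu$ at $\lambda+\mu$ is exactly $G_\lambda$, so the Green's function is unchanged by the shift. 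I do not expect any genuine obstacle here: the boundedness of $\mu I$ removes all domain pathology, and the only point demanding care is the domain bookkeeping in the adjoint identity, on which Parts 3--5 all rest.
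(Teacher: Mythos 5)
Your proof is correct and complete, and it is exactly the kind of direct-from-the-definitions verification the paper has in mind: the paper's entire proof is the single sentence ``The proposition is easily proven directly by the definitions of closability, self-adjointness etc.'' Your adjoint identity $(A+\mu)^*=A^*+\mu$, the graph homeomorphism $\Theta(x,y)=(x,y+\mu x)$ for Parts 1--2, and the shift identity $(A_\gamma+\mu)-(\lambda+\mu)=A_\gamma-\lambda$ for Part 5 supply precisely the routine details the paper chose to omit.
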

\begin{proof}
The proposition is easily proven directly by the definitions of closability, self-adjointness etc. 
\end{proof}
We now apply this proposition to the problem of finding the s.a.e.s of the Klein-Gordon operator on a Riemannian manifold, for which we already know all the s.a.e.s of minus the Laplacian. For instance let $\Sigma=S^1$. This was treated in Section~\ref{circle}. Let $H=L^2(S^1)$ (the Borel measure on $S^1$ being induced by the Riemannian metric on $S^1$). Define the linear operator $A$ on $H$:
\begin{align*}
D(A)&=[C_0^\infty(S^1)]=[C^\infty(S^1)]\\
A([\phi])&=-[\phi'']\text{ for } \phi\in C^\infty(S^1).
\end{align*}

Consider $B=A+m^2$. We are using the previous notation. So $D(B)=D(A)$. Then, according to the previous proposition, as $A$ is e.s.a. so also is $B$ and:
$$D(\overline{B})=D(\overline{A})=W^{2,2}(S^1)=\{\phi\in L^2(S^1) \text{ s.t. } \phi',\;\phi''\in L^2(S^1)\}.$$
Since $\sigma(\overline{A})=\sigma_{disc}(\overline{A})=\{n^2\colon \;n\in \mathbb{N}_0\}$, then the spectrum of $\overline{B}$ is:
$$\sigma(\overline{B})=\sigma_{disc}(\overline{B})=\{n^2+m^2\colon \;n\in \mathbb{N}_0\}.$$
Using the chart: $\phi\colon\; U=S^1\backslash \{1\}\rightarrow (0,2\pi)$, $\phi^{-1}(\theta)=\exp i\theta$,
define the function $g\colon\; U\times U\times \mathbb{C}\backslash\{n^2+m^2\colon\; n\in\mathbb{N}_0\}\rightarrow \mathbb{C}$ by:
$$g(\theta, \phi; \lambda)=\frac{i}{2\sqrt{\lambda-m^2}}\left[ \exp i\sqrt{\lambda-m^2}|\theta-\phi|+\frac{2\cos \sqrt{\lambda-m^2}(\theta-\phi)}{\exp (-2\pi i\sqrt{\lambda-m^2})-1}\right].$$

Clearly as before, this expression for $g$ does not depend on the choice of square root of $\lambda-m^2$ taken. It follows from Proposition~\ref{plusmasses} that $g$ so defined is the Green's function for $\overline{B}$, that is, it generates its resolvent. 

\subsection{Example of Wald Dynamics satisfying $\supp\phi\nsubseteq J(K)$}\label{counterexample}
We shall show here by means of the examples just given that there exist simple standard static spacetimes and Wald  dynamics generated by a s.a.e.\ $A_E$ such that $\supp\phi\nsubseteq J(K)$ for some initial data $(\phi_0,\dot{\phi}_0)$, where $K=\supp\phi_0\cup \supp\dot{\phi}_0$.

Consider the example considered in Section~\ref{interval}, that is $\Sigma=(0,a)$, so $M=\mathbb{R}\times (0,a)$, $g=dt^2-dx^2$. (See Figure~\ref{counter}). Take, for instance, 
$\theta=${\tiny$\begin{pmatrix}
0 & 1 \\
1 & 0
\end{pmatrix}$}
%\left(\begin{array}{ll}0&1\\
%1&0\end{array}\right)
and pick the s.a.e.\ $A_\theta$ of $A$. Thus its domain is given by: $D(A_\theta)=\{\phi\in W^{2,2}(0,a)\colon\;\phi'(0)=\phi(a),\; \phi(0)=-\phi'(a)\}.$
From this we can see that if $\phi'(t, a)\neq 0$ then so is $\phi(t,0)$ and hence $\phi$ is non-zero in a neighbourhood of $(t,0)$ and so also non-zero at points outside $J(K)$.

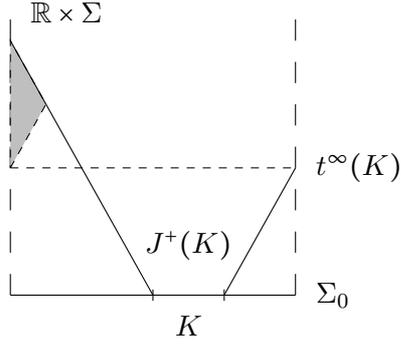
\begin{SCfigure} \centering
\begin{tikzpicture}[scale=0.75] 
\draw[dash pattern=on 8pt off 8pt] (0,0) -- (0,5) node[label=right:$\mathbb{R}\times\Sigma$]{};
\draw[dash pattern=on 8pt off 8pt] (5,0) -- (5,5);
\draw(5,0) node[label=right:$\Sigma_0$]{};
\draw(5,2.25) node[label=right:$t^\infty(K)$]{};

\draw(3.125,0) node[label=below:$K$]{};
\draw(0,0)--(5,0);

\draw(3.125,1.5) node[label=below:$J^+(K)$]{};
\draw (2.5,-0.1)--(2.5,0.1);
\draw (3.75,-0.1)--(3.75,0.1);

\draw (3.75,0)--(5,2.25);
\draw[dashed] (0,2.25)--(5,2.25);
\draw[black,dashed,fill=gray!50](0,2.25)--(0.625,3.375)--(0,4.5)--cycle;
\draw (2.5,0)--(0,4.5);
\end{tikzpicture}
\caption{Wald dynamics satisfying: $\supp\phi\nsubseteq J(K)$, where $(M,g)=(\mathbb{R}\times (0,1),dt^2-dx^2)$.
For some s.a.e.s $A_E$ there exist points in the shaded area at which $\phi$ is non-zero though they are clearly not contained in $J^+(K)$.\hspace{2em}}
\label{counter}
\end{SCfigure}

\subsection{Example of non-bounded below acceptable self-adjoint extensions of minus the Laplacian}\label{acceptablenonbounded}
We shall, in this section, construct examples of non-bounded below acceptable s.a.e.s of minus the Laplacian on certain choices of simple Riemannian manifolds (though our example shall be on a disconnected manifold). In particular, this shows that the class of solutions to the Cauchy problem of the Klein-Gordon equation on non-globally hyperbolic spacetimes constructed in this paper a nontrivial extension of the application of theory of Wald \cite{b} from bounded-below s.a.e.s to acceptable s.a.e.s (Wald considered only those that were positive).

Before we begin the construction, we shall briefly describe some necessary background. It concerns the the direct sum of linear operators on Hilbert spaces. Given a sequence $H_n$ of (real or complex) Hilbert spaces, then the direct sum is defined as usual as: $$H=\bigoplus_{n\in\mathbb{N}}H_n:=\left\{(\phi_n)_n\in \mathbb{N}\text{ such that }\sum_{n\in\mathbb{N}}||\phi_n||^2_n<\infty\right \},$$
where $||\cdot||_n$ is the norm in the Hilbert space $H_n$. 

\begin{defn}
For each $n\in\mathbb{N}$, let $A_n$ be a linear operator on the Hilbert space $H_n$. Then define the linear operator $A$ on $H$ as the direct sum of the linear operators $A_n$ as follows: 
\begin{align*}
D(A)&=\left \{\phi=(\phi_n)_{n\in\mathbb{N}}\text{ such that } \phi_n\in D(A_n)\text{ and } \sum_n||A_n\phi_n||_n^2<\infty\right\} \\
(A\phi)_n&=A_n\phi_n.
\end{align*}
We then define the countable direct sum $\bigoplus_{n\in\mathbb{N}}A_n$ of the operators $A_n$ to be the operator $A$.
\end{defn}
\begin{prop}\label{directsumops}
The following are true:
\begin{enumerate}
\item If all the linear operators $A_n$ are densely defined (closed, symmetric, self-adjoint), then $A$ is densely-defined (closed, symmetric, self-adjoint) respectively. 
\item If all the linear operators $A_n$ are bounded, then the sequence $(||A_n||_n)$ is bounded iff $A=\bigoplus_{n\in\mathbb{N}}A_n$ is a bounded linear operator.
\item The spectrum $\sigma(A)$ is obtained from the spectra $\sigma(A_n)$ by the relation: $\sigma(A)=\bigcup_n\sigma(A_n)$.
\item If each operator $A_n$ is a orthogonal projection operator on $H_n$, then $A$ is an orthogonal projection operator on $H$.
\item If all the operators $P_n$ are projection-valued measures (p.v.m.s) on $H_n$, then $P$ is a p.v.m.\ on $H$ defined by: $$P_\Omega=\bigoplus_{n\in\mathbb{N}}(P_n)_\Omega \text{ for each }\Omega\subseteq \mathbb{R} \text{ Borel.}$$
We shall denote this p.v.m.\ $P$ by $\bigoplus_{n\in\mathbb{N}}P_n$.
\item Let all the operators $A_n$ be self-adjoint. If $P_n$ is the projection-valued measure (p.v.m.) on $H_n$ associated to $A_n$ via the spectral theorem and $P$ is the p.v.m.\ on $H$ associated to $A$, then:$$P=\bigoplus_{n\in\mathbb{N}}P_n.$$
\item If all the operators $A_n$ are self-adjoint and $f:\mathbb{R}\rightarrow \mathbb{R}$ is Borel measurable, then: $$f(\bigoplus_{n\in\mathbb{N}}A_n)=\bigoplus_{n\in\mathbb{N}}f(A_n),$$
where we are using the spectral theorem to define the self-adjoint operators $f(\bigoplus_{n\in\mathbb{N}}A_n)$ and $f(A_n)$.
\end{enumerate}
\end{prop}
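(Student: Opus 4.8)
The plan is to reduce every assertion to a componentwise statement about the $A_n$, supplemented by the uniform ($\ell^2$) control built into the definition of the direct sum, and to arrange the seven parts in order of logical dependence: parts (1) and (2) are proved directly, part (4) is then immediate, part (3) is a resolvent estimate, part (5) is a verification of the projection-valued-measure axioms, and parts (6) and (7) follow from (5) together with the uniqueness clause of the spectral theorem. Throughout I would work with the dense subspace $\mathcal{D}\subseteq H$ of sequences $(\phi_n)$ with $\phi_n\in D(A_n)$ and only finitely many $\phi_n$ nonzero; since each $D(A_n)$ is dense in $H_n$, $\mathcal{D}$ is dense in $H$ and is contained in $D(A)$, which already gives the ``densely defined'' clause of (1).

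For the remainder of (1): closedness follows because, if $\phi^{(k)}\to\phi$ and $A\phi^{(k)}\to\psi$ in $H$, then for each fixed $n$ one has $\phi_n^{(k)}\to\phi_n$ and $A_n\phi_n^{(k)}\to\psi_n$ in $H_n$, so closedness of $A_n$ gives $\phi_n\in D(A_n)$ and $A_n\phi_n=\psi_n$, while $\sum_n\|A_n\phi_n\|_n^2=\|\psi\|^2<\infty$ places $\phi$ in $D(A)$. Symmetry is the bilinear identity $\langle A\phi,\theta\rangle=\sum_n\langle A_n\phi_n,\theta_n\rangle_n=\sum_n\langle\phi_n,A_n\theta_n\rangle_n=\langle\phi,A\theta\rangle$ on $D(A)$. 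The self-adjointness clause is where I expect the only genuine subtlety in (1)--(2): I would first prove the lemma $A^*=\bigoplus_n A_n^*$. The inclusion $\bigoplus_n A_n^*\subseteq A^*$ is a short computation; for the reverse, given $\psi\in D(A^*)$ with $A^*\psi=\eta$, I would test against vectors supported in a single summand, i.e.\ $\phi=(0,\dots,\phi_n,\dots,0)$ with $\phi_n\in D(A_n)$, to obtain $\langle A_n\phi_n,\psi_n\rangle_n=\langle\phi_n,\eta_n\rangle_n$, whence $\psi_n\in D(A_n^*)$, $A_n^*\psi_n=\eta_n$, and $\sum_n\|A_n^*\psi_n\|_n^2=\|\eta\|^2<\infty$. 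If every $A_n$ is self-adjoint this gives $A^*=\bigoplus_n A_n=A$. Part (2) is the estimate $\|A\phi\|^2=\sum_n\|A_n\phi_n\|_n^2\le(\sup_m\|A_m\|^2)\|\phi\|^2$ for one direction, and restriction to the $n$-th summand (an isometric reducing subspace) for the bound $\|A_n\|\le\|A\|$ in the other.

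Part (4) is then immediate: each $A_n$ is bounded by $1$, so (2) makes $A$ bounded, and (1) together with $A^2=\bigoplus_n A_n^2=\bigoplus_n A_n=A$ gives $A=A^*=A^2$. For (3) I would show $\sigma(A)=\overline{\bigcup_n\sigma(A_n)}$ (the closure, since $\sigma(A)$ is always closed): if $\lambda\in\sigma(A_{n_0})$ then $A-\lambda$ fails to be boundedly invertible already on the reducing subspace $H_{n_0}$, so $\sigma(A_{n_0})\subseteq\sigma(A)$ and hence $\overline{\bigcup_n\sigma(A_n)}\subseteq\sigma(A)$; conversely if $\lambda$ lies a positive distance $\delta$ from every $\sigma(A_n)$ then $\|(A_n-\lambda)^{-1}\|\le\delta^{-1}$ uniformly, so $\bigoplus_n(A_n-\lambda)^{-1}$ is bounded by (2) and inverts $A-\lambda$. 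For (5) I would check the p.v.m.\ axioms componentwise: $P_\emptyset=0$, $P_{\mathbb{R}}=I$, and multiplicativity $P_{\Omega_1\cap\Omega_2}=P_{\Omega_1}P_{\Omega_2}$ are coordinatewise identities, each $P_\Omega$ is an orthogonal projection by (4), and the one point needing care is strong countable additivity, where for disjoint $(\Omega_k)$ and fixed $\phi$ I would interchange the two sums in $\sum_k\langle\phi,P_{\Omega_k}\phi\rangle=\sum_k\sum_n\langle\phi_n,(P_n)_{\Omega_k}\phi_n\rangle_n$ by nonnegativity (Tonelli) and then upgrade weak to strong convergence using that the $P_\Omega$ are projections.

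The heart of the argument, and the step I expect to be the main obstacle, is (6), after which (7) is routine. Using (5), $P:=\bigoplus_n P_n$ is a genuine p.v.m.\ on $H$, so by the uniqueness clause of the spectral theorem it suffices to verify $\int_{\mathbb{R}}\lambda\,dP(\lambda)=A$, meaning equality of both action and domain. For the domain I would write the scalar measure $\mu_\phi(\Omega)=\langle\phi,P_\Omega\phi\rangle=\sum_n\mu^{(n)}_{\phi_n}(\Omega)$, where $\mu^{(n)}_{\phi_n}$ is the spectral measure of $\phi_n$ for $A_n$, and then use monotone convergence to get $\int\lambda^2\,d\mu_\phi=\sum_n\int\lambda^2\,d\mu^{(n)}_{\phi_n}=\sum_n\|A_n\phi_n\|_n^2$; finiteness of the left side is exactly the condition defining $D(A)$, so the domains agree, and the actions then agree componentwise. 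The delicate bookkeeping is precisely this interchange of the direct sum with the spectral integral and the attendant domain identification for unbounded $A$; once it is in place, (7) follows because $f(A)=\int f\,dP$ has $n$-th component $\int f\,dP_n\,\phi_n=f(A_n)\phi_n$, with the same monotone-convergence domain check applied to $\int|f|^2\,d\mu_\phi$, giving $f(\bigoplus_n A_n)=\bigoplus_n f(A_n)$.
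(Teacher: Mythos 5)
Your argument is essentially correct, but note first that the paper contains no proof of this proposition at all: it is dismissed as ``an exercise in functional analysis'' with a pointer to Reed--Simon and Birman--Solomjak, so your write-up supplies what the paper omits, and it does so along the standard componentwise lines those references would follow (the dense subspace of finitely supported sequences, the key lemma $A^*=\bigoplus_n A_n^*$ proved by testing against single-summand vectors, verification of the p.v.m.\ axioms with the weak-to-strong upgrade for countable additivity, and the monotone-convergence identification of domains for the spectral integral). Your treatment of parts (1), (2), (4), (5), (6) and (7) is sound as it stands.

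Two remarks on part (3). First, you are right to insert the closure: as literally stated in the paper, $\sigma(A)=\bigcup_n\sigma(A_n)$ is false in general (take $A_n=\tfrac{1}{n}\mathbb{I}$ on $H_n=\mathbb{C}$; then $0\in\sigma(A)$ because $A^{-1}$ is unbounded, yet $0\notin\bigcup_n\sigma(A_n)$), and the correct identity is $\sigma(A)=\overline{\bigcup_n\sigma(A_n)}$, which is what you prove. This correction is harmless for the paper's only application of (3) (Proposition~\ref{nonboundedbelow}), since there only the inclusion $\bigcup_n\sigma(A_{E,n})\subseteq\sigma(A'_E)$ --- your easy direction, via reducing subspaces --- is needed to conclude $\inf\sigma(A'_E)=-\infty$. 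Second, there is a hypothesis you use silently: the resolvent bound $\|(A_n-\lambda)^{-1}\|\le\operatorname{dist}(\lambda,\sigma(A_n))^{-1}$ is valid for self-adjoint (or normal) $A_n$ but fails for general closed operators, and indeed part (3) is false for general $A_n$ even after taking closures (a direct sum of nilpotent Jordan blocks of growing size has each $\sigma(A_n)=\{0\}$ yet spectrum containing a full disc, because the resolvent norms blow up at fixed distance from the spectra). You should therefore state that you prove (3) for self-adjoint $A_n$ --- the only case the paper uses, and the case consistent with parts (5)--(7) --- or at least for normal ones; with that hypothesis made explicit, your resolvent argument is exactly right.
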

The proof of this proposition is an exercise in functional analysis (see e.g.\ Reed and Simon~\cite{f} or Birman and Solomjak~\cite{birman}). The proof is omitted here for brevity.

Using this notation, we construct such extensions as follows: Given a fixed Riemannian manifold $(\Sigma,h)$, we shall first consider the case of constructing a s.a.e.\ $A$ of minus the Laplacian on $(\Sigma',h)=(\mathbb{Z}\times \Sigma,h)$ from s.a.e.s $(A_n)_{n\in\mathbb{Z}}$ of minus the Laplacian on $(\Sigma,h)$. We then show that if all the s.a.e.s $A_n$ are acceptable, then so is $A$. We then give necessary and sufficient conditions for $A$ to be non-bounded below before giving a concrete example. We state our results in the form of the following proposition.

\begin{prop}\label{nonboundedbelow}
Fix a Riemannian manifold $(\Sigma,h)$ and define $(\Sigma',h)=(\mathbb{Z}\times \Sigma,h)$. Considering the Hilbert spaces $L^2(\Sigma, d\vol_h)$ and $L^2(\Sigma', d\vol_h)$ then we have the following isomorphism:
$$L^2(\Sigma', d\vol_h)\cong \bigoplus_{n\in\mathbb{Z}}L^2(\Sigma, d\vol_h).$$
%Given $V\in C^\infty(\Sigma)$, $V>0$ and $m\geq 0$, 
Define the following linear operators $A$ and $A'$ on the Hilbert spaces $L^2(\Sigma, d\vol_h)$ and\linebreak $L^2(\Sigma', d\vol_h)$ as follows:
$D(A)=[C_0^\infty(\Sigma)]$, $A[\phi]=[-\largesquare_h\phi]$ for $\phi\in C_0^\infty(\Sigma)$ and similarly for $A'$. For simplicity, we shall treat the aforementioned isomorphism as an identification. Then we have the following relationship between $D(A)$ and $D(A')$:
$$D(A')=\left\{\phi\in \bigoplus_{n\in\mathbb{Z}}D(A)\text{ such that }\phi_n\neq 0 \text{ for at most finitely many $n$}\right\}.$$
Now, for each $n\in\mathbb{Z}$, let $A_{E,n}$ be a s.a.e.\ of $A$ and define the operator $A'_E=\bigoplus_{n\in\mathbb{Z}}A_{E,n}$. Then:

\begin{enumerate}
\item $A'_E$ is a s.a.e.\ of $A'$.
\item If for all $n$, $A_{E,n}$ is an acceptable s.a.e.\ of $A$, then $A'_E$ is an acceptable s.a.e.\ of $A'$.   
\item $\sigma(A'_E)=\bigcup_{n\in\mathbb{Z}}\sigma(A_{E,n})$.
\end{enumerate}
Therefore, if for all $n$, $A_{E,n}$ is an acceptable s.a.e.\ of $A$ and if $\bigcup_{n\in\mathbb{Z}}\sigma(A_{E,n})$ has no lower bound in $\mathbb{R}$, then $A'_E$ is a non-bounded below acceptable s.a.e.\ of $A'$.
\end{prop}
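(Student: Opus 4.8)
The plan is to treat the final ``Therefore'' conclusion as an immediate corollary of the three numbered assertions together with the spectral characterisation of boundedness-below recorded in Definition~\ref{acceptablesae}, and to prove those three assertions by transporting everything through the direct-sum machinery of Proposition~\ref{directsumops}. First I would dispose of the preliminary claims: the isomorphism $L^2(\Sigma',d\vol_h)\cong\bigoplus_{n\in\mathbb{Z}}L^2(\Sigma,d\vol_h)$ is just the decomposition of $L^2$ over the countable disjoint union $\Sigma'=\bigsqcup_n(\{n\}\times\Sigma)$, and the description of $D(A')$ follows because a compact subset of $\mathbb{Z}\times\Sigma$ (with $\mathbb{Z}$ discrete) meets only finitely many sheets $\{n\}\times\Sigma$; thus $[C_0^\infty(\Sigma')]$ is exactly the algebraic direct sum of copies of $[C_0^\infty(\Sigma)]=D(A)$, i.e.\ the finitely-supported sequences in $\bigoplus_n D(A)$.

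For item 1, Proposition~\ref{directsumops}(1) gives that $A'_E=\bigoplus_n A_{E,n}$ is self-adjoint, so it only remains to check $A'\subseteq A'_E$. For $\phi=(\phi_n)\in D(A')$ only finitely many $\phi_n$ are nonzero and each lies in $D(A)\subseteq D(A_{E,n})$, so $A_{E,n}\phi_n=A\phi_n=-\largesquare_h\phi_n$ and the summability condition in the definition of $D(A'_E)$ is automatic; hence $A'_E\phi=(A\phi_n)_n=A'\phi$. Item 3 is Proposition~\ref{directsumops}(3) applied to $A'_E$; I note that for the final conclusion it is irrelevant whether one reads this as $\bigcup_n\sigma(A_{E,n})$ or its closure, since a set and its closure share the same lower bounds.

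The substantive step is item 2, and this is where I expect the real work to lie. The cleanest route is to apply the Borel functional calculus to a single function: set $g_t(x)=\exp\!\big(t\,(x^-)^{1/2}\big)$, which is continuous (hence Borel) and real-valued, so that $\exp\!\big(t((A'_E)^-)^{1/2}\big)=g_t(A'_E)$ and likewise for each $A_{E,n}$. By Proposition~\ref{directsumops}(7) the functional calculus commutes with direct sums, giving
\begin{equation*}
\exp\!\big(t((A'_E)^-)^{1/2}\big)=g_t(A'_E)=\bigoplus_{n\in\mathbb{Z}}g_t(A_{E,n})=\bigoplus_{n\in\mathbb{Z}}\exp\!\big(t((A_{E,n})^-)^{1/2}\big).
\end{equation*}
Taking $\phi=(\phi_n)\in[C_0^\infty(\Sigma')]$, each $\phi_n\in[C_0^\infty(\Sigma)]$ lies in $D\big(\exp(t((A_{E,n})^-)^{1/2})\big)$ by acceptability of $A_{E,n}$, and only finitely many $\phi_n$ are nonzero, so the summability condition $\sum_n\|g_t(A_{E,n})\phi_n\|^2<\infty$ holds trivially. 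Hence $\phi\in D\big(\exp(t((A'_E)^-)^{1/2})\big)$ for every $t>0$, which is exactly the acceptability of $A'_E$. The main obstacle is precisely the bookkeeping that $x^-$, the square root, and the exponential can all be absorbed into one nonnegative Borel function so that Proposition~\ref{directsumops}(7) applies verbatim; once that is granted the domain computation is immediate from the finite support of test functions.

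Finally I would assemble the conclusion. Item 2 makes $A'_E$ acceptable; item 3 gives $\sigma(A'_E)\supseteq\bigcup_n\sigma(A_{E,n})$, which by hypothesis has no lower bound, so $\sigma(A'_E)\not\subseteq[-M,\infty)$ for every $M\in\mathbb{R}$. By the characterisation of bounded-below in Definition~\ref{acceptablesae} this means $A'_E$ is not bounded below, while item 1 records that it is a self-adjoint extension of $A'$. Therefore $A'_E$ is a non-bounded-below acceptable s.a.e.\ of $A'$, as required.
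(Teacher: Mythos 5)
Your proposal is correct and follows essentially the same route as the paper: both identify $[C_0^\infty(\Sigma')]=D(A')$ with the finitely supported sequences in $\bigoplus_n D(A)$, verify the extension property componentwise, and obtain acceptability by using Statement~7 of Proposition~\ref{directsumops} to write $\exp\bigl(t((A'_E)^-)^{1/2}\bigr)=\bigoplus_n\exp\bigl(t((A_{E,n})^-)^{1/2}\bigr)$ so that finite support makes the summability condition trivial. Your extra remarks (packaging $x\mapsto\exp(t(x^-)^{1/2})$ as a single Borel function, and noting that replacing $\bigcup_n\sigma(A_{E,n})$ by its closure does not affect lower bounds) are harmless refinements of the same argument.
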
 

\begin{rem}
In the equation relating $D(A)$ with $D(A')$ we are adopting the following notation: If $H=\bigoplus_{n\in \mathbb{Z}}H_n$ is the countable direct sum of Hilbert spaces, then if for each $n$, $V_n\leq H_n$ is a (not necessarily closed) subspace, then we define: $$\bigoplus_{n\in \mathbb{Z}}V_n:=\left\{\phi\in \bigoplus_{n\in \mathbb{Z}}H_n \text{ such that } \phi\in V_n \text{ for each }n\right\}. $$ With this notation, note that in general: $D(\bigoplus_{n\in \mathbb{Z}}A_n)\neq\bigoplus_{n\in \mathbb{Z}}D(A_n)$ but rather: $$D(\bigoplus_{n\in \mathbb{Z}}A_n)=\left\{\phi\in \bigoplus_{n\in \mathbb{Z}}D(A_n)\text{ such that } \sum_{n\in\mathbb{Z}}||A_n\phi_n||^2<\infty\right\}.$$
\end{rem}
\begin{proof}[Proposition~\ref{nonboundedbelow}]
It follows from the previous proposition, that $A_E$ is a self-adjoint operator on $L^2(\Sigma',d\vol_h)$. We must first show that it is in fact a self-adjoint extension of $A'$. 

By definition of $A'_E$ we have:
\begin{align*}
D(A')&=\left\{\phi\in \bigoplus_{n\in\mathbb{Z}}D(A)\text{ such that }\phi_n\neq 0 \text{ for at most finitely many $n$}\right\}\\
&\subseteq\left\{\phi\in \bigoplus_{n\in\mathbb{Z}}D(A_{E,n})\text{ such that }\phi_n\neq 0 \text{ for at most finitely many $n$}\right\}\\
&\subseteq\left \{\phi\in \bigoplus_{n\in\mathbb{Z}}D(A_{E,n})\text{ such that } \sum_n||A_{E,n}\phi_n||_n^2<\infty\right\} \\
&=D(A'_E)
\end{align*}
%If $$\phi\in D(A')=\left\{\phi\in \bigoplus_{n\in\mathbb{Z}}D(A)\text{ such that }\phi_n\neq 0 \text{ for at most finitely many $n$}\right\},$$ 
If $\phi\in D(A')=\left\{\phi\in \bigoplus_{n\in\mathbb{Z}}D(A)\text{ such that }\phi_n\neq 0 \text{ for at most finitely many $n$}\right\},$
then $(A_E\phi)_n=A_{E,n}\phi_n=A\phi_n=(A'\phi)_n$. This proves Statement~\textit{1}.

Let $A_{E,n}$ be an acceptable s.a.e.\ of $A$ for each $n$. Then, for all $t>0$:
\begin{align*}
[C_0^\infty(\Sigma')]&=D(A')\\&=\left\{\phi\in \bigoplus_{n\in\mathbb{Z}}D(A)\text{ such that }\phi_n\neq 0 \text{ for at most finitely many $n$}\right\}\\
&\subseteq \left\{\begin{array}{l}\phi\in \bigoplus_{n\in\mathbb{Z}} D(\exp((A^-_{E,n})^{1/2}t))\\\text{and }\sum_{n\in\mathbb{Z}}||\exp((A^-_{E,n})^{1/2}t) \phi_n||^2<\infty\end{array}\right\}\\
&=D(\exp((A'^-_E)^{1/2}t)),
\end{align*}
where the last equality follows from Statement~\textit{7} of Proposition~\ref{directsumops}, which gives:\linebreak $\exp((A'^-_E)^{1/2}t)=\bigoplus_{n\in\mathbb{Z}}\exp((A^-_{E,n})^{1/2}t)$. Therefore, $A'_E$ is also an acceptable s.a.e.\ of $A'$. 

The last statement follows from Statement~\textit{3} of the previous proposition. 
\end{proof}
\begin{lem}
Let $\Sigma=(0,\infty)$ and pick a sequence $\alpha_n\in (-\frac{\pi}{2},\frac{\pi}{2}]$ indexed by $n\in\mathbb{Z}$ such that for all $0<\epsilon<\frac{\pi}{2}$ there exists $n$ with $\alpha_n\in(-\epsilon,0)$. Define $A_{E,n}=A_{\alpha_n}$. Then, the corresponding operator $A'_E=\bigoplus_{n\in\mathbb{Z}}A_{\alpha_n}$ is a non-bounded below acceptable s.a.e.\ of minus the Laplacian. 
\end{lem}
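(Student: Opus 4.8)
The plan is to obtain this lemma as an immediate consequence of Proposition~\ref{nonboundedbelow}, which already packages the functional-analytic content. That proposition asserts that if each $A_{E,n}$ is an acceptable s.a.e.\ of $A$ and $\bigcup_{n\in\mathbb{Z}}\sigma(A_{E,n})$ has no lower bound in $\mathbb{R}$, then $A'_E=\bigoplus_{n\in\mathbb{Z}}A_{E,n}$ is a non-bounded below acceptable s.a.e.\ of $A'$ (minus the Laplacian on $\mathbb{Z}\times(0,\infty)$). With $A_{E,n}=A_{\alpha_n}$ the manifold $(\Sigma',h)$ of Proposition~\ref{nonboundedbelow} is exactly $\mathbb{Z}\times(0,\infty)$, so the whole proof reduces to verifying the two hypotheses for the concrete family $(A_{\alpha_n})_{n\in\mathbb{Z}}$.

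First I would check acceptability of each summand. By the spectral analysis of Section~\ref{halfline}, $A_\alpha$ is positive for $\alpha\in[0,\pi/2]$, while for $\alpha\in(-\pi/2,0)$ one has $\sigma(A_\alpha)=[0,\infty)\cup\{-\cot^2\alpha\}$, so $A_\alpha$ is bounded below with a single negative spectral point $-\cot^2\alpha$. In either case $A_{\alpha_n}$ is bounded below, and every bounded-below s.a.e.\ is acceptable in the sense of Definition~\ref{acceptablesae} (as noted in the remark following that definition). This secures the first hypothesis.

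The substantive step is to show that $\bigcup_{n\in\mathbb{Z}}\sigma(A_{\alpha_n})$ has no lower bound. I would fix an arbitrary $M<0$ and produce an index $n$ with $-\cot^2\alpha_n<M$. The key elementary fact is that $\cot^2\alpha\to+\infty$ as $\alpha\to 0^-$ on $(-\pi/2,0)$, so $-\cot^2\alpha\to-\infty$. Hence there is $\epsilon\in(0,\pi/2)$ small enough that $\cot^2\alpha>-M$ for every $\alpha\in(-\epsilon,0)$. By the accumulation hypothesis on the sequence there exists $n$ with $\alpha_n\in(-\epsilon,0)$, and then $-\cot^2\alpha_n<M$, while $-\cot^2\alpha_n\in\sigma(A_{\alpha_n})\subseteq\bigcup_{k\in\mathbb{Z}}\sigma(A_{\alpha_k})$. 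Since $M<0$ was arbitrary, the union is unbounded below. Invoking Statement~3 of Proposition~\ref{nonboundedbelow} to identify $\sigma(A'_E)=\bigcup_{n\in\mathbb{Z}}\sigma(A_{\alpha_n})$ then shows $A'_E$ is not bounded below, while Statement~2 gives acceptability, completing the proof.

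I do not expect any real obstacle here: the genuine difficulty was already absorbed into Proposition~\ref{nonboundedbelow} (in particular the direct-sum acceptability argument and the spectral identity of Proposition~\ref{directsumops}). The only point requiring care is the elementary-but-essential observation that the eigenvalues $-\cot^2\alpha_n$ are driven to $-\infty$ \emph{precisely because} the sequence $(\alpha_n)$ accumulates at $0^-$ from the negative side; the hypothesis is therefore used in exactly one place, to guarantee the presence of negative eigenvalues of arbitrarily large magnitude.
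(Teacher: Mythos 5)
Your proposal is correct and follows essentially the same route as the paper's own proof: both reduce the lemma to Proposition~\ref{nonboundedbelow}, note that each $A_{\alpha_n}$ is bounded below (hence acceptable), and use the fact that $\cot^2\alpha\to\infty$ as $\alpha\to 0^-$ together with the accumulation hypothesis to show $\bigcup_n\sigma(A_{\alpha_n})$ is unbounded below. Your write-up is in fact slightly more explicit than the paper's (the $\epsilon$--$M$ argument and the per-summand acceptability check are only implicit there), but there is no substantive difference.
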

\begin{proof}
Note that the s.a.e.s $A_\alpha$ were defined in Section~\ref{halfline} as: 
$$D(A_\alpha)=\{\phi\in W^{2,2}(0,\infty)\text{ s.t. } \cos \alpha \;\rho(\phi)=\sin \alpha \;\tau(\phi)\}.$$
The spectra of these s.a.e.s were given by the following: 
$$\sigma(A_\alpha)=\left\{\begin{array}{ll}[0,\infty)&\text{ for } \alpha\in[0,\pi/2]\\
\left[0,\infty)\cup\{-\cot^2\alpha\}\right.&\text{ for } \alpha\in(-\pi/2,0)\end{array}\right..$$
So, since $\lim_{x\rightarrow 0}\cot^2 x=\infty$ then, by Statement~\textit{3} of Proposition~\ref{nonboundedbelow}, $\sigma(A'_E)=\bigcup_{n\in\mathbb{Z}}\sigma(A_{\alpha_n})$ has no lower bound in $\mathbb{R}$, i.e.\ $\inf\sigma(A'_E)=-\infty$ and $A'_E$ is not bounded below. That $A'_E$ is still an acceptable s.a.e. of minus the Laplacian follows from the previous proposition.  
\end{proof}
%Do the same for $(0,a)$.
\section{Summary}

We have shown the existence and uniqueness properties of solutions of the Klein-Gordon equation on arbitrary standard static spacetimes based on ``acceptable" self-adjoint extensions $A_E$ of the symmetric linear operator $A$, as defined in equation~\eqref{operatorA}. The proof of the existence (Section~\ref{existence1}) was based on work by Wald $\cite{b}$, though differs in the following: Our treatment utilised the more recent result of Bernal and Sanchez $\cite{h}$. Also, we have shown that the construction of solutions is valid also when the self-adjoint extension is merely acceptable (Definition~\ref{acceptablesae}). 

Separate to the work of Wald, we prove in this paper a result concerning the uniqueness of the Wald solutions and used this to prove a result on their support. The stronger statement: $\supp \phi\subseteq J(K)$ for $K=\supp(\phi_0)\cup\supp(\dot{\phi}_0)$, which was a condition on the dynamics in the later paper by Wald and Ishibashi on this topic $\cite{c}$, was seen to be false in general. In Section~\ref{counterexample} we give a simple example where $\supp \phi\nsubseteq J(K)$.

Also, using the uniqueness result, we defined the space of solutions in Definition~\ref{spaceofsolutions}, constructed both the ``energy form" and the ``symplectic form" on the space of solutions (Sections~\ref{sec:bilinearform} and~\ref{sec:sympform} respectively) and analysed some symmetries of the space of solutions (Section~\ref{symmetries}). 

In Section~\ref{acceptablenonbounded} we constructed an acceptable non-bounded below s.a.e.\ $A_E$ of minus the Laplacian on a particular (disconnected) Riemannian manifold (specifically: $\Sigma=\mathbb{Z}\times(0,\infty)$ with the Riemannian metric induced from that of $\mathbb{R}^2$). This example then shows that the extension of the theory of Wald~\cite{b} from bounded-below s.a.e.s to acceptable s.a.e.s carried out in this paper is non-trivial (Wald considered only positive s.a.e.s).

We shall now discuss avenues of further work on the subject of this paper. We list them as follows, some of which are related:
\begin{enumerate}
\item The well-posedness of the Cauchy problem for the Klein-Gordon equation often has a stronger meaning than that used in this paper. The stronger sense includes continuity of the map $C_0^\infty(\Sigma)\times C_0^\infty(\Sigma)\rightarrow C^\infty(M)$, $(\phi_0,\dot{\phi}_0)\mapsto \phi$. A problem unanswered in this paper is whether our solution to the Cauchy problem generated by an acceptable s.a.e. is well-posed in this sense. 
%\item We constructed in Section~\ref{acceptablenonbounded} an example of an acceptable non-bounded below s.a.e. on a disconnected Riemannian manifold. It would be of interest to construct examples on connected ones. 
\item Once the answer to the previous open problem is known, a natural question in line with the paper by Wald and Ishibashi~\cite{c} is whether there are necessary and sufficient conditions on a solution to the Cauchy problem to be generated by an acceptable s.a.e.\ via this paper. Since their paper dealt with the case of sufficient conditions for the Cauchy problem to be generated by a positive s.a.e.\, then this would be an extension of their work to the present case.
\item An important question, connected with Statement~3, is whether or not there exists dynamics conserving the symplectic form constructed in Section~\ref{sec:sympform} (but possibly not conserving an energy form), that is \textbf{not} generated by a s.a.e.\ via the construction in this paper. This question posed by Kay and Studer~\cite{kay1} (Appendix A.2) is still unanswered. 
\end{enumerate}

\section*{Acknowledgments}

This work was supported by a research studentship from the Science and Technology Facilities Council held at the University of York.

\end{document}